\newcommand{\R}{\mathbb{R}}
\crefname{subsection}{subsection}{subsections}
\newcommand{\Cells}{\mathcal{C}}
\newcommand{\Light}{\mathcal{L}}
\newcommand{\Heavy}{\mathcal{H}}
\newcommand{\Medium}{\mathcal{M}}
\newcommand{\Ancestors}{\mathrm{Anc}}
\newcommand{\Children}{\mathrm{Ch}}
\newcommand{\Ex}{\mathbb{E}}
\newcommand{\OPT}{\,\mathrm{OPT}\,}
\DeclareMathOperator{\poly}{poly}
\DeclareMathOperator{\Diam}{Diam}
\newcommand{\V}{\mathcal{V}}
\newcommand{\X}{\mathcal{X}}
\newcommand{\BTG}{\mathsf{Bitstogram}}
\newcommand{\HSO}{\mathsf{HeavySumsOracle}}
\newcommand{\HCM}{\mathsf{HeavyCellMarker}}
\newcommand{\CH}{\mathsf{CH}}
\newcommand{\CCH}{\mathsf{CCH}}
\newcommand{\BH}{\mathsf{BH}}
\newcommand{\BSO}{\mathsf{BSO}}
\newcommand{\SH}{\mathsf{SH}}
\newcommand{\SKM}{\mathsf{Standard}\;k-\mathsf{Means}}
\newcommand{\snap}[2]{G_{#2}(#1)}
\newcommand{\Count}{\mathrm{Count}}
\newcommand{\Sum}{\mathrm{Sum}}
\newcommand{\Z}{\mathbb{Z}}
\newcommand{\PH}{\mathsf{PH}}
\newcommand{\PSO}{\mathsf{PSO}}
\newcommand{\erralpha}{{err}}
\DeclareMathOperator*{\argmin}{arg\,min}
\newtheorem{theorem}{Theorem}[section]
\newtheorem{lemma}[theorem]{Lemma}
\newtheorem{corollary}[theorem]{Corollary}
\theoremstyle{definition}
\newtheorem{definition}[theorem]{Definition}
\newtheorem{remark}[theorem]{Remark}
\theoremstyle{plain}
\theoremstyle{plain}
\theoremstyle{definition}
\providecommand{\definitionname}{Definition}
\providecommand{\lemmaname}{Lemma}
\providecommand{\theoremname}{Theorem}
\newcommand\LetIn[2]{%
    \LetIn@let{#1}%
    \LetIn@in{#2}%
}
\title{Locally Private $k$-Means Clustering with Constant Multiplicative Approximation and Near-Optimal Additive Error}
\author {
        Anamay Chaturvedi\thanks{Khoury College of Computer Sciences, Northeastern University \dotfill \texttt{chaturvedi.a@northeastern.edu}} \and
        Matthew Jones\thanks{Khoury College of Computer Sciences, Northeastern University \dotfill \texttt{jones.m@northeastern.edu}} \and
        Huy L. Nguy\~{\^{e}}n\thanks{Khoury College of Computer Sciences, Northeastern University \dotfill \texttt{hu.nguyen@northeastern.edu}}
}
\newcommand{\DeclareAutoPairedDelimiter}[3]{%
  \expandafter\DeclarePairedDelimiter\csname Auto\string#1\endcsname{#2}{#3}%
  \begingroup\edef\x{\endgroup
    \noexpand\DeclareRobustCommand{\noexpand#1}{%
      \expandafter\noexpand\csname Auto\string#1\endcsname*}}%
  \x}
\DeclareAutoPairedDelimiter{\size}{\lvert}{\rvert}
\DeclarePairedDelimiter\abs{\lvert}{\rvert}%
\DeclarePairedDelimiter\norm{\lVert}{\rVert}%
\let\oldabs\abs
\def\abs{\@ifstar{\oldabs}{\oldabs*}}
\let\oldnorm\norm
\def\norm{\@ifstar{\oldnorm}{\oldnorm*}}
\begin{document}
\maketitle

\begin{abstract}
    Given a data set of size $n$ in $d'$-dimensional Euclidean space, the $k$-means problem asks for a set of $k$ points (called centers) so that the sum of the $\ell_2^2$-distances between points of a given data set of size $n$ and the set of $k$ centers is minimized. Recent work on this problem in the locally private setting achieves constant multiplicative approximation with additive error $\tilde{O} (n^{1/2 + a} \cdot k \cdot \max \{\sqrt{d}, \sqrt{k} \})$ and proves a lower bound of $\Omega(\sqrt{n})$ on the additive error for any solution with a constant number of rounds. In this work we bridge the gap between the exponents of $n$ in the upper and lower bounds on the additive error with two new algorithms. Given any $\alpha>0$, our first algorithm achieves a multiplicative approximation guarantee which is at most a $(1+\alpha)$ factor greater than that of any non-private $k$-means clustering algorithm with $k^{\tilde{O}(1/\alpha^2)} \sqrt{d' n} \poly\log n$ additive error. Given any $c>\sqrt{2}$, our second algorithm achieves $O(k^{1 + \tilde{O}(1/(2c^2-1))} \sqrt{d' n} \poly\log n)$ additive error with constant multiplicative approximation. Both algorithms go beyond the $\Omega(n^{1/2 + a})$ factor that occurs in the additive error for arbitrarily small parameters $a$ in previous work, and the second algorithm in particular shows for the first time that it is possible to solve the locally private $k$-means problem in a constant number of rounds with constant factor multiplicative approximation and polynomial dependence on $k$ in the additive error arbitrarily close to linear.
\end{abstract}

\section{Introduction}
    
    Given $n$ points in a $d$-dimensional Euclidean space, the $k$-means clustering problem asks for a set of $k$ points $S$ such that the sum of $\ell_2^2$-distances from each data point to the closest respective point in $S$ is minimized. Although $k$-means clustering in the non-private setting is well-studied, over the past few years there have been several developments in the differentially private (DP) setting. Differential privacy \citep{dwork2006our} provides a framework to characterize the loss in privacy which occurs when sensitive data is processed and the output of this computation is revealed publicly. Although there are different ways to define and capture this loss in privacy, broadly speaking these characterizations tend to be either central or local in nature.
    
    Informally, differential privacy asks for a guarantee that the likelihood of any possible output does not change too much by adding to or dropping from our data set any possible private value from the data universe. In any private algorithm such a guarantee is fulfilled by adding carefully calibrated noise to quantities that are information-theoretically sensitive to the private data in the course of the computation, and under the constraints of being private the goal is to achieve relatively low error. Perfect answers to the algorithmic problem at hand typically violate privacy; as a consequence, the constraints of privacy usually enforce harsher lower bounds on accuracy or utility than those imposed by the limits of time or sample efficient computation.
    
    In local differentially privacy (LDP) the constraints are even more severe; the entity solving the algorithmic problem only gets access to noisy, privatized data. This constraint forces even stronger lower bounds on the accuracy of locally private protocols; for the $k$-means clustering problem a lower bound of $\Omega(\sqrt{n})$ is known for the additive error of any interactive constant factor multiplicative approximation algorithm \citep{stemmer2020locally}.
    
    \paragraph{Recent work on LDP $k$-means}{The first LDP algorithm for the $k$-means problem with provable guarantees was given by \cite{nissim2018clustering} wherein they achieved a multiplicative approximation of $O(k)$ and and an additive error term of $\tilde{O}(n^{2/3 + a} \cdot d^{1/3} \cdot \sqrt{k})$. They achieved this result by solving the related 1-cluster problem that asks the solver to privately allocate a small number of centers so that some center in that set covers all data points within a ball of minimal radius; by an observation of \cite{feldman2017coresets}, there is a general algorithm that given access to a private solution for the 1-cluster problem solves the private $k$-means problem. The exponent of $n$ in the additive error term holds for arbitrarily small $a$ at the cost of looser multiplicative approximation guarantees; this artefact is the consequence of using \emph{locality sensitive hashing} (LSH), something which appears in most later work as well.}
    
    \cite{kaplan2018differentially} gave the first constant factor multiplicative approximation algorithm for this problem within an additive error of $\tilde{O}(n^{2/3 + a} \cdot d^{1/3} \cdot k^2)$. They refine the approach of the previous work by specifically targeting the $k$-means problem but also use LSH functions to detect the accumulation of data. The additive error was further brought down by \cite{stemmer2020locally}, who achieved an additive error of $\tilde{O} (n^{1/2 + a} \cdot k \cdot \max \{ \sqrt{d}, \sqrt{k} \}$ and also proved a lower bound of $\Omega (\sqrt{n})$, as mentioned before. Given that all previous works exhibit some trade-off between the exponent of $n$ and the multiplicative approximation, the exponents of $1/2+a$ and $1/2$ in the upper and lower bounds of \cite{stemmer2020locally} is particularly provocative. It naturally leads to the question
    \begin{center}
        \textbf{Does there exist an LDP $k$-means clustering algorithm with constant multiplicative approximation and additive error with a $\sqrt{n}$ dependence on the size of the data set?}
    \end{center}
    
    In the non-private setting it has been seen that the performance of $k$-means clustering algorithms is usually not very sensitive to the multiplicative approximation guarantee, unless the data set is chosen in a pathological fashion. Experimental work \citep{balcan2017differentially, chaturvedi2020differentially} on $k$-means clustering in the related central model of DP shows that the performance of private clustering algorithms seems to be far more sensitive to the additive error, which as we have observed is bound to exist due to the constraints of being private. This highlights the importance of the question of determining the true dependence of the additive error term on the size of the data set.
    
    \begin{table}
    \centering
    \caption{Comparison with recent LDP algorithms for $k$-means}
    \begin{tabular}{c|c|c}\hline
         Work & Multiplicative Approximation & Additive Error \\ \hline
         \cite{nissim2018clustering} & $O(k)$ & $\tilde{O} (n^{2/3 + a} \cdot d'^{1/3} \cdot \sqrt{k})$ \\
         \cite{kaplan2018differentially} & $O(1)$ & $\tilde{O} (n^{2/3 + a} \cdot d'^{1/3} \cdot k^2 )$ \\
         \cite{stemmer2020locally} & $O(1)$ & $\tilde{O} (n^{1/2 + a} \cdot k \cdot \max \{ \sqrt{d'}, \sqrt{k} \})$ \\
         This work, \cref{alg:1Round} & $(1+\alpha) \eta$  & $\tilde{O} (n^{1/2} \cdot d'^{1/2} \cdot k^{\tilde{O}(1/\alpha^2)})$ \\
         This work, \cref{alg:main} & $O(c^2)$ & $\tilde{O} (n^{1/2} \cdot d'^{1/2} \cdot k^{1 + O(1/(2c^2-1))} )$ \\\hline
    \end{tabular}\\
    The additive error assumes a data set of size $n$ inside a ball with unit radius. The $\tilde{O}$ notation hides dependence on the privacy parameters, the failure probability, and $\log$ terms. The user-defined parameter $c$ can take any real value greater than $\sqrt{2}$.
    \end{table}
    
    \paragraph{Technical contributions:}{In this work we present two algorithms for the $k$-means problem in the LDP setting, wherein we go beyond the $n^{1/2 + a}$ barrier demonstrating that the trade-off can be independent of $n$ for some regimes of $k$ and $n$. Our first algorithm is a one-round protocol that achieves a $(1+\alpha)$-multiplicative approximation to the cost guarantee of any non-private clustering algorithm that it is given acess to as a subroutine. It achieves additive error $k^{\tilde{O}(1/\alpha^2)} \sqrt{d' n} \poly\log n$}; we see that the trade-off between the additive and multiplicative approximations in this algorithm has been shifted from $n$ to $k$. However, the $\tilde{O}$ term in the exponent of $k$ can hide large constants, which is an undesirable property in a setting where low additive error seems to dictate performance.
    
    \begin{restatable}{theorem}{oneRoundGuarantee}\label{thm:1RoundGuarantee}
        \Cref{alg:1Round} is an $(\epsilon,\delta)$-locally differentially private algorithm that after one round of interaction with a private distributed data set $D' \subset \R^{d'}$ of size $n$, outputs a set $S'$ of size $k$ such that for failure probability polynomially small in $n$,
        \begin{align*}
            f_{D'} (S') &\leq (1 + O(\alpha))\eta \OPT' + \frac{1}{\epsilon} k^{\tilde{O} (1/\alpha^2)} \sqrt{d' n \log 1/\delta} \poly\log n.
        \end{align*}
    \end{restatable}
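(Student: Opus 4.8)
The plan is to build the one-round protocol around a private coreset construction. First I would impose a hierarchical grid (a quadtree-like decomposition) of the unit ball in $\R^{d'}$, refined to depth $\poly\log n$ so that leaf cells have diameter polynomially small in $n$; this controls the discretization error against $\OPT'$. In one round, each user applies a locally private frequency-estimation primitive (e.g., a randomized-response / Hadamard-response style protocol, or the bit-vector heavy-hitters machinery hinted at by the \BTG{} and \HSO{} macros) to report which grid cells at each level contain its point. Aggregating these reports yields, for every cell, an estimate of its count that is accurate up to additive error $\frac{1}{\epsilon}\sqrt{n\log(1/\delta)}\poly\log n$ per cell, with the claimed failure probability by a union bound over the $\poly\log n \cdot (\text{number of cells})$ queries. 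The key subtlety is that the number of cells per level is a priori exponential in $d'$, so I would only ``open'' a cell's children when its estimated count is above a heavy threshold $\tau \approx \frac{1}{\epsilon}\sqrt{n}\poly\log n$; since at most $O(n/\tau)$ cells can be heavy at any level, the total number of queried cells stays $\poly(n)$, which is what keeps the $\sqrt{d'}$ (rather than $d'$) dependence and the $\poly\log n$ overhead.

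Next I would turn the surviving heavy leaf cells into a weighted point set: place a representative point at each heavy leaf with weight equal to its (noised, nonnegative-truncated) count, and account for the ``lost'' mass in light cells as additive error. The crucial accounting step is to show this weighted set is a $(1+\alpha)$-approximate coreset for $k$-means on $D'$ up to additive error $k^{\tilde O(1/\alpha^2)}\sqrt{d'n}\poly\log n$. Here the $k^{\tilde O(1/\alpha^2)}$ blowup enters: to make the multiplicative distortion only $(1+\alpha)$ rather than $O(1)$, the grid must be fine enough that a point's squared distance to any of the $k$ candidate centers is preserved up to a $(1+\alpha)$ factor except for points very close to a center, and bounding the contribution of those near points, together with the mass lost to light cells, forces the heavy threshold (and hence the number of recursion levels that matter, and the $\poly\log$ factors) to scale so that the per-cell error $\frac{\sqrt n}{\epsilon}\poly\log n$ times the number of relevant cells $k^{\tilde O(1/\alpha^2)}$ gives the stated bound. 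I would formalize this by the standard argument that a point in a cell of diameter $\Delta$ contributes cost within $(1\pm\alpha)$ of its true cost plus $O(\Delta^2/\alpha)$, summing over all $n$ points.

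Finally, run any non-private $k$-means algorithm with multiplicative guarantee $\eta$ on this weighted coreset to obtain $S'$; transferring the guarantee back to $D'$ costs another $(1+\alpha)$ factor from the coreset property, giving $f_{D'}(S') \le (1+O(\alpha))\eta\OPT' + (\text{additive error})$. Privacy is immediate and holds for the whole transcript: the only access to user data is through the one-round locally private frequency primitive, so $(\epsilon,\delta)$-local privacy follows from the guarantee of that primitive (the grid structure and the thresholding are post-processing of the privatized reports, and the one-round constraint is respected because all cell-membership queries a user must answer are determined by the public grid, not by other users' messages). The main obstacle I expect is the coreset accuracy analysis: carefully choosing the grid depth, the heavy threshold, and the number of levels so that the multiplicative error is genuinely $(1+\alpha)$ while the additive error's dependence on $k$ is exactly $k^{\tilde O(1/\alpha^2)}$ and the dependence on $n$ stays $\sqrt n\poly\log n$ — in particular making sure the exponent $\tilde O(1/\alpha^2)$ arises cleanly from the number of ``scales'' at which a point can be close to a center and not from a cruder bound like $k^{1/\alpha}$ or $k^{\poly(1/\alpha)}$ with a worse polynomial.
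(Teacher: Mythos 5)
Your proposal captures the right high-level shape --- discretize with a hierarchy of grids, use a one-round locally private frequency primitive and a heavy-cell threshold, build a weighted proxy/coreset, run a non-private $k$-means algorithm, and transfer the guarantee back --- but it is missing the two ingredients that actually produce the $k^{\tilde O(1/\alpha^2)}$ factor in the additive error, and without them the argument does not close.

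First, there is no dimension reduction. You place your grid directly in $\R^{d'}$. The $k^{\tilde O(1/\alpha^2)}$ term is not, as you suggest, a count of ``scales at which a point can be close to a center''; there are only $O(\log n)$ scales. It is a packing bound: the number of grid points of side length $\alpha r/\sqrt{d}$ within distance $O(r)$ of any one of the $k$ optimal centers. In $d$ dimensions this count is roughly $(\sqrt d/\alpha)^d \cdot k$, which is $k^{\tilde O(1/\alpha^2)}$ only after one applies the Makarychev--Makarychev--Razenshteyn cost-preserving dimension reduction to $d = O(\log(k/\alpha\beta)/\alpha^2)$. In the ambient $\R^{d'}$ (or even after ordinary JL to $O(\log n/\alpha^2)$) the same bound is $\exp(\Theta(d'))$ or $n^{\Theta(1/\alpha^2)}$, which destroys the claimed error. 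The paper's proof does all the grid work in the $O(\log k/\alpha^2)$-dimensional space and then uses $\HSO$ in a separate (but parallel) channel to pull the final $k$ cluster centers back to $\R^{d'}$ by summing original-space vectors over the points whose low-dimensional images snap to the chosen grid cells. Your proposal does not address this back-lifting step at all, and without it you would have centers that live in the wrong space or the wrong packing bound.

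Second, ``keep all heavy cells'' is the wrong selection rule. With threshold $\tau \approx \sqrt n/\epsilon$ you can have up to $\Theta(\sqrt n\,\epsilon)$ heavy cells per level, each carrying a count error of order $\sqrt n/\epsilon$; the total count error, and hence the gap between $f_{D^*}(\cdot)$ and $f_D(\cdot)$, then scales like $n$, not $\sqrt n$. The paper instead runs a greedy maximum-coverage selection: in each level it picks only $O(N_G\log(1/\alpha))$ grid points (with $N_G = k^{\tilde O(1/\alpha^2)}$ coming from the packing bound above) that maximize the estimated count of points not already covered by coarser levels. It then proves (Lemma 3.9 and the greedy-cover Lemma D.3 in the paper) that this small set already covers a $(1-\alpha)$ fraction of what the level's potential grid points could cover, so the discretization movement of the data is $O(\alpha\OPT)$. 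Keeping the number of picked cells down to $k^{\tilde O(1/\alpha^2)}\poly\log n$ --- rather than $\sqrt n$ --- is exactly what makes the cumulative private estimation error $k^{\tilde O(1/\alpha^2)}\sqrt{n}\poly\log n/\epsilon$. A subtlety you would also need to handle is making the greedy selection's ``uncovered'' bookkeeping work without touching the data a second time; the paper achieves this by using a coordinate-wise floor map so the grids are nested, tracking ``maximal'' already-picked grid points, and subtracting their counts. In short: the shape of your argument is right, but the $k^{\tilde O(1/\alpha^2)}$ must come from a packing bound after Makarychev et al.\ dimension reduction plus a greedy bounded-size selection, and the center recovery in $\R^{d'}$ needs the parallel vector-sum oracle; none of these three pieces appears in your sketch, and each is load-bearing.
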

    
    We address this deficit with our second algorithm, where we return to an LSH-based approach and drive down the exponent of $k$ to $1 + O(1/(2c^2-1))$. Again as this exponent approaches $1$ the multiplicative approximation factor blows up but this shows for the first time that it is possible to have constant factor multiplicative approximation $k$-means clustering algorithms in the LDP setting with additive error that has a truly square-root dependence on the data set size and the ambient dimension and arbitrarily close to linear dependence on the number of cluster centers.
    
     \begin{restatable}{theorem}{fourRoundGuarantee}\label{thm:fourRoundGuarantee}
        \Cref{alg:main} is an $(\epsilon,\delta)$-locally differentially private algorithm  such that given $c > \sqrt{2}$, after four rounds of interaction with a private distributed data set $D' \subset \R^{d'}$ of size $n$ outputs a set $S'$ of size $k$ such that with probability $1-\beta$,
        \begin{align*}
            f_{D'} (S') &= O(\OPT') + O \left(\frac{1}{\epsilon} \sqrt{d' n \ln(n/\delta)} \right) \left(\frac{k \poly\log n}{\beta }\right)^{1 + O(1/(2c^2-1))}.
        \end{align*}
    \end{restatable}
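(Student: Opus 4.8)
The plan is to realize \Cref{alg:main} as a ``build private candidate centers, then run a non-private weighted $k$-means'' scheme. Over a geometric ladder of scales $r_0 > r_1 > \dots > r_L$, where $r_0$ upper-bounds the data diameter, $r_L$ is polynomially small so that $L = O(\log n)$ and the cost of leaving points unresolved below $r_L$, namely $O(n\,r_L^2)$, is within the additive budget, the four rounds of interaction are used to: (i) privately estimate the data's extent and fix the ladder; (ii) at every scale bucket the points under an $\ell_2$ LSH family of approximation factor $c$ (whose exponent is $\rho := 1/(2c^2-1)$, the hypothesis $c > \sqrt 2$ being what lets the LSH separate a radius-$r$ piece of one optimal cluster from the others in the squared-distance geometry), and invoke $\HCM$ and $\HSO$ to privately mark the ``heavy'' cells, those with more than a threshold $T$ of points, and to privately estimate the sum of the points inside each marked cell; and (iii) run the non-private $O(1)$-approximate solver $\SKM$ on the resulting weighted set of cell-centroids. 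Correctness then splits into the privacy claim, a \emph{cardinality} bound showing the candidate set has size $N := (k\,\poly\log n/\beta)^{1+O(\rho)}$, and a \emph{utility} claim that this set contains $k$ centers of cost $O(c^2)\OPT' + N\cdot\tilde O\!\big(\tfrac1\epsilon\sqrt{d'n\ln(n/\delta)}\big)$, since each of the $N$ cells carries one unit of locally-private aggregation noise of magnitude $\tilde O\!\big(\tfrac1\epsilon\sqrt{d'n\ln(n/\delta)}\big)$ in both its count and its centroid.

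For the primitive bookkeeping I would set $T$ to $\tilde O\!\big(\tfrac1\epsilon\sqrt{d'n\ln(n/\delta)}\big)$ times a $(k/\beta)^{O(\rho)}\poly\log n$ factor, so that $T$ sits exactly at the final additive-error scale. By the guarantees of $\BTG$, $\HCM$ and $\HSO$ established earlier, at each scale every cell of count $\ge T$ is marked, no cell of count much below $T$ is marked, the count of each marked cell is recovered to within an $O(1)$ relative factor, and the centroid of each marked cell is recovered accurately enough that replacing the true centroid by the estimate changes the clustering cost it induces by at most a constant factor; a union bound over the at most $\poly(n)$ cells ever queried, together with the failure probabilities of the primitives and of the LSH, keeps the total failure probability below $\beta$. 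Privacy is then immediate: split $(\epsilon,\delta)$ across the four rounds and compose the $(\epsilon,\delta)$-LDP guarantees of $\BTG$, $\HCM$ and $\HSO$ by local-model (sequential) composition.

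The geometric core is where most of the work lies. Fix an optimal partition $C_1^*,\dots,C_k^*$ with centers $y_1^*,\dots,y_k^*$ and decompose each $C_i^*$ into distance shells about $y_i^*$. A point $x$ in the shell at radius $\approx 2^t$ should be \emph{captured} at the coarsest ladder scale $r_j \approx 2^t$ at which it lies in a heavy cell, and the cost of routing it to that cell's estimated centroid is $O(c^2)\cdot 2^{2t} = O(c^2)\|x-y_i^*\|^2$, since an LSH cell at scale $r_j$ may merge points up to distance $c\,r_j$; summing over all captured points yields $O(c^2)\OPT'$. A point is \emph{lost} only when its shell, after bucketing, contributes fewer than $\Theta(T)$ points to every cell it touches, and I would bound the total lost cost by $\tilde O(k\,T) = \tilde O\!\big(\tfrac1\epsilon\sqrt{d'n\ln(n/\delta)}\big)\cdot(k/\beta)^{1+O(\rho)}$ by showing each (cluster, scale) pair leaks only $O(T)$ points. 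The dual cardinality bound comes from the same analysis: at the relevant scale the mass of a ball $B(y_i^*, r_j)$ lands in only $(k/\beta)^{O(\rho)}$ heavy cells, via the usual LSH accounting (within-ball pairs collide with probability $\ge p_1$, pairs beyond $c\,r_j$ with probability $\le p_2$, $\log(1/p_1)/\log(1/p_2)\le\rho$) combined with a hash-concatenation length and number of repetitions tuned to the candidate-set size $N$ rather than to $n$; over $k$ clusters and $O(\log n)$ scales this totals $N$ candidates. This last point is the crux: tuning the LSH to $N$ (hence to $k$ and $\poly\log n$) rather than to $n$ is precisely what keeps the exponent of $n$ at $1/2$ instead of $1/2 + \Theta(\rho)$, i.e.\ what breaks the $n^{1/2+a}$ barrier.

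Finally I would close the loop. $\SKM$ returns $S'$ of cost $O(1)$ times optimal on the noisy weighted centroid instance, hence at most $O(c^2)\OPT' + N\cdot\tilde O\!\big(\tfrac1\epsilon\sqrt{d'n\ln(n/\delta)}\big)$ there; the standard weighted-coreset perturbation argument --- each of the $N$ cells off by $\tilde O\!\big(\tfrac1\epsilon\sqrt{d'n\ln(n/\delta)}\big)$ in count, times squared diameter $O(1)$, plus the within-cell variances, which charge to $O(c^2)\OPT'$, plus the lost cost above --- converts this into the claimed bound on $f_{D'}(S')$ after absorbing $O(c^2)$ into the leading $O(\cdot)$ and substituting $N = (k\,\poly\log n/\beta)^{1+O(\rho)}$. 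The main obstacle is the pair of coupled claims inside the geometric core: simultaneously ensuring that every shell of every optimal cluster with more than $\Theta(T)$ points is captured into a few heavy cells before its mass leaks away at finer scales, and keeping the number of heavy cells per cluster at $(k/\beta)^{O(\rho)}$ rather than $n^{O(\rho)}$ --- i.e.\ amortizing the LSH overhead over the $k$ clusters instead of the $n$ points --- all while the per-scale errors telescope over the $O(\log n)$ levels rather than compounding.
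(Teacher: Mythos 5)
Your high-level scaffolding matches the paper's: dimension reduction, a geometric ladder of $O(\log n)$ scales, LSH buckets thresholded into ``heavy'' sets via $\BTG$/$\HCM$/$\HSO$, a weighted proxy data set of bucket centroids fed to a non-private solver $\SKM$, and two final rounds to read back cluster centers in $\R^{d'}$. You also correctly identify the crux: beating the $n^{1/2+a}$ barrier means the LSH collision-probability ratio must be tuned to $k\poly\log n$ rather than to $\poly(n)$.

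But your proposal asserts that crux rather than achieves it. The ratio the LSH must beat is not a free parameter you can set to $N$; it is the physical ratio $\frac{|D|}{|C|}\cdot\frac{\Delta}{cr}$ coming from the false-positive analysis (\cref{lem:LSHGuarantee}), where $|D|$ is the set of points hashed, $|C|$ is the size of the cluster piece you want to capture, $\Delta$ is the diameter of the LSH domain, and $r$ is the LSH scale. With a global LSH on $B(0,1)$ this is $\frac{n}{|C|}\cdot\frac{1}{cr}=\poly(n)$, and no amount of ``tuning the concatenation length to $N$'' changes that: the number of far points that can collide scales with the number of far points that exist. Your outline contains no mechanism that simultaneously (a) caps $|D|$ by something like $O(\OPT/t_l^2\cdot\poly\log n)$, and (b) caps $\Delta/r$ by $\poly\log n$. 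That pair of bounds is precisely what the paper extracts from the randomly shifted dyadic cell hierarchy of \cite{BFLSY17}: at level $l$, LSH is run only on the \emph{light children of heavy cells} $\Medium_l$ (bounding $|D_l|\le O(d^2\OPT/t_l^2)+O(kL\CH_M/\beta)$ in \cref{lem:additiveErrorByLayer}), inside a synthetic space $\Lambda_l^f$ built from the ancestor heavy cells $\Ancestors^*(\Heavy_l^f)$ so that the LSH domain has diameter $O(t_l\sqrt d)$ while the smallest LSH scale is $t_l/(d\sqrt L)$, a $\poly\log n$ ratio. Your write-up conflates ``cells'' with LSH buckets and never introduces the dyadic cell structure, the heavy/medium/light marking that yields the two bounds, or the $\Ancestors^*$ embedding and the pairwise cell separation $\lambda_l$ needed so that bucket averages project back into the correct cell. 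Without these, the claimed $(k/\beta)^{O(\rho)}$ bound on heavy buckets per cluster has no support, and the argument collapses back to the previous $n^{1/2+a}$ bound.

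Two secondary omissions: (i) the threshold defining ``heavy'' depends on an unknown $\OPT$; the paper sweeps $F=O(\log n)$ geometric guesses and takes a union over them, and the cost analysis hinges on the ``correct'' guess. Your $T$ is phrased directly in terms of the final additive budget, which doesn't line up with the need to bound $|\Heavy_l|=O(kL/\beta)$ (\cref{lem:cellProps}), a bound that requires the $\OPT$-dependent threshold. (ii) The cluster-section threshold for which LSH is invoked must be chosen so that below it, routing to the heavy-cell center $o(C)$ already costs $\le\OPT/L$ per level; you hint at this with ``each (cluster, scale) pair leaks only $O(T)$ points'' but give no reason why that telescopes to $O(\OPT)$ rather than to $O(\OPT\cdot\poly\log n)$.
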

    
    It was observed in \cite{stemmer2020locally} that one of the main road-blocks in computing solutions with low additive error is figuring out how to generate a relatively small bi-criteria solution to the $k$-means problem as a first step. A bi-criteria solution relaxes two constraints of the $k$-means problem; we permit picking more than $k$ centers, and we relax the minimum cost requirement to a multiplicative approximation guarantee. Any such bi-criteria solution can be exploited to construct a proxy data set on which we can apply any non-private $k$-means clustering algorithm. The fact that the clustering cost of the original data set with respect to the candidate centers used to generate the proxy data set can be exploited to show that $k$-means solutions for proxy data sets work well for the original data set as well. In order to avoid an exponent of $1/2 + a$ on $n$, it is necessary to find a bi-criteria solution with $O(\poly k \poly \log n)$ many candidate centers such that the additive error to their respective multiplicative approximations is at most $O(\poly k \sqrt{n} \poly \log n)$ (omitting the dependence on dimension). Both our algorithms achieve their improvements by generating such a small size bi-criteria solutions for the $k$-means problem.
    
    \paragraph{LDP $k$-means with arbitrarily tight multiplicative approximation:} In our first algorithm, we appeal to recent advances in dimension reduction for $k$-means clustering \cite{DBLP:conf/stoc/MakarychevMR19} which show that Johnson-Lindenstrauss style dimension reduction to $\tilde{O}(\log k/\alpha^2)$ preserves the cost of every $k$-clustering of a data set within a multiplicative approximation of $(1\pm \alpha)$. Suppose we decompose the domain in concentric shells depending on their distance from some fixed $k$ cluster centers. By setting geometric thresholds of $1,1/2, 1/4$ units and so on, the $l$th ring has the property that every data point in that shell has a clustering cost of $O(1/(2^l)^2)$ units. To cluster the $l$th ring, we allocate a number of centers by appealing to a grid-based approach following \cite{chaturvedi2020differentially}; since we were able to reduce dimensions to $\tilde{O}(\log k/\alpha^2)$ we are able to show that allocating $k^{\tilde{O}(1/\alpha^2)} \poly\log n$ centers suffices to ensure that most points in the $l$th shell are within an $O(\alpha/(2^l)^2)$ distance of some candidate center.
    
    Extending this for every shell with appropriately scaled grids we get the promise that moving each data point to its closest candidate center would lead to net movement of $O(\alpha \OPT)$ where $\OPT$ is the optimal clustering cost. The rest of the argument follows essentially by applications of the triangle inequality to prove that the dimension reduced and proxy data sets have similar costs for any candidate $k$-means solutions.
    
    \paragraph{LDP $k$-means with low additive error:} We note that the constant absorbed by the $\tilde{O}$ term in the exponent of $k$ of our first algorithm could be large, which is an undesirable property in a setting where low additive error seems to dictate performance. We address this deficit with our second algorithm, where we return to an LSH-based approach and drive down the exponent of $k$ to $1 + O(1/(2c^2-1))$. This shows for the first time that it is possible to have constant factor multiplicative approximation $k$-means clustering algorithms in the LDP setting with additive error that has a square-root dependence on the data set size and the ambient dimension (up to $\log$ factors) and arbitrarily close to linear dependence on the number of cluster centers.
    
    We achieve this improvement by appealing to a construction of \cite{BFLSY17} who impose a randomly-shifted hierarchy of dyadic cells in a dimension reduced space. A tree structure is defined on subsets of the domain $[0,1)^d$; starting with $[0,1)^d$ as the root node, we bisect the hypercube along each axis to generate $2^d$ congruent octants. Each octant is itself a hypercube that we designate a child of the original cell, on proceeding recursively for $\log n$ levels the side length of each cell in the lowest level is $<1/n$.
    
    The crucial observation made by \cite{BFLSY17} was that after a uniformly random shift of the tiling there are $O(1)$ cells with side-length $t$ units within a distance of $t/d$ units of any point. By applying this observation to an optimal $k$-means solution, we are able to identify a small number of cells where the data accumulates per level. These cells serve as our domains for LSH functions. The number of data points that accumulate in these cells scales inversely with the side-length of the cells; this ensures that we only allocate centers when such an allocation is certain to be helpful. We are able to allocate a far smaller number of centers to generate our bi-criteria solution than in our first algorithm. Moving the ${1/2 + a}$-style exponent from $n$ to $k$ is technically involved and we give a more detailed explanation in \cref{sec:fourRound}.
    
    \paragraph{Challenges of the local setting:} We recall that in the locally private setting, each agent must add noise to any response they give under the assumption that it is public knowledge that all data lies in a domain of diameter $1$. This will require adding a noise vector with length proportional to $1/\epsilon$ to their private data if they were to $\epsilon$-privately release their point directly. The implications of the large noise needed to obfuscate information means that it is impossible to privately derive fine-grained information about where individual points lie.
    
    It follows from these considerations that we must try and get aggregate information about the geometry of the data set indirectly. One way of accomplishing this is to \emph{discretize} the agents' response. Although again the privatized individual responses are highly noisy, since the range of values taken by this discretized response is finite the slight bias towards values which are \emph{heavy-hitters} causes their counts to accumulate and be distinguishable from the counts of false positives. We will appeal to prior work on locally private succinct histogram recovery to recover such heavy hitting values with minimal loss in privacy.
    
    From this perspective, we see that in the first algorithm we achieve our discretization by dividing our space using proximity to grid points, and in the second algorithm we use a two different kinds of discretization; a cell based discretization which is philosophically similar to that of the first, and an LSH-based discretization which gives a geometrically meaningful response not in terms of the ambient space but instead in terms of the rest of the data set.
    
    \paragraph{Reducing round-complexity via $\HSO$:} In the course of our algorithms we often encounter a situation where we first identify some subset of the data domain that is advantageous for us to allocate a candidate center in and then we need to compute a vector average over points in that domain. Although naively performing such a computation would require two rounds in the LDP setting, we construct a subroutine that can be run in parallel with the succinct histograms used to identify such regions of the data domain, and can be queried to estimate the vector sums of all points mapping to such domains. Dividing these sums by the histogram counts yields the averages we need. Indeed, our construction is in fact a bit more general, and allows one to recover sums of arbitrary private vector values for all points that map to some heavy hitting value under a completely different value mapping. This construction allows us to compute vector averages over points mapping to heavy LSH buckets as well as vector averages in the original space over all points that map to a certain cluster in the dimension-reduced space; the two value mappings need not have anything to do with each other.
    
    \paragraph{Concurrent work:}{ In \cite{chang2021locally} a one-round protocol for LDP $k$-means with similar cost guarantees as \cref{alg:1Round} is introduced, also surpassing the $n^{1/2 + a}$ barrier mentioned above. They operate in the $\epsilon$-DP setting and get a multiplicative approximation of $\eta (1 + \alpha)$ where $\eta$ is the multiplicative approximation guarantee of any given non-private $k$-means algorithm and an additive error term of $k^{O_{\alpha} (1)} \cdot \sqrt{n d'} \cdot \poly\log (n) /\epsilon$. They also demonstrate that their protocol can be extended to the shuffle model~\citep{bittau2017prochlo, cheu2019distributed, erlingsson2019amplification} of differential privacy.
    }
    
    \paragraph{Outline of paper:}{In \cref{sec:prelims} we start by formalizing the problem statement and the definition of LDP that our algorithms must fulfill. We then summarize some notation that eases the description of our analysis and recall private subroutines from previous work. We also introduce the $\HSO$, a one-round protocol that can be run in parallel with a private succinct histogram and privately constructs a data structure that may be queries to recover sums of vector-function values taken by all agents that happen to map to a heavy-hitting value in the succinct histogram. We recall the LSH function definition and prove some fundamental properties of the construction we use in \cref{sec:fourRound}.
    
    In \cref{sec:1Round} we introduce our LDP $k$-means algorithm for arbitrarily tight multiplicative approximation, \cref{alg:1Round}. We start by establishing the pseudo-code and outlining the main steps, and then give a technical discussion explaining some of the algorithmic choices made as well as sketching why the cost analysis works out. We then give a formal proof of the cost and privacy guarantees. The main result of this section is \cref{thm:1RoundGuarantee}.
    
    In \cref{sec:fourRound}, we introduce our LSH-based LDP $k$-means algorithm, \cref{alg:main}. We start by giving a high level overview of the core ideas and advantages behind our algorithmic choices. We provide the pseudo-code in a modular fashion and analyse the cost guarantees of each subroutine in a separate subsection. The main result of this section is \cref{thm:fourRoundGuarantee}.}

\section{Preliminaries}\label{sec:prelims}

    \subsection{Problem Definition}
    
    We start by formally defining the $k$-means clustering problem. 
    
    \begin{definition}[Non-private $k$-means]
        For any Euclidean space $E$, let $z : E\times E  \rightarrow \R$ denote the square of the $\ell_2$ metric. Let $D'$ be a data set of $n$ points in $\R^{d'}$ such that $D' \subset B(0,1)$, the $d'$-dimensional unit ball of radius $1$ centered at the origin. The $k$-means clustering cost $f_{D'} (S)$ of the data set $D'$ for a set $S$ of $k$ points in $B(0,1)$ is defined by the expression
        \begin{align*}
            f_{D'} (S) = \sum_{p \in D'} z(p,S)
        \end{align*}
        where we let $z(p,S) = \min_{q \in S} z(p,q)$. The $k$-means clustering problems asks one to find a set of $k$ points in $B(0,1)$ such that $f_{D'} (\cdot)$ is minimized.
    \end{definition}
    
    \begin{remark}
        Both algorithms introduced in this work start with a dimension reduction so it will be convenient to let $d'$ denote the dimension of the given ambient space and $d$ denote the dimension of the space that the majority of the computation is done in. Similarly, $D'$ is used to denote the original data set and $D$ is used to denote the image of the data set in the dimension reduced space.
    \end{remark}
    
    We require that our $k$-means algorithm also satisfy the constraints of local differential privacy. In this framework, the dataset is distributed among $n$ agents each of whom has a single point of $D$, and the constraint of being locally differentially private requires that the transcript of any agent's responses is not too sensitive to their private data. This is formalized by appealing to the central model of differential privacy, which in turn is defined as follows:

    \begin{definition}[Differential privacy (DP), \cite{dwork2006our}]
        Two datasets $D_1, D_2 \in \mathcal{X}^n$ are \emph{neighbouring} if they differ in at most one member element, i.e. $\size{D_1 \triangle D_2} = 1$. An algorithm $A:\mathcal{X} \to \mathcal{Y}$ is said to be \emph{$(\epsilon,\delta)$-differentially private} (DP) if for any $S \subset \mathcal{Y}$ and any two neighbouring datasets $D_1,D_2 \in \mathcal{X}$,
        \begin{align*}
            P(A(D_1) \in S) \leq \exp(\epsilon) P(A(D_2) \in S) + \delta.
        \end{align*}
        If $\delta = 0$, we can say that $A$ is $\epsilon$-differentially private.
    \end{definition}

    Given the definition of the central model of differential privacy, local differential privacy is then defined as follows:
    
    \begin{definition}[Local differential privacy (LDP), \cite{kasiviswanathan2011can}]
        Consider a protocol which interacts with any one agent in some $r$ rounds, and let the response of the agent with private data $p$ be $A (p) = (A_1 (p), \dots, A_r (p))$, where $A_i(p)$ is the response of the agent in the $i$th round. We say that this protocol is \emph{$(\epsilon,\delta)$-locally differentially private} (LDP) if the algorithm that outputs privatized responses for any agent $p \mapsto A(p)$ is $(\epsilon,\delta)$-differentially private. Again, if $\delta = 0$, we can say that a protocol is $\epsilon$-locally differentially private.
    \end{definition}
    
    \begin{remark}[Notation]
        We use $\tilde{O} (\cdot)$ to denote that certain terms have been suppressed in the argument. Concretely, in this notation we omit terms that are logarithmic in the multiplicative approximation factor $\alpha$, the failure probability $\beta$ and $\log n$. We use the expression $\poly\log n$ to denote terms that are $O(\log^p n)$ for some constant power $p$.
    \end{remark}
    
    \subsection{Dimension reduction for \texorpdfstring{$k$}{k}-means clustering}
    
    In this subsection we recall some results about distance preserving dimension reduction maps that are fundamental to the construction of both algorithms described in this work. We follow the description in \cite{DBLP:conf/stoc/MakarychevMR19}, where the state of the art for the application of dimension reduction to $\ell_p$ clustering is stated and proved. We adopt the notation that for any $x,y, \alpha \in \R$, $x \simeq_{1 + \alpha} y$ if $\frac{x}{1+\alpha} \leq y \leq (1+\alpha)x$, note that for any $x,y$, for all sufficiently small $\alpha$, this is equivalent to $y = (1 \pm O(\alpha) x$.
    
    \begin{lemma}[Johnson-Lindenstrauss lemma, \cite{johnson1984extensions}]
        \label{lem:JL}
        There is a family of random linear maps $T_{d',d}: R^{d'} \to R^d$ with the property that for every $d'\geq 1$, $\alpha, \beta \in (0,1/2)$ and all $x \in \R^{d'}$,
        \begin{align*}
            P_{T \sim T_{d',d}} \left( \norm{Tx} \in \left[\frac{\norm{x}}{1+\alpha}, (1+\alpha)\norm{x}\right] \right) \geq 1-\beta,
        \end{align*}
        where $d = O\left( \frac{\log (1/\beta)}{\alpha^2} \right)$.
    \end{lemma}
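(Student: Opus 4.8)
The plan is to instantiate the family $T_{d',d}$ as the scaled Gaussian random projection: sample $G \in \R^{d \times d'}$ with independent standard normal entries and set $T = G/\sqrt{d}$. Since $T$ is linear and the event in question is invariant under scaling $x$, I would first reduce to the case $\norm{x} = 1$, so that the claim becomes a statement about how well $\norm{Tx}$ concentrates around $1$.

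Next I would identify the distribution of $\norm{Tx}^2$. For a fixed unit vector $x$, the $i$-th coordinate $(Gx)_i = \sum_j G_{ij} x_j$ is a linear combination of independent standard normals whose squared weights sum to $\norm{x}^2 = 1$, hence $(Gx)_i \sim \mathcal{N}(0,1)$; moreover the $d$ coordinates are mutually independent because they involve disjoint rows of $G$. Therefore $d\,\norm{Tx}^2 = \sum_{i=1}^d (Gx)_i^2$ has a chi-squared distribution with $d$ degrees of freedom, and in particular $\Ex[\norm{Tx}^2] = 1$.

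The quantitative heart of the argument is a concentration bound for $\chi^2_d$ around its mean. Using the moment generating function of a chi-squared variable together with a Chernoff argument (equivalently, invoking the Laurent--Massart tail bounds), one gets that for $\gamma \in (0,1)$ both $P\!\left(\norm{Tx}^2 \geq 1+\gamma\right)$ and $P\!\left(\norm{Tx}^2 \leq 1-\gamma\right)$ are at most $\exp(-c\, d\, \gamma^2)$ for an absolute constant $c$, as long as $\gamma$ is bounded away from $1$. I would then translate the target event: since $\norm{x}=1$, if $\norm{Tx} \notin [1/(1+\alpha),\, 1+\alpha]$ then $\norm{Tx}^2 \notin [(1+\alpha)^{-2},\,(1+\alpha)^2]$, and for $\alpha \in (0,1/2)$ both endpoints of this interval differ from $1$ by at least a fixed constant multiple of $\alpha$. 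Applying the $\chi^2$ tails with $\gamma = \Theta(\alpha)$ bounds the failure probability by $2\exp(-c'\, d\, \alpha^2)$ for an absolute constant $c'$.

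Finally, choosing $d = \lceil C \log(1/\beta)/\alpha^2 \rceil$ for a sufficiently large absolute constant $C$ (depending only on $c'$) makes $2\exp(-c'\, d\, \alpha^2) \leq \beta$, which is exactly the claimed bound with $d = O(\log(1/\beta)/\alpha^2)$. I expect the only genuine obstacle to be carefully verifying the $\chi^2$ tail inequality and the $\alpha$-versus-$\alpha^2$ bookkeeping when passing from $\norm{Tx}$ to $\norm{Tx}^2$; the rest is a routine chain of observations. Alternatively, one may simply cite the standard distributional Johnson--Lindenstrauss estimates (e.g.\ the Dasgupta--Gupta or Achlioptas analyses) in place of re-deriving the chi-squared bound.
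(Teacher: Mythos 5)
The paper does not prove this lemma --- it is stated with a citation to \cite{johnson1984extensions} and used as a black box --- so there is no in-paper proof to compare against. Your argument is a correct and standard proof via Gaussian random projection: for a fixed unit vector $x$, the quantity $d\norm{Tx}^2$ is $\chi^2_d$-distributed, a Chernoff argument (or the Laurent--Massart bounds) gives two-sided tails of the form $\exp(-\Omega(d\gamma^2))$ for $\gamma\in(0,1)$, and taking $\gamma=\Theta(\alpha)$ with $d=\Theta(\log(1/\beta)/\alpha^2)$ pushes the failure probability below $\beta$. The one step worth spelling out explicitly in a written-up version is the translation from a multiplicative $(1+\alpha)$ window on $\norm{Tx}$ to an additive $\Theta(\alpha)$ window on $\norm{Tx}^2$: for $\alpha\in(0,1/2)$ one has $(1+\alpha)^2-1\ge 2\alpha$ and $1-(1+\alpha)^{-2}=\frac{2\alpha+\alpha^2}{(1+\alpha)^2}\ge \frac{8\alpha}{9}$, so both tails are indeed controlled at scale $\gamma=\Theta(\alpha)$, and the $\alpha$-versus-$\alpha^2$ bookkeeping you flagged works out. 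This is essentially the Dasgupta--Gupta analysis, and citing it directly, as you suggest at the end, would be entirely appropriate.
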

    
    This result is often cited in the form that for a data set $D' \subset \R^{d'}$ of size $n$, in order to preserve all pair-wise distances with probability $1-\beta$ it suffices to reduce dimensions to $O\left( {\log (n/\beta)}/{\alpha^2} \right)$; this version follows directly by scaling the failure probability for the statement above by a factor of $1/n$ and applying the union bound. It is a well-known fact that the $k$-means clustering function can be written entirely in terms of pair-wise distances between the points in each cluster, i.e. for a $k$-means solution $S$ that induces a partition $(C_1,\dots, C_k)$,
     \begin{align*}
        f_{D'}(S) &= \sum_{p\in D'} z(p,S) = \sum_{i = 1}^k \sum_{p \in C_i} z(p,\mu_i) = \sum_{i=1}^k \frac{1}{2|C_i|}\sum_{p,q \in C_i}z(p,q).
    \end{align*}
    It follows that preserving $\ell_2$ distances within a $(1\pm \alpha)$ approximation guarantees that the $k$-means clustering cost for the same cluster sets is preserved within a $(1\pm O(\alpha))$ factor. This is the formulation that we appeal to for the multi-round clustering algorithm with low additive error.
    
    For the purpose of $k$-$\ell_p$ clustering it has been shown that one can reduce dimensions far more aggressively; this line of work culminates in the following near-optimal result of \cite{DBLP:conf/stoc/MakarychevMR19}:
    \begin{theorem}[Theorem 1.3 of \cite{DBLP:conf/stoc/MakarychevMR19}]
        \label{lem:Makarychev}
            Any family of linear maps $T_{d',d}: R^{d'} \to R^d$ that satisfies the conditions of the JL lemma and is sub-Gaussian tailed has the property that for any clustering $(C_1,\dots, C_k)$ of $D'$ with probability $1-\beta$ over the choice of $T \sim T_{d',d}$
            \begin{align*}
                 \sum_{i=1}^k \frac{1}{2|S_i|}\sum_{p,q \in S_i} z(p,q) =  \left(\sum_{i=1}^k \frac{1}{2|S_i|}\sum_{p,q \in S_i} z(Tp,Tq)\right) \left[\frac{1}{1+\alpha},(1+\alpha )\right].
            \end{align*}
    where $d = O(\log (k/\alpha \beta)/\alpha^2)$.
    \end{theorem}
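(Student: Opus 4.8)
The plan is to reduce the bound to a concentration statement about a single random quantity and then to control the union bound over clusterings, following \cite{DBLP:conf/stoc/MakarychevMR19}.

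First I would pass from pairwise distances to centered second moments. The algebraic identity recalled just above the theorem holds verbatim in $\R^{d'}$ and in $\R^{d}$, so the left-hand side of the claim equals $\sum_i\sum_{p\in C_i} z(p,\mu_i)$ with $\mu_i$ the centroid of $C_i$, while the parenthesised quantity equals $\sum_i\sum_{p\in C_i} z(Tp,\nu_i)$ with $\nu_i$ the centroid of $\{Tp:p\in C_i\}$; since $T$ is linear, $\nu_i=T\mu_i$, hence this is $\sum_i\sum_{p\in C_i}\norm{T(p-\mu_i)}^2$. Writing $\mathrm{cost}(C_i)=\sum_{p\in C_i}\norm{p-\mu_i}^2$, it therefore suffices to show that with probability $1-\beta$ over $T$, for every $k$-clustering, $\sum_i\sum_{p\in C_i}\norm{T(p-\mu_i)}^2$ lies within a $(1\pm O(\alpha))$ factor of $\sum_i\mathrm{cost}(C_i)$.

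Next I would reduce this to the preservation of inner products of difference vectors. Since $p-\mu_i=\frac{1}{|C_i|}\sum_{q\in C_i}(p-q)$, we have $\norm{T(p-\mu_i)}^2=\frac{1}{|C_i|^2}\sum_{q,q'\in C_i}\langle T(p-q),T(p-q')\rangle$; and for any fixed $u,v$, polarisation applied to \cref{lem:JL} gives $\langle Tu,Tv\rangle=\langle u,v\rangle\pm O(\alpha)\norm{u}\norm{v}$ except with probability $e^{-\Omega(\alpha^2 d)}$. If this estimate held for all triples $(p,q,q')$ then Cauchy--Schwarz, together with $\frac{1}{|C_i|}\sum_{q\in C_i}\norm{p-q}^2=\norm{p-\mu_i}^2+\mathrm{cost}(C_i)/|C_i|$, would give $\norm{T(p-\mu_i)}^2=\norm{p-\mu_i}^2\pm O(\alpha)(\norm{p-\mu_i}^2+\mathrm{cost}(C_i)/|C_i|)$; summing over $p\in C_i$ and then over $i$ yields the claim after rescaling $\alpha$. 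But a union bound over all $O(n^3)$ triples costs a $\log n$ factor and only proves the weaker form $d=O(\log(n/\beta)/\alpha^2)$ recalled before the theorem.

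The crux --- and the step I expect to be the main obstacle --- is replacing $\log n$ by $\log(k/\alpha\beta)$. Here one cannot afford to preserve every pairwise distance; instead one exploits that each cluster's cost is an aggregate, so the additive slack one may assign a cluster of cost $\gamma$ is $\Theta(\alpha\gamma)$, and there are only $k$ clusters. Following \cite{DBLP:conf/stoc/MakarychevMR19}, I would group the difference vectors inside a cluster by the dyadic scale of their length: by the identity above $\sum_{p,q\in C_i}\norm{p-q}^2=2|C_i|\,\mathrm{cost}(C_i)$, so the differences of length at least $s$ are few relative to the cost budget at that scale; for each scale one covers the relevant directions by an $\alpha$-net of controlled size and applies the \emph{two-regime} tail bound guaranteed by the ``sub-Gaussian tailed'' hypothesis --- $\Pr[\norm{Tx}^2>(1+t)\norm{x}^2]\le e^{-\Omega(\alpha^2 d)}$ for $t$ up to a constant and $\le e^{-\Omega(\alpha d)}$ for larger $t$ --- the heavier tail being exactly what lets one absorb the rare large-distortion directions at small-mass scales. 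The per-scale errors then form a geometric series summing to $O(\alpha)\cdot(\text{cost of the clustering})$, while the only genuine union bound --- over scales, over the nets, and over the at most $k$ clusters --- costs only $\log(k/\alpha\beta)$. The delicate bookkeeping is to arrange this so that a single high-probability event over $T$ covers \emph{every} $k$-clustering simultaneously and the accumulated additive errors telescope with no residual $n$- or $1/\alpha$-dependence outside the exponent; since the statement is quoted from \cite{DBLP:conf/stoc/MakarychevMR19}, in this paper it can alternatively just be cited.
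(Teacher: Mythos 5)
The paper does not prove this theorem: it is quoted verbatim as Theorem~1.3 of \cite{DBLP:conf/stoc/MakarychevMR19} and used as a black box, which is the alternative you offer in your closing clause and is the only thing the paper requires. So there is no ``paper's proof'' to compare against; the relevant question is whether your sketch of the external argument is a faithful outline.

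The first half is. The reduction from the pairwise form to $\sum_i\sum_{p\in C_i}\norm{T(p-\mu_i)}^2$ via the centroid identity and the linearity of $T$ (so $\nu_i=T\mu_i$) is correct, and the polarization-plus-Cauchy--Schwarz argument, together with the identity $\frac{1}{|C_i|}\sum_{q\in C_i}\norm{p-q}^2 = \norm{p-\mu_i}^2 + \mathrm{cost}(C_i)/|C_i|$, does yield the weaker $d=O(\log(n/\beta)/\alpha^2)$ bound after a union bound over triples. The second half --- the actual content of the theorem, replacing $\log n$ by $\log(k/\alpha\beta)$ --- you leave at the level of a plan, and the plan has a loose spot. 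The sub-Gaussian-tail hypothesis stated just after the theorem is $P(\norm{Tx}\ge 1+t)\le e^{-\Omega(t^2 d)}$ for all $t\ge 0$; this bound \emph{improves} as $t$ grows, it is not the $t$-independent $e^{-\Omega(\alpha d)}$ you write for ``larger $t$.'' More importantly, the parts you describe as ``delicate bookkeeping'' --- the dyadic decomposition by difference-vector length, the size of the nets per scale, and the argument that a single high-probability event over $T$ simultaneously controls every $k$-clustering while the per-scale errors telescope against the cost budget --- are exactly where the proof lives in \cite{DBLP:conf/stoc/MakarychevMR19}, and they are asserted rather than argued. Since the paper only cites the result, this opens no gap in the paper itself, but your proposal should be read as a guide to the external proof rather than as a self-contained argument.
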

    
    We recall that a family of linear maps $T_{d',d}$ is called sub-Gaussian-tailed if for every unit vector $x\in \R^{d'}$ and $t\geq 0$,
    \begin{align*}
        P_{T \sim T_{d',d}} ( \norm{Tx} \geq 1 + t ) \leq \exp \left( - \Omega(t^2 d) \right).
    \end{align*}
    
    For our purposes, we will also need a bound on the lengths of the vectors that holds with probability $1-\beta$ after map reducing dimensions to $\log (k/\alpha \beta)/\alpha^2$. We can use the fact that the dimension reduction maps are sub-Gaussian-tailed to get the following bound.
    
    \begin{lemma}
        \label{lem:JLstretch}
        For every point $p$ in a dataset $D'$ of size $n$, given a sub-Gaussian tailed dimension reducing family of maps $T_{d',d}$, we have that with probability $1-\beta$, $\norm{Tp} \leq O(\alpha \sqrt{\log n/\beta}) \norm{p}$.
    \end{lemma}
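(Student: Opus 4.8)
The plan is to read the bound straight off the sub-Gaussian tail property of the reduction family, applied to the \emph{normalized} data points, and then union-bound over the $n$ points of $D'$.

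First I would reduce to unit vectors: since $T$ is linear, $\norm{Tp} = \norm{p}\cdot\norm{T(p/\norm{p})}$ for every nonzero $p$ (and $Tp=0$ when $p=0$), so it suffices to control $\norm{Tx}$ for unit $x$. The sub-Gaussian tail hypothesis supplies an absolute constant $c>0$ with $P_{T\sim T_{d',d}}(\norm{Tx}\ge 1+t)\le \exp(-c t^2 d)$ for all unit $x$ and all $t\ge 0$; I would emphasize that only this one-sided (upper) tail is needed, since the lemma asserts only an upper bound on $\norm{Tp}$, so no two-sided concentration is required. Next I would pick the deviation parameter $t=t(n,\beta,d)$ so that the per-point failure probability is at most $\beta/n$: solving $\exp(-ct^2 d)\le \beta/n$ gives $t=\sqrt{\ln(n/\beta)/(cd)}$. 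Substituting the target dimension $d=\Theta\!\big(\log(k/(\alpha\beta))/\alpha^2\big)$ from \Cref{lem:Makarychev} yields
\[
t \;=\; \Theta\!\left(\alpha\sqrt{\frac{\log(n/\beta)}{\log(k/(\alpha\beta))}}\right)\;=\;O\!\left(\alpha\sqrt{\log(n/\beta)}\right),
\]
using $\log(k/(\alpha\beta))=\Omega(1)$. Finally, applying the tail bound to $p/\norm{p}$ for each of the $n$ points of $D'$ and taking a union bound, with probability at least $1-\beta$ every $p\in D'$ satisfies $\norm{Tp}\le (1+t)\norm{p} = \big(1+O(\alpha\sqrt{\log(n/\beta)})\big)\norm{p}$, which is the asserted stretch bound (the additive $1$ being the part absorbed into the $O(\cdot)$ in the way this lemma is later invoked).

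Honestly I do not expect a real obstacle here; the statement is essentially bookkeeping layered on the tail estimate of \Cref{lem:Makarychev}'s map family. The only thing that genuinely needs care is tracking where the extra $\alpha$ factor in the deviation comes from: it is precisely the aggressive target dimension $d=\Theta(\log(k/(\alpha\beta))/\alpha^2)$, whose reciprocal square root $1/\sqrt d=\Theta\!\big(\alpha/\sqrt{\log(k/(\alpha\beta))}\big)$ contributes the $\alpha$ to $t$. (As a sanity check, rerunning the same computation with the looser JL dimension $d=\Theta(\log(n/\beta)/\alpha^2)$ from \Cref{lem:JL} gives $t=O(\alpha)$, an even cleaner bound, consistent with the claim.)
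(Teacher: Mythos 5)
Your argument is correct and is essentially the same as the paper's: apply the sub-Gaussian tail bound (which for this family has exponent scaling with $t^2 d$), choose $t$ so that the per-point failure probability is $\beta/n$, substitute $d=\Theta(\log(k/(\alpha\beta))/\alpha^2)$ to get $t=O(\alpha\sqrt{\log(n/\beta)})$, and union-bound over the $n$ points. Your remark about the $1+t$ versus $O(t)$ bookkeeping is a detail the paper also leaves implicit, and your normalization-to-unit-vectors step is the same reduction the paper applies silently by stating the tail bound for general $p$.
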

    
    \begin{proof}
        For any $p \in D'$ we have that
        \begin{align*}
            P_{T \sim T_{d',d}} (\norm{Tp} \geq (1+t) \norm{p}) \leq \exp \left(- \Omega\left(t^2 \frac{\log (k/\alpha\beta)}{\alpha^2} \right) \right).
        \end{align*}
        It follows that there is a choice of $t = O(\sqrt{\log n/\beta \cdot \frac{\alpha^2}{\log (k/\alpha\beta)}}) = O(\alpha \sqrt{\log n/\beta})$ such that the bound above is at most $\beta/n$. Applying the union bound, the desired inequality follows.
    \end{proof}

    \subsection{Fundamental privacy subroutines}
    
    We briefly recall a couple of standard results from the differential privacy literature that are used in the sequel. We rely on the following composition theorem which bounds the loss in privacy of the composition of multiple DP algorithms by appealing to their individual privacy guarantees in a modular fashion.
    
    \begin{theorem}[Basic Composition, \cite{dwork2006our}]
        \label{thm:basic_composition}
        A mechanism with $N$ adaptive interactions with $(\epsilon_i, \delta_i)$-DP mechanisms each for $i \in [N]$ and no other accesses to the database is $(\sum_{i\in[N]}\epsilon_i, \sum_{i \in [N]}\delta_i)$-DP.
    \end{theorem}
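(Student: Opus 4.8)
The plan is to argue by induction on $N$, reducing everything to the case of composing two mechanisms. For the inductive step I would bundle the first $N-1$ interactions into a single mechanism $M_{<N}$ whose output is the tuple of their answers; by the induction hypothesis $M_{<N}$ is $(\sum_{i<N}\epsilon_i,\sum_{i<N}\delta_i)$-DP, and the $N$-th interaction is with a mechanism $M_N$ that depends adaptively on the whole transcript produced so far but, by hypothesis, is $(\epsilon_N,\delta_N)$-DP for every fixed value of that transcript. So the whole statement follows once I establish the base claim: if $M_1$ is $(\epsilon_1,\delta_1)$-DP and, for every fixed auxiliary input, $M_2$ is $(\epsilon_2,\delta_2)$-DP, then $D\mapsto(M_1(D),M_2(D,M_1(D)))$ is $(\epsilon_1+\epsilon_2,\delta_1+\delta_2)$-DP; applying this with $M_1=M_{<N}$, $M_2=M_N$ closes the induction.

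For the base claim I would fix neighbours $D_0,D_1$ and an output set $T$ (taking the ranges discrete; the continuous case is verbatim with integrals against densities). For a first output $a$, set $T_a=\{b:(a,b)\in T\}$, $p_a=P(M_1(D_0)=a)$, $q_a=P(M_1(D_1)=a)$, $f(a)=P(M_2(D_0,a)\in T_a)$, $g(a)=P(M_2(D_1,a)\in T_a)$. Then $f(a),g(a)\in[0,1]$, the $(\epsilon_2,\delta_2)$-DP of $M_2$ with auxiliary input $a$ gives $f(a)\le e^{\epsilon_2}g(a)+\delta_2$ for every $a$, and the composed mechanism gives $T$ probability $\sum_a p_a f(a)$ on $D_0$ and $\sum_a q_a g(a)$ on $D_1$. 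The key device is to split $p_a=\hat p_a+r_a$ with $\hat p_a=\min(p_a,e^{\epsilon_1}q_a)$ and $r_a=(p_a-e^{\epsilon_1}q_a)^{+}$: applying $(\epsilon_1,\delta_1)$-DP of $M_1$ to the event $\{a:p_a>e^{\epsilon_1}q_a\}$ gives $\sum_a r_a\le\delta_1$, while $\sum_a\hat p_a\le\sum_a p_a=1$ and $\hat p_a\le e^{\epsilon_1}q_a$ pointwise. The computation then runs
\begin{align*}
  \sum_a p_a f(a)
  &= \sum_a \hat p_a f(a) + \sum_a r_a f(a)
  \le \sum_a \hat p_a\bigl(e^{\epsilon_2}g(a)+\delta_2\bigr) + \sum_a r_a \\
  &\le e^{\epsilon_1+\epsilon_2}\sum_a q_a g(a) + \delta_2\sum_a \hat p_a + \delta_1
  \le e^{\epsilon_1+\epsilon_2}\sum_a q_a g(a) + \delta_1 + \delta_2,
\end{align*}
which is exactly the $(\epsilon_1+\epsilon_2,\delta_1+\delta_2)$-DP inequality on $D_0,D_1,T$; since these were arbitrary, the base claim and hence the theorem follow.

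The one point I expect to need care is making the failure probabilities \emph{add} --- i.e. getting $\delta_1+\delta_2$ rather than the $e^{\epsilon_2}\delta_1+\delta_2$ that a naive layer-cake integration over $f$ would produce. The decomposition $p=\hat p+r$ is precisely what buys this: on the residual mass $r$, of total weight at most $\delta_1$, I bound $f\le 1$ directly; on $\hat p$ the clean domination $\hat p_a\le e^{\epsilon_1}q_a$ pulls the $e^{\epsilon_1}$ factor out, and because $\hat p$ is only a sub-probability measure ($\sum_a\hat p_a\le 1$) the additive $\delta_2$ is multiplied by at most $1$. Adaptivity costs nothing here, since $M_2$ is assumed $(\epsilon_2,\delta_2)$-DP for \emph{every} fixed realisation of its auxiliary input, so $f(a)\le e^{\epsilon_2}g(a)+\delta_2$ is available for each $a$ independently of how $a$ was generated.
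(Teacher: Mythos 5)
The paper does not actually prove this theorem: it imports it from \cite{dwork2006our} as a black box, so there is no in-paper argument to compare against. Your proof is correct and is the standard ``residual-mass'' argument for basic composition: the split $p_a = \hat p_a + r_a$ with $\hat p_a = \min(p_a, e^{\epsilon_1} q_a)$ and $r_a = (p_a - e^{\epsilon_1} q_a)^{+}$, together with applying $(\epsilon_1,\delta_1)$-DP to the event $\{a : p_a > e^{\epsilon_1} q_a\}$ to get $\sum_a r_a \le \delta_1$, is precisely what makes the $\delta$'s add rather than pick up a multiplicative $e^{\epsilon_2}$; and the inductive reduction to two mechanisms correctly handles adaptivity because the DP guarantee on $M_N$ is assumed per fixed transcript, so $f(a) \le e^{\epsilon_2} g(a) + \delta_2$ holds pointwise. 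The only thing a fully rigorous writeup would add is handling of non-atomic ranges (condition on the $\sigma$-algebra generated by $M_1$'s output rather than summing over atoms), which you already flag; the substance is sound.
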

    
	We also use the \emph{Gaussian mechanism} and its privacy guarantee as formalized in the following lemma.
	
	\begin{lemma}[Gaussian mechanism, \cite{DBLP:journals/fttcs/DworkR14}]\label{lem:gauss}
	    Given a $d$-dimensional function $f: \mathcal{X} \to \R^d$ which has $\ell_2$-sensitivity $\max_{x,y \in \mathcal{X}} \lVert f(x) - f(y) \rVert_2 < \Delta_{f,2}$, randomized response via the Gaussian mechanism which for an agent with private data $p$ returns $f(p) + Y$ for $Y  \sim N(0,\frac{c_G^2 \Delta_{f,2}^2}{\epsilon^2} \mathbb{I}_{d \times d})$ is $(\epsilon,\delta)$-differentially private for any $c_G^2 > 2 \ln (1.25/\delta)$.
	\end{lemma}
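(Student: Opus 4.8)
\emph{Proof approach.} This is the classical analysis of the Gaussian mechanism from \cite{DBLP:journals/fttcs/DworkR14}: I would reduce to a one-dimensional Gaussian comparison, control the privacy-loss random variable with probability at least $1-\delta$, and then invoke the standard translation between a bounded privacy loss and approximate differential privacy. Here the two ``neighbouring datasets'' are simply two possible private values $p,q \in \mathcal{X}$ (each agent holds a single point), so there is nothing combinatorial to check about the neighbouring relation; it suffices to show $P(f(p)+Y \in S) \le e^\epsilon P(f(q)+Y \in S) + \delta$ for every measurable $S\subseteq\R^d$ and every $p,q$.

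\emph{Step 1: reduction to one dimension.} Fix $p,q$, set $v = f(q) - f(p)$, $\nu = \norm{v}_2 < \Delta_{f,2}$, and $\sigma = c_G \Delta_{f,2}/\epsilon$. If $\nu = 0$ the laws of $f(p)+Y$ and $f(q)+Y$ coincide and the claim is immediate, so assume $\nu>0$. Writing $\mu_p,\mu_q$ for the densities of $f(p)+Y$ and $f(q)+Y$ with $Y\sim N(0,\sigma^2\mathbb{I}_{d\times d})$, expanding the Gaussian densities gives
\[
   \ln\frac{\mu_p(z)}{\mu_q(z)} = \frac{\norm{z-f(q)}^2 - \norm{z-f(p)}^2}{2\sigma^2} = \frac{\nu^2 - 2\langle z - f(p),\, v\rangle}{2\sigma^2},
\]
which depends on $z$ only through the scalar $\langle z-f(p),v\rangle$. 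Consequently, if $Z\sim\mu_p$ and $w := \langle Z-f(p),v\rangle/\nu$, then $w\sim N(0,\sigma^2)$ and the privacy loss is the scalar random variable $L = (\nu^2 - 2\nu w)/(2\sigma^2)$.

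\emph{Step 2: tail bound on the privacy loss.} I would show $P_{Z\sim\mu_p}(L\geq\epsilon) \leq \delta$. Rearranging, $L\geq\epsilon$ iff $w \leq \nu/2 - \sigma^2\epsilon/\nu$, and by symmetry of the centered Gaussian this probability equals $P(w \geq \sigma^2\epsilon/\nu - \nu/2)$; the threshold $\sigma^2\epsilon/\nu - \nu/2$ is decreasing in $\nu$, so the probability is largest as $\nu\to\Delta_{f,2}$, where after substituting $\sigma = c_G\Delta_{f,2}/\epsilon$ it becomes $P(N(0,1) \geq c_G - \epsilon/(2c_G))$. Applying the standard Gaussian tail estimate and using $\epsilon\le 1$ together with $c_G^2 > 2\ln(1.25/\delta)$ bounds this by $\delta$; this is exactly the computation in Appendix~A of \cite{DBLP:journals/fttcs/DworkR14}, and the constant $1.25$ is calibrated precisely for it. By symmetry of the argument in $p$ and $q$ (swapping them changes neither $\sigma$ nor $\nu$), we also get $P_{Z\sim\mu_q}(\ln(\mu_q(Z)/\mu_p(Z))\geq\epsilon)\le\delta$.

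\emph{Step 3: from bounded privacy loss to $(\epsilon,\delta)$-DP.} For a measurable $S$, split $P(f(p)+Y\in S)$ according to whether the realized privacy loss is below $\epsilon$. On the event $\{L<\epsilon\}$ one has $\mu_p(z)\le e^\epsilon\mu_q(z)$ pointwise, so that contribution is at most $e^\epsilon P(f(q)+Y\in S)$; the remaining contribution is at most $P(L\ge\epsilon)\le\delta$ by Step~2. Hence $P(f(p)+Y\in S)\le e^\epsilon P(f(q)+Y\in S)+\delta$, and the reverse inequality follows from the symmetric statement in Step~2. This is exactly $(\epsilon,\delta)$-differential privacy of $p\mapsto f(p)+Y$, and since the whole response of each agent is a single such release, the protocol is $(\epsilon,\delta)$-LDP.

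\emph{Main obstacle.} Everything outside Step~2 is routine manipulation of Gaussian densities and the definition of DP; the only delicate point is checking that $c_G^2 > 2\ln(1.25/\delta)$ (with $\epsilon\le 1$) truly forces $P(N(0,1)\ge c_G - \epsilon/(2c_G))\le\delta$. The crude bound $P(N(0,1)\ge t)\le e^{-t^2/2}$ is not quite tight enough — one needs the sharper estimate $P(N(0,1)\ge t)\le e^{-t^2/2}/(t\sqrt{2\pi})$ and careful bookkeeping of the cross term $-\epsilon/(2c_G)$ in $t^2 = c_G^2 - \epsilon + \epsilon^2/(4c_G^2)$ — which is why this step occupies a full page in \cite{DBLP:journals/fttcs/DworkR14}.
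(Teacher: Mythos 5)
Your proposal is correct: it reproduces the standard Dwork--Roth Appendix~A argument (one-dimensional reduction of the privacy-loss variable, Gaussian tail bound at threshold $c_G - \epsilon/(2c_G)$, then the usual split into the bounded-loss event and a $\delta$-mass remainder), which is exactly the proof the paper is implicitly relying on, since the paper states this lemma as a cited result and gives no proof of its own. The only caveat, correctly flagged in your ``main obstacle,'' is that the final tail calculation needs the sharper Gaussian tail estimate and the assumption $\epsilon \le 1$ (implicit in the paper's usage), both of which are handled in the cited reference.
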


	\subsection{Bitstogram and the Heavy Sums Oracle}
	
	The contents of this subsection are used in the cost analysis for both clustering algorithms. In the sequel we make extensive use of locally private frequency estimation. For private frequency estimation a lower bound of $\Omega_{\epsilon}(\sqrt{n})$ is known \citep{DBLP:conf/esa/ChanSS12}. A state of the art construction for this problem is the $\BTG$ algorithm \cite{DBLP:journals/jmlr/BassilyNST20}, which is an $\epsilon$-LDP algorithm for the heavy-hitters problem that achieves low error.
	
	\begin{restatable}[Algorithm $\BTG$, \cite{DBLP:journals/jmlr/BassilyNST20}]{lemma}{lembtg}\label{lem:btg}
	    Let $V$ be a finite domain of values, let $f:D' \to V$, and let $n(v)$ denote the frequency with which $v$ occurs in $f(D')$. Let $\epsilon \leq 1$. Algorithm $\BTG (f,\epsilon,\beta)$ interacts with the set of $n$ users in 1 round and satisfies $\epsilon$-LDP. Further, it returns a list $L = ((v_i,a_i))_i$ of value-frequency pairs with length $\tilde{O} (\sqrt{n})$ such that with probability $1-\beta$ the following statements hold:
	\begin{enumerate}
	    \item For every $(v,a) \in L$, $\lVert a - f(v) \rVert \leq E$ where $E = O\left( \frac{1}{\epsilon} \sqrt{n \log (n/\beta)} \right)$.
	    \item For every $v\in V$ such that $f(v) \geq M$, $v\in L$, where $M = O\left( \frac{1}{\epsilon} \sqrt{n \log |V|/\beta \log (1/\beta)}  \right)$.
	\end{enumerate}
    We overload notation to treat the list returned by $\BTG$ returns as either a set of (heavy-hitter, frequency) pairs or a function which may be queried on a value to return either the corresponding frequency if it is a heavy hitter or a value of $0$ otherwise. A subscript of $M$ will denote the upper bound on the maximum frequency omitted. We see that whenever $\size{V} = \Omega(n)$, $M = \Omega(E)$ and $\BTG$ promises a uniform error bound of $M$ when estimating the frequency of any element in the co-domain for an appropriate choice of constants.
	\end{restatable}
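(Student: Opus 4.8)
The plan is to observe that this lemma is essentially a restatement of the accuracy guarantee of the $\BTG$ protocol of \cite{DBLP:journals/jmlr/BassilyNST20}, and then to check that its error terms specialize to the claimed $E$ and $M$; I sketch the argument for completeness. Recall the protocol. Fix a random hash $h:V\to[T]$ with $T=\tilde\Theta(\sqrt n)$ and regard each value as a bit-string of length $d=\lceil\log_2|V|\rceil$. Each user holding $v=f(p)$ privately reports, via a Hadamard-based randomized response, a sketch that encodes the bucket $h(v)$ together with a single uniformly chosen bit of $v$; the same messages double as a locally private \emph{frequency oracle}. For each bucket $t\in[T]$ the aggregator reconstructs a candidate $\hat v_t$ by majority-decoding each of the $d$ coordinates over the users hashing to $t$, estimates $n(\hat v_t)$ from the frequency oracle, and outputs the list $L$ of pairs whose estimate exceeds an internal filtering threshold. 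Since there are $T=\tilde O(\sqrt n)$ buckets, $L$ has length $\tilde O(\sqrt n)$, and only one round of interaction is used.

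Privacy is immediate: each user's transcript is a single randomized-response message (plus a constant number of auxiliary messages for the frequency oracle), each $\epsilon$-LDP by the guarantee of the underlying mechanism and with no other access to the data, so the protocol is $\epsilon$-LDP. For the first utility claim, the frequency oracle built from these messages estimates any fixed count with additive error $O(\frac1\epsilon\sqrt{n\log(1/\beta')})$ with probability $1-\beta'$; taking $\beta'=\beta/\tilde O(\sqrt n)$ and union bounding over the returned list yields $\norm{a-n(v)}\le E=O(\frac1\epsilon\sqrt{n\log(n/\beta)})$ for all $(v,a)\in L$ simultaneously.

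For the second claim, fix any $v$ with $n(v)\ge M$; the crux is that bucket $h(v)$ decodes to $v$. Two error sources must be dominated at once: (i) every one of the $d$ coordinate counters of bucket $h(v)$ carries a randomized-response error of order $\frac1\epsilon\sqrt{n\log(1/\beta)}$, and (ii) values colliding with $v$ under $h$ contribute extra mass to these counters; meanwhile correct reconstruction is required in all $d=\log|V|$ coordinates (and, before that, in the right bucket among $T$), which forces a union bound injecting an extra $\log|V|$ factor (the $T$ contribution, $\log n$, is absorbed). Since the users of $v$ supply the dominant signal to bucket $h(v)$, the threshold $M=O(\frac1\epsilon\sqrt{n\log(|V|/\beta)\log(1/\beta)})$ is precisely the smallest count for which this signal beats the combined error in (i) and (ii) with probability $1-\beta/\poly(n)$; hence $\hat v_{h(v)}=v$, and as $n(v)\ge M$ clears the filtering threshold, $v\in L$. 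A union bound over the at most $n/M$ values of frequency $\ge M$ finishes the proof. The overloading of $L$ as a function of a queried value, and the remark that $M=\Omega(E)$ whenever $|V|=\Omega(n)$, then follow by direct comparison of the two bounds.

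I expect step (ii) --- more precisely, the bucket-isolation and per-coordinate majority-decoding analysis --- to be the only delicate part of the argument: it is exactly what forces the extra $\sqrt{\log|V|\log(1/\beta)}$ factor in $M$ relative to $E$, and it is where the randomness of $h$, the coordinate-sampling fluctuations, and the randomized-response noise all have to be balanced against the heavy hitter's signal simultaneously across all $\log|V|$ bits.
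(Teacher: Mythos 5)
The paper does not prove this lemma: it is quoted verbatim as the utility and privacy guarantee of the $\BTG$ (Bitstogram) protocol of Bassily, Nissim, Stemmer, and Thakurta, with the only ``proof'' being the citation. Your reconstruction of the Bitstogram argument---hashing into $\tilde O(\sqrt n)$ buckets, a single round of randomized-response messages doubling as a frequency oracle, coordinate-wise majority decoding of the dominant value per bucket, a union bound over the returned list to obtain $E$, and a union bound over the $\log|V|$ bit positions (plus bucket isolation) to obtain the heavier threshold $M$---is consistent with the structure of the cited result, so there is no discrepancy with the paper's treatment. Two small caveats on your sketch if one were to flesh it out: the frequency oracle and the bit-decoding share the same one-round messages, so the ``plus a constant number of auxiliary messages'' phrasing slightly overstates the communication and could misleadingly suggest a composition cost on the privacy side; and the precise provenance of the $\sqrt{\log|V|\,\log(1/\beta)}$ factor in $M$ in BNST is a bit more delicate than a plain union bound over coordinates (it also involves how users are partitioned across the bit positions), so your ``union bound injecting an extra $\log|V|$ factor'' is the right intuition but would need to track the user-splitting to recover the stated constants. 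Neither affects the conclusion, and since the paper itself outsources this lemma to the cited work, your answer matches the paper's intent.
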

    
    We introduce an extension of the $\BTG$ algorithm called $\HSO$ that allows us to query the sums of some vector valued function over the set of elements that map to a queried heavy-hitter value. For a given value-mapping function $f:\X \to \mathcal{V}$ and a vector-valued function $g:\X \to \R^d$ the sum estimation oracle privately returns for every heavy hitter $v \in \mathcal{V}$ the sum of all agents that map to $x$, i.e. $\sum_{p : f(p) = x} p$. We recall that $\BTG$ is a modular algorithm with two subroutines; a frequency oracle that privately estimates the frequency of any value in the data universe, and a succinct histogram construction that constructs the heavy hitters in a bit-wise manner by making relatively few calls to the frequency oracle. The construction of $\HSO$ essentially mimics the frequency oracle construction called $\mathsf{Hashtogram}$ from \cite{DBLP:journals/jmlr/BassilyNST20} and can be run in parallel with $\BTG$, allowing us to reduce the round complexity of our protocols. The pseudo-code and proof of \cref{lem:HSO} may be found in \cref{subsec:BTGandHSO}.
    
    \begin{restatable}[$\HSO$]{lemma}{lemHSO}\label{lem:HSO}
        Let $f:\X \to \V$, $g:\X \to B(0,\Delta/2) \subset \R^{d'}$ be some functions where $g$ has bounded sensitivity $\Delta_{g,2}$ and let $D' \subset \X$ be a distributed dataset over $n$ users. With probability at least $1-\beta$, for every $v\in \mathcal{V}$ that occurs in $f(D')$, if $S(v)$ is the value returned by \Cref{alg:HS} then
	    \begin{align*}
	        \left\lVert S(v) - \sum_{f(y)  = v} g(y) \right\rVert &\leq 2 \Delta \sqrt{2n \log \frac{d' +1}{\beta}} + \frac{4 c_G \Delta_{g,2}}{\epsilon} \sqrt{2 d' n\log\frac{4}{\beta}}.
	    \end{align*}
	    Here $c_G$ is the constant derived from the Gaussian mechanism (\cref{lem:gauss}), and $\Delta_{g,2}$ is the $\ell_2$-sensitivity of $g$. Note that since $\Delta_{g,2} \leq \Delta$, this also implies (whenever $\epsilon < c_G = \sqrt{2 \ln (1.25/\delta)}$)
	    \begin{align*}
	        \left\lVert S(v) - \sum_{f(y) = v} g(y) \right\rVert &\leq O\left(\frac{c_G \Delta}{\epsilon} \sqrt{d' n \log\frac{1}{\beta}}\right).
	    \end{align*}
 	    Further, \Cref{alg:HS} is $(\epsilon,\delta)$-LDP.
    \end{restatable}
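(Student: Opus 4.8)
The plan is to treat $\HSO$ as the $\mathsf{Hashtogram}$ frequency oracle of \cite{DBLP:journals/jmlr/BassilyNST20} with each user's scalar ``bucket vote'' replaced by the vector payload $g(p)$, and then to establish privacy and accuracy separately. Concretely, I would have the users share a public random hash $h:\V\to[R]$ (from a pairwise independent family, with $R$ polynomial in $n$), possibly together with public per-value random signs, and have user $p$ respond with a single application of the Gaussian mechanism (\cref{lem:gauss}) of noise scale $c_G\Delta_{g,2}/\epsilon$ to the $R\times d'$ array that carries (the signed copy of) $g(p)$ in the block indexed by $h(f(p))$ and is zero elsewhere; the answer $S(v)$ to a query $v$ is then the sign-corrected aggregate of the reported arrays restricted to the block $h(v)$ (and if the $\mathsf{Hashtogram}$ grouping into $T$ subpopulations with independent hashes is used, $S(v)$ averages the $T$ rescaled per-group estimates).

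Privacy is the easy half. Changing one user's input changes only that user's single reported array, whose $\ell_2$-sensitivity is at most $\Delta_{g,2}$ by construction (and in particular at most $\Delta$, as $g$ takes values in $B(0,\Delta/2)$); hence by \cref{lem:gauss} each user's transcript is $(\epsilon,\delta)$-differentially private whenever $c_G^2>2\ln(1.25/\delta)$, and since every user answers exactly once, \cref{alg:HS} is $(\epsilon,\delta)$-LDP. Running this in parallel with $\BTG$ does not affect the argument, since $\BTG$ is accounted for under its own privacy budget (cf.\ \cref{thm:basic_composition}).

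For accuracy, fix a value $v$ occurring in $f(D')$ and decompose $S(v)=\sigma(v)+C_v+N_v$, where $\sigma(v)=\sum_{f(y)=v}g(y)$ is the target, $N_v$ is the aggregated Gaussian noise read off at block $h(v)$, and $C_v$ is the ``reconstruction'' error coming from the hashing/grouping structure (collisions of other values into block $h(v)$, and, if groups are used, the sub-sampling discrepancy of a group). The key point, needed to make the bound hold uniformly over all (up to $n$) occurring $v$ without picking up a stray $\log n$ factor, is to dominate each of $C_v$ and $N_v$ by a \emph{global} quantity. For the noise, $\|N_v\|$ is at most the norm of all the noise ever injected, namely $\tfrac{c_G\Delta_{g,2}}{\epsilon}\sqrt{\chi^2_{nd'}}$; a $\chi^2$ tail bound at level $\beta/4$ gives $\|N_v\|\le \tfrac{4c_G\Delta_{g,2}}{\epsilon}\sqrt{2d'n\log\tfrac{4}{\beta}}$ simultaneously for all $v$. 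For the reconstruction error, after conditioning on the hash (and on the grouping, if present), $C_v$ is a sub-sum of a sign-weighted --- resp.\ sampling-centred --- aggregate of at most $n$ vectors each of norm $\le\Delta/2$, whose total squared mass is at most $n(\Delta/2)^2$; a Hanson--Wright / Hoeffding argument over the random signs (resp.\ the random group assignment), with a union bound over the $d'$ payload coordinates and the single rescaling coordinate (hence the $d'+1$), yields $\|C_v\|\le 2\Delta\sqrt{2n\log\tfrac{d'+1}{\beta}}$, again for all $v$ at once. Triangle inequality then gives the stated bound, and the simplified form follows using $\Delta_{g,2}\le\Delta$ and $\epsilon<c_G$.

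The main obstacle is precisely this uniformity-over-$v$ step: bounding the collision/sub-sampling error of the queried block by a global concentration quantity, so the union bound ranges over the $d'+1$ coordinates rather than over the $\le n$ queried values, while keeping the estimator genuinely vector-valued --- a coordinatewise median, the naive fix, does not produce a small $\ell_2$ error. Once the decomposition $S(v)=\sigma(v)+C_v+N_v$ and these two global bounds are in place, matching the precise constants to the $\mathsf{Hashtogram}$ template, and checking that parallel composition with $\BTG$ preserves the round complexity, are routine bookkeeping.
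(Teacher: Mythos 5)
Your decomposition $S(v)=\sigma(v)+C_v+N_v$ and the two-part plan --- concentration bound for the collision term, concentration bound for the Gaussian noise, privacy from a single Gaussian-mechanism report per user --- is exactly the structure of the paper's proof. However, the algorithm you propose is not \cref{alg:HS}. The paper does not hash $\V$ into $R$ buckets and does not have each agent transmit an $R\times d'$ array. Instead it fixes a single public random sign matrix $Z\in\{\pm1\}^{|\V|\times n}$, agent $j$ sends the single $d'$-dimensional vector $y_j = Z[f(x_j),j]\cdot g(x_j)+\eta_j$ with $\eta_j$ spherical Gaussian, and the query is $S(v)=\sum_j y_j\,Z[v,j]$. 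The target term arises because $Z[f(x_j),j]Z[v,j]=1$ whenever $f(x_j)=v$, the collision term is $\sum_{j:f(x_j)\neq v} b_j g(x_j)$ with $b_j:=Z[f(x_j),j]Z[v,j]$ uniform signs, and the noise term is $\sum_j Z[v,j]\eta_j$. This is the ``$R=1$ bucket'' degeneration of your construction, keeping only the per-value signs; it makes each user's transcript $d'$-dimensional rather than $Rd'$-dimensional, which matters for the privacy/communication bookkeeping you are sweeping under $\Delta_{g,2}$. Your per-coordinate Hoeffding-plus-union-over-$d'+1$-coordinates bound on $\|C_v\|$ is fine and delivers the same $\Delta\sqrt{n\log\frac{d'+1}{\beta}}$ as the paper; the paper instead invokes \cref{lem:matrixBernstein} on the rank-one summands $b_j g(x_j)$ viewed as $d'\times 1$ matrices (whence $d_1+d_2=d'+1$), which is essentially the same estimate dressed differently.

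The place where your argument genuinely overreaches is the ``uniformity over $v$'' step that you single out as the main obstacle. You claim that both $\|C_v\|$ and $\|N_v\|$ can be dominated by global, $v$-independent quantities so that no union bound over the queried values is needed. That is not true for $C_v$: it is a signed sub-sum of $\{g(x_j)\}_{j:f(x_j)\neq v}$ whose signs $b_j=Z[f(x_j),j]Z[v,j]$ are different random variables for different $v$, so there is no deterministic $v$-independent majorant of $\|C_v\|$ of the right order; the bound genuinely has to be taken one $v$ at a time and then unioned. The same is true of $N_v=\sum_j Z[v,j]\eta_j$ (in your construction, of $\sum_j\eta_{j,h(v)}$): bounding it by ``the norm of all noise ever injected'' loses a polynomial factor, not a constant. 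In fact the paper's own proof does not union over $v$ at all --- the Matrix-Bernstein and $\chi^2$ tails are invoked once for a fixed $v$ at level $\beta$, and the uniform statement in the lemma implicitly absorbs a benign $\log$ factor downstream (the oracle is only ever queried on the $\tilde O(\sqrt n)$ heavy hitters returned by $\BTG$, so a union bound costs only $\log n$, which is suppressed in $\tilde O$). So you were right that uniformity is the delicate point, but the mechanism you propose for sidestepping it does not work, and the clean resolution is simply to accept the $\log$ factor from a union bound over the heavy hitters, as the paper does implicitly.
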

	
	\subsection{Locality Sensitive Hashing}
	
	The contents of this subsection are used only for the construction and analysis of the multi-round $k$-means algorithm with low additive error. We start by recalling the definition of an LSH family. Complete proofs may be found in the appendix.
   
    \begin{restatable}[Locality sensitive hashing (LSH)]{definition}{defLSH}
    \label{def:LSH}
   	    We say that a family of hash functions $H:\R^d \to B$ for a finite set of buckets $B$ is \emph{locality-sensitive} with parameters $(p,q,r,cr)$ if for every $x,y\in \R^d$ for some $1 \ge p > q \ge 0$, $r>0$ and $c>1$
   	    \begin{align*}
   	        P(H(x) = H(y)) \begin{cases}
   	        \geq p \mbox{ if } d(x,y) \leq r \\
   	        \leq q \mbox{ if } d(x,y) \geq cr.
   	        \end{cases}
   	    \end{align*}
    \end{restatable}

   	In this work we use an LSH family construction from \cite{AI06}.
   	
   	\begin{restatable}{theorem}{thmLSH}\label{thm:LSH}
		For every sufficiently large $d$ and $n$ there exists a family $\mathcal{H}$ of hash functions defined on $\R^d$ such that for a dataset of size $n$,
		\begin{enumerate}
			\item A function from this family can be sampled, stored and computed in time $t^{O(t)} \log n + O(dt)$, where $t$ is a free positive parameter of our choosing.
			\item The collision probability for two points $u,v\in \mathbb{R}^d$ depends only on the $\ell_2$ distance between them, which we henceforth denote by $p(\lVert u - v\rVert)$.
			\item The following inequalities hold:
			\begin{align*}
				p(1) &\geq \frac{A}{2 \sqrt{t}} \frac{1}{(1 + \epsilon + 8 \epsilon^2)^{t/2}}\\
				\forall c>1,\; p(c) &\leq \frac{2}{(1 + c^2 \epsilon)^{t/2}}
			\end{align*}
			where $A$ is an absolute constant $<1$, and $\epsilon = \Theta(t^{-1/2})$. One can choose $\epsilon = \frac{1}{4 \sqrt{t}}$.
			\item The number of buckets $N_B$ an LSH function with parameter $t$ uses is $t^{O(t)} \log n$.
		\end{enumerate}	
   	\end{restatable}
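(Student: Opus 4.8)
The plan is to instantiate $\mathcal{H}$ exactly as the ball-carving LSH of \cite{AI06} and then push their analysis through while keeping track of the constants we need. A hash function $H$ is specified by (i) a random linear map $T:\R^d\to\R^t$ with i.i.d.\ $N(0,1/t)$ entries, so that $\|Tz\|^2=(\|z\|^2/t)\,\chi^2_t$ has mean $\|z\|^2$ and sub-Gaussian concentration (as in \cref{lem:JLstretch}); and (ii) a randomized tessellation of $\R^t$ by balls of radius $w=\Theta(t^{1/4})$, chosen so that $1/(4w^2)=\epsilon$ with $\epsilon=\tfrac1{4\sqrt t}$, realized for analysis as an i.i.d.\ sequence of centers $z_1,z_2,\dots$ (uniform in a large region, or a Poisson process) and for efficiency as a randomly shifted lattice; $H(x)$ is the index of the first ball $B(z_j,w)$ containing $Tx$ (with a junk bucket for points covered by none). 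Since all data lies in $B(0,1)$, a sub-Gaussian tail bound puts every one of the $n$ projected points in a ball of $\R^t$ whose radius is controlled by \cref{lem:JLstretch} (at most $O(\sqrt{\log n})$) with probability $1-1/\poly(n)$, so only the balls meeting that region matter; the quantitative covering count of \cite{AI06} over this bounded region then yields both the bucket count of part~4 and, together with the obvious cost of sampling $T$, indexing the relevant balls, computing $Tx$, and scanning for the first containment, the time/space bound of part~1.

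Part~2 is an invariance observation: $T(x-y)\sim N(0,(\|x-y\|^2/t)I_t)$ has a law depending only on $\|x-y\|$, and the carving rule is invariant under translations of $\R^t$ (the center sequence / shifted lattice is stationary), so $\Pr[H(x)=H(y)]$ depends on $(Tx,Ty)$ only through $Tx-Ty$ and hence only through $\|x-y\|$; call this function $p(\cdot)$.

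Part~3 is the heart of the argument. Fixing $T$ and writing $s=\|T(x-y)\|$, the first center landing in $B(Tx,w)\cup B(Ty,w)$ is, conditioned on that event, uniform on the union, so it falls in the intersection with probability $\Phi_w(s)/(2-\Phi_w(s))$, where $\Phi_w(s)$ is the normalized volume of the intersection of two radius-$w$ balls at center distance $s$; hence, up to the negligible no-cover correction, $p(\|x-y\|)=\Ex_T\!\big[\Phi_w(s)/(2-\Phi_w(s))\big]$, and since $\Phi_w$ will be tiny in both regimes of interest this equals $\Theta\big(\Ex_T[\Phi_w(s)]\big)$. The standard spherical-cap estimate gives, for $s$ bounded away from $2w$,
\begin{align*}
    \Phi_w(s)=\Theta\!\left(\min\Big\{1,\ \tfrac{w}{s\sqrt t}\Big\}\right)\Big(1-\tfrac{s^2}{4w^2}\Big)^{t/2}=\Theta\!\left(\min\Big\{1,\ \tfrac{w}{s\sqrt t}\Big\}\right)\big(1-\epsilon s^2\big)^{t/2},
\end{align*}
which is decreasing in $s$. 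For the upper bound on $p(c)$ I would not split on a concentration event for $s$ (its failure probability $e^{-\Omega(\epsilon^2t)}=e^{-\Omega(1)}$ is not small enough), but integrate directly against the chi-square law: dropping the prefactor, using $(1-\epsilon c^2 Z)_+^{t/2}\le e^{-\epsilon c^2 t Z/2}$ with $Z=\tfrac1t\sum_{i\le t}g_i^2$, $g_i\sim N(0,1)$, and the Gaussian moment generating function,
\begin{align*}
    p(c)\le\Ex_T[\Phi_w(s)]\le\Ex\!\big[e^{-\frac{\epsilon c^2 t}{2}Z}\big]=\big(\Ex[e^{-\frac{\epsilon c^2}{2}g^2}]\big)^t=(1+c^2\epsilon)^{-t/2},
\end{align*}
the ``$2$'' in the statement being slack. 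For the lower bound on $p(1)$ I would restrict to the constant-probability event $\{\,s\le 1+\epsilon\,\}$ (a Gaussian/CLT estimate, since $1+2\epsilon$ exceeds $\Ex[s^2]=1$ by a constant multiple of $\mathrm{sd}(s^2)=\Theta(t^{-1/2})$), on which $\Phi_w(s)\ge\Theta(t^{-1/2})\big(1-\epsilon-O(\epsilon^2)\big)^{t/2}$; because $\big(1-\epsilon-O(\epsilon^2)\big)\big(1+\epsilon+8\epsilon^2\big)=1+\Omega(\epsilon^2)>1$, this is $\ge\tfrac{A}{2\sqrt t}\big(1+\epsilon+8\epsilon^2\big)^{-t/2}$ after absorbing constants, which is precisely the claimed bound (the generous ``$8\epsilon^2$'' is exactly the slack needed to pass from the natural $(1-\epsilon)^{t/2}$-type expression — degraded by the $(1+\epsilon)^2$ projection-distortion factor — to the $(1+\cdot)^{-t/2}$ form).

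The main obstacle is the constant-tight bookkeeping in part~3: simultaneously controlling the polynomial cap-volume prefactor, the second-order terms incurred when rewriting $(1-\cdot)^{t/2}$ as $(1+\cdot)^{-t/2}$, and the distortion of the Gaussian projection, so that everything collapses to exactly $\tfrac{A}{2\sqrt t}(1+\epsilon+8\epsilon^2)^{-t/2}$ and $2(1+c^2\epsilon)^{-t/2}$. By contrast, parts~1, 2, and 4 reduce, once the construction from \cite{AI06} is pinned down, to their covering count and a translation/rotation invariance argument.
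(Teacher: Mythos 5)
The paper does not actually prove \cref{thm:LSH}: it restates the theorem in the appendix immediately after the sentence ``In this work we use an LSH-family construction from \cite{AI06}'' and then moves on, so the result is quoted from Andoni--Indyk without a self-contained argument. Your proposal is therefore not a different route to the same proof but a reconstruction of an argument the paper delegates to a citation. That said, the skeleton you lay out is the right one: project by a scaled Gaussian map so that $\|T(x-y)\|^2$ is a scaled $\chi^2_t$ variable, ball-carve $\R^t$ so that the collision event is governed by the lens-overlap fraction $\Phi_w(s)$, write $p=\Ex_T[\Phi_w(s)/(2-\Phi_w(s))]$ (up to the uncovered-region correction), push the upper bound through the Gaussian MGF to get exactly $(1+c^2\epsilon)^{-t/2}$, and get the lower bound by restricting to the constant-probability event $s\le 1+\epsilon$ and then comparing $(1-\epsilon(1+\epsilon)^2)^{t/2}$ against $(1+\epsilon+8\epsilon^2)^{-t/2}$. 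The choice $\epsilon=1/(4w^2)$, the observation that $(1-\epsilon-2\epsilon^2)(1+\epsilon+8\epsilon^2)>1$, and the MGF identity $\Ex[e^{-a g^2}]=(1+2a)^{-1/2}$ are all the right levers, and they do collapse to bounds at least as strong as the ones stated.

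There are three places where your sketch has real daylight between it and a proof, two of which you partly flag. First, the lens-volume estimate carries a polynomial prefactor that with your $w=t^{1/4}$ is actually $\Theta(t^{-1/4})$ at $s\approx 1$, not $\Theta(t^{-1/2})$; the stated $A/(2\sqrt t)$ is still a valid lower bound (it is smaller), but you should say so rather than silently writing $\Theta(t^{-1/2})$, and for the upper bound you need the prefactor to be bounded by an explicit constant $\le 2$ across the whole range of $s$, including $s$ near $0$, not just dropped. Second, you gesture at the gap between the i.i.d./Poisson ball process used in the analysis and the randomly shifted lattice used for the $t^{O(t)}\log n$ time and bucket bounds; in \cite{AI06} closing this gap is nontrivial (it is where the $t^{O(t)}$ and $\log n$ factors actually come from, via a bounded number of grid shifts and a coverage argument over the relevant region), and your proposal does not contain that argument. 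Third, the ``negligible no-cover correction'' is the one step that can only hurt you in the \emph{upper} bound direction: uncovered points fall into a shared junk bucket, which inflates $p(c)$, so it must be shown that the uncovered probability is dominated by $(1+c^2\epsilon)^{-t/2}$, not merely small; this requires the same covering estimate as part (1)/(4) and is not free. If you want a self-contained proof in the paper, these are the places that need to be filled in; otherwise it is cleaner to do what the paper does and attribute the theorem directly to \cite{AI06}, recording only the choice $\epsilon=1/(4\sqrt t)$ and the resulting bounds as a corollary.
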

	
	Note that by scaling the input to the LSH function this gives us constructions for $(p,q,r,cr)$-sensitive LSH families for arbitrary values of $r>0$. Due to the occurrence of terms like $t^{O(t)}$ in the collision probabilities and the number of buckets, the performance of an LSH family is very sensitive to the choice of $t$. In the following lemma we show how to choose a value of $t$ for a desired ratio of $p^2(1)$ to $p(c)$.
	
	\begin{restatable}{lemma}{lemLSH}\label{lem:LSH}
	    Given a fixed $c>\sqrt{2}$, for any $B > 1$, there is a choice of $t = O\left( \log^2 B \right)$ for the LSH function described in \cref{thm:LSH} such that
	    \begin{align*}
	    \frac{p^2(1)}{p(c)} &= \Omega(B),    \\ 
	    p(1) &\geq \Omega( B^{-1/c'}/\log B ), \\
	    \log N_B &= O( \log^2 B \log\log B + \log \log n),
	    \end{align*}
	    where $c' = (c^2/8 - 1/4 )$. It will be convenient to note that $1/c' = O(1/(2c^2-1))$.
	\end{restatable}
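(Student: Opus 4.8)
The plan is to substitute the explicit collision-probability formulas of \cref{thm:LSH} into each of the three claimed quantities and then estimate, choosing the free parameter as $t=\Theta_c(\log^2 B)$ (a constant depending only on $c$) along the way. Throughout set $\epsilon=\tfrac{1}{4\sqrt t}$ as in \cref{thm:LSH}, so that $t\epsilon=\tfrac{\sqrt t}{4}$ and $t\epsilon^2=\tfrac1{16}$, and note that $c'=\tfrac{c^2}{8}-\tfrac14=\tfrac{c^2-2}{8}>0$ precisely because $c>\sqrt2$. The only analytic inputs needed are the elementary bounds $\ln(1+x)\ge x-\tfrac{x^2}{2}$ for $x\ge0$ and $\ln(1+x)\le x$ for $x>-1$; all logarithms in the estimates below are natural, which is harmless since $\ln^2 B=\Theta(\log^2 B)$, etc.

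For the ratio, multiplying the two bounds of \cref{thm:LSH} gives $\tfrac{p^2(1)}{p(c)}\ge\tfrac{A^2}{8t}\cdot\tfrac{(1+c^2\epsilon)^{t/2}}{(1+\epsilon+8\epsilon^2)^{t}}$. Taking $\ln$ and applying the two inequalities above, the exponential part satisfies $\tfrac t2\ln(1+c^2\epsilon)-t\ln(1+\epsilon+8\epsilon^2)\ge t\bigl(\tfrac{c^2\epsilon}{2}-\tfrac{c^4\epsilon^2}{4}-\epsilon-8\epsilon^2\bigr)=\tfrac{\sqrt t}{4}\bigl(\tfrac{c^2}{2}-1\bigr)-\tfrac1{16}\bigl(\tfrac{c^4}{4}+8\bigr)$, and since $\tfrac{\sqrt t}{4}\bigl(\tfrac{c^2}{2}-1\bigr)=c'\sqrt t$ we obtain $\ln\tfrac{p^2(1)}{p(c)}\ge c'\sqrt t-\ln t-O_c(1)$. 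Because $c'>0$, the choice $t=\lceil 16\ln^2 B/c'^2\rceil$ forces $\sqrt t\in[\tfrac{4\ln B}{c'},\tfrac{4\ln B}{c'}+1]$, hence the right-hand side is $4\ln B-O(\ln\ln B)-O_c(1)\ge\ln B$ once $B$ exceeds a $c$-dependent constant; for the boundedly many smaller $B$ the quantity $p^2(1)/p(c)$ is a positive constant and the claimed $\Omega(B)$ bound holds after adjusting the hidden constant. Thus $p^2(1)/p(c)=\Omega(B)$ with $t=O(\log^2 B)$.

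For $p(1)$ alone I would use only its lower bound $p(1)\ge\tfrac{A}{2\sqrt t}(1+\epsilon+8\epsilon^2)^{-t/2}$. By $\ln(1+x)\le x$ we have $(1+\epsilon+8\epsilon^2)^{t/2}\le\exp\bigl(\tfrac t2(\epsilon+8\epsilon^2)\bigr)=\exp\bigl(\tfrac{\sqrt t}{8}+\tfrac14\bigr)$, so $p(1)\ge\tfrac{Ae^{-1/4}}{2\sqrt t}\,e^{-\sqrt t/8}$. With the above $t$ we have $\tfrac1{\sqrt t}=\Theta_c(\tfrac1{\ln B})$ and $\tfrac{\sqrt t}{8}\le\tfrac{\ln B}{2c'}+\tfrac18$, whence $p(1)\ge\Omega\bigl(B^{-1/(2c')}/\ln B\bigr)$; since $\tfrac1{2c'}<\tfrac1{c'}$ and $B>1$ give $B^{-1/(2c')}\ge B^{-1/c'}$, this yields $p(1)\ge\Omega(B^{-1/c'}/\log B)$. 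Finally $N_B=t^{O(t)}\log n$ and $t=O(\log^2 B)$ give $\log N_B=O(t\log t+\log\log n)=O(\log^2 B\,\log\log B+\log\log n)$, and $\tfrac1{c'}=\tfrac{8}{c^2-2}=O\bigl(\tfrac1{2c^2-1}\bigr)$ is immediate by algebra.

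The only genuinely delicate point is the estimate in the second paragraph: one must retain the $8\epsilon^2$ term and keep in mind that $\epsilon$ itself decays like $t^{-1/2}$, so that the ``$+8\epsilon^2$'' contributes only a lower-order $O(1)$ while the part linear in $\epsilon$ supplies the governing $c'\sqrt t$ — it is this term, together with the requirement $c'>0$ that forces $c>\sqrt2$, that dictates the $t=\Theta_c(\log^2 B)$ scaling. Everything after that is routine: the $p(1)$ estimate, the bucket count, and absorbing the harmless gap between the $B^{-1/(2c')}$ that actually emerges and the stated $B^{-1/c'}$ all follow once $B>1$ is used and the finitely many small values of $B$ are dispatched with the hidden constants.
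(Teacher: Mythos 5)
Your proof is correct and follows essentially the same route as the paper's: both substitute the explicit collision-probability bounds of Theorem 3.3, apply $\ln(1+x)\ge x-x^2/2$ and $\ln(1+x)\le x$ with $\epsilon=1/(4\sqrt t)$ to isolate the governing exponent $c'\sqrt t$, and then choose $t=\Theta(\log^2 B/c'^2)$ to force $p^2(1)/p(c)=\Omega(B)$, with the $p(1)$ and $N_B$ bounds falling out as corollaries. Your version is slightly more careful with the lower-order constants (the paper's third display has a harmless sign slip on the $-1/2$ term and a typo $B+\log B$ where $B\log B$ is meant), but these do not affect the argument.
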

	
	In the construction of the multi-round $k$-means algorithm with low additive error, we will need to estimate the average of all points that map to a given heavy bucket. Due to the pair-wise nature of the LSH guarantee, the analysis of this requires us to use an arbitrary point from the bucket as a filter to ensure that sufficiently many points close to it and not too many points far from it map to that bucket.
	
	\begin{restatable}{lemma}{lemLSHGuarantee}\label{lem:LSHGuarantee}
    	Let $C \subset D$ be a set of points with diameter $r$ and let the diameter of $D$ be $\Delta$. For any $x_0 \in C$, if $\hat{x_0}$ is the average over all points colliding with $x_0$ under a $(p(1),p(c),r,rc)$-sensitive LSH function $H$ applied to $D$, then with probability $p(1)/4$,
        	\begin{align*}
            \lVert x_0 - \hat{x}_0 \rVert \leq cr + \frac{8 p(c) |D|}{p^2(1) |C|} \Delta,
            \end{align*}
        and the number of points of $C$ that collide with $x_0$ is at least $\frac{p(1) C}{2}$.
	\end{restatable}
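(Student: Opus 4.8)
The plan is to fix $x_0 \in C$, let $B_0 = \{y \in D : H(y) = H(x_0)\}$ be the bucket containing $x_0$, and partition $D$ into the \emph{near} set $D_{\mathrm{near}} = \{y \in D : \norm{y - x_0} \le cr\}$ and the \emph{far} set $D_{\mathrm{far}} = D \setminus D_{\mathrm{near}}$. Write $\NC = \abs{B_0 \cap C}$ for the number of points of $C$ that collide with $x_0$ and $N_{\mathrm{far}} = \abs{B_0 \cap D_{\mathrm{far}}}$ for the number of far points in the bucket. The two estimates I would establish, each over the random draw of $H$, are a lower bound $\NC \ge p(1)\abs{C}/2$ and an upper bound $N_{\mathrm{far}} < 4 p(c)\abs{D}/p(1)$. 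Everything then follows by a union bound and the triangle inequality.

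For the lower bound on $\NC$: since $\Diam(C) = r$, every $y \in C$ satisfies $\norm{y - x_0} \le r$, so the LSH lower bound gives $P(H(y) = H(x_0)) \ge p(1)$ (and $x_0$ trivially collides with itself), whence $\Ex[\NC] \ge p(1)\abs{C}$ by linearity of expectation. Because $\NC \le \abs{C}$ is bounded, a reverse-Markov argument — splitting $\Ex[\NC]$ over $\{\NC < t\}$ and $\{\NC \ge t\}$ to get $\Ex[\NC] \le t + (\abs{C} - t)P(\NC \ge t)$, applied with $t = p(1)\abs{C}/2$ — yields $P(\NC \ge p(1)\abs{C}/2) \ge p(1)/2$. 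For the upper bound on $N_{\mathrm{far}}$: every $y \in D_{\mathrm{far}}$ has $\norm{y - x_0} > cr$, so $P(H(y) = H(x_0)) \le p(c)$, hence $\Ex[N_{\mathrm{far}}] \le p(c)\abs{D}$, and Markov's inequality gives $P(N_{\mathrm{far}} \ge 4 p(c)\abs{D}/p(1)) \le p(1)/4$. A union bound shows that $\NC \ge p(1)\abs{C}/2$ and $N_{\mathrm{far}} < 4 p(c)\abs{D}/p(1)$ hold simultaneously with probability at least $p(1)/2 - p(1)/4 = p(1)/4$; this already gives the second assertion of the lemma.

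To finish, conditioned on this favourable event I would write $\hat{x}_0 - x_0 = \frac{1}{\abs{B_0}}\sum_{y \in B_0}(y - x_0)$ and bound $\norm{y - x_0} \le cr$ for $y \in B_0 \cap D_{\mathrm{near}}$ and $\norm{y - x_0} \le \Delta$ for $y \in B_0 \cap D_{\mathrm{far}}$, so that $\norm{\hat{x}_0 - x_0} \le cr + \Delta N_{\mathrm{far}}/\abs{B_0}$. Substituting $\abs{B_0} \ge \NC \ge p(1)\abs{C}/2$ and $N_{\mathrm{far}} < 4 p(c)\abs{D}/p(1)$ gives exactly $cr + \frac{8 p(c)\abs{D}}{p^2(1)\abs{C}}\Delta$.

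The step I expect to require the most care is the lower bound on $\NC$: a Chernoff-type concentration bound is neither available nor needed here, since $p(1)$ may itself be tiny, so one must argue in an anti-concentration style that $\NC$ exceeds half its mean with probability proportional to $p(1)$ — which is precisely what boundedness $\NC \le \abs{C}$ buys through reverse Markov. The rest is routine bookkeeping with the LSH parameters and a single application of the triangle inequality.
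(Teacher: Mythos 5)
Your proof is correct and follows essentially the same route as the paper's: a reverse-Markov (anti-concentration) bound for the number of colliding points from $C$, a Markov bound for the far collisions (the paper phrases this via the complement $|D \setminus F|$, but it is the same estimate), a union bound to land at probability $p(1)/4$, and a final weighted triangle-inequality split of the bucket into points within $cr$ and the rest. Your direct Markov step for the far set and your observation that boundedness $\NC \le |C|$ is what makes the reverse-Markov argument work are exactly the paper's reasoning, just stated a bit more crisply.
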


\section{LDP \texorpdfstring{$k$}{k}-means with arbitrarily tight multiplicative approximation}\label{sec:1Round}

    In this section we describe a one-round $k$-means clustering algorithm and formally analyse its cost and privacy guarantees. We start by describing our algorithm and provide the pseudo-code. We then give an informal description of our methods and a high-level justification for various algorithmic choices. In one line, what we will do is find a small collection of candidate centers for the bi-criteria relaxation to the $k$-means problem, derive cluster centers for a proxy data set derived by weighing the candidate centers by counts of points served, and use these cluster centers to cluster the original data set.
    
    \begin{table}
        \centering
        \begin{tabular}{c|c}
             Notation & Meaning \\ \hline
             $D' \subset \R^{d'}$ & Original data set \\
             $Q : \R^{d'} \to \R^{d}$ & mapping from high-dim. to low-dim. space \\
             $D \subset \R^d$ & $Q(D')$, dimension reduced data set \\
             $G_l$ & Rectangular grid in dimension reduced space \\
             $G_l (\cdot)$ & Mapping from $\R^d$ to coordinate-wise floor in $G_l$\\
             $t_l$ & Unit length of grid $G_l$ \\
             $\PH^l$ & Succinct histogram of number of points mapping to $g \in G_l$ for ``heavy" $g$ \\
             $\Count (\cdot)$ & Count of previously uncovered data points $g \in G_l^*$ serves\\
             $G_l^*$ & Candidate centers picked from grid points in $G_l$, $G_l^* \subset \PH^l$\\
             $N_G$ & A $k^{\tilde{O}(1/\alpha^2)}$ term used to greedily pick $G_l^*$\\
             $\PSO^l$ & Vector sums of points in original space mapping to $g \in G_l$ for ``heavy" $g$\\
             $\Sum (\cdot) $ & Sum of previously uncovered points in original space whose image served by $g \in G_l^*$\\
             $G_{l,i}^*$ & Points of $G_l^*$ for which $s_i^* \in S^*$ is closest center\\
             $M_l^*$ & Maximal grid points picked from $G_1,\dots , G_l$ \\
             $D^* \subset \R^d$ & Proxy data set generated by weighing points in $G_1^*, \dots, G_L^*$ by points served\\
             $S^*$ & $k$-means solution derived by clustering $D^*$ \\
             $S'$ & $k$-means solution for $D'$ output by \cref{alg:1Round}\\
        \end{tabular}
        \caption{Summary of notation used in \cref{alg:1Round}}
        \label{tab:1RoundNotation}
    \end{table}
    
    \begin{algorithm}
        \caption{1-Round $k$-means Clustering}
        \label{alg:1Round}
        \SetKwProg{DoParallelFor}{do in parallel for}{:}{end}
        \SetKwBlock{DoParallel}{do in parallel:}{end}
        \KwData{Data set $D' \subset \R^{d'}$ distributed over $n$ agents, privacy parameters $\epsilon, \delta$, accuracy parameter $\alpha$, failure probability $\beta$}
        \tcc{Step 1: Initialization and interaction}
        $T : \R^{d'} \to \R^d$ dimension reduction for $d = O(\log (k/\alpha\beta)/\alpha^2)$ \label{alg:1Round;line:s1b} \\
        $S : \R^d \to \R^d$ scaling by a factor $\Omega(1/(\alpha \sqrt{\log n/\beta}))$\\
        $P : \R^d \to B(0,1)$ projection to the unit ball\\
        $Q = P \circ S \circ T : \R^{d'} \to B(0,1) \subset \R^d$ \tcc*{Publicly available mapping}
        $L = \lg n$ number of grids in dimension reduced space\\
        $t_l = 2^{l-L+1}/\alpha\sqrt{d}$ for $l=1,\dots ,L$\\
        $G_l = t_l (\Z^d)$ grid with unit length $t_l$\\
        $G_l : \R^d \to G_l$ map flooring points coordinate-wise to the grid $G_l$ \tcc*{Overloaded notation}
        \DoParallelFor{$l\in [L+1]$}{
            $\PH^l \leftarrow \BTG ( G_l\circ Q ,\epsilon,\beta )$ \tcc*{Get frequency oracle for number of points snapping to grid point}
            $\PSO^l \leftarrow \HSO (G_l\circ M , p \mapsto p, \epsilon, \beta)$ \tcc*{Get sum oracle for points mapping to grid point}
        }\label{alg:1Round;line:s1e}
        \tcc{Step 2: Construction of proxy data set}
        $M_0^* \leftarrow \emptyset$ \tcc*{Keeps track of ``maximal" points in grid} \label{alg:1Round;line:s2b}
        \For{$l = 1 ,\dots, L$}{
            $(\Count : G_l \to \mathbb{R}) \leftarrow \PH^l (\cdot)$ \\
            $(\Sum : G_l \to \mathbb{R}^{d'}) \leftarrow \PSO^l (\cdot)$\\
             \For{$g \in M_{l-1}^*$}{
                $\Count (\snap{g}{l}) \leftarrow \Count (\snap{g}{l}) - \PH (g)$\\
                $\Sum (\snap{g}{l}) \leftarrow \Sum(\snap{g}{l}) - \PSO (g)$
             }
            $G^*_l \leftarrow \{ (g,\Count(g)) :g \in \lceil2N_G \log 1/\alpha\rceil \mbox{ points with largest values of }\Count\mbox{ in }G_l \}$\\
            $M_l^* \leftarrow M_{l-1}^*$\\
            \For{$g \in M_l^*$}{
                \If{$\snap{g}{l} \in G_l^*$}{
                    $M_l^*\leftarrow M_l^* \backslash \{g\}$
                }
            }
            $M_l^* \leftarrow M_l^* \cup G_l^*$\\
        }
        $D^* \leftarrow \{ g \mbox{ with multiplicity }\Count(g) \mbox{ for }(g, \Count(g)) \in G^*_1, \dots, G^*_L \}$ \tcc*{Proxy data set} \label{alg:1Round;line:s2e}
        \tcc{Step 3: Cluster center recovery}
        $S^* = \{s_1^*,\dots, s_k^*\} \leftarrow \SKM_\eta (D^*)$ \label{alg:1Round;line:s3b} \\
        $G_{l,i}^* \leftarrow \{g \in G_l^* : \argmin_{s \in S^*} z(g, c) = s_i^* \}$ for each level $l = 1 ,\dots, L$ and cluster center $s_i^* \in S$\\ 
        \For{$j = 1,\dots, k$}{
        $\Sum \leftarrow \sum_{l=1}^L \sum_{g: \in G_l^* (s_j^*) } \Sum (g)$\\
        $\Count \leftarrow \sum_{l=1}^L \sum_{g: \in G_l^* (s_j^*) } \Count (g)$\\
        $\hat{\mu}_j \leftarrow \frac{\Sum}{\Count} $
        }
        \KwRet{$S' = \{\hat{\mu}_1,\dots, \hat{\mu}_k \}$}\label{alg:1Round;line:s3e}
    \end{algorithm}
    
    \subsection{Pseudo-code and algorithm description}
    
    \paragraph{Step 1 - Initialization and interaction:}{ From \cref{alg:1Round;line:s1b} to \cref{alg:1Round;line:s1e}, we first formalize the publicly available dimension reduction, scaling and projection required to ensure that every point lies inside the domain $B(0,1)$; this is the map $Q$. We define $L = \lg n$ grids $G_1, \dots G_L$ where $G_l$ has unit length $t_l = 2^{l - L +1}/\alpha \sqrt{d}$. This definition ensures that the distance from any point in the space to its coordinate-wise floor is at most $\alpha 2^{l-L+1}$ units. The end of Step 1 occurs by $L$ calls to the $\BTG$ and $\HSO$ routines to privately generate succinct histograms $\PH^l$ and sum oracles $\PSO^l$ over points mapping to any given grid-point.}
    
    \paragraph{Step 2 - Construction of proxy dataset}{ From \cref{alg:1Round;line:s2b} to \cref{alg:1Round;line:s2e} we iteratively construct the proxy data set by going from low to high threshold and greedily picking some $2N_G \log 1/\alpha$ grid points $G_l^*$ that maximize the $\Count (\cdot)$ function. The $\Count(\cdot)$ function maintains estimate of the number of previously uncovered data points that would be covered by $g \in G_l$ if picked. We also keep track of the ``maximal" grid points in the sets $M_l^*$; at the beginning of round $l$, $M_{l-1}^*$ consists of all grid points that have been picked so far that have the property that no grid point which would cover them has yet been picked. This will ensure that when we update the $\Count ( \cdot )$ function to account for data points that have already been covered, we do not subtract for any one data points multiple times. Along the way we mimic the $\Count(\cdot)$ construction by generating a similar $\Sum(\cdot)$ mapping that estimates the vector sum of all points in the original space served by $g \in G_l$. This step ends with the construction of the proxy dataset $D^*$ where we repeat each grid point $g \in G_l^*$ with multiplicity equal to the number of data points it served, i.e. $\Count (g)$.}
    
    \paragraph{Step 3 - Cluster center recovery:}{From \cref{alg:1Round;line:s3b} to \cref{alg:1Round;line:s3e} we compute the final cluster centers $S'$ in the original space. We start by first using a non-private $k$-means algorithm $\SKM_\eta$ with an $\eta$-multiplicative approximation guarantee on the privately derived proxy data set $D^*$ to derive cluster centers $S^*$ in the low dimensional space. Then, we iterate over each cluster center $s_i^* \in S^*$ and for every fixed cluster center we use the $\Sum(\cdot)$ functions constructed in step 2 to compute the vector sums over all points snapping to grid points $G_{l,i}^*$ which are closer to $s_i^*$ than to any other center in $S^*$, as well as the count of all such data points (via $\Count$). Our estimate for the true average of this cluster in the original space is then simply $\hat{\mu}_i = \Sum / \Count$, and these $k$ estimates $\{ \hat{\mu}_1, \dots , \hat{\mu}_k\}$ form the final output of our algorithm.}
    
    \subsection{Technical discussion}
    
    We recall from the introduction that for the error we are targeting we need to find $O(\poly k \poly\log n)$ candidate centers with respect to which additive error in a $1 + \alpha$ approximation to $\OPT$ is $O(\poly k \sqrt{n} \poly\log n)$. We recall from that discussion that in the LDP setting one approach to get around the large amount of error added is to discretize the response of the agents. A natural way to achieve this is to the domain is via a rectangular $d'$-dimensional grid of points and ask agents to reveal their closest grid point; the question then becomes how best to exploit this privately derived information for $k$-means clustering. Previous work on $k$-means clustering in the central DP setting \citep{chaturvedi2020differentially} uses such an approach where in order to get an $O(1)$ multiplicative factor approximation to the optimal clustering cost, a sequence of grids is used where the unit length of the $l$th grid equals $2^{-l}/\sqrt{d}$. 
    
    To analyse this approach, one fixes an arbitrary optimal solution to the $k$-means problem $S_{\OPT}$ and partitions the data set based on how far a point lies from the optimal solution via geometrically increasing thresholds $2^{-l}$. Then for any point $p$ which lies at a distance between $2^{-l}$ and $2^{-l+1}$, the closest grid point to $p$ in the grid with unit length $2^{-l}/\sqrt{d}$ is at a distance of at most $2^{-l}$, i.e. closer than the optimal solution. One then reverses the argument to observe that if a point lies within a distance of $2^{-l+1}$ units of $S_{\OPT}$, then by the triangle inequality its closest grid point must lie within a distance of $O(2^{-l+1})$ units of $S_{\OPT}$. The authors then bound the total number of grid points that lie within any collection of $k$ centers to derive the promise that there is a small set of grid points which serve almost all data points which lie at a distance between $2^{-l}$ and $2^{-l+1}$ of the candidate centers.
    
    \paragraph{Choice of grid construction:}{As in this work we are targeting a $(1+\alpha)$ multiplicative approximation, we scale the grid unit lengths by a factor of $\alpha$ to get the promise that if a point lies within a distance of $2^{-l}$ of $S_{\OPT}$, it lies within a distance of $O(\alpha 2^{-l})$ of some grid point. Since we can only identify grid points whose counts are at least $\sqrt{n}$, we can afford to miss at most $O(\poly k \poly\log n)$ many such grid points serving the dataset across all levels for an additive error of $O(\poly k \sqrt{n} \poly\log n )$. It follows that we will need an $O(\poly k \poly\log n)$ bound on the number of grid points that lie close to the optimal centers. We will address this point further ahead in this discussion.
    
    One technicality suppressed so far is that we must have a finite (in fact $O(\poly k \poly\log n)$)  sequence of grids and thresholds for the set of candidate centers accrued across grids to be finite. We observe that if the smallest threshold is $1/n$, then the discretization error for all points which lie within $1/n$ of $S_{\OPT}$ is absorbed by an additive $O(1)$ term instead of the $O(1)$ multiplicative approximation factor; this in conjunction with the fact that the diameter of the domain is $O(1)$ shows that a set of $O(\log n)$-many thresholds suffices.
    
    Returning to the identification of grid points close to $S_{\OPT}$ in the grid, we observe that there is an issue with this approach; the choice of $S_{\OPT}$ was arbitrary and different choices can possibly lead to very different sets of grid points close to $S_{\OPT}$. It is not immediately clear what is a good way to pick grid points when we are oblivious of any choice of $S_{\OPT}$ using only the grid points histogram data.}
    
    \paragraph{Greedy maximum coverage:}{Reasoning along the lines of \cite{jones2020differentially} for the $k$-medians problem shows that a choice of grid points that greedily maximizes how many data points are covered by including these grid points among the candidate centers ensures that the clustering cost of the data set with respect to this set of grid points is at most $O(\OPT)$. Since the number of grid points is larger than $k$, and the cost is a constant factor multiplicative approximation to $\OPT$, this set of grid-points chosen across grids is a solution to the bi-criteria relaxation of the $k$-means solution (modulo some additive error).
    
    A closer look at the argument in \cite{jones2020differentially} shows that the greedy picks must maximize coverage only over yet-uncovered points, when proceeding from low to high thresholds. In the centrally private setting one can dynamically update the coverage of candidate grid points by directly accessing the data set and marking points off as they are covered, but this is not possible in the local setting. We get around this hurdle by two tools; one, ensuring a \emph{consistency} across grids in the sense that if two points map to the same grid point in a low-level grid then they also map to the same grid point in all higher-level grids; and two; keeping track of all grid points picked so far such that they are \emph{maximal} in the sense that no grid point that they themselves snap to in a coarser grid has been picked. We will then be able to evaluate the count of yet uncovered data points covered by any candidate grid point by simply subtracting the histogram counts of all maximal grid points picked so far snapping to that candidate grid points from the histogram count of that candidate grid point.
    
    We will ensure consistency by mapping each point to its \emph{coordinate-wise floor} in the $d$-dimensional grid instead of its closest point; this makes no significant different in the arguments made so far as the floor always lies within a distance of $2^{-l}$ in a grid with unit length $2^{-l}/\sqrt{d}$.
    }
    
    \paragraph{Dimension reduction for bounded candidate centers:}{We now discuss how to get the $O(\poly k \allowbreak \poly\log n)$ bound on the number of grid points within the aforementioned threshold distance of $S_{\OPT}$. For reasons of time and space efficiency, in \cite{chaturvedi2020differentially} the authors needed to bound the number of grid points close to any choice of $S_{\OPT}$ by $O(\poly(n))$. They showed that by dimension reduction to $O(\log n/\alpha^2)$ many dimensions, there are at most $O(n^{1/\alpha^2})$ many grid points within a distance of $r$ of any optimal center for a grid with unit length $\alpha r/\sqrt{d}$. They then appeal to the well-known Johnson-Lindenstrauss lemma that shows that there is a choice of $O(\log n/\alpha^2)$ many dimensions so that the $\ell_2$ distance between all pairs of data points in a data set of size $n$ is preserved within a multiplicative factor of $(1\pm \alpha)$. It is relatively easy to show that the $k$-means clustering cost for any choice of clusters is also preserved within a factor of $(1 \pm \alpha)$. 
    
    A recent work by \cite{DBLP:conf/stoc/MakarychevMR19} generalized the Johnson-Lindenstrauss guarantee for $k$-means clustering by showing that in fact performing dimension reduction to $\log (k/(\alpha\beta))/\alpha^2$-dimensions ensures that with probability $1-\beta$ the cost of every clustering solution is preserved within a multiplicative cost of $(1 \pm \alpha)$. By tracing the argument of \cite{chaturvedi2020differentially} for upper bounding the number of grid points close to any optimal center with this stronger bound on the dimensionality of the dimension-reduced space leads to a $k^{\tilde{O}(1/\alpha^2)}$ bound on the number of grid points close to $S_{\OPT}$. For any fixed approximation factor $(1+\alpha)$, this immediately gives us the desired $O(\poly k \poly\log n)$ bound on the number of grid points close to $S_{\OPT}$ as well as the $O(\poly k \poly\log n)$ bound on the number of candidate centers picked.
    }
    
    \paragraph{Proxy data set construction:}{To recap, the set of candidate centers derived to construct the proxy data set has the property that for all but $O(\poly k \poly\log n)$ many data points, there is a candidate center at a distance of $O(\alpha)$ times the distance between a data point and the optimal centers. We construct the proxy data set by repeating each candidate center with an estimate of the number of points it covers in this manner. This can be seen as essentially \emph{moving} each data point to the candidate center that covers it; in sum what we have shown is that the net movement is $O(\alpha \OPT)$. We can then show by the triangle inequality that the $k$-means clustering functions of the original and the proxy data set are within a $(1+O(\alpha))$ multiplicative approximation factor and $O(\poly k \poly \log n)$ additive error. It will follow that the optimal clustering cost for the proxy data set is a $(1+\alpha)$ factor more than the optimal cost for the original data set (modulo additive error), and therefore that any clustering solution derived by a non-private $k$-means clustering algorithm with multiplicative approximation factor $\eta$ has net clustering cost at most $(1+\alpha)\eta$. Using the relation between the $k$-means clustering functions this time in reverse, we get that the privately derived cluster centers for the proxy data set serve as cluster centers for the original data set with cost $(1+O(\alpha))\eta$.
    }
    
    \paragraph{Undoing the dimension reduction:}{We have privately derived $k$ cluster centers in the dimension reduced space that serve the data set $D$ with clustering cost $(1+O(\alpha))\eta \OPT$ and additive error $O(\poly k \poly\log n)$. This implicitly clusters the original data set with a similar error guarantee by mapping each data point to a cluster corresponding to the center in the low-dimensional solution that its image is closest to. To compute the centers of these clusters in the original space, we use the \emph{sum oracles} derived from calls to $\HSO$ to recover the vector sums of all data points in the original space that lie in these implicitly defined clusters. Dividing these sums by the counts derived during our proxy data set construction gives us good estimates to the cluster-wise centers.
    }
    
    \subsection{Formal cost and privacy analysis}
    
    \paragraph{Proof outline:}{We begin by relating the optimal clustering cost in the original space $\OPT'$, and the clustering cost in the dimension reduced space $\OPT$ (\cref{lem:costApproximation}). We then formally derive some properties of the grids $G_l$, the maximal points identified at the end of round $l$ i.e. $M^*_l$, and the accuracy of the $\Count$ map used in step 2 to choose grid points as candidate cluster centers (\cref{lem:snapBound} to \cref{lem:count_acc}). Since the error bound for the $\Sum$ map is practically identical to that of the $\Count$ map, we prove that in immediate succession.
    
    The core of our cost analysis for the bi-criteria solution is showing that the clustering cost of the data set with respect to many greedy choices of candidate centers is competitive with the optimal clustering (\cref{def:AlOl} and \cref{lem:al_Versus_ol}). These results allow us to show in that the $k$-means clustering functions for the dimension reduced data set $D$ and the proxy data set $D^*$ are close in $\ell_1$ norm (\cref{lem:1round_ProxyVersusActual}).  \Cref{lem:1round_ProxyVersusActual} is then exploited in turn to show that the output of the non-private clustering algorithm works well for the original dimension reduced data set (\cref{cor:proxyToOriginal}).
    
    Finally, starting from \cref{def:actualClustering}, we start the work of recovering cluster center in the original space. We begin in the definition by formalizing the actual clustering of the dimension reduced data set that results from identifying each data point with the first grid point that serves it in some grid. Then we show that the output of the algorithm works well for this actual clustering and leads to a $(1+O(\alpha)) \eta$ factor multiplicative approximation ( \cref{lem:actualDimRedCost} to \cref{lem:originalSpaceCost}). This section culminates in the main result \cref{thm:1RoundGuarantee} which accounts for all privacy loss which occurs across all calls to $\BTG$ and $\HSO$ and after scaling the privacy parameters in the calls to these routines formalizes the final cost guarantee of \cref{alg:1Round}.
    
    }
    
    \begin{definition}
        We recall some notation used in the algorithm description and introduce some definitions that help with the cost analysis for this algorithm.
        \begin{itemize}
            \item There is a sensitive dataset $D' \subset B(0,1)$ distributed among $n$ users, exactly one point per user. We denote the cost of the optimal $k$-means solution by $\OPT'$.
            \item Let $Q:\R^{d'} \to \R^d$ be a publicly available function that maps the data domain to $B(0,1)$ in the dimension reduced space $\R^d$. It is computed by first computing the output of the dimension reduction map $T$, followed by a scaling $S$ by $1/\alpha\sqrt{\log n}$ units (which ensures that with high probability all points lie inside the unit ball in the dimension reduced space, followed by a projection $P$ to the unit ball to deal with any outliers.
            \item We denote the dimension-reduced data set $Q(D')$ by $D$. We denote its optimal clustering cost by $\OPT$. We fix any optimal $k$-means solution $S_{\OPT}$ for $D$ with clustering cost $\OPT$.
            \item Let $L = \lceil \lg n\rceil$ denote the number of levels.
            \item Let $r_l = 2^l/2^{L-1}$ for $l = 1,\dots, L$ denote the $\ell_2^2$ distances which we use as thresholds to partition $D$ depending on how far points lie from $S_{\OPT}$. Note that $r_1 < 1/n$ and $r_L = 2$. Further, we set $r_0 = 0$. With this notation we see that $D \subset B(0,1) \subset B(p, r_L)$ for any $p \in B(0,1)$. 
            \item Let $o_l := \{ p\in D: z(p,S_{\OPT}) \in [r_{l}, r_{l+1}) \}$ for $l = 1 , \dots, L$ denote the thresholded partitions of $D$. Note that with our choice of $r_l$ this definition implies $D = \sqcup_{i=l}^L o_l$.
            \item Let $t_l = \alpha r_l/\sqrt{d}$ denote the unit length of the grid $G_l$  for $l = 1,\dots, L$. Let $G_l$ be the axis aligned grid of unit length $t_l$ units centered at the origin in $B(0,1)$, i.e. $G_l := B(0,1) \cap (t_l \Z^d)$. We overload notation and let $\snap{\cdot}{l} : \R^d \to G_l$ map each point to its floor in $G_{l}$, i.e. $p$ maps to $t_l (\lfloor p_1/t_l\rfloor, \dots, \lfloor p_d/t_l \rfloor)$. Note that these multidimensional floor maps are consistent in the sense that for any $m>i$, for the $j$th coordinate we have
            \begin{align*}
                G_{m} \circ G_l (p)_j &= t_{m} \lfloor  t_l \lfloor p_j/t_l \rfloor / t_m \rfloor \\ 
                &= t_{m} \lfloor 2^{l-m} \lfloor p_j/t_l \rfloor \rfloor \\
                &= t_{m} \lfloor 2^{l-m} p_j/t_l \rfloor \\
                &= t_{m} \lfloor p_j/t_{m} \rfloor \\
                &= G_{m} (p)_j
            \end{align*}
            so putting all coordinates together $G_{m} \circ G_i (p)_j = G_{m}(p)$. Note that $t_{m} \lfloor 2^{l-m} \lfloor p_j/t_l \rfloor \rfloor = t_{m} \lfloor 2^{l-m} p_j/t_l \rfloor$ because $1/2^{l-m} \in \Z$.
            \item We assume each grid point is implicitly tagged with the index of its parent grid point. We will abuse notation and drop indices for the succinct histograms $\PH^l$ and $\PSO^l$ where they may be deduced from the grid point for which the frequency or sum is being queried.
        \end{itemize}
    \end{definition}
    
    \begin{lemma}[Accounting for dimension reduction]
        \label{lem:costApproximation}
        With probability $1-\beta$, we have that for every clustering $(D'_1, \dots, D'_k)$ of $D'$,
        \begin{align*}
            \sum_{i \in k} \sum_{p \in D'_i} s\left( p , \frac{\sum_{q\in D'_i} q}{\size{D'_i}} \right) \simeq_{1 + \alpha} (\alpha \log n/\beta ) \sum_{i \in k} \sum_{p \in Q (D'_i) } s\left( Q (p) , \frac{\sum_{q\in D'_i} M(q)}{\size{D'_i}} \right).
        \end{align*}
        As a direct corollary $\OPT' \simeq_{1 + \alpha} (\alpha \log n/\beta) \OPT$.
    \end{lemma}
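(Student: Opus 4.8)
The plan is to chain together the clustering-cost preservation guarantee of \cref{lem:Makarychev} for $T$, the length bound of \cref{lem:JLstretch} (so that the projection $P$ may be treated as the identity on the data), and two elementary facts: the identity $\sum_{p\in C}z(p,\nu_C)=\frac{1}{2|C|}\sum_{p,q\in C}z(p,q)$ recalled in \cref{sec:prelims} (with $\nu_C$ the centroid of $C$), and the observation that a linear map $A$ carries the centroid of $C$ to the centroid of $A(C)$. Fix an arbitrary clustering $(D'_1,\dots,D'_k)$ of $D'$ and set $\mu_i=\frac{1}{|D'_i|}\sum_{q\in D'_i}q$. By the identity, the left-hand side of the claim equals $\sum_i\frac{1}{2|D'_i|}\sum_{p,q\in D'_i}z(p,q)$.

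First I would apply \cref{lem:Makarychev} with accuracy $\alpha$ and failure probability $\beta/2$; this is the regime for which $T$ is defined, since its target dimension is $d=O(\log(k/\alpha\beta)/\alpha^2)$. On an event $\mathcal{E}_1$ of probability at least $1-\beta/2$ it then holds simultaneously for every clustering of $D'$ that
\begin{align*}
\sum_i\frac{1}{2|D'_i|}\sum_{p,q\in D'_i}z(p,q)\;&\simeq_{1+\alpha}\;\sum_i\frac{1}{2|D'_i|}\sum_{p,q\in D'_i}z(Tp,Tq)\\
&=\sum_i\sum_{p\in D'_i}z(Tp,T\mu_i),
\end{align*}
the equality being the identity run in reverse together with the linearity of $T$ (which makes $T\mu_i$ the centroid of $\{Tq:q\in D'_i\}$). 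Writing $S(x)=\lambda x$ for the scaling used by \cref{alg:1Round}, we have $z(Sx,Sy)=\lambda^2 z(x,y)$ and $S$ again commutes with centroids, so the last sum equals $\lambda^{-2}\sum_i\sum_{p\in D'_i}z\big(STp,\ \frac{1}{|D'_i|}\sum_{q\in D'_i}STq\big)$.

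It remains to replace $S\circ T$ by $Q=P\circ S\circ T$ (and by $M$). Invoking \cref{lem:JLstretch} with failure probability $\beta/2$, on an event $\mathcal{E}_2$ of probability at least $1-\beta/2$ every $p\in D'$ satisfies $\norm{Tp}\le O(\alpha\sqrt{\log(n/\beta)})\norm{p}\le O(\alpha\sqrt{\log(n/\beta)})$; choosing the constant hidden in $\lambda=\Theta(1/(\alpha\sqrt{\log(n/\beta)}))$ small enough then forces $\norm{STp}\le 1$, so the projection $P$ fixes every $STp$ and $Q(p)=M(p)=STp$ for all $p\in D'$. In particular $\frac{1}{|D'_i|}\sum_{q\in D'_i}STq=\frac{1}{|D'_i|}\sum_{q\in D'_i}M(q)=\frac{1}{|D'_i|}\sum_{q\in D'_i}Q(q)$, and the preceding sum is exactly $\lambda^{-2}$ times the right-hand side of the claim. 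On $\mathcal{E}_1\cap\mathcal{E}_2$, which has probability at least $1-\beta$, chaining the two exact equalities with the single $\simeq_{1+\alpha}$ relation proves the lemma, the scalar in the statement being $\lambda^{-2}=\Theta(\alpha^2\log(n/\beta))$. For the corollary, condition on this event: clusterings of $D=Q(D')$ and of $D'$ are in bijection, and for such a clustering the right-hand quantity of the lemma is exactly the $k$-means cost of the matching clustering of $D$ with each part served by its own centroid, whose minimum over clusterings is $\OPT$ because the centroid minimises a cluster's $\ell_2^2$ cost; likewise the minimum of the left-hand quantity is $\OPT'$. Applying the lemma to an optimal clustering of $D'$ yields $\lambda^{-2}\OPT\le\lambda^{-2}(\text{its image cost})\le(1+\alpha)\OPT'$, and applying it to the pullback of an optimal clustering of $D$ yields $\OPT'\le(1+\alpha)\lambda^{-2}\OPT$, so $\OPT'\simeq_{1+\alpha}\lambda^{-2}\OPT$.

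The only point needing real care is the third step: fixing the constant in $S$ small enough that \cref{lem:JLstretch} genuinely places every image inside $B(0,1)$ — which is what makes the projection vacuous and reconciles $M$ with $Q$ on the data — while checking that the two failure events sum to at most $\beta$ and that a single application of \cref{lem:Makarychev} (not a union bound over the exponentially many clusterings) is what is needed. The rest is bookkeeping with the centroid identity and linearity, and the corollary is a routine consequence of ``optimal center $=$ centroid''.
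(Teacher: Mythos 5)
Your proof follows essentially the same route as the paper's: apply \cref{lem:Makarychev} for the dimension-reducing map $T$, note that the scaling $S$ rescales costs exactly, and use \cref{lem:JLstretch} to argue that with high probability the projection $P$ acts as the identity on the data so $Q$ and $M$ coincide on $D'$. You are somewhat more careful in splitting the failure probability between the two random events and in tracking the constant $\lambda^{-2}=\Theta(\alpha^2\log(n/\beta))$, which is arguably a more faithful statement of the scalar than the $(\alpha\log n/\beta)$ written in the lemma.
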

    
    \begin{proof}
        We write $Q = P \circ S\circ T$, where $T$ is the dimension reduction to $O(\log (k/\alpha\beta)/\alpha^2)$, $S$ is the scaling by a factor of $\Omega(1/\alpha\sqrt{n/\beta})$, and $P$ is projection to the unit ball. Given any clustering $(D'_1, \dots, D'_k)$ of $D'$, by \cref{lem:Makarychev} we have that
        \begin{align*}
            \sum_{i \in k} \sum_{p \in D'_i} s\left( p , \frac{\sum_{q\in D'_i} q}{\size{D'_i}} \right) \simeq_{1 + \alpha} \sum_{i \in k} \sum_{p \in D'_i} s\left( T(p) , \frac{\sum_{q\in D'_i} T(q)}{\size{D_i}} \right).
        \end{align*}
        The scaling map changes all $\ell_2$-distances by precisely the scaling factor, so we also have that
        \begin{align*}
            \sum_{i \in k} \sum_{p \in D'_i} s\left( T(p) , \frac{\sum_{q\in D'_i} T(q)}{\size{D'_i}} \right) = (\alpha \log n/\beta ) \sum_{i \in k} \sum_{p \in S \circ T D'_i} s\left( S \circ T (p) , \frac{\sum_{q\in D'_i} S \circ T(q)}{\size{D'_i}} \right).
        \end{align*}
        Finally, since with probability $1-\beta$ all points lie in the unit ball after scaling by a factor of $1/\alpha\sqrt{\log n/\beta}$, the projection map does not move any point and hence the same clustering cost holds for $P \circ S \circ T (D') = Q(D')$.
    \end{proof}
    
    In the following lemma we derive a bound on the discretization error and use that to derive a promise that in every level $l$ if we snap $o_l$ to the grid then we get at most $k^{\tilde{O}(1/\alpha^2)}$ many grid points.
    
    \begin{lemma}[Properties of grids $G_l$]
        \label{lem:snapBound}
        \label{lem:NGBound}
        For all $l = 1,\dots, L$, the following bound statements hold for each grid $G_l$:
        \begin{enumerate}
            \item For any $p\in B(0,1)$, $\norm{p - \snap{p}{l}} \leq \alpha 2^{-l} = t_l \sqrt{d} = \alpha r_l$.
            \item $\size{G_l (\cup_{j=1}^l o_j)} = k^{O(1/\alpha^2)}$.    
        \end{enumerate}
    \end{lemma}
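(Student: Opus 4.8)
The plan is to prove the two claims in sequence: claim 1 is a one-line consequence of using the coordinate-wise floor, and claim 2 combines the shell decomposition $D = \sqcup_j o_j$ with a lattice-point counting estimate, where the aggressive dimension bound of \cite{DBLP:conf/stoc/MakarychevMR19} is what keeps the exponent small. For claim 1, I would fix $l$ and $p \in B(0,1)$: since $\snap{p}{l}$ replaces each coordinate $p_j$ by $t_l\lfloor p_j/t_l\rfloor$, we have $0 \le p_j - \snap{p}{l}_j < t_l$ coordinate-wise, so $\norm{p - \snap{p}{l}} \le \sqrt d\, t_l$. Plugging in $t_l = \alpha r_l/\sqrt d$ gives $\sqrt d\, t_l = \alpha r_l$, which are the middle and right expressions; identification with $\alpha 2^{-l}$ is just unwinding the normalization $r_l = 2^{l-L+1}$, chosen so that the coarsest grid resolves scale $\Theta(\alpha)$ and the finest resolves scale $\Theta(\alpha/n)$.

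For claim 2, I would first localize $\bigcup_{j=1}^l o_j$ near $S_{\OPT}$. If $p \in o_j$ with $j \le l$, then $z(p,S_{\OPT}) < r_{j+1} \le r_{l+1} = 2r_l$, so $\norm{p - S_{\OPT}} < \sqrt{2r_l}$; by claim 1, $\norm{\snap{p}{l} - p} \le \alpha r_l$, and since $r_l \le r_L = 2$ and $\alpha < 1$ the triangle inequality gives $\norm{\snap{p}{l} - S_{\OPT}} = O(\sqrt{r_l})$. Hence every element of $G_l\bigl(\bigcup_{j\le l} o_j\bigr)$ is a point of the lattice $t_l\Z^d$ lying within $\ell_2$-distance $O(\sqrt{r_l})$ of one of the $\le k$ centers of $S_{\OPT}$, so it suffices to bound, for a single center $c$, the number of points of $t_l\Z^d$ inside $B(c, O(\sqrt{r_l}))$. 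Such a ball sits inside an axis-parallel cube of side $O(\sqrt{r_l})$, which meets at most $\bigl(O(\sqrt{r_l})/t_l + 1\bigr)^d$ lattice points; by the calibration of the grid side to the level-$l$ scale this ratio is $O(\sqrt d/\alpha)$, so the count is $(O(\sqrt d/\alpha))^d$. Finally, substituting $d = O(\log(k/\alpha\beta)/\alpha^2)$ from \cref{lem:Makarychev} turns $(O(\sqrt d/\alpha))^d = \exp\bigl(O(d\log(d/\alpha))\bigr)$ into $k^{O((\log\log k + \log(1/\alpha\beta))/\alpha^2)}$, which also absorbs the union over the $k$ centers, giving $\size{G_l(\cup_{j=1}^l o_j)} = k^{O(1/\alpha^2)}$ up to the $\log\log$ and $\log(1/\alpha\beta)$ factors suppressed by the $\tilde O$ convention.

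The step I expect to need genuine care is the lattice-point count: assertions of the form ``there are only $k^{\tilde O(1/\alpha^2)}$ grid points near $S_{\OPT}$'' are exactly where one must exploit that the reduced dimension is $O(\log(k/\alpha\beta)/\alpha^2)$ rather than $O(\log n/\alpha^2)$, and one has to check scrupulously that the grid side at level $l$ really is a $\Theta(\alpha/\sqrt d)$ fraction of the radius of the level-$l$ shell around each optimal center, so that the ratio entering the volume bound is data-size-independent; keeping the three scales in play (an $\ell_2$ radius, an $\ell_2^2$ threshold $r_l$, and a grid side $t_l$) consistent across all $L = \lceil \lg n\rceil$ levels is the delicate bookkeeping. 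Everything else --- claim 1, the shell containment, and the cube-to-lattice-point estimate --- is routine.
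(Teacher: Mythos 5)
Your proof is correct and follows the paper's decomposition --- discretization bound, shell containment near $S_{\OPT}$, lattice-point count around each optimal center, substitution of the aggressive dimension bound $d = O(\log(k/\alpha\beta)/\alpha^2)$ --- with one technical difference in the counting step. You count lattice points of $t_l\Z^d$ in $B(c, O(\sqrt{r_l}))$ by a bounding-cube estimate $(O(\sqrt{d}/\alpha))^d$; the paper instead rescales to $\Z^d$, observes that the surviving lattice points obey $\sum_i v_i^2 \le d/\alpha^2 + O(1)$, and counts via a balls-in-bins binomial $\sim (O(1/\alpha^2))^d$ (times $2^d$ for signs). The two estimates differ by an $\exp(O(d\log\sqrt{d}))$ factor, i.e., a $\log\log k$ in the exponent of $k$; both are $k^{\tilde{O}(1/\alpha^2)}$, the paper's being marginally tighter while your cube bound is a touch more robust since it needs only the radius and not the exact $\ell_2^2$ form of the constraint. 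Two further remarks. You rightly single out the calibration as the delicate step: the ratio $\sqrt{r_l}/t_l$ is the level-independent $\Theta(\sqrt{d}/\alpha)$ only if $t_l = \Theta(\alpha\sqrt{r_l}/\sqrt{d})$, that is, $\alpha/\sqrt{d}$ times the $\ell_2$ shell radius $\sqrt{r_l}$ rather than $\alpha/\sqrt{d}$ times the $\ell_2^2$ threshold $r_l$ as the notation table literally reads --- and the paper's own rescaled constraint $z(j,0) \le d/\alpha^2 + O(1)$ works out only under the former reading, so your scrupulousness here is warranted. Finally, your containment step applies the plain $\ell_2$ triangle inequality directly, whereas the paper routes through its weak-triangle lemma under the premise $r_l = O(z(p, S_{\OPT}))$, which fails for $p \in o_j$ with $j < l$; the conclusion is still fine (one should read $(2+O(\alpha))r_l$ rather than $(1+O(\alpha))r_l$, which is harmless), but your direct route avoids the issue.
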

    
    \begin{proof}
        \begin{enumerate}
            \item By definition $G_l(p) = t_l (\lfloor p_1/t_l \rfloor, \dots, \lfloor p_d/t_l \rfloor)$. Since $\size{p_j/t_l - \lfloor p_j/t_l \rfloor } \leq 1$, it follows that $\size{p_j - \snap{p}{l}_j} \leq t_l \Rightarrow \norm{p - \snap{p}{l}} \leq t_l\sqrt{d} = \alpha r_l$.
            
            \item Let $p\in (\cup_{j=1}^l o_j)$. By definition of $z(\cdot,\cdot)$, $z(G_l (p), S_{\OPT}) \leq z(G_l(p), \argmin_{c \in S_{\OPT}} z(p,c))$. Then, since $z(p,G_l (p) ) \leq \alpha^2 r_l^2 = O(\alpha^2 r_l)$ and $r_l = O(z(p,\argmin_{c \in S_{\OPT}} z(p,c)))$, by the weak triangle inequality $z(G_l (p) , S_{\OPT}) \leq (1 + O(\alpha))r_l$.
        
            Since we have shown $G_l (\cup_{j=1}^l o_j) \subset \{ g \in G_l: z(g, S_{\OPT}) < (1 + O(\alpha))r_l \}$, it will suffice to bound the size of the latter set. Fix any $s \in S_{\OPT}$. If $g \in G_l$ is such that $z(g,s) \leq (1+ O(\alpha)) r_l$ then by the weak triangle inequality $z(\snap{s}{l}, g) \leq (1+ O(\alpha))r_l$. By translating $G_l(s)$ and $G_l$ so that $G_l(s)$ lies at the origin and scaling the space up by a factor of $1/t_l$ so that $G_l$ maps onto $\Z^d$, we see that there is an injection from $\{g \in G_l : z(g,s) \leq (1 + O(\alpha))r_l) \}$ into $V = \{j \in \Z^d : z(j,0) \leq d/\alpha^2 + O(1)\}$.
            
            Expanding the definition of $V$, we get
            \begin{align*}
                v \in V \Rightarrow \sum_{i\in [d]} v_i^2 \leq d/\alpha^2 + O(1).
            \end{align*}
            It follows that the number of unsigned $v \in V$ is at most the number of ways of partitioning $d/\alpha^2 + O(1)$ balls into $d+1$ distinguishable bins. Then,
            \begin{align*}
                \size{V} &= \binom{d/\alpha^2 + O(1)}{d + 1} \\
                &< \left( \frac{e \cdot (d/\alpha^2 + O(1))}{d+1} \right)^{d +1} \\
                &= k^{\tilde{O}(1/\alpha^2)}
            \end{align*}
            where we use that $d = O\left( \log (k/(\alpha\beta))/\alpha^2\right)$. Hence, $\size{\{ g \in G_l: z(g, S_{\OPT}) < (1 + O(\alpha))r_l \}} = k\cdot 2^d \cdot k^{\tilde{O}(1/\alpha^2)} =  k^{\tilde{O}(1/\alpha^2)}$.
        \end{enumerate}    
    \end{proof}
    
    \begin{definition}
        We make a couple of definitions to ease our analysis from this point.
        \begin{enumerate}
            \item Let $N_G$ be a uniform upper bound on the the sizes of the sets $G_l (\cup_{j=1}^l o_j)$. It follows from \cref{lem:snapBound} that we can choose a value of $N_G = k^{\tilde{O}(1/\alpha^2)}$.
            \item We define a sequence of subsets $a_l$ inductively. Let $a_1 = \{p \in D : \snap{p}{1} \in G_1^* \}$ and let $a_l = \{ p \in D : \snap{p}{l} \in G_l^* \} \backslash \left( \cup_{j=1}^{l-1} a_j \right)$. Informally, $a_l$ consists of those points which were not explicitly covered at a distance of $\alpha r_j$ for any $j<l$ but are successfully covered by some $g\in G_l^*$ at an $\ell_2$ distance of $\alpha r_l$, since its floor in the grid was added to $G_l^*$.
            \item $M_l^*$ is the set of grid points constructed iteratively by adding all grid points picked in round $l$ from $G_l$ to grid points picked in previous rounds and then removing all grid points picked in previous rounds which snap to any grid point picked in round $l$. Intuitively, we can think of this set as the set of ``maximal" grid points that have been picked so far. Keeping track of this set will allow us to avoid over-counting data points being covered at different levels and keep private estimation error terms small.
        \end{enumerate}
        
    \end{definition}
    
    \begin{lemma}[Properties of maximal grid point sets $M_l^*$]
    \label{lem:M*Props}
        The following properties hold for the sets $M_l^*$ for $l = 1,\dots, L$.
        \begin{enumerate}
            \item If $p\in a_j$ for some $j\leq l$ then $\exists ! k \in \{j,\dots, l\}$ such that $\snap{p}{k} \in M_l^*$.
            \item $\size{M_l^*} = l N_G$
        \end{enumerate}
    \end{lemma}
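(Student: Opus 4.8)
The plan is to prove both statements simultaneously by induction on $l$, tracking the structure of $M_l^*$ as it is updated in the main loop of \cref{alg:1Round}. The key observation is that $M_l^*$ is obtained from $M_{l-1}^* \cup G_l^*$ after deleting exactly those grid points $g \in M_{l-1}^*$ whose floor $\snap{g}{l}$ already lies in $G_l^*$; by the consistency of the floor maps established in the preceding definition (i.e.\ $G_m \circ G_l = G_m$ for $m > l$), a point $p$ that was ``represented'' at some level $j$ by $\snap{p}{j} \in M_{j}^*$ continues to be represented at each subsequent level either by the same grid point or by a coarser ancestor of it, and crucially never by two grid points at once.

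For property (1), I would argue as follows. Fix $p \in a_j$. The base case is the level $l = j$: by definition of $a_j$, $\snap{p}{j} \in G_j^*$, and since $G_j^* \subseteq M_j^*$ the point $\snap{p}{j}$ lies in $M_j^*$; uniqueness at level $j$ follows because no $g \in M_{j-1}^*$ with $\snap{g}{j} = \snap{p}{j}$ survives into $M_j^*$ (such $g$ are precisely the ones deleted), and any other candidate in $M_j^*$ representing $p$ would have to be an element of $G_j^*$ equal to $\snap{p}{j}$. For the inductive step from $l-1$ to $l$, suppose $\snap{p}{k} \in M_{l-1}^*$ is the unique representative with $k \in \{j,\dots,l-1\}$. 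Two cases: if $\snap{\snap{p}{k}}{l} = \snap{p}{l} \in G_l^*$, then $\snap{p}{k}$ is deleted and replaced by $\snap{p}{l} \in G_l^* \subseteq M_l^*$, giving the unique representative at level $l$ (uniqueness again because $G_l^*$ contains each grid point at most once and no surviving element of $M_{l-1}^*$ can equal or snap to $\snap{p}{l}$); otherwise $\snap{p}{k}$ is not deleted, so it remains the unique representative, and no newly added element of $G_l^*$ can coincide with or be snapped to by $\snap{p}{k}$ in a way that creates a second representative — here I would need the fact that $\snap{p}{l}$, the only level-$l$ grid point $p$ snaps to, is not in $G_l^*$ in this case, together with consistency of the floor maps to rule out any intermediate grid point.

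For property (2), the count $\size{M_l^*} = l N_G$ follows by a clean inductive bookkeeping argument: $\size{M_0^*} = 0$, and passing from $M_{l-1}^*$ to $M_l^*$ we add $\size{G_l^*}$ new points and remove some number $r_l$ of old ones, but I claim the net change is exactly $\size{G_l^*} = \lceil 2 N_G \log 1/\alpha \rceil$ — wait, here one must be careful, since the algorithm picks $\lceil 2 N_G \log 1/\alpha\rceil$ points per level, not $N_G$. I would reconcile this by noting that the statement as written should read $\size{M_l^*} = l \cdot \size{G_l^*}$, or equivalently absorb the $\log 1/\alpha$ factor into the $\tilde{O}$ in $N_G = k^{\tilde{O}(1/\alpha^2)}$; in either case the essential point is that \emph{every} grid point added in round $l$ is genuinely new (none of them could already be in $M_{l-1}^*$, since all elements of $M_{l-1}^*$ come from strictly coarser grids $G_1,\dots,G_{l-1}$ and hence cannot equal a point of the finer grid $G_l$), so the additions contribute exactly $\size{G_l^*}$ while the deletions are a subset of the old points, and one shows the count telescopes.

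The main obstacle I anticipate is the uniqueness clause in property (1): establishing that at no level does a data point acquire two distinct representatives in $M_l^*$ requires carefully using the consistency of the floor maps together with the precise deletion rule, and in particular ruling out the scenario where both a coarse ancestor and a finer descendant of $\snap{p}{\cdot}$ are simultaneously present — the algorithm's deletion step is designed exactly to prevent this, but verifying it cleanly through the induction (especially across multiple levels, not just adjacent ones) is the delicate part. A secondary subtlety is simply matching the constant $\lceil 2 N_G \log 1/\alpha\rceil$ in the pseudocode against the bare $N_G$ in the lemma statement, which I expect is a harmless notational discrepancy to be noted in passing.
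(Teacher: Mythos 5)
Your overall approach matches the paper's: you trace the representative of $p$ upward through the levels for existence (it is added at level $j$, and whenever it is deleted it is simultaneously replaced by a coarser floor that is added), you use the deletion rule to rule out two simultaneous representatives, and you count additions per round for the size bound. The paper's uniqueness argument is a slightly cleaner one-shot contradiction (take $k_1 < k_2$ both in $M_l^*$; since $\snap{\snap{p}{k_1}}{k_2} = \snap{p}{k_2}$ the element $\snap{p}{k_1}$ would have been deleted at level $k_2$), but your inductive case analysis lands in the same place. You are also right that part (2) is really an inequality $\size{M_l^*} \le l \cdot |G_l^*|$ and that there is a genuine (if harmless) mismatch between the $N_G$ in the lemma and the $\lceil 2 N_G \log 1/\alpha \rceil$ picked in the pseudocode; the paper's own proof shows only the inequality with the same sloppiness about the constant.

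There is one factual error in your part-(2) reasoning that should be flagged. You claim ``all elements of $M_{l-1}^*$ come from strictly coarser grids $G_1,\dots,G_{l-1}$ and hence cannot equal a point of the finer grid $G_l$.'' This has the coarseness backwards. Since $t_l = 2^{l-L+1}/(\alpha\sqrt d)$ \emph{increases} with $l$, the early grids $G_1,\dots,G_{l-1}$ are the \emph{finer} ones and $G_l$ is the coarsest so far; moreover the grids are nested with $G_l \subset G_{l-1} \subset \dots \subset G_1$, so a grid point of $G_l$ certainly can already be an element of $M_{l-1}^*$. Your asserted disjointness of the additions thus fails for the reason you gave. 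It is saved instead by the deletion rule itself: any $g \in G_l^*$ that happens to lie in $M_{l-1}^*$ satisfies $\snap{g}{l} = g \in G_l^*$ and is therefore removed before $G_l^*$ is unioned in, so the surviving old points and the new additions are disjoint by construction, not by grid geometry. With that correction the count bound $\size{M_l^*} \le \size{M_{l-1}^*} + |G_l^*|$ is immediate, which is what both you and the paper actually need.
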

    
    \begin{proof}
    \begin{enumerate}
        \item Given that $p\in a_j$, by construction of $M_j^*$, $\snap{p}{j} \in M_j^*$. If for some $j' > j$ there is some $g \in G_{j'}^*$ such that $\snap{\snap{p}{j}}{j'} = g$ and $\snap{p}{j}$ is removed from $M_{j'}^*$ then since $\snap{\snap{p}{j}}{j'} = \snap{p}{j'}$ and $\snap{p}{j'} = g$ is included in $M_{j'}^*$ proceeding inductively it follows that $\snap{p}{k} \in M$ for some $k \in \{j, \dots, l \}$.
        
        To see that this value of $k$ is unique suppose to the contrary that $\snap{p}{k_1}$ and $\snap{p}{k_2}$ both lie in $M_l^*$. Assume without loss of generality that $k_1 < k_2$. Then since $\snap{\snap{p}{k_1}}{k_2} = \snap{p}{k_2}$ we see that $\snap{p}{k_1} \not\in M_{k_2}^*$ and therefore $\snap{p}{k_1} \not\in M_l^*$.
        
        \item We see that by construction in every loop $\size{M_l^*}\leq \size{M_{l-1}^*} + N_G$. The bound follows directly. 
    \end{enumerate}
    \end{proof}
    
    In order to proceed with the cost analysis, we derive bounds on the estimation error for the point histograms $\PH^l$ and point sum oracles $\PSO^l$. We will avoid substituting for these error terms until we have reached the end of this analysis but it will be useful to keep in mind that, as the lemma shows, they are roughly $O(\frac{1}{\epsilon} \sqrt{n} \log n)$. These bounds are essentially corollaries of the $\BTG$ and $\HSO$ error bounds.
    
    \begin{lemma}[Private estimation error bounds]
        \label{lem:histBounds}
        With probability $1-\beta$, for all $l = 1,\dots, L$, suppressing terms logarithmic in $1/\alpha$, $1/\beta$ and $\log n$, the following guarantees hold.
        \begin{enumerate}
            \item For every $g \in G_l$, $\size{\PH^i(g) - G_i^{-1} (g)} \leq \PH_E :=  \tilde{O} \left( \frac{1}{\epsilon \alpha} \sqrt{n \log^3 n} \right)$.
            \item For every $g \in \PH^l$, $\PSO_E \leq  \tilde{O}\left(\frac{c_G}{\epsilon} \sqrt{d' n}\right)$.
        \end{enumerate}
    \end{lemma}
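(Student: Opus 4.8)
The plan is to read both bounds directly off the guarantees of $\BTG$ (\cref{lem:btg}) and $\HSO$ (\cref{lem:HSO}) instantiated with the specific maps used in \cref{alg:1Round}; essentially the only real work is a bound on the size of the codomain of the value map. Throughout I treat $\epsilon$ as the privacy parameter handed to each individual call, deferring the composition over the $2L$ calls to the proof of \cref{thm:1RoundGuarantee}. First I set up one good event: running each of the $L=\lceil\lg n\rceil$ instances of $\BTG$ and each of the $L$ instances of $\HSO$ with failure probability $\beta/(2L)$ makes all of their guarantees hold simultaneously with probability $1-\beta$, at the cost of an extra additive $O(\log\log n)$ inside every logarithm, which $\tilde{O}(\cdot)$ absorbs. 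No union bound over grid points is needed, because both items of \cref{lem:btg} and the conclusion of \cref{lem:HSO} already hold for all $g$ within a given level simultaneously.

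For part~1, $\PH^l$ is the instance $\BTG(G_l\circ Q,\epsilon,\beta/2L)$, whose value codomain is $V=G_l=B(0,1)\cap(t_l\Z^d)$, so $\size{V}\le (C\sqrt{d}/(\alpha r_l))^d$ for an absolute constant $C$. Since the finest scale satisfies $r_l\ge r_1=\Omega(1/n)$ and $d=O(\log(k/(\alpha\beta))/\alpha^2)$, taking logarithms and using $k\le n$ together with the suppression of $\log(1/\alpha)$, $\log(1/\beta)$, $\log\log n$ gives $\log\size{V}=O(d\log(n\sqrt{d}/\alpha))=\tilde{O}(\log^2 n/\alpha^2)$; in particular $\size{V}=\Omega(n)$ for all but possibly the coarsest grids, which puts us in the uniform-error regime of \cref{lem:btg}. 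For a heavy grid point $g$, item~1 bounds $\size{\PH^l(g)-G_l^{-1}(g)}$ by $E=O(\frac{1}{\epsilon}\sqrt{n\log(n/\beta)})$; for a non-heavy $g$ we have $\PH^l(g)=0$ and, by the contrapositive of item~2, the true count satisfies $G_l^{-1}(g)<M=O(\frac{1}{\epsilon}\sqrt{n\log(\size{V}/\beta)\log(1/\beta)})$. Taking $\PH_E:=\max(E,M)$ (an upper bound in all cases, equal to $M$ whenever $\size{V}=\Omega(n)$) and substituting the bound on $\log\size{V}$ gives the claimed $\PH_E=\tilde{O}(\frac{1}{\epsilon\alpha}\sqrt{n\log^3 n})$.

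For part~2, $\PSO^l$ is the $\HSO$ instance whose value map sends an agent's point $p$ to $G_l(Q(p))$ and whose vector map is the identity $p\mapsto p$, with range $B(0,1)\subset\R^{d'}$; in the notation of \cref{lem:HSO} this means $\Delta=2$ and $\Delta_{g,2}\le\Delta=O(1)$. Its second displayed inequality --- valid once $\epsilon<c_G=\sqrt{2\ln(1.25/\delta)}$, the usual LDP regime --- then bounds, for every value occurring in $(G_l\circ Q)(D')$ and in particular for every $g\in\PH^l$, the error $\|\PSO^l(g)-\sum_{p':G_l(Q(p'))=g}p'\|$ by $O(\frac{c_G\Delta}{\epsilon}\sqrt{d'n\log(1/\beta)})=\tilde{O}(\frac{c_G}{\epsilon}\sqrt{d'n})=:\PSO_E$, again after the union bound over the $L$ levels. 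The one step that needs genuine care is the codomain bound in part~1: one must check that the aggressive dimension reduction to $d=O(\log(k/(\alpha\beta))/\alpha^2)$ and the fact that the finest scale is $r_1=\Omega(1/n)$ rather than $0$ together keep $\log\size{G_l}$ down to $\poly\log n/\alpha^2$, so that $\PH_E$ retains its $\tilde{O}(\frac{1}{\epsilon\alpha}\sqrt{n}\,\poly\log n)$ form and picks up no polynomial dependence on $k$ or on $1/\alpha$.
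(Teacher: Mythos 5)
Your proposal is correct and follows essentially the same route as the paper: union-bound the $L$ calls to $\BTG$ and $\HSO$ with failure probability $\beta/(2L)$ each, derive a uniform per-level frequency error from the $\BTG$ guarantee by taking the larger of $E$ and $M$ (the paper just observes $M\geq E$ once $|G_l|=\Omega(n)$ and uses $M$ throughout, which amounts to the same case split you make explicitly), and obtain the key $\log|G_l|=\tilde{O}(\log^2 n/\alpha^2)$ bound from the reduced dimension $d=O(\log(k/(\alpha\beta))/\alpha^2)$ and the $\Omega(1/n)$ finest grid resolution; part~2 is the same direct substitution of $\Delta=O(1)$ into the $\HSO$ guarantee.
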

    
    \begin{proof}
        \begin{enumerate}
            \item We simplify the $\BTG$ guarantee and use $\PH_M$ as a uniform upper bound for both $\PH_E$ and $\PH_M$. In other words, since $\PH_M$ is larger than $\PH_E$, which we show below using the bound on $\log(|G_l|)$, every heavy hitter in $\PH$ is already estimated within an error of $\PH_M$. If a value does not occur in $\PH$, it must be the case that its frequency is less than $\PH_M$, so we estimate the frequency of any omitted element by $0$ and use the upper bound for $\PH_M$ as a uniform bound for the frequency estimates of $g\in G_l$. Similarly, we bound $\PSO_E$ by $\PSO_M$.
            
            To derive the expression for the bound we need to bound from above the sizes of the grids $G_l$. The domain $B(0,1)$ is contained inside the unit cube with side-length $2$ units centered at the origin. The length of each axis that lies within this unit cube is $2$. For every $g \in G_l$, $g_j$ for every coordinate $j$ can take at most $2/t_l  = 2^{L-l}\sqrt{d}/\alpha$ many values. Since the number of dimension is $\tilde{O}((\log k)/\alpha^2)$, it follows that $\size{G_l} = (2^{L-l}\sqrt{d}/\alpha)^{\tilde{O}((\log k)/\alpha^2)} \Rightarrow \log (\size{G_l} \cdot 2L/\beta) < \tilde{O}(((\log k)/\alpha^2)\log (n/\alpha)) + \log (2L/\beta)$. Substituting, for any $l = 1,\dots, L$, $\size{\PH^l (g) - G_l^{-1} (g)} \leq \tilde{O}\left( \frac{1}{\epsilon} \sqrt{n ((\log k)/\alpha^2)\log (n) \log (2L/\beta)} \right) = \tilde{O} \left( \frac{1}{\epsilon \alpha} \sqrt{n \log^3 n} \right)$ with probability $1-\beta/2L$.
            \item We recall that the diameter of the data domain is $O(1)$. Scaling the failure probability by $1/2L$ so that we may apply the union bound and absorbing the resulting $\sqrt{\log\log n/\beta}$ term in the $\tilde{O}$ notation, the $\HSO$ guarantee gives us that $\norm{\PSO_E} = \tilde{O}\left(\frac{c_G}{\epsilon} \sqrt{d' n }\right)$.
        \end{enumerate} 
    \end{proof}
    
    The significance of the following lemma is that $\Count (g)$ is a good estimate of the number of previously uncovered data points covered by a grid point $g \in G_l$ for any $l = 1,\dots, L$ within a distance of $\alpha r_l$.
    
    \begin{lemma}
        \label{lem:count_acc}
        For $l = 1,\dots, L$ and any $g \in \PH^l$,
        \begin{align*}
            \size{ \Count (g) - \size{\{p \in D: \snap{p}{i} = g \} \backslash \left(\cup_{j \in [i]} a_i\right)}  } \leq (l \cdot N_G ) \PH_E
        \end{align*}
    \end{lemma}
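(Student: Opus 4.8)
The plan is to rewrite both $\Count(g)$ and the quantity it is meant to estimate — the number of data points whose level-$l$ floor is $g$ and which were not covered at any earlier level, i.e. $\size{\{p\in D:\snap{p}{l}=g\}\setminus(\cup_{j=1}^{l-1}a_j)}$ — as a single signed sum indexed by $g$ together with those $g'\in M_{l-1}^*$ that snap onto $g$ at level $l$, and then to compare the two sums term by term using the uniform histogram-error bound $\PH_E$ of \cref{lem:histBounds}. Fixing a level $l$ and $g\in\PH^l$, I would, for each $g'\in M_{l-1}^*$, write $j(g')$ for the level at which $g'$ was inserted into the maximal set; this is well-defined because every element of $M_{l-1}^*$ entered as a member of some $G_j^*\subseteq G_j$ with $j\le l-1$ and each grid point is tagged with its level, so in particular $g'\in G_{j(g')}^*$.

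The core of the argument is the identity
\begin{align*}
\{p\in D:\snap{p}{l}=g\}\cap\Big(\bigcup_{j=1}^{l-1}a_j\Big)=\bigsqcup_{\substack{g'\in M_{l-1}^*\\ \snap{g'}{l}=g}}\{p\in D:\snap{p}{j(g')}=g'\},
\end{align*}
where the union on the right is \emph{disjoint}. For ``$\subseteq$'', a point $p$ with $\snap{p}{l}=g$ lying in some (necessarily unique) $a_j$ with $j\le l-1$ has, by \cref{lem:M*Props}(1) applied at level $l-1$, a unique $k\in\{j,\dots,l-1\}$ with $g':=\snap{p}{k}\in M_{l-1}^*$; consistency of the coordinate-wise floors gives $\snap{g'}{l}=\snap{p}{l}=g$, and $g'$ being a level-$k$ element of $M_{l-1}^*$ forces $j(g')=k$, so $p$ lies in the $g'$-block on the right. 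For ``$\supseteq$'', if $\snap{p}{j(g')}=g'\in G_{j(g')}^*$ then $p\in\{q:\snap{q}{j(g')}\in G_{j(g')}^*\}\subseteq\bigcup_{j\le j(g')}a_j\subseteq\bigcup_{j<l}a_j$, and again $\snap{p}{l}=\snap{g'}{l}=g$. Disjointness is immediate from the uniqueness clause of \cref{lem:M*Props}(1): a point lying in two of the blocks would snap into two distinct members of $M_{l-1}^*$, which is impossible.

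Taking cardinalities of this identity and rearranging yields
\begin{align*}
\Big|\{p\in D:\snap{p}{l}=g\}\setminus\bigcup_{j<l}a_j\Big|=G_l^{-1}(g)-\sum_{\substack{g'\in M_{l-1}^*\\ \snap{g'}{l}=g}}G_{j(g')}^{-1}(g'),
\end{align*}
whereas unwinding the subtraction loop of \cref{alg:1Round} shows that after it runs the algorithm holds $\Count(g)=\PH^l(g)-\sum_{g'\in M_{l-1}^*,\,\snap{g'}{l}=g}\PH^{j(g')}(g')$. Subtracting and applying the triangle inequality bounds the discrepancy by $\size{\PH^l(g)-G_l^{-1}(g)}+\sum_{g'}\size{\PH^{j(g')}(g')-G_{j(g')}^{-1}(g')}$; on the good event of probability $1-\beta$ in \cref{lem:histBounds}(1) every summand is at most $\PH_E$, and by \cref{lem:M*Props}(2) the number of summands is $1+\size{\{g'\in M_{l-1}^*:\snap{g'}{l}=g\}}\le 1+(l-1)N_G\le lN_G$, giving the claimed bound $lN_G\,\PH_E$.

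I expect the main obstacle to be pinning down the set identity — and especially its disjointness — rather than any estimate: the point is that the already-covered data points mapping to $g$ decompose \emph{exactly} along the maximal ancestors recorded in $M_{l-1}^*$, not along all of $\bigcup_{j<l}G_j^*$ (which would over-count points served by several picked grid points). This is precisely what the bookkeeping in \cref{lem:M*Props}(1) buys: each covered point has a single representative in $M_{l-1}^*$, so the $\Count$ update subtracts it exactly once, and the accumulated private-estimation error is only an $lN_G$-fold multiple of the per-query error $\PH_E$.
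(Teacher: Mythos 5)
Your proof is correct and follows essentially the same approach as the paper's: both decompose the already-covered points mapping to $g$ as a disjoint union over the maximal grid points $g'\in M_{l-1}^*$ with $\snap{g'}{l}=g$ (the uniqueness clause of \cref{lem:M*Props} giving disjointness), compare this exact identity with the algorithm's expression $\Count(g)=\PH^l(g)-\sum_{g'}\PH^{j(g')}(g')$, and then accumulate the $1+\size{M_{l-1}^*}\le lN_G$ per-query histogram errors. Your write-up is somewhat more explicit about verifying both inclusions of the set identity and about the $j(g')$ tagging, but the decomposition, the key lemma, and the error accounting all match the paper's argument.
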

    
    \begin{proof}
        By construction of $\Count$, we can write
        \begin{align}
            \Count (g) &= \PH^{l} (g) - \sum_{\substack{g' \in M_{i-1}^* \\ \snap{g}{l} = g}} \PH (g') \label{eq:count1}.
        \end{align}
        Similarly, by definition we can write 
        \begin{align*}
            a_l = \{ p \in D : \snap{p}{l} = g \} \backslash \left( \cup_{j=1}^{l-1} a_j \right).
        \end{align*}
        By \cref{lem:M*Props} we see that the sets $\{ p \in D: \snap{p}{j} = g', j<l \}$ for $g' \in M_{l-1}^*$ form a partition of $\cup_{j=1}^{l-1} a_j$. Similarly, the sets $\{ p \in D: \snap{p}{j} = g' \mbox{ for some }j\}$ for $g' \in M_l^*$ such that $\snap{g'}{l} = g$ form a partition of $\{p \in D: \snap{p}{l} = g \} \cap \left(\cup_{j = 1}^{l-1} a_i\right)$. This implies
        \begin{align}
            \size{\{p \in D: \snap{p}{l} = g \} \cap \left(\cup_{j=1}^{l-1} a_j \right)} &= \sum_{\substack{g' \in M_{l-1}^*,\\ \snap{g'}{l} = g}} \size{\{ p \in D: \snap{p}{j} = g'\mbox{ for some }j \}} \nonumber \\
            \Rightarrow \size{\{p \in D: \snap{p}{l} = g \} \backslash \left(\cup_{j=1}^{l-1} a_i\right)}& \nonumber \\
            = \size{\{p \in D : \snap{p}{l} = g \}} &- \sum_{\substack{g' \in M_{l-1}^*,\\ \snap{g'}{l} = g}} \size{\{ p \in D: \snap{p}{j} = g'\mbox{ for some }j \}}.  \label{eq:count2}
        \end{align}
        By the $\BTG$ guarantee we have that for every $g'\in \PH^j$, $\size{\PH^j (g') - \size{\{ p \in D: \snap{p}{j} = g' \}}} \leq \PH_M$. Subtracting \cref{eq:count2} from \cref{eq:count1} and using the error bound derived from the $\BTG$ guarantee (\cref{lem:histBounds}) we get
        \begin{align*}
            \Count (g) - \size{\{p \in D: \snap{p}{l} = g \} \backslash \left(\cup_{j=1}^{l-1} a_j \right)} &\leq \PH_M + \sum_{g' \in M_{l-1}^*, \snap{g}{l} = g} \PH_M \\
            &\leq O\left(\size{M_{l-1}^*} \PH_M \right) \\
            &\leq l N_G \PH_M.
        \end{align*}
    \end{proof}

    The following lemma is used only for the cluster center recovery in the original space but we state and prove it here due to its similarity to \cref{lem:count_acc}.
    
    \begin{lemma}
        \label{lem:sum_acc}
        For any $g \in G_l^*$, 
        \begin{align*}
            \bigg\|\Sum (g) - \sum_{\substack{G_l(Q(p)) \in G_l^{-1} (g) \\ Q(p)\not\in (\cup_{j=1}^{l-1} a_j )}  } p \bigg\|    \leq l N_G \PSO_M.
        \end{align*}
    \end{lemma}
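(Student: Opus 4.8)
The plan is to follow the proof of \cref{lem:count_acc} almost verbatim, with scalar counts replaced by vector sums over the original dataset and the $\BTG$ error replaced by the $\HSO$ error. Fix $g \in G_l^*$. Tracing the updates made in Step~2 of \cref{alg:1Round} --- the inner loop over $g' \in M_{l-1}^*$ subtracts $\PSO(g')$ from $\Sum(\snap{g'}{l})$, in exact parallel with how it subtracts $\PH(g')$ from $\Count(\snap{g'}{l})$ --- one obtains, after level $l$ has been processed,
\begin{align*}
    \Sum(g) = \PSO^l(g) - \sum_{\substack{g' \in M_{l-1}^* \\ \snap{g'}{l} = g}} \PSO(g').
\end{align*}
For a grid point $h$ sitting at level $j$, write $\sigma(h) = \sum_{p \,:\, \snap{Q(p)}{j} = h} p$ for the true sum of original-space points whose image floors to $h$ (with $\sigma(h) = 0$ if no point maps there, matching the $\PSO$ convention for omitted values); then $\PSO^l(g)$ estimates $\sigma(g)$ and each $\PSO(g')$ estimates $\sigma(g')$.

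First I would establish the exact identity
\begin{align*}
    \sigma(g) - \sum_{\substack{g' \in M_{l-1}^* \\ \snap{g'}{l} = g}} \sigma(g') = \sum_{\substack{G_l(Q(p)) \in G_l^{-1}(g) \\ Q(p) \notin \cup_{j=1}^{l-1} a_j}} p,
\end{align*}
which is the lift to the original space of the cancellation identity proved inside \cref{lem:count_acc}. It follows from \cref{lem:M*Props}: the sets $\{\, p : \snap{Q(p)}{j} = g' \text{ for some } j \,\}$, as $g'$ ranges over the points of $M_{l-1}^*$ with $\snap{g'}{l} = g$, partition exactly $\{\, p : \snap{Q(p)}{l} = g \,\} \cap \big(\cup_{j=1}^{l-1} a_j\big)$ --- a point whose image lies in some $a_j$ with $j \le l-1$ has, by \cref{lem:M*Props}, a unique level $k \le l-1$ with $\snap{Q(p)}{k} \in M_{l-1}^*$, and consistency of the floor maps forces $\snap{\snap{Q(p)}{k}}{l} = \snap{Q(p)}{l} = g$, so it is subtracted exactly once, while a point whose image avoids $\cup_{j<l} a_j$ is subtracted zero times. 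Subtracting this identity from the displayed expression for $\Sum(g)$ and applying the triangle inequality yields
\begin{align*}
    \left\| \Sum(g) - \sum_{\substack{G_l(Q(p)) \in G_l^{-1}(g) \\ Q(p) \notin \cup_{j=1}^{l-1} a_j}} p \right\| \le \left\| \PSO^l(g) - \sigma(g) \right\| + \sum_{\substack{g' \in M_{l-1}^* \\ \snap{g'}{l} = g}} \left\| \PSO(g') - \sigma(g') \right\|.
\end{align*}
Each of the $1 + \size{M_{l-1}^*} = 1 + (l-1)N_G$ norms on the right is at most $\PSO_M$ by \cref{lem:histBounds} --- the bound there is uniform over all grid points, heavy or not, since an omitted value contributes $0$ to both $\PSO$ and $\sigma$ --- so the right-hand side is at most $(1 + (l-1)N_G)\PSO_M \le l N_G \PSO_M$, using $N_G \ge 1$, which is the claimed bound.

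The only real care needed is the bookkeeping between the dimension-reduced space, where the grids $G_l$, the sets $a_j$, and the maximal sets $M_l^*$ all live, and the original space $\R^{d'}$, over which the summed vectors range; the composed map sending each agent's point $p$ to its image $Q(p)$ is the only bridge between the two, and the key observation is that \cref{lem:M*Props} is a statement purely about which images land in which grid cells, so it lifts without change to a statement about the original points $p$. One should also verify that $\PSO^l$, as instantiated in \cref{alg:1Round}, really does estimate $\sum_p p$ over exactly those original points whose image floors to the queried grid point of $G_l$ (the value map is the floor in $G_l$ composed with the dimension-reduction/image map, and the vector-valued map is the identity $p \mapsto p$), so that the $\HSO$ guarantee of \cref{lem:HSO}, and hence the uniform error bound $\PSO_M$ of \cref{lem:histBounds}, applies to every term above. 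Beyond that, the argument is identical to that of \cref{lem:count_acc}.
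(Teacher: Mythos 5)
Your proof is correct and takes essentially the same route as the paper's: both start from the defining recursion for $\Sum(g)$ in terms of $\PSO^l(g)$ and the subtractions over $g' \in M_{l-1}^*$ with $\snap{g'}{l} = g$, lift the partition statement of \cref{lem:M*Props} to the original space via the map $Q$, and then apply the triangle inequality together with the uniform $\HSO$ error bound to each of the at most $1 + \size{M_{l-1}^*} \le l N_G$ terms. The only cosmetic difference is that you introduce the explicit notation $\sigma(\cdot)$ for the true sums and state the cancellation identity more carefully, which makes the bookkeeping between the two spaces cleaner than the paper's somewhat terse reproduction of the \cref{lem:count_acc} argument.
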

    
    \begin{proof}
        The proof is essentially identical to that of \cref{lem:count_acc}, but we reproduce the calculations for completeness. In the level $l$ we can write by construction that
        \begin{align*}
            \Sum (g) &= \PSO^{i} (g) - \sum_{g' \in M_{l-1}^*, \snap{g'}{i} = g} \PSO (g').
        \end{align*}
        By \cref{lem:M*Props} we see that $\{p: Q(p) \in D: \snap{Q(p)}{j} = g', j<l \}$ for ${g' \in M_{l-1}^*}$ is a partition of $\cup_{j=1}^{l-1} a_j$. We can write these sets more succinctly as $\{p: Q(p) \in G^{-1} (g') \}$ where we can drop the index of the $G^{-1}$ as it depends upon and can be inferred by the argument $g'$. Continuing, we also have that the sets $\{p: T(p) \in G^{-1} (g') \}$ for ${g' \in M_{i-1}^*, \snap{g'}{l} = g}$ form a partition of $\{p : T(p) \in G_l^{-1} (g) \} \cap \left(\cup_{j=1}^{l-1} a_j \right)$. This implies
        \begin{align*}
            \sum_{\substack{Q(p) \in G_l^{-1} (g) \\ \cap \left(\cup_{j=1}^{l-1} a_j \right)}} p &= \sum_{\substack{g' \in M_{l-1}^*\\ \snap{g'}{l} = g}} \sum_{Q(p) \in G^{-1} ( g')} p \\
            \Rightarrow \sum_{\substack{Q(p) \in G^{-1} (g) \\ \backslash \left(\cup_{j=1}^{l-1} a_i\right)}} p &= \sum_{Q(p) \in G_l^{-1}( g) } p -  \sum_{\substack{g' \in M_{l-1}^* \\ \snap{g'}{l} = g}} \sum_{\substack{Q(p)\in G^{-1} (g') }} p 
        \end{align*}
        By the $\HSO$ guarantee we have that $\size{\PSO (g') - \sum_{Q(p) \in G^{-1} (g')} p } \leq \PSO_E \leq \PSO_M $. Subtracting the second equation from the first and using the error bound derived from the $\HSO$ guarantee (\cref{lem:histBounds}) we get
        \begin{align*}
            \bigg\|\Sum (g) - \sum_{\substack{Q(p) \in G^{-1} (g) \\ \backslash \left(\cup_{j=1}^{l-1} a_j \right)}} p\bigg\|  &\leq \PSO_M + \sum_{\substack{g' \in M_{l-1}^* \\ \snap{g'}{l} = g}} \PSO_M \\
            &\leq l N_G \PSO_M.
        \end{align*}
    \end{proof}
    
    \begin{definition}\label{def:AlOl}
        We let $O_l = \sum_{j=l}^L \size{o_j}$ and $A_l = \sum_{j=l}^L \size{a_j}$. Note that with these definitions,
            \begin{align}
                \sum_{l=1}^L A_l (r_l - r_{l-1}) &= \sum_{l=1}^{L-1} (A_{l} - A_{l+1})r_l - r_0 A_1 \nonumber \\
                &= \sum_{l=1}^{L}\size{a_l}r_{l}. \label{eq:aInTermsOfA}
            \end{align}
        This relation will be useful to us in the cost analysis. Further, we observe that $A_l - O_{l+1} = (n - O_{l+1}) - (n - A_{l}) = \sum_{j=1}^l \size{o_j} - \sum_{j=1}^{l-1} \size{a_j}$. Since the $o_j$ are disjoint, it follows that $A_l - O_{l-1}$ is a lower bound for $\size{\cup_{j=1}^l o_j \backslash \cup_{j=1}^{l-1} a_j}$, the size of the set of points covered by $S_{\OPT}$ within a distance of $r_l$ but still uncovered at the beginning of the $l$th round, i.e. before $a_l$ is picked.
    \end{definition}
    
    \begin{lemma}
    \label{lem:al_Versus_ol}
        For $l = 1,\dots, L$ and for $\erralpha = 4 l N_G^2 \PH_M$,
        \begin{align*}
            \size{a_l} \geq (1 - \alpha) (A_{l} - O_{l + 1})  - \erralpha.
        \end{align*}
    \end{lemma}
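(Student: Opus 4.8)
The plan is to fix a level $l$ and carefully compare the set $a_l$ chosen greedily by the algorithm against the set $S_{\OPT}$ restricted to the points it covers within distance $r_l$ that are still uncovered. Recall that by \cref{lem:snapBound}(2) the image $G_l(\cup_{j=1}^l o_j)$ has size at most $N_G$, so $S_{\OPT}$ ``induces'' a collection $H_l$ of at most $N_G$ grid points in $G_l$ such that every point of $\cup_{j=1}^l o_j$ snaps into $H_l$. The key observation is that if we had picked exactly these $N_G$ grid points at level $l$, we would have covered all of $\cup_{j=1}^l o_j$, and in particular the $A_l - O_{l+1}$ many previously-uncovered points from $\cup_{j=1}^l o_j \setminus \cup_{j=1}^{l-1} a_j$ (the lower bound noted at the end of \cref{def:AlOl}). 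So there is a subset of at most $N_G$ grid points whose true $\Count$ values — i.e. their count of previously uncovered data points — sum to at least $A_l - O_{l+1}$. Hence some grid point among them has true uncovered-count at least $(A_l - O_{l+1})/N_G$.

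Next I would run the standard greedy-maximum-coverage argument, but one pick at a time and with the $\PH_E$-error bookkeeping from \cref{lem:count_acc}. The algorithm picks $\lceil 2 N_G \log(1/\alpha)\rceil$ grid points at level $l$ maximizing the (noisy) $\Count$ function; I would think of this as $\lceil 2 N_G \log(1/\alpha)\rceil$ rounds of greedy within level $l$. After $t$ greedy picks, let $R_t$ be the number of points in $\cup_{j=1}^l o_j \setminus \cup_{j=1}^{l-1} a_j$ not yet covered by those picks. By the argument in the previous paragraph (applied to the residual instance, which only shrinks the relevant $N_G$-bounded set of $S_{\OPT}$-induced grid points), there always exists an unpicked grid point whose true residual uncovered-count is at least $R_t / N_G$; by \cref{lem:count_acc} the algorithm's noisy $\Count$ of any candidate differs from the truth by at most $l N_G \PH_E$, so the greedy pick captures at least $R_t/N_G - 2 l N_G \PH_E$ new points (the factor $2$ from comparing the noisy values of the greedy choice and the guaranteed-good choice). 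This gives the recursion $R_{t+1} \le R_t(1 - 1/N_G) + 2 l N_G \PH_E$. Unrolling for $T = 2 N_G \log(1/\alpha)$ steps multiplies the initial residual $R_0 = A_l - O_{l+1}$ by $(1-1/N_G)^T \le e^{-2\log(1/\alpha)} = \alpha^2 \le \alpha$, while the accumulated additive error is a geometric-type sum bounded by $N_G \cdot 2 l N_G \PH_E = 2 l N_G^2 \PH_E$ — which after adjusting constants is the claimed $\erralpha = 4 l N_G^2 \PH_M$ (using $\PH_M \ge \PH_E$). The number of points covered is exactly $|a_l|$ (by definition of $a_l$ as the newly-covered points at level $l$ that were not covered earlier), so $|a_l| \ge R_0 - R_T \ge (1-\alpha)(A_l - O_{l+1}) - \erralpha$.

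The main obstacle I expect is making the residual-instance version of the ``there exists a good grid point'' claim fully rigorous: one has to check that after removing the data points already covered by earlier greedy picks (at this level and previous levels), the set of $S_{\OPT}$-induced grid points relevant to the residual still has size at most $N_G$ and still collectively covers all residual points of $\cup_{j=1}^l o_j$ — this is where the consistency of the floor maps across grids (the computation $G_m \circ G_l = G_m$ in the preliminary definition) and the bookkeeping via $M_{l-1}^*$ in \cref{lem:count_acc} are essential, since they guarantee the $\Count$ function really is tracking ``previously uncovered'' points and not double-counting. The error-accumulation constant (getting exactly $4 l N_G^2 \PH_M$ rather than some other multiple) is routine once the recursion is set up, so I would not belabor it.
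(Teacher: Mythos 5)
Your overall plan matches the paper's: reduce to the fact that $S_{\OPT}$ induces at most $N_G$ grid points covering $\ge A_l - O_{l+1}$ previously-uncovered points, account for the $\Count$ noise via \cref{lem:count_acc}, and run a greedy max-coverage analysis on $\lceil 2N_G\log(1/\alpha)\rceil$ picks so that the coverage recursion gives the $(1-\alpha)$ factor. The paper packages the unrolling into \cref{lem:greedycover}; you unroll it by hand, which is a cosmetic difference.

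There is, however, a concrete slip in the recursion step. You define $R_t$ as the literal residual — the number of points in $\cup_{j\le l}o_j\setminus\cup_{j<l}a_j$ not yet covered — and assert $R_{t+1}\le R_t(1-1/N_G)+2lN_G\PH_E$ on the grounds that ``the greedy pick captures at least $R_t/N_G - 2lN_G\PH_E$ new points.'' But the greedy pick maximizes the noisy total count $\Count(g)\approx|\mathrm{Cover}(g)|$, where $\mathrm{Cover}(g)=\{p\in D:G_l(p)=g\}\setminus\cup_{j<l}a_j$, not the count of residual points $|\mathrm{Cover}(g)\cap\cup_{j\le l}o_j|$. A pick can have large $|\mathrm{Cover}(g)|$ supported entirely outside $\cup_{j\le l}o_j$, in which case it captures many new points of $D$ yet $R_{t+1}=R_t$; so the claimed recursion on $R_t$ does not follow from what you established. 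The recursion that does close is on the shortfall $\Phi_t := (A_l - O_{l+1}) - W_t$, where $W_t=\sum_{i\le t}|\mathrm{Cover}(g_i^*)|$ is the total new points covered: the picked $h_i$'s contribute at most $W_t$ to $\sum_i|\mathrm{Cover}(h_i)|\ge A_l-O_{l+1}$, so some unpicked $h$ has $|\mathrm{Cover}(h)|\ge\Phi_t/N_G$, hence $|\mathrm{Cover}(g_{t+1}^*)|\ge\Phi_t/N_G-2lN_G\PH_M$ and $\Phi_{t+1}\le\Phi_t(1-1/N_G)+2lN_G\PH_M$. Since the sets $\mathrm{Cover}(g)$ for distinct $g\in G_l$ are disjoint, $|a_l|=W_T$ exactly, so the shortfall recursion bounds $|a_l|$ directly without passing through $R_0-R_T$, and in fact gives the slightly sharper constant $2lN_G^2\PH_M\le\erralpha$. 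Everything else in your argument — the role of the $N_G$ bound from \cref{lem:snapBound}, the role of grid-floor consistency and the $M_{l-1}^*$ bookkeeping in making $\Count$ a $\pm lN_G\PH_M$-accurate estimate of $|\mathrm{Cover}(g)|$, and the unrolling arithmetic — is correct.
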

    
    \newcommand{\C}[1]{\mbox{Cover}(#1)}
    
    \begin{proof}
        From \cref{lem:NGBound} we know that $\size{G_l(\cup_{j=1}^l o_j))} \leq N_G$. For all $g\in G_l$, let $\C{g} = \{ p \in D : G_l (p) = g \} \backslash \cup_{j=1}^{l-1} a_j$, i.e. the set of yet uncovered data points that would be served by $g$ if $g$ were picked. We note that the sets $\C{g}$ as defined are disjoint for distinct $g \in G_l$. Let $G^\dagger_l = (g^\dagger_1,\dots,g^\dagger_{N_G})$ be the $N_G$ many grid points $g$ with the greatest values of $\size{\C{g}}$ sorted in decreasing order. Then by the observations in \cref{def:AlOl} it follows that
        \begin{align*}
            \size{ \{p \in D : \snap{p}{l} \in G^\dagger_l \} \backslash \cup_{j=1}^{l-1} a_j } &\geq \size{ o_l \backslash \cup_{j=1}^{l-1} a_j } \\
            \Rightarrow \sum_{j \in N_G} \size{\C{g_j^\dagger}} &\geq A_l - O_{l+1}.
        \end{align*}
        In \cref{alg:1Round}, we pick grid points greedily via the privatized counts $\Count (g)$. By \cref{lem:count_acc} we know that for all $g \in \PH^l$,
        \begin{align*}
            \size{\Count (g) - \size{\C{g}}} \leq l N_G \PH_M.
        \end{align*}
        It follows that if $g_j^*$ is our $j$th greedy pick maximizing $\Count(g)$ and the maximum value of $\size{\C{g}}$ over all unpicked grid points is $4l N_G \PH_M$, then $\size{\C{g_j^*} } \geq 4l N_G \PH_M \geq (1/2) \max_{g\in G_i} \size{\C{g}}$. 
        
        Let $m$ be the largest index such that $\size{\C{g_m^\dagger}} \geq 4l N_G \PH_M$. In the context of \cref{lem:greedycover}, we let $Z = \{ \C{g_j^\dagger} : j\leq m \}$, a family of sets guaranteed to cover $U = \cup_{z \in Z} z$, and $\mathcal{S}$ the family of all possible sets we can pick from $\{ \C{g} : g \in G_i \}$. Then, since in the $j$th round each greedy pick $g_j^*$ covers at least half of the maximum that any pick could cover, we see that $(2 \size{N_G} \log(1/\alpha) + 1)$ greedy picks cover $(1-\alpha)\size{\cup_{j = 1}^m \C{g_j^*}} \geq (1-\alpha) \left( \left[\sum_{j \in N_G} \size{\C{g_j^\dagger}} \right] - 4 l N_G^2 \PH_M \right) \geq (1-\alpha) \size{o_l} - 4 l N_G^2 \PH_M$ points, which is what we wanted to show. We use $\erralpha = 4 l N_G^2 \PH_M$ as shorthand going forward.
    \end{proof}
    
    \begin{lemma}
        We can relate the optimal clustering cost $\OPT$ to the sizes of the sets $a_l$ and $o_l$ via the following bounds.
        \begin{enumerate}
            \item $\sum_{l = 0}^{L} \size{o_l}r_{l} = O(\OPT) + O(1)$.
            \item $\sum_{l=1}^{L}\size{a_l}r_{l} \le (1 + O(\alpha) )\sum_{l=1}^{L}\size{o_l}r_{l} + O(\erralpha )$.
            \item $\sum_{l=1}^{L}\size{a_l}r_{l} \le O(\OPT)
            + O(\erralpha )$.
        \end{enumerate}
    \end{lemma}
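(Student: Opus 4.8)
The plan is to establish the three bounds in order, with the second being the substantive one and the third an immediate corollary of the first two.

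\textbf{Part 1.} I would charge each point to its own threshold. For $l\ge 1$ every $p\in o_l$ satisfies $z(p,S_{\OPT})\ge r_l$ by definition of the partition, so $\sum_{l=1}^L\size{o_l}r_l\le\sum_{l=1}^L\sum_{p\in o_l}z(p,S_{\OPT})\le\sum_{p\in D}z(p,S_{\OPT})=\OPT$. The only points this undercharges are those with $z(p,S_{\OPT})<r_1$; there are at most $n$ of them, each charged at most $r_1=2^{2-L}\le 4/n$ (using $2^L\ge n$), for a total of $O(1)$, and the $l=0$ term vanishes since $r_0=0$. Hence $\sum_{l=0}^L\size{o_l}r_l\le\OPT+O(1)$.

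\textbf{Part 2.} The idea is to turn \Cref{lem:al_Versus_ol} into a one-step recursion for the tails $A_l$. Since $\size{a_l}=A_l-A_{l+1}$, rearranging \Cref{lem:al_Versus_ol} (using the form of $A_l-O_{l+1}$ from \Cref{def:AlOl}) gives, for every $l$ and with $\erralpha_l:=4lN_G^2\PH_M$,
\[
    A_{l+1}\;\le\;\alpha A_l+(1-\alpha)O_{l+1}+\erralpha_l ,
\]
which holds whether or not the right side of the original bound is positive. Unrolling down to $A_1\le n$ yields
\[
    A_l\;\le\;\alpha^{l-1}A_1+(1-\alpha)\sum_{j=0}^{l-2}\alpha^{j}O_{l-j}+\sum_{j=0}^{l-2}\alpha^{j}\,\erralpha_{\,l-1-j}.
\]
I would then plug this into the telescoping identity \eqref{eq:aInTermsOfA}, $\sum_{l=1}^L\size{a_l}r_l=\sum_{l=1}^L A_l(r_l-r_{l-1})$, using $r_1-r_0=r_1$, $r_l-r_{l-1}=r_l/2$ for $l\ge 2$, $r_{l+j}=2^{j}r_l$, and $\sum_l r_l<4$. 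The $l=1$ term is $A_1 r_1\le n r_1=O(1)$. For the three pieces coming from $l\ge 2$: the $\alpha^{l-1}A_1$ piece is at most $\tfrac{A_1}{2^{L-1}}\sum_{m\ge 1}(2\alpha)^m=O(\alpha A_1/2^{L-1})=O(1)$ because $2\alpha<1$ makes the series converge and $2^{L-1}\ge n/2\ge A_1/2$; the $O_{l-j}$ piece, after swapping the order of summation and collapsing the inner geometric series $\sum_j(2\alpha)^j=\tfrac1{1-2\alpha}$, is at most $\tfrac{1-\alpha}{2(1-2\alpha)}\sum_{l\ge 2}O_l r_l\le\tfrac{1-\alpha}{1-2\alpha}\sum_l\size{o_l}r_l=(1+O(\alpha))\sum_l\size{o_l}r_l$ (invoking the $O$-analogue of \eqref{eq:aInTermsOfA}); and the error piece is at most $\tfrac{\erralpha_L}{1-\alpha}\sum_l r_l=O(\erralpha)$. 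Summing the pieces and absorbing the $O(1)$ terms into $O(\erralpha)$ gives part 2.

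\textbf{Part 3} is then immediate: substituting part 1 into part 2 collapses $(1+O(\alpha))(O(\OPT)+O(1))+O(\erralpha)$ to $O(\OPT)+O(\erralpha)$, using $\alpha<1$ and $\erralpha=\Omega(\PH_M)=\Omega(1)$. The hard part is the telescoping in part 2: the naive route, bounding $A_l\le O_l+\alpha\sum_{j<l}\size{o_j}+\erralpha_l$ and summing against the $r_l$'s, leaves a stray $O(\alpha n)$ term, which is polynomially larger than the target additive error and hence fatal. Avoiding it forces one to unroll the recursion \emph{fully} and observe that the geometric decay $\alpha^{j}$ strictly beats the geometric growth $r_{l+j}=2^{j}r_l$ precisely when $\alpha<1/2$, so that the $n$-sized quantity $A_1$ contributes only $O(1)$ once it is scaled by $r_1=O(1/n)$; everything else is routine geometric-series bookkeeping.
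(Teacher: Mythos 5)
Your proof is correct but Part~2 takes a genuinely different route from the paper's. You unroll the one-step recursion $A_{l+1} \le \alpha A_l + (1-\alpha)O_{l+1} + \erralpha$ all the way down to $A_1 \le n$ and then sum the resulting triple sum against $r_l - r_{l-1}$, carefully checking that the geometric decay $\alpha^j$ beats the geometric growth $r_{l+j} = 2^j r_l$ so that the $n$-sized boundary $A_1$ contributes only $O(1)$. The paper instead rearranges \Cref{lem:al_Versus_ol} differently: substituting $A_l = |a_l| + A_{l+1}$ and solving for $A_{l+1}$ yields $A_{l+1} \le \frac{\alpha}{1-\alpha}|a_l| + O_{l+1} + O(\erralpha)$, whose right-hand side involves the \emph{single level} $|a_l|$ rather than the whole tail $A_l$; feeding this into $\sum_l A_l(r_l-r_{l-1}) = \sum_l |a_l|r_l$ then produces the self-referencing inequality $\sum_l |a_l| r_l \le O(\alpha)\sum_l |a_l| r_l + \sum_l |o_l| r_l + O(\erralpha)$, which one simply solves. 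The self-reference buys a shorter argument that never exposes $A_1$ directly and absorbs the boundary into the $O_l$-telescope for free; your explicit unrolling is more elementary and makes transparent why $\alpha < 1/2$ is needed, and your observation about the ``stray $O(\alpha n)$'' pitfall is precisely what the one-step rearrangement is designed to sidestep. Part~1 differs only in bookkeeping (you charge each $p \in o_l$ directly against $r_l \le z(p,S_{\OPT})$, getting $\le \OPT$ before the vanishing $l=0$ term; the paper charges against $r_{l-1}$, getting $\le 2 f_D(S_{\OPT}) + 1$), and Part~3 is the same immediate corollary.
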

    
    \begin{proof}
        \begin{enumerate}
            \item Since $r_{l+1} = 2r_{l}$ for $l > 0$, and $r_1 \le 1/n$,
                \begin{align*}
                    \sum_{l=1}^{L}\size{o_l}r_{l} &= \sum_{l = 1}^{L}\size{o_l}4r_{l-1} + |o_0|r_1 \\
                    &\le 2\sum_{l=1}^{L} \size{o_l}r_{l-1} + 1 \\
                    &\le 2\sum_{l=1}^{L}\sum_{p \in o_l}z(p,S_{\OPT}) + 1 \\
                    &\leq 2 f_D(S_{\OPT}) + 1.
                \end{align*}
            \item By \cref{lem:al_Versus_ol},
                \begin{align*}
                    \size{a_l} &\ge (1-\alpha)(A_l - O_{l+1}) - \erralpha\\
                    &\ge (1-\alpha)(\size{a_l} + A_{l+1} - O_{l + 1}) - \erralpha\\
                    &\ge \left(\frac{1-\alpha}{\alpha}\right)(A_{l+1} - O_{l + 1}) - \frac{\erralpha}{\alpha}\\
                    \Rightarrow A_{l+1} &\le \frac{\alpha\size{a_l}}{1-\alpha} + O_{l + 1} + \frac{\erralpha}{1-\alpha} \\
                    &= O(\alpha) \size{a_l} + O_{l+1} + O(\erralpha).
                \end{align*}
                Continuing from \cref{eq:aInTermsOfA} and using the convention that $a_0$ begin undefined is empty,
                \begin{align*}
                    \sum_{l=1}^{L}\size{a_l}r_{l} &= \sum_{l=1}^{L}A_l(r_{l} - r_{l-1})\\
                    &\leq \sum_{l=1}^L \left( O(\alpha) a_{l-1} + O_l + O(\erralpha)\right) (r_l - r_{l-1}) \\
                    &\le \sum_{l=1}^{L} O(\alpha)\size{a_{l-1}}(r_{l} - r_{l-1}) + \sum_{l=1}^{L}O_{l}(r_{l} - r_{l-1}) + O(\erralpha)(r_{L + 1} - r_0)\\
                    &\le O(\alpha)\sum_{l=1}^{L}\size{a_{l-1}}r_{l-1} + \sum_{l=1}^{L}\size{o_l}r_{l} + O(\erralpha )\\
                    \Rightarrow \left(1 - O(\alpha )\right)\sum_{l=1}^{L}\size{a_l}r_{l} &\le \sum_{l=1}^{L}\size{o_l}r_{l} + O(\erralpha )\\
                    \Rightarrow \sum_{l=1}^{L}\size{a_l}r_{l} &\le (1 + O(\alpha) )\sum_{l=1}^{L}\size{o_l}r_{l} + O(\erralpha ).
                \end{align*}
            \item This is a direct consequence of the first and second results of this lemma.
        \end{enumerate}
    \end{proof}
    
    \begin{lemma}
        \label{lem:1round_ProxyVersusActual}
        The $k$-means clustering functions for the dimension reduced dataset $D$ and the proxy dataset $D^*$ are close in $\ell_1$ norm. Concretely, for any finite set $C$,
        \begin{align*}
            f_{D^*} (C) &\le (1 + O(\alpha)) f_D (C) + O(\alpha \OPT) + O(\alpha \erralpha) + O(L^2 N_G^2 \PH_M),\\
            f_{D}(C) &\le (1 + O(\alpha)) f_{D^*} (C) + O(\alpha \OPT) + O(\alpha \erralpha) + O(L^2 N_G^2 \PH_M).
        \end{align*}
    \end{lemma}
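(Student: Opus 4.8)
The plan is to realise the proxy data set $D^*$ as a lightly corrupted copy of an idealised multiset $\hat D$ obtained by physically relocating each covered data point to the grid point that covers it, and then to bound separately the three resulting discrepancies: the displacement caused by the relocation, the private estimation error in the multiplicities $\Count(g)$, and the contribution of the points that are never covered. Throughout we may assume $C \subset B(0,O(1))$, since projecting $C$ into an $O(1)$-radius ball containing $D$, $D^*$ and $\hat D$ only decreases $f_D(C)$, $f_{D^*}(C)$ and $f_{\hat D}(C)$ simultaneously. Define $\hat D$ to be the multiset containing, for every level $l$ and every $g\in G_l^*$, the point $g$ with multiplicity $\size{\{p\in a_l:\snap{p}{l}=g\}}=\size{\{p\in D:\snap{p}{l}=g\}\setminus\cup_{j<l}a_j}$. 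By \cref{lem:count_acc} this multiplicity differs from $\Count(g)$ by at most $lN_G\PH_M$, and since $\size{G_l^*}=\lceil 2N_G\log 1/\alpha\rceil$ and $l\le L$, summing over all $g\in G_1^*\cup\cdots\cup G_L^*$ bounds the multiset symmetric difference of $D^*$ and $\hat D$ by $O(L^2N_G^2\PH_M)$ (absorbing $\log 1/\alpha$). As every point of $D^*$ and of $\hat D$ lies within $O(1)$ of the origin, each such point contributes $O(1)$ to the $k$-means cost, so $\size{f_{D^*}(C)-f_{\hat D}(C)}\le O(L^2N_G^2\PH_M)$.

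Next I would compare $\hat D$ with $D$. For $p\in a_l$ we have $\snap{p}{l}\in G_l^*$ and, by \cref{lem:snapBound}(1), $\norm{p-\snap{p}{l}}\le\alpha r_l$; writing $a=\sqrt{z(p,C)}$, $b=\sqrt{z(\snap{p}{l},C)}$ and $\delta=\alpha r_l$, from $\size{a-b}\le\delta$ we get $b^2\le a^2+2a\delta+\delta^2\le(1+\alpha)a^2+(1+1/\alpha)\delta^2$, and since $\delta^2/\alpha=\alpha r_l^2\le 2\alpha r_l$ this gives $z(\snap{p}{l},C)\le(1+\alpha)z(p,C)+O(\alpha)r_l$ and, the same way, $z(p,C)\le(1+\alpha)z(\snap{p}{l},C)+O(\alpha)r_l$. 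Summing the first bound over $p\in\cup_l a_l$, using $f_{\hat D}(C)=\sum_l\sum_{p\in a_l}z(\snap{p}{l},C)$, $f_D(C)\ge\sum_l\sum_{p\in a_l}z(p,C)$, and the bound $\sum_{l=1}^L\size{a_l}r_l\le O(\OPT)+O(\erralpha)$ from the preceding lemma, we obtain $f_{\hat D}(C)\le(1+\alpha)f_D(C)+O(\alpha\OPT)+O(\alpha\erralpha)$; combined with the previous paragraph this proves the first inequality.

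For the reverse inequality we must additionally account for the uncovered points $D\setminus\cup_l a_l$. Such a point is exactly one whose coarsest-level cell $\snap{p}{L}$ is not selected into $G_L^*$. There are only $\size{G_L(D)}=k^{\tilde O(1/\alpha^2)}$ nonempty cells of $G_L$ (by the argument of \cref{lem:NGBound}), while $G_L^*$ retains the $\lceil 2N_G\log 1/\alpha\rceil\ge\size{G_L(D)}$ cells of largest $\Count$; every cell whose true uncovered count exceeds $O(LN_G\PH_M)$ then beats every essentially empty cell and is retained, so at most $O(LN_G^2\PH_M)$ points stay uncovered. Each such point contributes $O(1)$ to $f_D(C)$, so summing $z(p,C)\le(1+\alpha)z(\snap{p}{l},C)+O(\alpha)r_l$ over the covered points and adding this contribution gives $f_D(C)\le(1+\alpha)f_{\hat D}(C)+O(\alpha\OPT)+O(\alpha\erralpha)+O(L^2N_G^2\PH_M)$, and composing with $f_{\hat D}(C)\le f_{D^*}(C)+O(L^2N_G^2\PH_M)$ yields the second inequality.

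The step I expect to be the main obstacle is the last one: pinning down $D\setminus\cup_l a_l$ as exactly the points missed at the coarsest level, and arguing that greedy selection by the noisy $\Count$ values cannot skip a heavy cell when there is room for every nonempty cell — this requires care about how an empty cell can receive spurious positive noise and about how the $M_{l-1}^*$-subtraction in the definition of $\Count$ interacts with the error bound of \cref{lem:count_acc}. By contrast, the displacement estimate and the symmetric-difference estimate are routine once the idealised multiset $\hat D$ has been fixed.
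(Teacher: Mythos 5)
Your proof follows the same essential route as the paper's — relocate the covered points to their $G_l$-floors, account for the noisy $\Count$ multiplicities via \cref{lem:count_acc}, and control the relocation displacement with the $\ell_2$-triangle-plus-AM-GM trick giving $z(\snap{p}{l},C) \le (1+O(\alpha))z(p,C) + O(\alpha)r_l$ and its converse. Your reorganization through the intermediate multiset $\hat{D}$ is a cleaner bookkeeping device but not a different argument.

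Where you genuinely add something is the treatment of the uncovered points $D\setminus\bigcup_l a_l$: the paper's own proof silently writes $f_D(C) = \sum_{l}\sum_{p\in a_l}z(p,\cdot)$, as though $D$ equalled $\bigcup_l a_l$, which need not hold when the noisy greedy selection at level $L$ drops a nonempty cell. Your argument that this can cost only $O(L N_G^2 \PH_M)$ — since $\lceil 2N_G\log(1/\alpha)\rceil \ge N_G \ge |G_L(D)|$ so that every cell with residual count above $\sim 2LN_G\PH_M$ must be retained, while there are at most $N_G$ cells with any residual count at all — is correct and patches a real sloppiness in the paper's derivation of the second inequality. Two small imprecisions in your write-up, neither fatal: (i) "exactly one whose coarsest-level cell $\snap{p}{L}$ is not selected" should be "necessarily," not "exactly" (a point with $\snap{p}{L}\notin G_L^*$ may still be covered at a smaller $l$) — the inclusion goes the right way for your upper bound, so the conclusion stands; and (ii) the "WLOG $C\subset B(0,O(1))$ by projection" step decreases $f_{D^*}(C)$ as well as $f_D(C)$, so proving the bound for the projected set $\hat C$ does not literally give the bound for $C$; the honest statement is that both the paper's and your argument implicitly require $z(g,C)=O(1)$ for the grid points $g$, which is harmless because every downstream invocation takes $C$ inside the data domain, but the projection argument as phrased doesn't quite deliver the reduction.
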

    
    \begin{proof}
        We can write $D^* = \sqcup_{l = 1}^L \sqcup_{g \in G_l^*} \{g \mbox{ with multiplicity } \Count(g)\}$. Then it follows that
        \begin{align*}
            f_{D^*} (C) &= \sum_{l=1}^L \sum_{g \in G_l^*} \Count(g) f_{g} (C) \\
            &= \sum_{l=1}^L \sum_{g \in G_l^*} \Count(g) z(g,C) \\
            &\le \sum_{l=1}^L \sum_{g \in G_l^*} \left( \size{a_l \cap G_l^{-1} (g)} + l N_G \PH_M\right) z(g,C) \\
            &\le \left(\sum_{l=1}^L \sum_{p \in a_l} z(\snap{p}{l},C) \right) + O(L^2 N_G^2 \PH_M) \\
            &\leq \left(\sum_{l=1}^L \sum_{p \in a_l} z(\snap{p}{l},\argmin_{s \in C} z(p,s)) \right) + O(L^2 N_G^2 \PH_M)
        \end{align*}
        To bound $z(\snap{p}{l},\argmin_{s \in C} z(p,s))$, we use the AM-GM inequality in conjunction with the triangle inequality for the $\ell_2$ norm as follows:
        \begin{align*}
                z(\snap{p}{l},\argmin_{s \in C} z(p,s)) &\leq (\sqrt{z(\snap{p}{l}, p)} + \sqrt{z(p,C)} )^2 \\
                &\leq \alpha^2 r_l^2 + 2 \alpha r_l \sqrt{z(p,C)} + z(p,C)\\
                &\leq \alpha^2 r_l^2 + 2 r_l (\sqrt{\alpha} \cdot \sqrt{\alpha z(p,C)}) + z(p,C)\\
                &\leq \alpha^2 r_l^2 + \alpha r_l + \alpha r_l z(p,C) + z(p,C) \\
                &\leq O(\alpha r_l) + (1 + O(\alpha)) z(p,C)
        \end{align*}
        Applying this bound for every point $p\in a_l$ for all $l = 1,\dots, L$ we get
        \begin{align*}
            f_{D'} (C) &= (1 + O(\alpha)) f_D (C) + O(\alpha) \sum_{l=1}^L \size{a_l} r_l  +  O(L^2 N_G^2 \PH_M) \\
            &= (1 + O(\alpha)) f_D (C) + O(\alpha \OPT) + O(\alpha \erralpha) + O(L^2 N_G^2 \PH_M).
        \end{align*}
        Similarly,
        \begin{align*}
            f_D(C) &= \sum_{p \in D} z(p,C)\\
            &= \sum_{l=1}^L \sum_{p \in a_l} z(p,\argmin_{s\in C} z(\snap{p}{l}, s))\\
            &= \sum_{l =1}^L \sum_{p \in a_l} (1 + O(\alpha)) z(p,C) + O(\alpha)r_l\\
            &\leq \left[\sum_{l = 1}^L \sum_{p \in a_l} (1 + O(\alpha)) z(G_l(p),C)\right] + \left[\sum_{l = 1}^L O(\alpha) \size{a_l} r_l\right] \\
            &\leq \left[\sum_{l =1}^L \sum_{g\in G_l^*} (\Count(g) + l N_G \PH_M) (1 + O(\alpha)) z(G_l(p),C)\right] + O(\alpha\OPT)  + O(\alpha \erralpha) \\
            &\leq \left[\sum_{l =1}^L \sum_{g\in G_l^*} \Count(g) z(G_l(p),C)\right] + O(L^2 N_G^2 \PH_M)  + O(\alpha\OPT)  + O(\alpha \erralpha) \\
            &\leq f_{D'} (C) + O(L^2 N_G^2 \PH_M)  + O(\alpha\OPT)  + O(\alpha \erralpha).
        \end{align*}
    \end{proof}
    
    \begin{corollary}
        \label{cor:proxyToOriginal}
        As a direct consequence of \cref{lem:1round_ProxyVersusActual}, it follows that
        \begin{align*}
            f_{D} (S^*) \leq (1 + O(\alpha))\eta \OPT + O(\alpha \erralpha) + O(L^2 N_G^2 \PH_M),
        \end{align*}
        where we absorb the $\eta$ factor in the additive error terms in the big-Oh notation and an $O(\alpha \OPT)$ term in the first term.
    \end{corollary}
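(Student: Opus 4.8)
The plan is to chain the two comparison inequalities of \cref{lem:1round_ProxyVersusActual} through the $\eta$-multiplicative guarantee of the non-private subroutine $\SKM_\eta$. First I would instantiate the first inequality of \cref{lem:1round_ProxyVersusActual} at $C = S_{\OPT}$, the fixed optimal $k$-means solution for $D$ with $f_D(S_{\OPT}) = \OPT$. This gives
\[
    f_{D^*}(S_{\OPT}) \le (1 + O(\alpha))\OPT + O(\alpha\OPT) + O(\alpha\erralpha) + O(L^2 N_G^2 \PH_M) = (1 + O(\alpha))\OPT + O(\alpha\erralpha) + O(L^2 N_G^2 \PH_M).
\]
Since $S_{\OPT}$ consists of $k$ points it is a feasible $k$-means solution for the proxy instance, so $\mathrm{OPT}(D^*)$ is bounded by the same quantity; the $\eta$-approximation guarantee of $\SKM_\eta$ then yields $f_{D^*}(S^*) \le \eta\big((1 + O(\alpha))\OPT + O(\alpha\erralpha) + O(L^2 N_G^2 \PH_M)\big)$.

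Next I would instantiate the second inequality of \cref{lem:1round_ProxyVersusActual} at $C = S^*$ to transfer this bound back to the dimension-reduced data set:
\[
    f_D(S^*) \le (1 + O(\alpha)) f_{D^*}(S^*) + O(\alpha\OPT) + O(\alpha\erralpha) + O(L^2 N_G^2 \PH_M).
\]
Substituting the bound on $f_{D^*}(S^*)$ and collecting terms completes the argument: for sufficiently small $\alpha$ one has $(1 + O(\alpha))^2 = 1 + O(\alpha)$; the constant factor $\eta$ multiplying the additive terms $O(\alpha\erralpha)$ and $O(L^2 N_G^2 \PH_M)$ is absorbed into the same-order additive terms; and the leftover $O(\alpha\OPT)$ term is absorbed into the leading $(1 + O(\alpha))\eta\OPT$ term using $\eta \ge 1$, producing exactly the stated bound.

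There is no real obstacle here; the proof is a bookkeeping exercise in chaining the two directions of \cref{lem:1round_ProxyVersusActual}. The only points to keep in mind are that the first direction must be applied \emph{before} invoking $\SKM_\eta$ (to bound $\mathrm{OPT}(D^*)$ via the feasible witness $S_{\OPT}$) and the second \emph{after} (to evaluate the returned $S^*$ on $D$), and that the additive error is inflated only by the constant $\eta$, which is why the $\tilde{O}$ bookkeeping remains intact.
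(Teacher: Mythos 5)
Your proposal is correct and follows exactly the route the paper sketches in the ``Proxy data set construction'' paragraph preceding the corollary: bound $f_{D^*}(S_{\OPT})$ via the forward direction of \cref{lem:1round_ProxyVersusActual}, pass through the $\eta$-approximation guarantee of $\SKM_\eta$ using $S_{\OPT}$ as a feasible witness, and transfer back to $D$ via the reverse direction. The bookkeeping you describe (absorbing $(1+O(\alpha))^2$ into $1+O(\alpha)$, the $\eta$ on the additive terms, and the $O(\alpha\OPT)$ into the leading term) matches the absorption the corollary explicitly states.
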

    
    We now want to recover the cluster centers of the clusters derived from the low-dimension space by using the $\PSO$ derived from calls to $\HSO$. Since we identify points by their images in level-wise grids, we incur additional discretization error that must be accounted for. Concretely, the clustering we actually derive is not $p \mapsto \argmin_{s' \in S'} z(T(p),s')$ but instead given by the following definition.
    
    \begin{definition}
        \label{def:actualClustering}
        \begin{enumerate}
            \item Let $G'_* : \R^{d'} \to \left(\cup_{l=1}^L G_l\right)$ denote $G_l \circ Q (p')$ where $l$ is the minimum index such that $G_l \circ Q (p') \in G_l^*$. We then define a clustering of $D'$ via the solution $S^*$ by letting $D'(s_i^*) = \{ p' \in D' : \argmin_{s \in S^*} z(s, G'_* (p')) = s_i^* \}$. Alternatively, we can first define $G_{l,i}^* = \{g \in G_l^* : \argmin_{s \in S^*} z(g,s) = s_i^* \}$, then let $D'_l (s_i^*) = \{ p' \in D' : G_l \circ Q \in G_{l,i}^*  \mbox{ and }G_j \circ Q \not\in G_j^* \mbox{ for }j < l \}$ and then let $D'(s_i^*) = \cup_{l \in [L]} D'_l(s_i^*)$; these two formulations are equivalent.
            \item We see that with these definitions $a_l = Q(\sqcup_{s_i^* \in S^*} D'_l(s_i^*))$. Further, this also defines a clustering of $D$ by identifying each point in $D'$ with its dimension-reduced image in $D$, with clustering cost
            \begin{align*}
                \sum_{l =1}^L \sum_{p \in a_l} z(G_l(p), S^*) = \sum_{p'\in D'} z(G'_* (p'), S^*).
            \end{align*}
        \end{enumerate}
    \end{definition}

    \begin{lemma}
        \label{lem:actualDimRedCost}
        For the privately derived cluster centers $S^*$ in the dimension reduced space, we have the following bound for the clustering of $D$ as defined in \cref{def:actualClustering}.
        \begin{align*}
            \sum_{l = 1}^L \sum_{p \in a_l} z(G_l(p), S^*) = \eta (1 + O(\alpha))\OPT + O(\alpha \erralpha) + O(N_G^2 \PH_M \log^2 n).
        \end{align*}
        As a direct corollary,
        \begin{align*}
                \sum_{p' \in D'} z(G'_* (p), S^*) = \eta (1 + O(\alpha))\OPT + O(\alpha \erralpha) + O(N_G^2 \PH_M \log^2 n).
            \end{align*}
    \end{lemma}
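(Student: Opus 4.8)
The plan is to compare the discretized quantity $\sum_{l=1}^L\sum_{p\in a_l} z(G_l(p),S^*)$ against the genuine $k$-means cost $f_D(S^*)$ of the privately recovered centers on the dimension-reduced data, absorbing the error of replacing each $p\in a_l$ by its grid floor $G_l(p)$, which by \cref{lem:snapBound} lies within $\ell_2$ distance $\alpha r_l$ of $p$ (note $a_l\subseteq D\subseteq B(0,1)$, so the lemma applies).

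First I would carry out the per-point estimate. Fix $p\in a_l$ and let $s=\argmin_{s'\in S^*}z(p,s')$, so $z(p,S^*)=z(p,s)$ while $z(G_l(p),S^*)\le z(G_l(p),s)$. The triangle inequality for $\norm{\cdot}$ together with $\norm{p-G_l(p)}\le\alpha r_l$ gives
\[
  z(G_l(p),S^*)\le\bigl(\norm{p-G_l(p)}+\sqrt{z(p,s)}\bigr)^2\le\alpha^2 r_l^2+2\alpha r_l\sqrt{z(p,S^*)}+z(p,S^*).
\]
I would then kill the cross term with AM--GM, writing $2\alpha r_l\sqrt{z(p,S^*)}=2\sqrt{\alpha r_l}\cdot\sqrt{\alpha r_l\,z(p,S^*)}\le\alpha r_l+\alpha r_l z(p,S^*)$; since $r_l\le r_L=2$ this collapses to the per-point bound $z(G_l(p),S^*)\le(1+O(\alpha))z(p,S^*)+O(\alpha r_l)$, which is exactly the manipulation already used in the proof of \cref{lem:1round_ProxyVersusActual}.

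Summing over $p\in a_l$ and $l=1,\dots,L$: the sets $a_l$ are, by construction of the set differences, pairwise disjoint subsets of $D$, so $\sum_l\sum_{p\in a_l}z(p,S^*)\le\sum_{p\in D}z(p,S^*)=f_D(S^*)$, while $\sum_l\sum_{p\in a_l}O(\alpha r_l)=O(\alpha)\sum_l\size{a_l}r_l$. Substituting \cref{cor:proxyToOriginal} for $f_D(S^*)$ and the established bound $\sum_l\size{a_l}r_l\le O(\OPT)+O(\erralpha)$, and using $L=\lceil\lg n\rceil$ (so $L^2=O(\log^2 n)$ and $\erralpha=4lN_G^2\PH_M=O(LN_G^2\PH_M)$), the two $(1+O(\alpha))$ factors multiply to $1+O(\alpha)$ and the $O(\alpha\OPT)$ term is swallowed by $\eta(1+O(\alpha))\OPT$ since $\eta\ge1$; what remains is the claimed $\eta(1+O(\alpha))\OPT+O(\alpha\erralpha)+O(N_G^2\PH_M\log^2 n)$. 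The corollary for the original space is then immediate, since \cref{def:actualClustering} records the identity $\sum_l\sum_{p\in a_l}z(G_l(p),S^*)=\sum_{p'\in D'}z(G'_*(p'),S^*)$.

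I expect the argument to have essentially no hard step: the per-point inequality is the same triangle-inequality-plus-AM--GM trick appearing in \cref{lem:1round_ProxyVersusActual}, and the rest is combining \cref{cor:proxyToOriginal} with the $\sum_l\size{a_l}r_l$ bound. The one thing to be careful about is bookkeeping of the additive error: making sure the crude inequality $\sum_l\sum_{p\in a_l}z(p,S^*)\le f_D(S^*)$ is legitimate (it is, because the $a_l$ partition a subset of $D$), and verifying that the $l$-dependent quantities $\erralpha=4lN_G^2\PH_M$ feeding into \cref{cor:proxyToOriginal} are, after summing the $L=O(\log n)$ levels, all dominated by $O(N_G^2\PH_M\log^2 n)$ so that nothing larger leaks into the final bound.
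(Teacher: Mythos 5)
Your proof is correct and follows essentially the same approach as the paper's: the per-point bound via the triangle inequality combined with AM--GM (the same manipulation as in \cref{lem:1round_ProxyVersusActual}), summing over the disjoint sets $a_l$, and combining with the bound $\sum_l\size{a_l}r_l\le O(\OPT)+O(\erralpha)$ and \cref{cor:proxyToOriginal}. If anything, you are more careful than the paper's own proof, which tersely ends by saying it ``substitutes for $f_D(S^*)$ by $\eta\OPT$''; what is really needed (and what you do explicitly) is to invoke the full bound from \cref{cor:proxyToOriginal} so that the $O(\alpha\erralpha)$ and $O(L^2N_G^2\PH_M)$ additive terms in $f_D(S^*)$ are carried through before being absorbed into the final $O(N_G^2\PH_M\log^2 n)$.
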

    
    \begin{proof}
        We want to understand the increase in clustering cost due to discretization.
        \begin{align*}
            \sum_{l = 1}^L \sum_{p \in a_l} z(G_l(p), S^*) - z(p,S^*) &\leq \sum_{l = 1}^L \sum_{p\in a_l} z(G_l (p) ,\argmin_{s\in S^*} z(p,s)) -  z(p,S^*).
        \end{align*}
        Here we use the same trick of applying the $\ell_2$-triangle inequality in conjunction with the A.M.-G.M. inequality as in the proof of \cref{lem:1round_ProxyVersusActual} and bound $z(G_l(p) ,\argmin_{s\in S^*} z(p,s))$ from above by $O(\alpha r_i) + (1 + O(\alpha)) z(p,S^*)$. Continuing,
        \begin{align*}
            \sum_{l \in 1}^L \sum_{p \in a_l} z(G_l(p), S^*) - z(p,S^*) &\leq \sum_{i = 1}^L \sum_{p \in a_l} O(\alpha r_l) + \sum_{i = 1}^L \sum_{p \in a_l} O(\alpha) z(p,C^*)\\
            &\leq O(\alpha) \sum_{l=1}^L \size{a_l} r_l + \sum_{i = 1}^L \sum_{p \in a_l} O(\alpha) z(p,S^*) \\
            &\leq O(\alpha \OPT) + O(\alpha \erralpha) + O(\alpha) f_{D} (S^*).
        \end{align*}
        Since we use a non-private clustering algorithm with multiplicative approximation factor $\eta$, we can substitute for $f_D(S^*)$ by $\eta \OPT$, and rearranging terms we get the stated bound.
    \end{proof}
    
    We now use the error bounds for the sum oracle and the succinct histogram to recover the cluster centers of the cluster as defined in \cref{def:actualClustering}.
    
    \begin{lemma}
    \label{lem:recoveryError}
        For every cluster center $s_i^* \in S^*$, we have the following estimation error bound for the cluster centers of the clusters derived in the original space.
        \begin{align*}
            \norm{\frac{\sum_{l=1}^L \sum_{g \in G_{l,i}^*} \Sum (g)}{\sum_{l=1}^L \sum_{g \in G_{l,i}^*} \Count (g)} - \frac{\sum_{p' \in D'(s_i^*)} p' }{\size{D'(s_i^*)}}} \le \frac{2 L^2 N_G^2 }{\size{D'(s_i^*)} }  \left( \norm{\frac{\sum_{p' \in D'(s_i^*)} p' }{\size{D'(s_i^*)}}}\PH_M + \PSO_M \right) .
        \end{align*}
    \end{lemma}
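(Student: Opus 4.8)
The plan is to handle the ratio in two stages. Write, for the fixed center $s_i^*$, the numerator $\hat S := \sum_{l=1}^L\sum_{g\in G_{l,i}^*}\Sum(g)$ and denominator $\hat N := \sum_{l=1}^L\sum_{g\in G_{l,i}^*}\Count(g)$ of the estimated center, and let $S := \sum_{p'\in D'(s_i^*)} p'$, $N := \size{D'(s_i^*)}$, and $\mu := S/N$ be the true quantities. First I would show $\hat S\approx S$ and $\hat N\approx N$, and only then convert these two one-sided estimates into a bound on $\norm{\hat S/\hat N - \mu}$. The preliminary move is to rewrite $S$ and $N$ in the same $(l,g)$-indexed form as $\hat S$ and $\hat N$: the second, equivalent description of $D'(s_i^*)$ in \cref{def:actualClustering} expresses it as a disjoint union, over $l\in[L]$ and $g\in G_{l,i}^*$, of the sets $\{p' : G_l(Q(p'))\in G_l^{-1}(g),\ Q(p')\notin\bigcup_{j<l}a_j\}$ — and for each such $(l,g)$ this is precisely the set against which \cref{lem:sum_acc} bounds $\Sum(g)$, while its cardinality is precisely the set against which \cref{lem:count_acc} bounds $\Count(g)$. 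The $a_j$-bookkeeping lines up here exactly because that is what \cref{lem:M*Props} and \cref{def:actualClustering} were set up to guarantee, so this matching is bookkeeping rather than a new idea.

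With the index sets aligned, the estimation step is a termwise application of \cref{lem:sum_acc} and \cref{lem:count_acc} together with the triangle inequality. Each $(l,g)$ term contributes error at most $lN_G\PSO_M$ to the numerator and $lN_G\PH_M$ to the denominator; there are at most $\size{G_l^*}=O(N_G)$ choices of $g$ at level $l$ (absorbing the $\log(1/\alpha)$ in $\size{G_l^*}$ into $N_G$), and $\sum_{l=1}^L l\le L^2$, so summing gives $\norm{\hat S - S}\le L^2N_G^2\PSO_M$ and $\size{\hat N - N}\le L^2N_G^2\PH_M$.

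For the last stage, since $S=\mu N$ one has the identity
\[
\frac{\hat S}{\hat N} - \mu \;=\; \frac{(\hat S - S) + \mu(N - \hat N)}{\hat N},
\]
so taking norms bounds the numerator by $\norm{\hat S - S}+\norm{\mu}\,\size{\hat N - N}\le L^2N_G^2\bigl(\PSO_M+\norm{\mu}\PH_M\bigr)$. It remains to lower bound $\hat N$: from $\size{\hat N - N}\le L^2N_G^2\PH_M$ we get $\hat N\ge N/2$ as soon as $L^2N_G^2\PH_M\le N/2$, and dividing and recalling $N=\size{D'(s_i^*)}$ yields exactly the claimed inequality with prefactor $2/\size{D'(s_i^*)}$. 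I expect this lower bound on $\hat N$ to be the only genuinely non-mechanical point: it requires the cluster $D'(s_i^*)$ to have more than $\Theta(L^2N_G^2\PH_M)=\tilde\Theta(\poly(k)\sqrt n/\epsilon)$ points, so clusters below that threshold cannot be recentered reliably — but those contribute only lower-order additive error to the overall cost and are dealt with separately in the cost analysis that follows. Everything else is routine.
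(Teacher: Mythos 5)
Your proof is correct and follows essentially the same approach as the paper: the same algebraic decomposition (the paper writes it as a two-term add-and-subtract, you fold it into one identity, but they are identical), the same termwise invocation of \cref{lem:count_acc} and \cref{lem:sum_acc} over the aligned $(l,g)$ index sets to get the $L^2N_G^2\PSO_M$ and $L^2N_G^2\PH_M$ bounds, and the same lower bound $\hat N\ge\size{D'(s_i^*)}/2$ under the implicit largeness assumption, with small clusters deferred to \cref{lem:originalSpaceCost}. Your observation that the lemma as stated quietly assumes $\size{D'(s_i^*)}\ge 2L^2N_G^2\PH_M$ matches what the paper's own proof does mid-argument.
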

    
    \begin{proof}
        The proof of this result is essentially the same as that of \cref{lem:noisyAvgGuarantee}, with the additional complication that we must account for the error accrued when summing over queries for multiple heavy values.
        \begin{align*}
            &\frac{\sum_{l=1}^L \sum_{g \in G_{l,i}^*} \Sum (g)}{\sum_{l=1}^L \sum_{g \in G_{l,i}^*} \Count (g)} - \frac{\sum_{p' \in D'(s_i^*)} p' }{\size{D'(s_i^*)}} \\
            = &\frac{\sum_{l=1}^L \sum_{g \in G_{l,i}^* } \Sum (g)}{\sum_{l=1}^L \sum_{g \in G_{l,i}^* } \Count (g)} - 
            \frac{\sum_{p' \in D'(s_i^*)} p' }{\sum_{l=1}^L \sum_{g \in G_{l,i}^*} \Count (g)} + \frac{\sum_{p' \in D'(s_i^*)} p' }{\sum_{l=1}^L \sum_{g \in G_{l,i}^*} \Count (g)} - \frac{\sum_{p' \in D'(s_i^*)} p' }{\size{D'(s_i^*)}} \\
            = &\frac{\sum_{l=1}^L \sum_{g \in G_{l,i}^*} \Sum (g) - \sum_{p' \in D'(s_i^*)} p' }{\sum_{l=1}^L \sum_{g \in G_{l,i}^*} \Count (g)} +  \frac{\sum_{p' \in D'(s_i^*)} p' }{\size{D'(s_i^*)}} \frac{\size{D'(s_i^*)} - \sum_{l=1}^L \sum_{g \in G_{l,i}^*} \Count (g)}{\sum_{l=1}^L \sum_{g \in G_{l,i}^*} \Count (g)} \\
            \Rightarrow &\norm{\frac{\sum_{l=1}^L \sum_{g \in G_{l,i}^*} \Sum (g)}{\sum_{l=1}^L \sum_{g \in G_{l,i}^*} \Count (g)} - \frac{\sum_{p' \in D'(s_i^*)} p' }{\size{D'(s_i^*)}}} \\
            \leq &\frac{ L \cdot N_G \cdot L N_G \PSO_M }{\size{D'(s_i^*)} - L \cdot N_G \cdot L N_G \PH_M } +  \frac{\sum_{p' \in D'(s_i^*)} p' }{\size{D'(s_i^*)}} \cdot \frac{L\cdot N_G \cdot L N_G \PH_M}{\size{D'(s_i^*)} - L \cdot N_G \cdot L N_G \PH_M }
        \end{align*}
        So for all clusters $D'(s_i^*)$ such that with at least $2 L^2 N_G^2 \PH_M $ many points, we can bound the $\ell_2$ estimation error by
        \begin{align*}
            &\frac{ 2 L^2 N_G^2 \PSO_M }{\size{D'(s_i^*)}} +  \norm{\frac{\sum_{p' \in D'(s_i^*)} p' }{\size{D'(s_i^*)}}} \cdot \frac{2 L^2 N_G^2 \PH_M }{\size{D'(s_i^*)} } \\
            &= \frac{2 L^2 N_G^2 }{\size{D'(s_i^*)} }  \left( \norm{\frac{\sum_{p' \in D'(s_i^*)} p' }{\size{D'(s_i^*)}}}\PH_M + \PSO_M \right) .
        \end{align*}
    \end{proof}
    
    Now we can derive the cost bound for the private clustering solution derived in the original space.
    
    \begin{lemma}
        \label{lem:originalSpaceCost}
        \begin{align*}
            f_{D'} (S') \leq (1 + O(\alpha))\eta \OPT' + O(\alpha^2 \erralpha \log n/\beta) + O(\alpha L^2 N_G^2 \PH_M \log n/\beta) + O(k L^2 N_G^2 \PSO_M).
        \end{align*}
    \end{lemma}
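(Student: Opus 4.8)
\textbf{Proof proposal for \cref{lem:originalSpaceCost}.}
The plan is to start from the partition $\{D'(s_1^*),\dots,D'(s_k^*)\}$ of $D'$ from \cref{def:actualClustering}: since $f_{D'}(S')$ takes a minimum over $S' = \{\hat\mu_1,\dots,\hat\mu_k\}$, it is bounded by the cost of the assignment induced by this partition, and applying the parallel-axis identity $\sum_{p'\in C} z(p',x) = \sum_{p'\in C} z(p',\mu_C) + \size{C}\,z(\mu_C,x)$ (with $\mu_C$ the centroid of $C$) to each cluster $C=D'(s_j^*)$, writing $\mu_j := \mu_{D'(s_j^*)}$, gives
\[
f_{D'}(S') \le \underbrace{\sum_{j=1}^k \sum_{p'\in D'(s_j^*)} z(p',\mu_j)}_{(\mathrm{I})} \;+\; \underbrace{\sum_{j=1}^k \size{D'(s_j^*)}\, z(\mu_j,\hat\mu_j)}_{(\mathrm{II})}.
\]
I would bound $(\mathrm{I})$, the ideal-centroid cost of $D'$ under this partition, and $(\mathrm{II})$, the error from replacing the ideal $\mu_j$ by the privately recovered $\hat\mu_j$, separately.

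For $(\mathrm{I})$: apply \cref{lem:costApproximation} to the partition $(D'(s_1^*),\dots,D'(s_k^*))$, which bounds $(\mathrm{I})$ by $(1+\alpha)(\alpha\log n/\beta)$ times the centroid cost of the pushed-forward partition $(Q(D'(s_1^*)),\dots,Q(D'(s_k^*)))$ of $D$. Because centroids minimize within-cluster cost, that centroid cost is at most $\sum_j\sum_{p'\in D'(s_j^*)} z(Q(p'),s_j^*)$, and by \cref{def:actualClustering} the grid hit $G'_*(p')$ is the one closest to $s_j^*$ among $\cup_l G_l^*$; using $\norm{Q(p')-G'_*(p')}\le\alpha r_l$ (\cref{lem:snapBound}, level $l$ being where $p'$ is first covered) together with the $\ell_2$-triangle-plus-AM--GM manipulation from the proof of \cref{lem:1round_ProxyVersusActual} gives $z(Q(p'),s_j^*) \le (1+O(\alpha))z(G'_*(p'),s_j^*) + O(\alpha r_l)$. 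Summing over $p'$, invoking \cref{lem:actualDimRedCost} for $\sum_l\sum_{p\in a_l} z(G_l(p),S^*) = \sum_{p'} z(G'_*(p'),S^*)$ and the bound $\sum_l\size{a_l}r_l = O(\OPT)+O(\erralpha)$ established earlier, the centroid cost of $D$ is at most $\eta(1+O(\alpha))\OPT + O(\alpha\erralpha) + O(N_G^2\PH_M\log^2 n)$. Multiplying back by $(1+\alpha)(\alpha\log n/\beta)$, using the corollary $(\alpha\log n/\beta)\OPT \le (1+\alpha)\OPT'$ of \cref{lem:costApproximation}, and recalling $L=\Theta(\log n)$, yields $(\mathrm{I}) \le (1+O(\alpha))\eta\OPT' + O(\alpha^2\erralpha\log n/\beta) + O(\alpha L^2 N_G^2\PH_M\log n/\beta)$.

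For $(\mathrm{II})$: split the clusters at a size threshold $\Theta(L^2 N_G^2(\PH_M+\PSO_M))$. A cluster below the threshold contributes to $f_{D'}(S')$ at most $\size{D'(s_j^*)}\cdot O(1)$ (squared diameter of $B(0,1)$), i.e. $O(L^2 N_G^2(\PH_M+\PSO_M))$, and there are at most $k$ of them. A cluster above the threshold satisfies the hypothesis of \cref{lem:recoveryError}, which with $\norm{\mu_j}\le 1$ gives $\norm{\mu_j-\hat\mu_j} \le \frac{2L^2 N_G^2}{\size{D'(s_j^*)}}(\PH_M+\PSO_M)$, so $\size{D'(s_j^*)}\,z(\mu_j,\hat\mu_j) \le \frac{4L^4 N_G^4 (\PH_M+\PSO_M)^2}{\size{D'(s_j^*)}} = O(L^2 N_G^2(\PH_M+\PSO_M))$ by the size lower bound; again there are at most $k$ of them. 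Hence $(\mathrm{II}) = O(k L^2 N_G^2 \PSO_M)$, the $\PH_M$ parts being absorbed. Adding the bounds for $(\mathrm{I})$ and $(\mathrm{II})$ gives the claimed inequality.

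The main obstacle I anticipate is the bookkeeping in $(\mathrm{I})$: one must verify that pushing $\{D'(s_j^*)\}$ through $Q$ yields exactly the clustering of $D$ whose cost \cref{lem:actualDimRedCost} controls, absorb the additional layer of discretization error from comparing $z(Q(p'),s_j^*)$ with $z(G'_*(p'),s_j^*)$ into the existing $O(\alpha\OPT)+O(\alpha\erralpha)$ slack, and — most delicately — track how the $(\alpha\log n/\beta)$ scaling from \cref{lem:costApproximation} multiplies through every additive term, since it is precisely this scaling that turns the clean $O(\alpha\erralpha)$ and $O(N_G^2\PH_M\log^2 n)$ errors into the $O(\alpha^2\erralpha\log n/\beta)$ and $O(\alpha L^2 N_G^2\PH_M\log n/\beta)$ terms in the statement. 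Part $(\mathrm{II})$ is routine once the size threshold is chosen so that the big-cluster denominator cancels one power of the recovery error.
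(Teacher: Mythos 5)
Your proposal is correct and follows essentially the same route as the paper: the parallel-axis decomposition (the paper's \cref{lem:approxClustering}) into ideal-centroid cost plus estimation error, with \cref{lem:costApproximation} and \cref{lem:actualDimRedCost} bounding the former and \cref{lem:recoveryError} plus a cluster-size threshold bounding the latter. If anything you are more careful than the paper's one-line chain $f_{D'}(\{\mu_j\}) \simeq_{1+\alpha} (\alpha\log n/\beta)\, f_D(S^*)$, which elides the passage from the fixed-assignment cost $\sum_j\sum_{p'\in D'(s_j^*)} z(Q(p'),s_j^*)$ through $z(G'_*(p'),s_j^*)$ via the AM--GM discretization step that you spell out explicitly.
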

    
    \begin{proof}
        We are interested in bounding the clustering cost of $D'$ with respect to the clusters $(D'(s_1^*),\dots, D'(s_k^*))$. In \cref{lem:actualDimRedCost} we bounded the cost of the dimension reduced image of this clustering $(D(s_1), \dots, D(s_k)) = (Q(D' (s_1^*)), \dots , Q(D'(s_k^*)))$. From \cref{lem:costApproximation} we recall that for any clustering $(D'_1, \dots, D'_k)$ of $D$ we have that
        \begin{align*}
            \sum_{i \in k} \sum_{p \in D'_i} s\left( p , \frac{\sum_{q\in D'_i} q}{\size{D'_i}} \right) \simeq_{1 + \alpha} (\alpha \log n/\beta ) \sum_{i \in k} \sum_{p \in Q (D'_i) } s\left( Q (p) , \frac{\sum_{q\in D'_i} M(q)}{\size{D'_i}} \right).
        \end{align*}
        If we let $D'_i = D' (s_i^*)$, and denote the (unknown) true cluster centers of the clusters in the original space by $\mu_i = \frac{\sum_{p' \in D'(s_i^*)} p' }{\size{D'(s_i^*)}}$ for $i = 1, \dots, k$ then we get
        \begin{align*}
            f_{D'} (\{ \mu_1, \dots, \mu_k \} ) &\simeq_{1 + \alpha} (\alpha \log n/\beta) f_D (\{s^*_1, \dots, s^*_k \}) \\
            &\simeq_{1 + \alpha} \alpha \log n/\beta (1 + O(\alpha))\eta \OPT + O(\alpha^2 \erralpha \log n/\beta) + O(\alpha L^2 N_G^2 \PH_M \log n/\beta).
        \end{align*}
        Then, since $\OPT' \simeq_{1 + \alpha} \alpha \log n/\beta \OPT$, we can write
        \begin{align*}
            f_{D'} (\{ \mu_1, \dots, \mu_k \} ) &\simeq_{1 + O(\alpha)}  (1 + O(\alpha))\eta \OPT' + O(\alpha^2 \erralpha \log n/\beta) + O(\alpha L^2 N_G^2 \PH_M \log n/\beta).
        \end{align*}
        We have estimates 
        \begin{align*}
            \hat{\mu_i} = \frac{\sum_{l=1}^L \sum_{g \in G_i^* (s_i^*)} \Sum (g)}{\sum_{l=1}^L \sum_{g \in G_i^* (s_i^*)} \Count (g)}
        \end{align*} 
        for the true cluster centers $\mu_i$ for $i = 1, \dots, k$. From \cref{lem:approxClustering}, in order to bound the additive error incurred due to the estimation error, i.e. $f_{D'(s_i^*)}(\{\hat{\mu}_i\}) - f_{D'(s_i^*)} (\{\mu_i\})$, it will suffice to bound $\size{D'(s_i)^*} \norm{\mu_i - \hat{\mu}_i}^2$. \Cref{lem:recoveryError} bounds the estimation error $\norm{\hat{\mu} - \mu}$. Putting everything together, we get
        \begin{align*}
            f_{D'(s_i^*)}(\{\hat{\mu}_1, \dots, \hat{\mu}_k \}) - f_{D'(s_i^*)} (\{\mu_i\}) &\leq \size{D'(s_i^*)} \left(\frac{2 L^2 N_G^2 }{\size{D'(s_i^*)} }  \left( \norm{\frac{\sum_{p' \in D'(s_i^*)} p' }{\size{D'(s_i^*)}}}\PH_M + \PSO_M \right) \right)^2 \\
            &\leq \frac{8L^4 N_G^4 \PH_M^2}{\size{D'(s_i^*)}} + \frac{8 L^4 N_G^4 \PSO_M^2}{\size{D'(s_i^*)}} \\
            &\leq \frac{L^4 N_G^4 }{\size{D'(s_i^*)}} O(\PSO_M^2).
        \end{align*}
        For each $s_i^* \in S^*$, if $D'(s_i^*) \geq L^2 N_G^2 \PSO_M$, then the first factor is $O(L^2 N_G^2 \PSO_M)$. On the other hand, if $D(s_i^*) < L^2 N_G^2 \PSO_M$ then the clustering cost for $D'(s_i^*)$ i.e. $f_{D'(s_i)}(\{\hat{\mu}_1, \dots, \hat{\mu}_k \})$ is unconditionally less than $L^2 N_G^2 \PSO_M$ as the diameter of the data domain is $O(1)$. It follows that the additive error over all $k$ clusters is at most $O(k L^2 N_G^2 \PSO)$. Since $f_{D'(s_i^*)} (\{\mu_i\}) = f_{D'(s_i^*)} (\{\hat{\mu}_1, \dots, \hat{\mu}_k \})$, putting everything together we get that
        \begin{align*}
            f_{D'}(\{\hat{\mu}_1,\dots, \hat{\mu}_k\}) &\leq f_{D'} (\{\mu_1,\dots, \mu_k\}) + O(k L^2 N_G^2 \PSO_M) \\
            &\leq (1 + O(\alpha))\eta \OPT' + O(\alpha^2 \erralpha \log n/\beta) + O(\alpha L^2 N_G^2 \PH_M \log n/\beta) + O(k L^2 N_G^2 \PSO_M).
        \end{align*}
    \end{proof}
    
    \oneRoundGuarantee*
    
    \begin{proof}
       We make $2L = 2 \log n$ calls (in parallel) to $\BTG$ and $\HSO$. From their respective privacy guarantees, we know that each call is $(\epsilon, \delta)$-differentially private. By simple composition of privacy, it follows that the net privacy loss is $(2 (\log n) \epsilon, 2 (\log n) \delta)$. To ensure net $(\epsilon, \delta)$ privacy loss, we must scale the respective privacy parameters by a factor of $1/(2\log n)$; with this scaling we have $\PH_M = \tilde{O} \left( \frac{1}{\epsilon \alpha} \sqrt{n \log^5 n} \right)$ and $\PSO_M = \tilde{O} \left( \frac{c_G}{\epsilon} \sqrt{d' n \log^2 n} \right)$. We recall that $N_G = k^{\tilde{O} (1/\alpha^2)}$. Substituting all these bounds in the guarantee of \cref{lem:originalSpaceCost} we get
       \begin{align*}
            f_{D'}(\{\hat{\mu}_1,\dots, \hat{\mu}_k\}) &\leq (1 + O(\alpha))\eta \OPT' + O(\alpha^2 \erralpha \log n/\beta) + O(\alpha L^2 N_G^2 \PH_M \log n/\beta) + O(k L^2 N_G^2 \PSO_M) \\
            &\leq (1 + O(\alpha))\eta \OPT' + \frac{1}{\epsilon} k^{\tilde{O} (1/\alpha^2)} \sqrt{d' n \log 1/\delta} \poly\log n.
        \end{align*}
        To simplify the error term in the above expression we assume without loss that $k\geq 2$, as $k=1$ is a degenerate case i.e. mean estimation of vectors in $d'$ dimensional space. We then absorb all constants in the $\tilde{O}$ term in the exponent of the $k$ to state a simplified bound.
    \end{proof}

\section{LDP \texorpdfstring{$k$}{k}-means with low additive error}\label{sec:fourRound}

    In this section we describe our second algorithm that, given a constant $c > \sqrt{2}$, can achieve a constant factor multiplicative approximation and $O(k^{O(1/(2c^2-1))}\sqrt{n d'} \poly\log n)$ additive error. Our algorithm is described in a modular fashion, and one may refer to the respective section for the pseudo-code and an informal walk-through of how the algorithm proceeds. We begin with a technical discussion to help explain some of the algorithmic choices made along the way.
    
    \begin{table}
        \centering
        \begin{tabular}{c|c}
             Notation & Meaning \\ \hline
             $D' \subset \R^{d'}$ & Original data set \\
             $Q : \R^{d'} \to \R^{d}$ & mapping from high-dim. to low-dim. space \\
             $D \subset \R^d$ & $Q(D')$, dimension reduced data set \\
             $L$ & Number of cell grid levels \\
             $\Cells_l$ & Grid of cells in dimension reduced space for $l\in [L]$ \\
             $t_l$ & Side-length of any cell in $\Cells_l$ (equals $2^{-l}$)\\
             $\Cells_l (\cdot)$ & Mapping from $\R^d$ to unique containing cell $\Cells_l$\\
             $\Ancestors^* : \Cells_l \to \Cells_{l- (3/2)\lg d}$ & Mapping from cells to the set of their ancestors $j$ with side-length $d^{3/2} t_l$\\
             $\CH^l$ & Succinct histogram of number of points mapping to $C \in \Cells_l$ for ``heavy" $C$ \\
             $F$ & Number of geometrically varying guesses for true optimal clusternig cost $\OPT$\\
             $\Heavy_l^f$ & Heavy cells identified $\CH^l$ where guess for $\OPT = k\sqrt{n} \cdot 2^f$, $f \in [F]$ \\
             $\Light_l^f$ & Cells which are not heavy \\
             $\Medium_l^f$ & Light children of heavy cells \\
             $M$ & Number of distance scales with which LSH functions applied \\
             $r_{l,1}, \dots, r_{l,M}$ & Scales at which LSH functions are used to allocate cluster centers for points in $\Medium_l$ \\
             $R$ & Number of repetitions of LSH subroutine to boost success probability \\
             $\Lambda_l^f$ & Synthetic space of heavy cells in $\Ancestors^*$ level\\
             $\Lambda_l^f (\cdot)$ & Mapping from $\R^{d}$ to synthetic space \\
             $H_{l,m,r,f} (\cdot)$ & $(p(1),p(c),r_{l,m},c r_{l,m})$-sensitive hash function with domain $\Lambda_l^f$ for points in $\Medium_l^f$\\
             $\BH_{l,m,r,f}$ & Histogram of number of points per hash bucket\\
             $\BSO_{l,m,r,f}$ & Vector sums of points in original space mapping to heavy buckets\\
             $\hat{b}$ & Average vector mapping to bucket $b \in \BH_{l,m,r,f}$ \\
             $\Pi_l(\hat{b})$ & projection of $\hat{b}$ to $\Lambda_l^f (\cup_{C \in \Heavy_l^f} C)$\\
             $S_l$ & Candidate centers allocated in one level for some guess of $\OPT$\\
             $S_{\Heavy}$ & Candidate centers allocated at the center of heavy cells for some guess of $\OPT$\\
             $S$ & $k$-means bi-criteria solution
        \end{tabular}
        \caption{Summary of notation used in \cref{alg:main}}
        \label{tab:4RoundNotation}
    \end{table}
    
    \subsection{Technical discussion}
    
    We recall from the introduction that any differentially private solution for $k$-means clustering in the local setting has to somehow indirectly access the aggregate geometry of the data set because of the high magnitude of the noise that is added to maintain privacy. We then discussed how discretizing the response function that is sensitive to the location of each point allows us to do precisely this and understand the geometry of the data set in sum. The one-round clustering algorithm uses a grid-based discretization of the domain to elicit a discrete response. For our four-round algorithm, we will use a combination of a cell-based discretization (which is similar in essence to the grid-based discretization used before) in combination with LSH functions.
    
    \paragraph{Dyadic hierarchy of cells:}{ In \cite{BFLSY17}, the authors describe a way of decomposing the data domain in a way that helps identify regions of the domain where data accumulates. Given a rectangular domain $[0,1)^d$, they construct a dyadic $2^d$-ary tree of cells, where each rectangular cell is sub-divided into $2^d$ child cells by bisecting the cell along each axis. The cell at the top of the hierarchy with side-length one unit is simply the whole domain, and it has $2^d$ children with side-length half units that collectively again cover the whole domain. Each child cell is recursively divided in the same manner, and in level $l$ the side length of each cell is $t_l = 2^{-l}$ units. The idea is that although each point in the domain is covered by each level of cells, the further down the hierarchy one goes the finer is the resolution at which the domain is discretized and the smaller is the diameter of the bounding box at a level. $L = \log n$ levels of the grid will suffice to discretize the domain to a sufficiently fine degree so as to capture clusters at all relevant scales; the cost of clustering cluster with radius smaller than $O(1/n)$ will be dominated by the additive error terms that any private $k$-means clustering algorithm must have. This entire construction can be done after an application of the JL transform for dimension reduction which ensures that $d = O(\log n)$. We will see during the course of our discussion why this is crucial for our cost analysis.

    The authors of \cite{BFLSY17} then observe that if we \emph{randomly shift} this hierarchy of cells, then one can show that with probability $1-\beta$, for any point in the domain there are at most $O(1/\beta)$ many cells with side-length $t_l$ within an $\ell_2$ distance of $t_l/d$ units of that point. Applying this on any choice of optimal centers $S_{\OPT}$ means that there are $O(k/\beta)$ many cells close to $S_{\OPT}$ at any level. How can we exploit this to capture the aggregate geometry of the data set?
    }
    
    \paragraph{Guessing the optimal cost:}{ Suppose that we knew what the optimal cost $\OPT$ were. If this were the case, then we can bound the number of cells further than $t_l/d$ units away from any choice of optimal centers $S_{\OPT}$ that carry significantly many data points. Concretely, all data points in cells further than $t_l/d$ units away from $S_{\OPT}$ must have a clustering cost of at least $t_l^2/d^2$. On the other hand, their clustering cost with respect to $S_{\OPT}$ cannot exceed the total clustering cost $\OPT$, which means there cannot be more than $\OPT d^2/t_l^2$ many such points. Tracing a similar argument with cells, we compute a threshold depending on the level's side length $t_l$ such that there cannot be more then $O(kL/\beta)$ many cells that have more than the number of points in the threshold and lie further than $t_l/d$ units away from $S_{\OPT}$. To see why the bound has changed from $O(k/\beta)$ to $O(kL/\beta)$, note that we scale the failure probability by a factor of $1/L$ so that it apply across all $L$ levels with probability $1-\beta$. Coupled with the guarantee that there cannot be more than $O(kL/\beta)$ many cells closer than $t_l/d$ to $S_{\OPT}$ we get that regardless of where they lie in the domain there are at most $O(kL/\beta)$ \emph{heavy} cells in any level, i.e. cells that beat the threshold $T_l$ for their level.
    
    In the top cell, this threshold is lower than $n$, so the top cell is always marked heavy. In the bottom level, this threshold exceeds $n$, so all bottom cells are marked \emph{light} (i.e. not heavy). Between these two extremes the threshold increases monotonically as $t_l$ decreases, which means that there is a unique level for every point where the cell it belongs to transitions from being heavy to light. There is a small technicality here that since we can only identify cell counts via noisy privatized responses we can inadvertently mark heavy cells light and light cells heavy. In practice we will appeal to the locally private histogram construction $\BTG$ of \cite{DBLP:journals/jmlr/BassilyNST20} to estimate the data point counts of cells. The issue of incorrect labelling of cells as heavy or light is readily resolved by requiring that heavy cells have only heavy ancestors, and using the accuracy guarantees of $\BTG$ to bound the consequences of such errors in our cost analysis.
    
    In sum, under the promise that $\OPT$ is known, we have identified regions of the domain at different scales where the data set accumulates beyond some thresholds. Since we are targeting an additive error of $\tilde{O}(k\sqrt{n})$, we let $\OPT$ vary in factors of $2$ from $k\sqrt{n}$ to $n$ and simply run the algorithm with varying values of $\OPT$ at different scales to ensure that the promise holds for at least some run. This leads to an inflation in our additive error on the order of $\log n$ as the number of candidate centers grows by this factor.
    
    We recall that in the introduction we mentioned that when finding a bi-criteria solution, to get $O(\poly k \sqrt{n}\allowbreak \poly\log n)$ error we would like to find $O(\poly k \poly\log n)$ many candidate centers with respect to which the data set has a clustering cost within a constant factor multiplicative approximation to $\OPT$ and additive error at most $O(\poly k \sqrt{n} \poly\log n)$. It is in fact the case that if we can limit the exponent of $k$ in both the number of centers allocated and the additive error incurred, then we will have at most that same exponent in the final error term. Keeping this in mind we observe that we have partitioned the data set across $O(k \log^2 n /\beta)$ many heavy cells. If we can allocate some $O(\poly\log n)$ centers in each cell such that the additive error with respect to these centers is $O(k  \sqrt{n}  \poly\log n)$, then we would achieve $O(k \sqrt{n} \poly\log n)$ error in sum. Although we do not achieve exactly this term, the reason we are able to get arbitrarily close to it is because of the relatively small number of cells within which we have partitioned the data set. We turn to using LSH functions to allocate candidate centers in a cell-wise fashion.
    }
    
    \paragraph{The $n^{1/2 + a}$ barrier:}{ We recall that a $(p,q,r,cr)$ LSH function has the property that if two points are within a distance of $r$ units, they must collide with probability at least $p$, and if two points are further than $cr$ units, then they cannot collide with probability more than $q$. By applying LSH functions on the data domain and appealing to locally private succinct histograms, we can recover all heavy LSH buckets; the idea then is that any sufficiently large cluster with radius less than $r$ units must populate one of these heavy buckets with a lot of points, possibly with some false positives. We estimate the point average over each heavy bucket to get a point that is no more than $cr$ units away from the cluster, and serves as a cluster center with a constant factor approximation to the true radius.
    
    We now describe why prior work taking this approach suffer an $O(n^{1/2 + a} \poly\log n)$ dependence on $n$ in the additive error. When dealing with LSH functions one technicality that has to be dealt with is that the LSH guarantee holds only in a pair-wise fashion, i.e. you only get bounds on the likelihood of points colliding a pair of points at a time. Fixing some cluster $C$ with radius $r$, this leads us to use some arbitrary fixed point from $C$ as a filter, using the LSH guarantee to argue that (1) ``most" points which collide with it under the LSH function with parameters $(p,q,r,cr)$ must lie at a distance of at most $cr$ units and (2) for every cluster, at least a $p$ fraction of points from that cluster must collide with it. What we would like to be the case is that the average over all points colliding with our filter lie at a distance of $O(cr)$ from the filter; since the filter itself lies in the cluster, by the triangle inequality the average can then serve as a candidate center for the cluster with an $O(c)$ constant factor approximation to the radius.
    
    Let $\Delta$ be the diameter of the data domain. The distance of the weighted mean of all points colliding with the filter under the LSH function, from that filter, can roughly be bounded from above by
    \begin{align*}
        cr \cdot |\{\mbox{points from }C\mbox{ colliding with filter}\}| + \Delta \cdot |\{ \mbox{points further than }cr\mbox{ units from filter}\}| 
    \end{align*}
    We are bounding the impact of points from outside the cluster by the diameter of the domain, and dealing with the arbitrarily many points that lie between a distance of $r$ and $cr$ units by simply inflating the distance considered ``close" to $cr$ units so they can be dropped from consideration without giving us an unfair advantage (notice that they can only pull this average towards $cr$ units). It is easy to see by linearity of expectation that the expected number of points from the cluster that collide with the filter is at least $p |C|$, and the number of points from further than $cr$ units that collide with the filter is at most $q |D|$ (again using the worst case as an upper bound).
    
    To get this weighted mean to be of the order of $cr$, we tune the LSH parameters to get the collision probability ratios to fulfill
    \begin{align*}
        cr \geq \frac{q|D|\Delta}{p|C|}.
    \end{align*}
    One can see this as a tug of war between false positives which in expectation increase with the side of the data set and whose impact is exacerbated by the diameter of the data domain and ``true" cluster points whose impact can be as low as $cr$ units and whose number scales with the size of the cluster $C$. Rearranging terms gives us
    \begin{align*}
        \frac{p}{q} \geq \frac{|D|}{|C|} \frac{\Delta}{cr}.
    \end{align*}
    It follows that if one needs this procedure to work for clusters $C$ with as few as $\sqrt{n}$ many points, as well as for cluster radii that are a $\poly (n)$ factor smaller than $\Delta$, then since $\size{D} = n$, one would need the ratio between the collision probabilities of near and far points to beat a $\poly(n)$ term.
    
    It is an intrinsic property of LSH functions that tuning parameters to increase the ratio between $p$ and $q$ causes both $p$ and $q$ to fall individually. This is an issue because we also need sufficiently many points from the cluster to accumulate in a bucket to ensure we can distinguish the heavy bucket from random noise; in expectation the number of true points accumulating in a bucket number drops with $p$. It is in fact the case that $p$ scales with $n^{-\Theta(1/c)}$ which leads us to try and boost the success probability with $n^{\Theta(1/c)}$ many independent runs. Since we cannot test which runs are successful and which are not, we are forced to include all bucket averages generated along the way as candidate centers; this $n^{\Theta(1/c)}$ factor in the number of centers is what leads to the greater than $1/2$ exponent of $n$ that is incurred in previous work applying LSH functions for clustering as discussed in the introduction. We can push this exponent arbitrarily close to $1/2$ by letting $c \to \infty$, but naturally this causes the multiplicative approximation guarantee to blow up.
    
    Even if we were to somehow reduce the number of possible false positives (i.e. the size of the data set $D$ that lies in the LSH domain) from $n$ to something that scales with the cluster size, there is still the issue that $\Delta/cr$ could again be $\poly (n)$. We must find a way to both limit the sizes of the subset of the data that participate in the LSH procedure as well as the diameter of the data domain within which that subset can lie. We describe how we achieve exactly this in the sequel.
    
    If we apply this LSH subroutine heavy cell by heavy cell, then the impact of any point from more than $O(cr)$ units can be at most the diameter of the cell, i.e. $2^{-l} \sqrt{d}$ in the $l$th level, which resolves the $n^{1/2 + a}$ issue for all LSH scales which are 
    $$\Omega\left(\frac{2^{-l}}{\poly\log n}\right).$$ 
    However, there are still two issues to be resolved. We have yet to bound the size of the data subset lying in the LSH domain, as we discussed is necessary. Further, if the lowest LSH scale is still $2^{-l}/\sqrt{n}$ (for example), then the ratio of collision probabilities still has a factor of $n$, which will lead to an exponent of $n$ greater than $1/2$, as described above. In order to get a truly $O(\sqrt{n} \poly\log n )$ term, we need to increase the smallest cut-off distance for the set of LSH scale parameters.
    }
    
    \paragraph{Limiting the sequence of LSH scale parameters:}{We first take a small detour and describe how a finite sequence of scale parameters is chosen for cluster radii when identifying a bi-criteria solution. The analysis fixes some arbitrary optimal clustering solution $S_{\OPT}$ and decomposes the data set using concentric rings around $S_{\OPT}$ at geometrically varying thresholds. More concretely, each partition of the data set consists of points which lie between $2^{-l}$ and $2^{-l+1}$ units for $l = 1,2,\dots$. The goal then is to allocate cluster centers so that for each partition we can derive the promise that most points are covered by some candidate center at a distance of $O(2^{-l})$. Since the optimal clustering distance was at least $2^{-l}$ units per partition, this would give us a bi-criteria solution with an $O(\OPT)$ cost.
    
    One typically tries to identify these partitions and allocate centers separately for each partition, but doing so requires that there be a finite (and in fact small) set of distance thresholds and partitions. One way of accomplishing this is to cut off the sequence of thresholds at $\log n$ and instead of promising a constant multiplicative approximation to the optimal clustering distance for points which lie closer than $2^{-\log n} = 1/n$ units to $\OPT$, one observes that as long as there is a candidate center at a distance of $O(1/n)$, the net clustering cost for the at most $n$ such points there could be is $O(n \cdot 1/n^2) = o(1)$. The cost of clustering such points is then treated as a small additive error term in the constant factor approximation guarantee.
    
    When using LSH at a sequence of geometrically varying scales, one runs into a similar issue of needing to identify a lower bound for the smallest distance at which we allocate candidate centers. If the smallest such scale is $t$ units, then as there could be as many as $n$ points within this distance we will need $t^2 n$ units to be dominated by $O(k\sqrt{n})$, which would require $t$ to scale with $O(1/\sqrt{n})$ in the case where $k$ is small. As discussed, we need to avoid a $1/\poly(n)$ scaling factor for the lowest threshold $t$ so as to avoid an exponent of $n$ greater than $1/2$; it follows that the only way to do this is to reduce the size of the data subset on which LSH being applied. Essentially, this issue has been reduced to other condition which we needed to fulfill; that of bounding the size of the data subset participating in the LSH subroutine.
    }
    
    \paragraph{Bounding the subset of $D$ participating in LSH subroutines:}{ We see that simply using LSH on heavy cells does not work as is since there could again be arbitrarily many points in a heavy cell; all we have is a lower bound on the number of points it contains. To derive an upper bound we instead focus on the \emph{light children of heavy cells}. By virtue of being light, they have fewer points than the threshold mentioned before; we will be able to show that in level $l$ where the side length $t_l = 2^{-l}$ the total number of points which lie in such cells is $O(d^2 \OPT/t_l^2)$. From the previous discussion, this will allow us to set the lower LSH scale parameter $t = O(t_l/(d \sqrt{L}))$ and incur only $O(\OPT/L)$ additional error per level, leading to an additional $O(\OPT)$ cost across all levels. Observe that the lowest LSH scale parameter is essentially $t_l/\poly\log n$, which implies a $\poly\log n$ ratio between the diameter of the cell to the scale parameter, which is exactly what we wanted. The additional $O(\OPT)$ additive error term is readily absorbed in our multiplicative approximation factor (as opposed to a small additive error as is usually the case). Since the dimension $d$ and the number of levels in our cell hierarchy are both $O(\log n)$, this means that we have successfully avoided an exponent greater than $1/2$ on the factor of $n$ in the additive error.
    
    However, there is a different sort of issue in the dyadic hierarchy approach that we have not yet addressed; for any level the collection of light children of heavy cells partitions the data in arbitrary ways. It need not be the case that a cluster will lie entirely inside the domain of a single LSH function when making calls to the LSH subroutine. How do we account for the division of clusters across data partitions and cells?
    }
    
    \paragraph{Clusters and cluster sections:}{ Let us denote the partition of the data set $D$ that lies in heavy cells in level $l-1$ but light cells in level $l$ by $D_l$. With this notation it follows from our observations regarding the existence of a unique level for each data point such that its containing cell is light for the first time when going down levels that $D_0,\dots, D_{L-1}$ form a partition of $D$. For any fixed optimal clustering solution, we see that each cluster too can be partitioned across all levels $D_l$. Based on the discussion above, we would ideally like to use LSH functions on $O(kL/\beta)$ many cells in level $l-1$ and elicit a response only from $D_l$ to ensure that the diameter of the bounding box is not too high and the number of points participating in the LSH subroutine is not too many. This implies that we only need to allocate cluster competitively with respect to the sections of the optimal clusters that lie in heavy cells. However, this could lead to $O(k^2 L/\beta)$ many cluster sections per level, which would lead to a candidate center set of size at least $\Omega(k^2 \poly\log n)$, leading to $\Omega(k^2 \sqrt{n} \poly\log n)$ error down the line. In order to try and reduce the exponent of $k$ in the number of cluster centers allocated, we make three technical algorithmic choices.
    
    Firstly, we allocate a candidate center at the center of every heavy cells (which would be at most $O(kL^2/\beta)$ many more candidate centers). This gives us the guarantee that every point in the data set partition $D_l$ has a candidate center at a distance of $2^{-l}\sqrt{d}$. Secondly, we go up a few levels and apply LSH functions to the ancestors of these heavy cells of interest which have side-length $d^{3/2} 2^{-l}$. The consequence of these two modifications is that we only need to allocate cluster centers within a distance of $2^{-l}\sqrt{d}$ units of any point of $D_l$, and that since there are only $O(L/\beta)$ many cells with side-length $d^{3/2} 2^{-l}$ within a distance of $2^{-l}\sqrt{d}$ units of an optimal center, there are only $O(kL/\beta)$ many cluster sections we must account for. 
    
    Thirdly, in order to avoid dealing with the worst case $O(k)$ many cluster sections for every heavy cell when calling the LSH subroutine heavy cell by heavy cell, we construct a synthetic space out of the union of all heavy cells in a level and apply the LSH subroutine on this entire space. We will be able to extend the $\ell_2$ metric in a natural way to work across cells, ensure that the cells are far enough apart in this distance measure so that bucket averages that land up ``between" cells end up in the correct cell after projection, and that the diameter of this synthetic space is still small enough to keep the improvements we have derived so far.
    
    There is one final technical point which must be addressed; we need to identify a lower bound on the cluster section size to ensure that the ratio of the participating subset of the data to the size of the cluster section does not grow to $\poly (n)$, which would lose us the advances we have made. Since there are at most $O(kL/\beta)$ many such cluster sections in a level, we simply set the threshold to be $\OPT \cdot \frac{\beta}{kL} \cdot \frac{1}{L} \cdot \frac{1}{dt_l^2}$. Why does this work? We recall that we allocated a cluster center at the center of every heavy cell, that ensures that any cluster section has a candidate center at a distance of $\sqrt{d} t_l$, so for a cluster section below the threshold the cluster cost can be at most $\OPT \cdot \frac{\beta}{kL} \cdot \frac{1}{L}$. Then, since there are at most $O(kL/\beta)$ many such cluster sections, the net clustering cost for any one level across all cluster sections is $\OPT \cdot \frac{1}{L}$. Summing this up over all $L$ levels leads to an additional $\OPT$ term which again we can absorb into our constant factor approximation.
    
    We can summarize this lower bound on the cluster section size as $O(\frac{\OPT}{k t_l^2 \poly\log n})$. We recall that the size of the participating data set $D_l$ was at most $O(\OPT/t_l^2)$, which implies a ratio of $k \poly\log n$. A dependence on $n$ in the ratio that the LSH collision probabilities have to beat has been replaced by a dependence on $k$, leading to a $O\left( k^{1 + O(1/2c^2-1)} \sqrt{n} \poly\log n\right)$ bound on the number of candidate centers allocated.
    }
    
    \paragraph{Constructing the proxy data set and undoing dimension reduction:}{ In the one-round algorithm, we constructed a set of candidate centers and undid the dimension reduction in essentially one round of interaction. However, doing everything in one round increases the exponent of $k$; this was not apparent in that analysis unless studies it carefully since the big-Oh term in the power of $k$ in the number of candidate centers dominated any similar order increases (such as being squared) in the big-Oh notation. Since our goal in this section is to keep the error as low as possible, we avoid reducing the round complexity and instead use two rounds of interaction; one to construct the proxy data set, and one to recover the cluster centers in the original space.
    
    The construction of the proxy data set is relatively straightforward; we release the collection of candidate centers found and invite agents to privately reveal which candidate center is closest to them. Again by an appeal to $\BTG$, we estimate the number of data points a candidate center serves and construct a proxy data set by repeating each candidate center with multiplicity equal to its respective estimate. We then apply the non-private clustering algorithm of our choice on the privately generated proxy data set to get cluster centers $S^* = \{s_1^*, \dots, s_k^*\}$ in the dimension reduced space.
    
    In the final round of interaction we reveal the set $S^*$, and we ask agents to privately reveal a $k$-tuple of $d'$-dimensional vector where the $i$th vector equals its true location if $s_i^*$ is its closest cluster center in the dimension reduced space, and is otherwise the $0$ vector. In the same round of interaction, we ask them to reveal which is the center closest to them. We then simply compute the noisy sum for each of the $k$ coordinates and divide that by the noisy count of the number of points mapping to the center corresponding to that coordinate; we will be able to show that this estimate for the cluster center in the original space works well in its place for a $k$-means clustering solution.
    }
    
    \paragraph{Outline:}{ We divide the description and technical analysis of this algorithm into 4 parts. In \cref{subsec:grid} we formally describe the dyadic hierarchy of cells needed to construct the algorithm. In \cref{subsec:candidates}, we use the identification of heavy and light cells in the previous subsection to partition the data set level-by-level. Fixing any level, we prove that for any fixed optimal clustering, with probability $1-\beta$ we allocate candidate centers for most points in the partition corresponding to that level at an $\ell_2^2$ distance at most $O(c^2)$ times their distance from the optimal centers. In \cref{subsec:cost} we use the guarantees of \cref{subsec:candidates} to show that the sum-of-squares cost of clustering the dimension reduced dataset via the candidate centers is $O(\OPT)$ modulo some additive error. We go on to show by applications of the weak triangle inequality that the clustering functions of the proxy dataset and the dimension reduced dataset are close in $\ell_2^2$ distance up to an $O(\OPT)$ additive error. Then we bound the cost of the original dataset with respect to cluster centers derived via the dimension reduced clustering and account for the privacy loss to derive our net cost guarantee. }

\subsection{The cell grids and their hierarchy}\label{subsec:grid}

    \begin{algorithm}
	    \caption{Heavy cell marker}
	    \label{alg:heavylight}
	    \KwData{For every level $l\in [L]$, a succinct histogram of heavy-hitter cells $\CH^l$ with error bound $\CH^l_E$ and maximum frequency omitted $\CH^l_M$.}
	    \For{$l \in [L]$}{ 
	        $\Heavy_l \leftarrow \emptyset$\\
	        $\Light_l \leftarrow \emptyset$
	   }
	    $\Heavy_0 \leftarrow \Cells_0 $ \label{alg:heavylight;line:topheavy}\\
	    \For{$i=l \in \{2,\dots,L-1\}$}{
            \For{$C \in \;\CH^l$}{
                \uIf {$\CH^l (C) \geq \frac{\beta d^2 \OPT}{t_l^2 k L d} + \CH^l_E$ and $\Ancestors_1(C_j) \in \Heavy_{l-1}$}{ 
                    $\Heavy_l \leftarrow \Heavy_l \cup \{ C_j \}$
                }
                \uElse{ 
                    $\Light_l\leftarrow \Light_l \cup \{ C_j \}$
                }
            }
            $\Light_l \leftarrow \Light_l \cup \Cells_l \backslash \Heavy_l $\\
		}
		$\Light_{L-1} \leftarrow \Cells_{L-1}$ \label{alg:heavylight;line:bottomlight} \\
		\For{$l \in [L]$}{
		$\Medium_l \leftarrow \{C \in \Light_l : \Ancestors_1 (C) \in \Heavy_{l-1} \}$
		}
		\KwRet{$\{\Heavy_l,\Light_l,\Medium_l : l \in [L]\}$}
	\end{algorithm}

    In this subsection we formally define the cell grid hierarchy used to allocate candidate centers in the next section and describe an algorithm that uses succinct histogram of cell counts to tag cells as being either \emph{heavy} or \emph{light}. Apart from the definitions made, the main results of this subsection are \cref{lem:cellProps} which guarantees lower and upper bounds for the number of data points that can lie in heavy and light cells respectively; and \cref{lem:additiveErrorByLayer}, which shows how we can use the identification of heavy and light cells to partition the data set $D$, one partition per level, to get the subsets $D_l$ for $l\in [L]$. 
    
    We work over the domain $[0,1)^d$. We start by dividing this domain recursively in a dyadic fashion, with $L = \lceil \lg n \rceil$ levels in all.
	
	\begin{definition} We formalize the construction of the dyadic hierarchy of cells.
	\begin{enumerate}
	    \item A \emph{cell} is a dyadic cube in $(0,1]^d$. Explicitly, if we let the set of cells at level $l$ be denoted $\Cells_l$; then 
		\begin{align*}
			\Cells_l := \left\{ \prod_{e=1}^d \left[\frac{j_e}{2^l},\frac{j_e+1}{2^l}\right) : j \in \{0,1,\dots,2^l-1\}^d \right\}.
		\end{align*}
		We also define the notation $\Cells := \cup_i \Cells_i$.
		\item We let $t_l = 2^{-l}$ for $l \in [L]$; with this notation, every $C \in \Cells_l$ has side-length $t_l$. Note that with these definitions the minimum side-length $t_L \leq \frac{1}{n}$. 
	    \item For all $l \in [L]$, $\Ancestors_i : 2^{\Cells} \to 2^{\Cells}$ and $\Children_i : 2^{\Cells} \to 2^{\Cells}$ are defined by the following expressions:
    	\begin{align*}
    		\mbox{for }\Cells' \subset \Cells_l,\;\Ancestors_i (\Cells') &= \{ C \in \Cells_{l-i} : C' \subset C \mbox{ for some }C'\in \Cells'  \},\\
    		\mbox{for }\Cells' \subset \Cells_l,\;\Children_i (\Cells') &= \{ C \in \Cells_{l+i} : C \subset C' \mbox{ for some }C'\in \Cells'  \}.
    	\end{align*}
    	We set $\Cells_i = \{[0,1)^d\}$ for $i<0$; with this definition seeking the ancestor at a level above $0$ always returns the entire domain. It will not be necessary to define cells below level $L$. We abuse notation so that any singleton set of cells is identified with the element in it; with this notation we also have $\Children_l: \Cells \to 2^\Cells$ and $\Ancestors_l: \Cells \to \Cells$.
    	\item $\Ancestors^* := \Ancestors_{1.5 \lg d}$. Note that for $C \in \Cells_l$, $\Ancestors^* (C)$ is the unique cell that contains $C$ and has side length $d^{3/2} t_l$.
	\end{enumerate}
	\end{definition}
	
	We recall the following lemma from \cite{BFLSY17}:
	
	\begin{lemma}[Lemma 2.2, \cite{BFLSY17}]
	    \label{lem:bflsy}
	    Let $S$ be a finite set of points and $\Cells_R$ be a $d$-dimensional rectangular grid of cells with unit length $R$ shifted by a uniformly random displacement in $[0,R]$ along each dimension. With probability at least $1 - \beta$, $\size{\{C \in \Cells_R : z(S,C) \leq R^2/d^2 \}} = O(|S|/\beta)$.
	\end{lemma}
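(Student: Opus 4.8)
The plan is to bound the \emph{expected} number of cells near $S$ by a constant times $\size{S}$ and then invoke Markov's inequality to get the high-probability statement. Since $\{C\in\Cells_R: z(S,C)\le R^2/d^2\}=\bigcup_{p\in S}\{C\in\Cells_R: z(p,C)\le R^2/d^2\}$, a union bound reduces everything to the following claim: for a single point $p$, the expected number of cells $C\in\Cells_R$ with $z(p,C)\le R^2/d^2$ is $O(1)$, the expectation being over the uniformly random per-coordinate shift.

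First I would set up coordinates. Let $a_e$ and $b_e$ be the distances from $p_e$ to the left and right faces of the cell of $\Cells_R$ containing $p$, so $a_e+b_e=R$, and, because the shift is uniform on $[0,R)$ and independent across dimensions, the $a_e$ are i.i.d.\ uniform on $[0,R)$. Next I would prove the geometric claim that any cell $C$ with $z(p,C)\le R^2/d^2$ is index-adjacent to the containing cell in every coordinate: a coordinate in which the index differs by $2$ or more already contributes at least $R^2 > R^2/d^2$ to $z(p,C)$ for $d\ge 2$. Moreover, a $\pm 1$ difference in coordinate $e$ contributes $a_e^2$ or $b_e^2$, which must itself be at most $R^2/d^2$; and $a_e+b_e=R$ means that at most one of $a_e,b_e$ can be $\le R/d$ when $d\ge2$, so the direction of any difference in coordinate $e$ is forced, and a genuine choice exists only when $\min(a_e,b_e)\le R/d$. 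Calling such a coordinate \emph{flexible}, this yields the clean bound: the number of cells within squared distance $R^2/d^2$ of $p$ is at most $2^{F_p}$, where $F_p$ is the number of flexible coordinates.

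Then I would compute the expectation. A coordinate is flexible exactly when $a_e\in[0,R/d)\cup(R(1-1/d),R)$, an event of probability $2/d$, and these events are independent; hence $\Ex[2^{F_p}]=\prod_{e=1}^d\bigl(1\cdot(1-2/d)+2\cdot(2/d)\bigr)=(1+2/d)^d\le e^2$. Summing over $p\in S$ gives $\Ex\bigl[\,\size{\{C\in\Cells_R: z(S,C)\le R^2/d^2\}}\,\bigr]\le e^2\size{S}$, and Markov's inequality then shows that with probability at least $1-\beta$ this count is at most $e^2\size{S}/\beta = O(\size{S}/\beta)$. (For $d\le 2$ the statement is immediate, since only $O(1)$ cells lie within distance $R/d$ of any point and $\size{S}\ge 1$.)

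The main obstacle — indeed essentially the only step requiring care — is the geometric case analysis identifying precisely which cells can be within squared distance $R^2/d^2$ of $p$: one must verify that the nearest point of a non-containing cell is attained on a boundary face, that a two-cell jump in any single coordinate already overshoots the budget, and — crucially — that the surviving $\pm1$ options collapse to a factor $2$ rather than $3$ per flexible coordinate because $a_e+b_e=R$. Once that is nailed down, independence of the coordinate shifts turns the count into a product, and the remainder is a one-line application of Markov.
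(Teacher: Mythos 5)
Your proof is correct and follows the same route as the paper's: bound the expected number of nearby cells per point by the product over dimensions of a per-coordinate factor, evaluate it as $(1+2/d)^d \le e^2$ using independence of the shifts, and finish with linearity of expectation and Markov. The only difference is presentational: the paper asserts the bound by $N_1\cdots N_d$ with $N_i\in\{1,2\}$ without fully justifying it, whereas you spell out the geometric case analysis — index-adjacency, nearest point on a boundary face, and the observation that $a_e+b_e=R$ forces the sign so each flexible coordinate contributes a factor $2$ rather than $3$ when $d\ge 3$ — and handle $d\le 2$ separately, which tightens up a step the paper leaves implicit.
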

	
	\begin{proof}
        The number of cells with side-length $R$ within an $\ell_2$ distance of $x<R/2$ from any $s \in S$ can be bounded from above by $N_1 N_2 \dots N_d$ where $N_i$ is $1$ if there is no cell wall within a distance of $s$ along the $i$th dimension and $2$ otherwise. Since the random shifts along each dimension are independent, it follows that
        \begin{align*}
        \Ex\left[\prod_{i=1}^d N_i\right] &= \prod_{i=1}^d \Ex[N_i].
        \end{align*}
        The probability of $s_i$ lying within a distance of $x$ units of one of the sides is $(2x/R)$. It follows that
        \begin{align*}
        \Ex[N_i] &= \left(1 - \frac{2x}{R}\right) \cdot 1 + \frac{2x}{R} \cdot 2 \\
        &= 1 + \frac{2x}{R}.
        \end{align*}
        Substituting in the first display, we get
        \begin{align*}
        \Rightarrow \Ex\left[\prod_{i=1}^d N_i\right] &= \left(  1 + \frac{2x}{R} \right)^d.
        \end{align*}
        It follows that for $x \leq \frac{R}{d}$, the expected number of cells within an $\ell_2$ distance of $x$ from $s$ is at most $O(1)$. By linearity of expectation, the expected number of cells within a distance of $R/d$ of $S$ is at most $O(|S|)$. By Markov's inequality, with probability $1- \beta$, the number of cells within a $\ell_2$ distance of $R/d$ from $S$ is $\leq O(|S|/\beta)$. The result follows directly as by definition $z(\cdot,\cdot)$ is the $\ell_2^2$ distance.
	\end{proof}
	
	\begin{remark}\label{rem:closeCellBound}
        We fix any arbitrary optimal $k$-means clustering solution $S_{\OPT}$ with clustering cost $\leq \OPT$ and condition on the event that the number of cells in any level $l$ within a distance of $t_l/d$ from $S_{\OPT}$ is at most $O(kL/\beta)$. By scaling the failure probability in \cref{lem:bflsy} by a factor of $\frac{1}{L}$ and applying the union bound over $L$ such events (one for each level) we see that this event holds with probability $1-\beta$. We will also assume that $\OPT \geq k \sqrt{n}$. Note that if $\OPT< k \sqrt{n}$ then for any choice of $S_{\OPT}$ $f_D(S_{\OPT}) \leq k\sqrt{n}$. Once we obtain an $O(1)$ multiplicative approximation to $k\sqrt{n}$, this term can be absorbed by the additive error term, so the guarantee as stated will hold unconditionally.
    \end{remark}
    	
	We partition the $\Cells_l$ into collections of \emph{heavy} and \emph{light} cells at every level depending on the number of data points within each cell. We perform this partitioning algorithmically via \cref{alg:heavylight}.
	
	\begin{definition}
	\label{def:heavylight}
	$C \in \Cells$ is called \emph{heavy} if $C \in \Heavy_l$ for some $l\in [L]$ where $\Heavy_l \subset \Cells_l$ is defined by the output of \cref{alg:heavylight}. Similarly, $C \in \Cells_l$ is called \emph{light} if $C \in \Light_l$ for some $l \in [L]$ for $\Light_l$ defined by the output of \cref{alg:heavylight}. We summarize the notation of \cref{alg:heavylight} for cells and collections of cells here:
    	\begin{enumerate}
    	    \item We denote the set of heavy cells at level $l$ by $\Heavy_l$, and the set of light cells by $\Light_l$.
    	    \item We denote the collection of all heavy cells by $\Heavy := \cup_l \Heavy_l$ and the collection of all light cells by $\Light := \cup_l \Light_l$.
    	    \item The center of any cell $C$ is denoted by $o(C)$ (this may be thought of as the \emph{origin} of $C$).
    	    \item The cell at level $l$ containing $p\in D$ is denoted by $\Cells_l(p)$.
    	\end{enumerate}
	\end{definition}
	
	We summarize some basic properties of heavy and light cells in \cref{lem:basicProps} as a sanity check.

	\begin{lemma}\label{lem:basicProps}
	    The following statements hold:
		\begin{enumerate}
		    \item $\forall l\in [L],\Cells_l = \Heavy_l \sqcup \Light_l $. 
			\item If $C \in \Light$, then $\Children_l (C) \subset \Light \; \forall l \geq 0$.
			\item $\Heavy_0 = \Cells_0 = \{[0,1)^d\}$
			\item $\Light_{L-1} = \Cells_{L-1}$.
		\end{enumerate}
	\end{lemma}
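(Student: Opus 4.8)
My plan is to verify all four statements by directly unwinding the definitions in \cref{alg:heavylight}; there is essentially no analysis here beyond tracing the pseudocode, the one exception being a short induction for Statement~2. Statements~3 and~4 are immediate from the explicit assignments: line~\ref{alg:heavylight;line:topheavy} sets $\Heavy_0 \leftarrow \Cells_0$, and since the only level-$0$ cell is the whole domain, $\Heavy_0 = \Cells_0 = \{[0,1)^d\}$; and the line labelled~\ref{alg:heavylight;line:bottomlight} sets $\Light_{L-1} \leftarrow \Cells_{L-1}$ outright. For Statement~1 I would argue, level by level, that every cell of $\Cells_l$ is placed into exactly one of $\Heavy_l,\Light_l$. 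For a level $l$ handled by the main loop, each $C \in \CH^l$ is routed into $\Heavy_l$ or $\Light_l$ by the \textbf{if}/\textbf{else}, and then the instruction $\Light_l \leftarrow \Light_l \cup (\Cells_l \setminus \Heavy_l)$ sweeps every remaining cell of that level into $\Light_l$, which gives coverage. Disjointness holds because the only place a cell is ever inserted into $\Heavy_l$ is the \textbf{if}-branch (a cell taking that branch is simultaneously kept out of $\Light_l$), while the final sweep excludes $\Heavy_l$ by construction and the boundary assignments at levels $0$ and $L-1$ each put a cell on only one side; for level $L-1$ one additionally checks that no cell can clear the heaviness threshold there, since $t_{L-1}\le 2/n$ forces $\beta d\,\OPT/(t_{L-1}^2 kL)+\CH^{L-1}_E > n$ while every cell count is at most $n$, so $\Heavy_{L-1}=\emptyset$ and the disjoint union still holds.

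Statement~2 is the only part that needs an induction, and it rests entirely on the invariant baked into the \textbf{if}-guard of the main loop: a cell is admitted to $\Heavy_l$ only when its parent $\Ancestors_1(\cdot)$ already lies in $\Heavy_{l-1}$. Fix $C\in\Light$, say $C\in\Light_l$, and take any $C'\in\Children_j(C)$ with $j\ge 0$; the goal is $C'\in\Light$. The case $j=0$ is trivial since $\Children_0(C)=\{C\}$. For $j\ge 1$ we have $C'\in\Cells_{l+j}$ with $\Ancestors_j(C')=C$, so by Statement~1 it suffices to rule out $C'\in\Heavy_{l+j}$. Suppose $C'\in\Heavy_{l+j}$: the only way a cell at a level $\ge 1$ enters $\Heavy$ is through that \textbf{if}-branch, hence $\Ancestors_1(C')\in\Heavy_{l+j-1}$, and iterating this implication $j$ times (each intermediate ancestor again sitting at a level $\ge 1$) yields $\Ancestors_j(C')\in\Heavy_l$, i.e.\ $C\in\Heavy_l$. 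This contradicts $C\in\Light_l$ by the disjointness established in Statement~1, so $C'\in\Light_{l+j}\subseteq\Light$, as wanted.

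The only (mild) obstacle I anticipate is the bookkeeping around the boundary levels $l=0$ and $l=L-1$, where the heavy/light labelling is produced by the explicit lines rather than the generic loop body, together with the small observation that the threshold at level $L-1$ exceeds $n$; everything else is a mechanical unwinding of the algorithm plus the partition fact of Statement~1.
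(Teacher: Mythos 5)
Your proof is correct and follows essentially the same approach as the paper's (direct unwinding of the pseudocode in \cref{alg:heavylight}, with a short induction on levels for Statement~2), just spelled out in more detail. In particular, your observation that the heaviness threshold exceeds $n$ at level $L-1$ — so $\Heavy_{L-1}=\emptyset$ and the final sweep does not break disjointness — is a subtlety the paper's one-line justification elides but implicitly relies on (the paper only makes this argument explicitly later, in the proof of \cref{lem:cellProps}).
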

	\begin{proof}
	    \begin{enumerate}
	        \item This statement follows from the algorithm description - for every level $<L$ a cell recovered from $\BTG$ is marked either heavy or light, and all other cells in that level are marked light. For level $L$ all cells are marked light.
	        \item This statement holds because a cell is marked heavy only if its parent was marked heavy in the previous iteration.
	        \item This statement holds by \cref{alg:heavylight;line:topheavy} of \cref{alg:heavylight}.
	        \item This statement holds by \cref{alg:heavylight;line:bottomlight} of \cref{alg:heavylight}.
	    \end{enumerate}
	\end{proof}
	
	 The definition of the succinct cell count histograms $\CH^l$ occurs later in this section in \cref{alg:main}. Its properties follow entirely from the $\BTG$ guarantee and the definition of the value mapping. Since it is used in \cref{alg:heavylight} and is necessary in the analysis of \cref{alg:heavylight}, we will state and prove them here.
	
	\begin{corollary}
	    \label{cor:CH}
	    $\CH^l_E = O\left(\frac{1}{\epsilon_{\CH}} \sqrt{n \log n/\beta} \right)$ and $\CH^l_M =  O\left(\frac{1}{\epsilon_{\CH}} \sqrt{n \poly\log n/\beta} \right)$. 
	\end{corollary}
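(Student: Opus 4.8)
The plan is to read both bounds off the $\BTG$ guarantee (\cref{lem:btg}), instantiated with the cell-count value mapping. Recall that each $\CH^l$ is produced by a call to $\BTG$ with value mapping $\Cells_l \circ Q$ (which sends a point to the cell of $\Cells_l$ containing its image) and privacy parameter $\epsilon_{\CH}$, the share of the privacy budget assigned to the cell-count histograms in \cref{alg:main}. Thus the relevant co-domain is $V = \Cells_l$, the finite set of cells at level $l$. \Cref{lem:btg} then gives directly that $\CH^l_E = O\!\left(\frac{1}{\epsilon_{\CH}}\sqrt{n\log(n/\beta)}\right)$, which is the first claim, and that $\CH^l_M = O\!\left(\frac{1}{\epsilon_{\CH}}\sqrt{n\log(\size{V}/\beta)\log(1/\beta)}\right)$.

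The only substantive step is to bound $\log\size{\Cells_l}$. Since $\Cells_l$ consists of the $2^{ld}$ dyadic cubes of side length $2^{-l}$ tiling $[0,1)^d$, we have $\log\size{\Cells_l} = ld\log 2$. Here $l \le L = \lceil\lg n\rceil = O(\log n)$, and, crucially, because the cell hierarchy is built after the Johnson-Lindenstrauss reduction we have $d = O(\log n)$; hence $\log\size{\Cells_l} = O(\log^2 n) = \poly\log n$. Substituting into the expression for $\CH^l_M$ and folding the remaining $\log(1/\beta)$ factor into the $\poly\log n$ and the $1/\beta$ yields $\CH^l_M = O\!\left(\frac{1}{\epsilon_{\CH}}\sqrt{n\,\poly\log n/\beta}\right)$, as claimed.

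For the bounds to hold simultaneously across all $l \in [L]$ with probability $1-\beta$, one invokes each of the $L$ calls to $\BTG$ with failure probability $\beta/L$ and takes a union bound; this only perturbs the logarithmic factors and is absorbed into the stated $\poly\log n$. There is no genuine obstacle here --- the corollary is a bookkeeping consequence of \cref{lem:btg} --- the one point worth double-checking is that $d$, and hence $\log\size{\Cells_l}$, is polylogarithmic rather than polynomial in $n$, which is exactly what keeps $\CH^l_M$ at the $\tilde{O}(\sqrt{n})$ scale needed downstream.
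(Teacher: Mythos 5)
Your proof is correct and follows the same route as the paper's: read both bounds off the $\BTG$ guarantee, identify the co-domain as the level-$l$ cell grid, and use $d, L = O(\log n)$ to conclude $\log\size{V} = O(\log^2 n) = \poly\log n$ so that the $\BTG$ error terms land at the stated $\tilde{O}(\sqrt{n})$ scale. Your version is a touch more explicit (you compute $\size{\Cells_l} = 2^{ld}$ exactly and spell out the $\beta/L$ union bound, where the paper simply bounds $\size{V} \le 2^{dL}$), but the underlying argument is identical.
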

	
    \begin{proof}
        Fix any $l\in [L]$. Looking ahead, we see that $\CH^l$ is derived from a call to $\BTG$ on \cref{alg:main;line:CHdef} of \cref{alg:main} with mapping $f_l : p \mapsto C_l(p)$, privacy parameter $\epsilon_{\CH}$, and failure probability $\beta/L$. We note that the size of the co-domain for the mapping $f_l$ is at most $2^{dL}$. Since $d, L = O(\log n)$, substituting we get the stated bounds.
    \end{proof}
    
    Note that since $|V| = 2^{dL} = \Omega(n)$, we can bound $\CH^l_E = O(\CH^l_M)$.
    
    \begin{remark}
        We recall that the significance of \cref{cor:CH} is that the $\BTG$ guarantee gives us that with probability $1-\beta$, for every $C \in \Cells_l$ such that $\size{D \cap C} \geq \CH^l_M$, $C \in \CH^l$ and $\abs{\CH^l (C) - \size{D \cap C}} \leq \CH^l_M$.	
    \end{remark}
	
    In \cref{lem:cellProps} we characterize the accumulation of data in heavy and light cells across different levels.

    \begin{lemma}\label{lem:cellProps}
	    For all $l\in [L]$, the following properties hold:
	    \begin{enumerate}
	        \item If $C \in \Heavy_l$, $\size{D \cap C} \geq \max\left( \CH^l_M, \frac{\beta \OPT}{t_l^2 kLd}  \right)$.
	        \item If $C \in \Light_l$, $\size{D \cap C} < \min \left( \CH^l_M, \frac{\beta \OPT}{t_l^2 kLd} + 2 \CH^l_M  \right) = \CH^l_M$.
	        \item $\size{\Heavy_l} \leq O\left(\frac{kL}{\beta}\right)$.
	    \end{enumerate}
	\end{lemma}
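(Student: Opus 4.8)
The plan is to dispatch the first two claims quickly from the $\BTG$ guarantees and to spend the real effort on the third, the bound on $\size{\Heavy_l}$.

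For claim~1, I would note that a cell is placed in $\Heavy_l$ only if it was returned by the succinct cell histogram $\CH^l$ and its noisy count cleared the level-$l$ threshold $\tau_l$ of \cref{alg:heavylight} by at least $\CH^l_E$; the $\BTG$ accuracy bound $\abs{\CH^l(C) - \size{D\cap C}} \le \CH^l_E$ from \cref{cor:CH} then yields $\size{D\cap C} \ge \tau_l$, while the fact that $C$ was recovered at all, together with the assumption $\OPT \ge k\sqrt n$ of \cref{rem:closeCellBound} (which makes $\tau_l$ of order at least $\CH^l_M$ whenever the threshold is the binding constraint), yields $\size{D\cap C} = \Omega(\CH^l_M)$; the $\max$ in the statement simply records which of the two quantities is the operative lower bound at level $l$. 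Claim~2 is the mirror image: a light cell either fails to appear in $\CH^l$ at all — in which case the completeness part of \cref{lem:btg} forces $\size{D\cap C} < \CH^l_M$ — or appears but has a noisy count below $\tau_l + \CH^l_E$ (or has a light ancestor), in which case $\size{D\cap C} \le \CH^l(C) + \CH^l_E < \tau_l + 2\CH^l_M$, and the $\min$ records the two regimes.

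Claim~3 is the crux, and I would prove it by splitting $\Heavy_l$ according to distance from the fixed optimal solution $S_{\OPT}$. The heavy cells lying within $\ell_2$ distance $t_l/d$ of $S_{\OPT}$ number at most $O(kL/\beta)$: this is exactly the event conditioned on in \cref{rem:closeCellBound}, i.e.\ \cref{lem:bflsy} applied to $S_{\OPT}$ at scale $R = t_l$, with failure probability rescaled by $1/L$ and a union bound over the $L$ levels. For a heavy cell $C$ at distance more than $t_l/d$ from $S_{\OPT}$, every point of $D\cap C$ pays $z(p,S_{\OPT}) \ge t_l^2/d^2$, so by claim~1 the contribution of $C$ to $\OPT = f_D(S_{\OPT})$ is at least $\size{D\cap C}\cdot t_l^2/d^2 \ge \tau_l\, t_l^2/d^2$, and $\tau_l$ is chosen precisely so that $\tau_l\, t_l^2/d^2 = \Theta(\beta\OPT/(kL))$; since the sets $D\cap C$ over $C \in \Cells_l$ are disjoint, at most $O(kL/\beta)$ such cells exist, and adding the two counts gives $\size{\Heavy_l} = O(kL/\beta)$.

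The main obstacle is exactly this interlocking of the two halves of the argument: the threshold $\tau_l = \Theta(\beta d^2 \OPT/(t_l^2 kL))$ in \cref{alg:heavylight} must be large enough that heavy cells carry $\Omega(\CH^l_M)$ points — so that the downstream additive-error accounting in \cref{subsec:cost} and \cref{lem:additiveErrorByLayer} survives — yet small enough that the per-cell cost $\tau_l\, t_l^2/d^2$ is only $\Theta(\beta\OPT/(kL))$, which forces its particular form; verifying both simultaneously uses $k\sqrt n \le \OPT$ and $t_l \le 1$. The other point worth flagging is that the entire hierarchy lives in the dimension-reduced space with $d = O(\log n)$, so the proximity scale $t_l/d$ and the per-point cost $t_l^2/d^2$ cost only $\poly\log n$ factors rather than blowing up the count — which is precisely why the cell construction is performed after the JL step.
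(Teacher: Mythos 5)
Your proposal mirrors the paper's proof step for step: claim~1 from the threshold in \cref{alg:heavylight} plus the $\BTG$ accuracy bound, claim~2 from the $\BTG$ completeness guarantee, and claim~3 by partitioning $\Heavy_l$ into cells within $t_l/d$ of $S_{\OPT}$ — bounded by $O(kL/\beta)$ via the random-shift event conditioned on in \cref{rem:closeCellBound} — and cells farther from $S_{\OPT}$, each of which charges at least $\tau_l\cdot t_l^2/d^2 = \Omega(\beta\OPT/(kL))$ against the budget $f_D(S_{\OPT}) = \OPT$, so at most $O(kL/\beta)$ of them exist. Your remark on the role of the post-JL dimension $d=O(\log n)$ in keeping both the proximity scale $t_l/d$ and the per-point cost $t_l^2/d^2$ within $\poly\log n$ factors is also exactly the paper's rationale.

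The one genuine gap is in claim~2. You fold ``(or has a light ancestor)'' as a parenthetical into the case ``appears in $\CH^l$ but has noisy count below threshold'' and then apply the bound $\size{D\cap C} \le \CH^l(C)+\CH^l_E < \tau_l + 2\CH^l_M$. That bound is vacuous for the light-ancestor case: a cell can be marked light solely because $\Ancestors_1(C)\in\Light_{l-1}$ while having an arbitrarily large count estimate $\CH^l(C)$, so the displayed inequality does not constrain it. The paper closes this case by induction down the levels: since $D\cap C\subset D\cap\Ancestors_1(C)$ and both $\CH^l_M$ and the threshold $\frac{\beta d^2\OPT}{t_l^2 kLd}$ are monotone nondecreasing in $l$, the light-cell bound at level $l-1$ applied to the ancestor directly implies the bound at level $l$ for $C$, with the base case vacuous because the top cell is always heavy (\cref{lem:basicProps}). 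Your sketch needs this induction — or an equivalent monotonicity argument — to close the light-ancestor case.
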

	
	\begin{proof}
    	\begin{enumerate}
    	    \item If a cell $C \in \Cells_l$ is marked heavy, then it must have occurred in the histogram $\CH^l$ and so $\size{D \cap C} \geq \CH^l_M$, or it is the solitary top cell. In the former case, since the count estimate crossed the threshold to be considered heavy, $\size{D \cap C} \geq \frac{\beta \OPT}{t_l^2 kLd} + \CH^l_E - \CH^l_E = \frac{\beta \OPT}{t_l^2 kLd}$. In the latter case, substituting $l=0$ we see that the desired lower bound is $\frac{\OPT}{kLd}$. Since $\OPT$ can be at most $n$, and $\size{D \cap C} = n$, the bound holds.
    	    \item If a cell $C$ is marked light then either $\size{D \cap C} < \CH^l_M$, $\Ancestors_1 (C) \in \Light_{l-1}$, or it is a bottom level cell. In the first three cases, by induction down the levels $l$, since $\size{D \cap C} \subset \size{D \cap \Ancestors_1 (C)}$ and both $\frac{\beta d^2 \OPT}{t_l^2 kLd}$ and $\CH^l_M$ are monotonically increasing with $l$, the result follows by the induction hypothesis. Note that since the top cell is always marked heavy, the base case is vacuously true. In the last case, we substitute $l = \log n - 1$ to get $\CH^l_M \geq \frac{\beta  \OPT}{t_l^2 kLd} \geq \frac{\beta n^2 k \sqrt{n}}{kLd} \geq \beta n^{2.5}/\log^2 n$, which is asymptotically impossible for failure probability $\beta \geq \frac{1}{n^2}$ and certainly for $\beta = \Theta(1)$.
    	    \item We fix any optimal $k$-means solution $S_{\OPT}$. By \cref{rem:closeCellBound}, $\size{\{C \in \Cells_l: z(C,S_{\OPT}) \leq 1/(4^{l}d^2)\}} = O(kL/\beta)$. For any $C \in \Heavy_l$ such that $z(C,S_{\OPT}) > 1/(4^{l}d^2)$, from statement 1 we have that $|C \cap D| =\max\left(\frac{\beta \OPT}{t_l^2 kLd}, \CH^l_M \right) \ge \frac{\beta d^2 \OPT}{t_l^2 kL}$. It follows that $f_{C \cap D} (S_{\OPT}) > \frac{t_l^2}{d^2} \cdot \frac{\beta d^2 \OPT}{t_l^2 kL} = \frac{\beta \OPT}{kL}$. Since $\sum_{C \in \Heavy_l} f_{C \cap D} (S_{\OPT}) < f_{D} (S_{\OPT}) \leq  \OPT$ it follows that there cannot be more than $\frac{kL}{\beta}$ many such $C$. In sum, $\lvert \Heavy_l \rvert \leq O\left( \frac{kL}{\beta} \right)$.
    	\end{enumerate}
	\end{proof}
	
	We now define a decomposition of the data set $D$ using the definitions of the heavy and light cells.
	
	\begin{definition}
	    For $l\in [L]$, we define $D_l = \{ p \in D: \Cells_{l-1} (p) \in \Heavy, \Cells_l (p) \in \Light \}$.
	\end{definition}
	
	\begin{lemma}\label{lem:additiveErrorByLayer}
	    The following statements hold.
	    \begin{enumerate}
	        \item $D = \sqcup_{l \in [L]} D_l$.
	        \item $\forall l \in [L],\; \size{D_l} = O(d^2 \OPT/t_l^2) + O\left( \frac{kL \CH_M}{\beta} \right)$.
	        \item $\sum_{l \in [L]} \frac{1}{4^{l} d^2 L} \lvert D_l \rvert = O( \OPT ) + O\left( \frac{kL \CH_M}{\beta} \right)$.
	    \end{enumerate}
	\end{lemma}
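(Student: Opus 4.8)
The plan is to handle the three claims in order, with essentially all of the content sitting in the second one.

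For the partition claim, I would fix a point $p \in D$ and follow the heavy/light label of its containing cell $\Cells_l(p)$ as $l$ increases. By \cref{lem:basicProps}, the level-$0$ cell $[0,1)^d$ is heavy, every level-$(L-1)$ cell is light, and --- the key structural fact --- once $\Cells_l(p)$ is light, every descendant of it (hence every $\Cells_{l'}(p)$ with $l' \ge l$) is light as well. Consequently there is a \emph{unique} level $l$ with $\Cells_{l-1}(p) \in \Heavy$ and $\Cells_l(p) \in \Light$, and this is exactly the $l$ for which $p \in D_l$; ranging over $p$ gives $D = \sqcup_{l \in [L]} D_l$.

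For the per-level bound, fix $l$ and observe that $p \in D_l$ forces $\Cells_l(p)$ to be a light child of a heavy cell, i.e.\ $\Cells_l(p) \in \Medium_l$; conversely every point of every cell of $\Medium_l$ lies in $D_l$, so $\size{D_l} = \sum_{C \in \Medium_l} \size{D \cap C}$. I would then split $\Medium_l$ according to distance from the fixed optimal solution $S_{\OPT}$ of \cref{rem:closeCellBound}. Cells $C$ with $z(C, S_{\OPT}) \le t_l^2/d^2$ number at most $O(kL/\beta)$ by \cref{rem:closeCellBound} (i.e.\ \cref{lem:bflsy} applied to $S_{\OPT}$ and union-bounded over the $L$ levels), and each such $C$, being light, satisfies $\size{D \cap C} \le \CH^l_M = O(\CH_M)$ by \cref{lem:cellProps}, so these cells together hold $O(kL\CH_M/\beta)$ points. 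For a cell $C$ with $z(C, S_{\OPT}) > t_l^2/d^2$, every $p \in C$ has $z(p, S_{\OPT}) \ge z(C, S_{\OPT}) > t_l^2/d^2$; since $\sum_{p \in D} z(p, S_{\OPT}) = f_D(S_{\OPT}) \le \OPT$, at most $O(d^2 \OPT / t_l^2)$ points of $D$ can lie in such far cells. Adding the two cases yields claim 2.

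Claim 3 is then bookkeeping: substituting claim 2 and using $1/4^l = t_l^2$, the first term contributes $\sum_{l \in [L]} \frac{t_l^2}{d^2 L}\cdot O(d^2\OPT/t_l^2) = \sum_{l \in [L]} O(\OPT)/L = O(\OPT)$, while the second contributes $\frac{O(k\CH_M/\beta)}{d^2} \sum_{l \in [L]} t_l^2 = O(k\CH_M/\beta)$ since $\sum_{l} t_l^2 = \sum_l 4^{-l} = O(1)$, which is certainly within $O(kL\CH_M/\beta)$. The step I expect to be the main obstacle is the case analysis in claim 2: one must recognize $\Medium_l$ as the correct collection of cells to sum over, and then cleanly combine the randomly-shifted-grid count (\cref{lem:bflsy}/\cref{rem:closeCellBound}) for cells near $S_{\OPT}$ with the $\OPT$-budget argument for cells far from $S_{\OPT}$; everything else reduces to summing a geometric series.
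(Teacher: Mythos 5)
Your proposal is correct and matches the paper's argument in all essentials: the same use of \cref{lem:basicProps} for the unique heavy-to-light transition level, the same split of $D_l$ into points in cells near $S_{\OPT}$ (bounded via \cref{rem:closeCellBound} and the $\CH_M$ cap on light-cell counts from \cref{lem:cellProps}) versus points in far cells (bounded via the $\OPT$ budget), and the same geometric-series bookkeeping for claim 3. The only cosmetic difference is that you tighten the paper's inclusion $D_l \subset \bigcup_{C \in \Light_l} C$ to the equality $D_l = \bigsqcup_{C \in \Medium_l} (D \cap C)$, which is also valid and makes no difference to the bound.
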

	
	\begin{proof}
	    \begin{enumerate}
	        \item By \cref{lem:basicProps}, we see that the solitary top cell $[0,1)^d$ is heavy, and that $\Cells_L \subset \Light$. Further, for every $l \in [L]$, if $C \in \Light_l$ then $\Children_j (C) \in \Light$. It follow that for any point $p$, in the sequence of cells $\Cells_0 (p), \Cells_1 (p) , \dots, \Cells_L (p)$ there exists a unique index $l^*$ such that $\Cells_{l^*-1} (p) \in \Heavy$ and $\Cells_{l^*} (p) \in \Light$. The existence of such an index shows that the sets $D_l$ cover $D$, and the uniqueness shows that this is in fact a partition.
	        \item By definition, $D_l$ is a subset of $\cup_{C \in \Light_l} C$, which means we can write $D_l = \cup_{C \in \Light_l} D \cap C$. This union can in turn be written as a disjoint union of points in light cells at a distance $\leq t_l/d$ from $S_{\OPT}$ and points in light cells $>t_l/d$ away from $S_{\OPT}$. From \cref{lem:cellProps}, any light cell contains at most $\CH_M$ many points of $D$. Since there are at most $O(kL/\beta)$ cells with side length $t_l$ within a distance of $t_l/d$, it follows that there are at most $\CH_M \cdot O\left(\frac{kL}{\beta}\right)$ many points within a distance of $t_l/d$ from $S_{\OPT}$. Since the total clustering cost for $S_{\OPT}$ must equal $\OPT$, there can be at most $d^2 \OPT_l^2$ many points more than $t_l/d$ away from $S_{\OPT}$. Therefore in sum $\size{D_l} \leq O(d^2 \OPT/t_l^2) + O\left( \frac{kL \CH_M}{\beta} \right)$.
	        \item The second bound follows directly from the first.
	    \end{enumerate}
	\end{proof}

\subsection{Candidate center allocation} \label{subsec:candidates}
	
	\begin{algorithm}
	    \caption{Candidate Center Allocation for $k$-Means in Dimension-Reduced Space given $\OPT$}
        \label{alg:main;alg:CCA}
        \KwData{\mbox{Guess for }$\OPT = k\sqrt{n} \cdot 2^f$, Cell labels $(\Heavy_l, \Medium_l, \Light_l)$, Bucket histogram $\BH_{l,m,r}$, Bucket Sum Oracle $\BSO_{l,m,r}$ for $l\in [L]$, $m \in [M]$, $r \in [R]$, }
        Data drawn from global variables: number of levels $L$, number of LSH scales $M$, number of repetitions $R$\\
        \SetKwProg{DoParallelFor}{Do in parallel for}{:}{end}
        \SetKwBlock{DoParallel}{Do in parallel:}{end}
        $S_{\Heavy} \leftarrow \{ o(C): \exists i C \in {\Heavy}_i \}$\\
        $T_l = \frac{p(1)}{2} \cdot \max\left( \frac{\beta \OPT}{t_l^2 kL^2 d}, \frac{4 \BH_M}{p(1)}, O\left( \frac{c_G \sqrt{n \poly\log n/\beta}}{\epsilon_{\BSO}}  \right)\right)$ \tcc*{Bucket threshold}
        $S_l \leftarrow \emptyset$\\
        \For{$l \in [L]$, $m\in [M]$, $r\in [R]$}{
            \For{$(b,\hat{n}_b) \in \BH_{l,m,r}$ such that $\hat{n}_B \geq T_l $}{
                $\hat{b} \leftarrow \frac{\BSO_{l,m,r}(b)}{\BH_{l,m,r}(b)}$\\
                $\Pi_l(\hat{b}) \leftarrow$ project $\hat{b}$ to $\Lambda_l^f (\cup_{C \in \Heavy_l^f} C)$\\
                $S_l \leftarrow S_l \cup\{\hat{b} \}$
            }
        }
        \KwRet{$S_{\Heavy} \cup \bigcup_{l=1}^L S_l$}
    \end{algorithm}
    
    \begin{algorithm}
        \caption{LDP $k$-means with low additive error}
        \label{alg:main}
        \SetKwProg{DoParallelFor}{Do in parallel for}{:}{end}
        \SetKwBlock{DoParallel}{Do in parallel:}{end}
        Setting: {Distributed dataset $D' \subset \R^{d'}$ over $n$ agents}\\
        \tcc{Step 1: Initialization and first interaction}
        $\gamma \leftarrow$ uniformly random vector in $[-1/2,1/2]^d$ \\
        $T : \R^{d'} \to \R^d$ dimension reduction for $d = O(\log (k/\alpha\beta)/\alpha^2)$ \\
        $S : \R^d \to \R^d$ scaling by a factor $\frac{1}{2(1+\alpha)}$\\
        $P : \R^d \to B(0,1/2)$ projection to $B(0,1/2)$ followed by translation by $\gamma$\\
        $Q = P \circ S \circ T : \R^{d'} \to B(0,1) \subset \R^d$ \\
        $L = \lg n$\\
        \DoParallel{         
        $\CH^l \leftarrow \BTG (\Cells_{l} \circ Q (\cdot), \epsilon_{\CH}, \beta/L)$ \label{alg:main;line:CHdef} \tcc*{Cell-wise Histogram of points}
        }
        $F =  \log_2 \frac{n}{\sqrt{n}k}$ \tcc*{Exponent of $2$ in guess for $\OPT$}
        \For{$f \in [F]$ }{
            $\{ {\Heavy}_i^f, {\Light}_i^f, \Medium_i^f : i \in [L] \} \leftarrow\HCM (\{\CH^l : l\in [L]\}, \mbox{guess for }\OPT = k\sqrt{n} \cdot 2^f)$
        }
        \tcc{Step 2: Candidate center allocation and second interaction}
        $M = 1 + \log_{2} d^{3/2} \sqrt{L} = O(\log \log n)$ \tcc*{Number of LSH scales}
        $r_{l,m} = \frac{2^m t_l}{ d \sqrt{L}}$ for $m \in [M]$ \tcc*{LSH scales}
        $R = O\left(\frac{\log kL^2/\beta}{p(1)}\right)$ \tcc*{Number of repetitions for LSH}
        $\lambda_l = (14c + 5)t_l \sqrt{d}$ \\
        $\Lambda_l^f(\cdot) := p \mapsto \begin{cases} (\lambda_l 1_{\Ancestors^*(C(p))},p - o(C_l(p)))
        \mbox{ if }p \in \cup_{C \in \Ancestors^* (\Heavy^f)} C \\ 0 \mbox{ otherwise } \end{cases}$\tcc*{Mapping to LSH domain}
        \begin{align*}
            H_{l,r,m,f} (p)   = \begin{cases}
                                (p(1),p(c),r_{l,m}, cr_{l,m}) \mbox{-sensitive Hash function on the space  }\Lambda_l^f \mbox{ if }\Cells_l(p) \in \Medium_l \\
                                \perp \mbox{ otherwise }
                                \end{cases}
        \end{align*}
        \DoParallelFor{$f \in [F], l\in [L], m\in [M], r\in [R]$}{	            
            $\BH_{l,m,r,f} \leftarrow \BTG (H_{l,m,r}^f, \beta, \epsilon_{\BH})$ \label{alg:main;line:BHdef} \tcc*{Bucket-wise Histogram of points}
            $\BSO_{l,m,r,f} \leftarrow \HSO (H_{l,m,r,f}, \Lambda_l,\beta,\epsilon_{\BSO})$ \label{alg:main;line:BSOdef} \tcc*{Bucket Sum Oracle}
        }
        $S \leftarrow \emptyset$\\
        \For{$f \in [F]$}{
        $S^f \leftarrow$ \cref{alg:main;alg:CCA} $(\mbox{Guess for }\OPT = k\sqrt{n} \cdot 2^f, \{ {\Heavy}_i^f, {\Light}_i^f, \Medium_i^f : i \in [L] \}, \BH_{l,m,r,f}, \BSO_{l,m,r,f})$
        $S \leftarrow S \cup S^f$\\
        }
        \KwRet(\cref{alg:2RoundRecovery}$(S, Q)$)
    \end{algorithm}
    
    We begin by giving a brief overview of the main steps in \cref{alg:main}.
    
    \paragraph{Step 1 - Initialization and first interaction:}{ We start by setting up the dimension reduction, scaling and projection map $Q$. We then have our first round of interaction with the agents where we make $L$ calls to $\BTG$ in parallel to receive estimates of how many points lie in each cell. We then make geometrically varying guesses for $\OPT$ $k\sqrt{n} 2^f$ for $f \in [F]$ where $F = \log_2 \frac{n}{\sqrt{nk}}$; note that with this definition our guesses vary in powers of two from $k\sqrt{n}$ to $n$. For each guess we generate a marking of cells $\{\Heavy^f_l, \Light^f_l, \Medium^f_l\}$ by calls to \cref{alg:heavylight}, where $\Medium^f_l$ (think \emph{medium}) is notation of convenience to denote light cells with heavy parents. Note that $D_l$ defined earlier is precisely the set of data points which happen to lie in cells in $\Medium^f_l$ in level $l$. }
    
    \paragraph{Step 2 - Candidate center allocation and second interaction}{ We start by defining $M$, the number of LSH scales, and $R$ the number of independent repetitions of the hashing subroutine to boost the success probability. We then define a mapping $\Lambda_l^f : \R^d \to \R^{\Heavy^f_l} \times \R^d$ where the co-domain is a synthetic space mimicking the union of all heavy cells upon which we can define our LSH functions. Concretely, for points $p$ such that $\Ancestors^* (\Cells_l (p))$ is a heavy cell, the image is a 2-tuple of a $\size{\Heavy_{l - 1.5\lg d}}$ length indicator vector indicating which heavy ancestor cell $p$ lies in, and the $p$'s position with respect to the center of its ancestor cell.
    Finally we construct the mapping $H_{l,m,r,f}$ which computes the output of a $(p(1),p(c),r_{l,m},cr_{l,m})$ hash function if the point $p$ lies in $D_l$ (which is true if and only if $\Cells_l (p) \in \Medium_l$) and a null bucket value otherwise (i.e. no participation). The heavy buckets of these hash functions are privately recovered via calls to $\BTG$ and we also recover the sums of all vectors mapping to heavy buckets via a call to $\HSO$, the consequence succinct histogram and sum oracle are $\BH_{l,m,r,f}$ and $\BSO_{l,m,r,f}$ respectively. 
    
    For every guess for $\OPT$ we pass these histograms and oracles to \cref{alg:main;alg:CCA}, which allocates a candidate center $\Pi_l(\hat{b})$ for every heavy bucket $b$ whose count $\hat{n}_b$ crosses a certain threshold $T_l$. The location at which it allocates that actual center is found by querying the oracle for the sum of all points mapping to this bucket to get a value $\BSO^{l,m,r,f} (b)$ and dividing this vector sum by the histogram count $\BH^{l,m,r,f} (b) = \hat{n}_b$; this is an estimate of the average over all points mapping to this bucket. We then project this average to the space $\Lambda_l$ to get the point $\Pi_l (\hat{b})$. \Cref{alg:main;alg:CCA} also allocates a candidate center at the center of every heavy cell. It then returns all centers found to the calling function \cref{alg:main} which stores the centers passed by the call with the guess $k\sqrt{n} \cdot 2^f$ for $\OPT$ in $S_f$. The net bi-criteria solution then is simply $S = \cup_{f \in [F]} S^f$ which it passes to the center recovery algorithm \cref{alg:2RoundRecovery} along with the dimension reduction and random shift mapping $Q$.
    }
    
    The main results of this section are \cref{lem:core1}, which allows us to derive a guarantee and \cref{lem:candidateCount2}, which bounds the total number of candidate centers allocated.
    
    \begin{definition}
        We record some notation that will be convenient to use in the course of our analysis.
        \begin{enumerate}
            \item We denote the data set in the original space $\R^{d'}$ by $D'$ and in the dimension reduced space $\R^d$ (after scaling, projection, and translation) by $D = M(D')$.
            \item We let the optimal clustering cost in the original space be denote $\OPT'$, and the dimension reduced optimal clustering cost be denoted $\OPT$.
            \item We fix an arbitrary optimal solution $S_{\OPT}$ in the dimension reduced space; we will show that our allocation of candidate centers competes well with $S_{\OPT}$. Note that in particular $f_D (S_{\OPT}) = \OPT$.
        \end{enumerate}
    \end{definition}
    
    \begin{lemma}
        \label{lem:costApproximation2}
        With probability $1-\beta$, we have that for every clustering $(D'_1, \dots, D'_k)$ of $D'$,
        \begin{align*}
            \sum_{i \in k} \sum_{p \in D'_i} s\left( p , \frac{\sum_{q\in D'_i} q}{\size{D'_i}} \right) \simeq_{1 + O(\alpha)} \sum_{i \in k} \sum_{p \in M (D'_i) } s\left( Q (p) , \frac{\sum_{q\in D'_i} Q (q)}{\size{D'_i}} \right).
        \end{align*}
        Further, with this notation $\OPT = \Theta(\OPT')$ and $D = M(D') \subset [0,1)^d$.
    \end{lemma}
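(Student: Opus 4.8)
This statement is the exact analogue, for \Cref{alg:main}, of \Cref{lem:costApproximation}, so the plan is to reuse that argument, analysing the three constituent maps of $Q = P \circ S \circ T$ in turn. Working with the equivalent pairwise-distance form of the $k$-means cost,
\[
\sum_{i\in k}\frac{1}{2\size{D'_i}}\sum_{p,q\in D'_i} z(p,q),
\]
the key input is \Cref{lem:Makarychev}: the reduced dimension $d = O(\log(k/\alpha\beta)/\alpha^2)$ matches the hypothesis there, so with probability $1-\beta$ the cost of \emph{every} clustering of $D'$ is multiplied by a factor lying in $\bigl[\tfrac{1}{1+\alpha},1+\alpha\bigr]$ when each point $p$ is replaced by $T(p)$. (Equivalently one could invoke the classical Johnson--Lindenstrauss guarantee \Cref{lem:JL} with a union bound over pairs, which is convenient since we will in any case need a stretch bound on individual vectors below.)

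Next I would handle the scaling map $S$, which multiplies every $\ell_2$ distance, hence every value of $z$ and hence the cost of every clustering, by the fixed constant $\bigl(\tfrac{1}{2(1+\alpha)}\bigr)^2 = \Theta(1)$; this is an exact identity, not an approximation. Composing it with the previous step shows that $S\circ T$ changes the cost of every clustering by a factor that is $\Theta(1)$ and lies within $(1\pm O(\alpha))$ of a fixed value; absorbing the fixed constant gives the displayed $\simeq_{1+O(\alpha)}$ relation, and taking the clustering to be an optimal one in each space gives $\OPT = \Theta(\OPT')$.

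It remains to argue that the projection $P$ leaves every data point of $S\circ T(D')$ fixed, so that it acts isometrically on the data (the translation by $\gamma$ is an isometry regardless). I would establish this with probability $1-\beta$ by applying a stretch bound for the dimension-reduction map (\Cref{lem:JL}, or \Cref{lem:JLstretch}, with a union bound over the $n$ points as in its proof): if this bound guarantees $\norm{S(T(p))}\le 1/2$ for every $p\in D'\subset B(0,1)$, then projection onto $B(0,1/2)$ does nothing, and since $\gamma\in[-1/2,1/2]^d$ and every projected point lies in $B(0,1/2)$, the image $D = M(D')$ lands inside $[0,1)^d$, giving the final claim. The main obstacle is precisely verifying this: that the scaling factor built into $Q$, combined with the stretch bound available for a dimension-reduction map of the relevant dimension, is enough to keep \emph{every} image inside $B(0,1/2)$ — if even a single point were moved by $P$, projection onto a ball could contract pairwise distances by an unbounded factor and destroy the multiplicative guarantee. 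Everything else is routine bookkeeping of how the composition of the three maps acts on pairwise distances, together with a union bound so that the good events of \Cref{lem:Makarychev} and the stretch bound hold simultaneously.
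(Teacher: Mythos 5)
Your proposal follows the same route as the paper's own proof: decompose $Q = P\circ S\circ T$, apply \Cref{lem:Makarychev} to $T$ (or equivalently JL over pairs), treat $S$ as an exact rescaling, and argue that $P$ and the $\gamma$-translation act as identities on the images $S\circ T(D')$. In that respect you have it right, and the fact that you pinpoint the verification that $S\circ T(D')\subset B(0,1/2)$ as the one non-routine step is correct: that is the only place the argument could break, and it is exactly where the paper's proof is terse, asserting ``with probability $1-\beta$ all points lie in $B(0,1/2)$ after scaling by a factor of $1/2(1+\alpha)$'' without elaboration.

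To actually close that step you cannot appeal to \Cref{lem:JLstretch} as you suggest: with a reduction to $d = O(\log(k/\alpha\beta)/\alpha^2)$ as written in the pseudocode of \Cref{alg:main}, \Cref{lem:JLstretch} only guarantees $\norm{T(p)}\le O(\alpha\sqrt{\log(n/\beta)})\norm{p}$ for all $n$ points, and a shrink by $1/(2(1+\alpha))$ does not bring $O(\alpha\sqrt{\log(n/\beta)})$ below $1/2$. The bound you actually need is the classical per-point $\norm{T(p)}\le(1+\alpha)\norm{p}$ for every $p\in D'$ simultaneously, which requires a union bound over $n$ points and hence $d = O(\log(n/\beta)/\alpha^2)$, consistent with the $d=O(\log n)$ used throughout the rest of the \Cref{alg:main} analysis (e.g.\ in \Cref{lem:cellProps} and \Cref{lem:core1}). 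With that dimension, $\norm{S\circ T(p)}\le \frac{1+\alpha}{2(1+\alpha)} = \tfrac12$, so $P$ is the identity on the data and $\gamma$-translation lands everything in $[0,1)^d$, closing the gap you flagged. Note this is also what makes the stated conclusion $\OPT=\Theta(\OPT')$ possible: if one instead used the \Cref{alg:1Round} scaling of $\Omega(1/(\alpha\sqrt{\log(n/\beta)}))$ demanded by the weaker stretch bound, the cost would pick up an $\alpha^2\log(n/\beta)$ factor as in \Cref{lem:costApproximation}, not a constant.
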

    
    \begin{proof}
        We write $Q = P \circ S\circ T$, where $T$ is the dimension reduction to $O(\log (k/\alpha\beta)/\alpha^2)$, $S$ is the scaling by a factor of $1/2 (1 + \alpha)$, and $P$ is projection to the unit ball. Given any clustering $(D_1, \dots, D_k)$ of $D$, by \cref{lem:Makarychev} we have that
        \begin{align*}
            \sum_{i \in k} \sum_{p \in D_i} s\left( p , \frac{\sum_{q\in D_i} q}{\size{D_i}} \right) \simeq_{1 + \alpha} \sum_{i \in k} \sum_{p \in D_i} s\left( T(p) , \frac{\sum_{q\in D_i} T(q)}{\size{D_i}} \right).
        \end{align*}
        The scaling map changes all $\ell_2$-distances by precisely the scaling factor, so we also have that
        \begin{align*}
            \sum_{i \in k} \sum_{p \in D_i} s\left( T(p) , \frac{\sum_{q\in D_i} T(q)}{\size{D_i}} \right) = \frac{1}{(1 + \alpha)^2} \sum_{i \in k} \sum_{p \in S \circ T D_i} s\left( S \circ T (p) , \frac{\sum_{q\in D_i} S \circ T(q)}{\size{D_i}} \right).
        \end{align*}
        Since with probability $1-\beta$ all points lie in $B(0,1/2)$ after scaling by a factor of $1/2(1+\alpha)$, the projection map does not move any point and hence the same clustering cost is preserved. Finally, translating all points by the same offset $\gamma$ makes no difference to the clustering cost. The fact that $\OPT = \Theta(\OPT')$ is a direct consequence of the equality between clustering costs (up to small multiplicative approximation) derived above.
    \end{proof}
    
    \begin{definition}
        Fixing any level $l$, we make some definitions to aid our cost analysis.
        \begin{enumerate}
            \item Let $D^\dagger_l := \left\{ p\in D_l: z(p, S_{\OPT}) \leq d t_l^2 \right\}$. We make this definition because for every $p \in D_l$, $o(\Cells_{l-1} (p)) \in S_{\Heavy}$ and $z(p, o(\Cells_{l-1} (p))) \leq d t_l^2$, so $D^{\dagger}_l$ is the set of points that remains to be covered competitively with $S_{\OPT}$ by allocating candidate cluster centers via LSH.
            \item For $s\in S_{\OPT}$, let $D^\dagger_l (s) = \{ p \in D^\dagger_l : \argmin_{s' \in S_{\OPT}} z(p,s') = s \}$.
            \item From \cref{rem:closeCellBound}, we see that there are at most $O(kL/\beta)$ non-empty intersections of cells $C$ with clusters $D^\dagger_l (s)$. For every such cell $C$, we call $D_{l}^* (s) \cap C$ a \emph{cluster section}.
            \item Let $s_A$ be the optimal center for a cluster section $A$. We can partition $A$ depending on what distance any point in it lies from $s_A$ via geometrically increasing thresholds $\left\{ \frac{t_l}{ d \sqrt{L}}, \frac{2t_l}{ d \sqrt{L}}, \frac{2^{2}t_l}{2^l d \sqrt{L}}, \dots, \sqrt{d} t_l  \right\}$. For ease of notation we denote these thresholds $r_{l,1}, \dots, r_{l,M}$ and set $r_{l,0} = 0$. By definition, there are $M = \log_{2} (d^{3/2} \sqrt{L}) = O(\log \log n)$ many such thresholds. With this notation we define the geometrically thresholded partitions of $A$ as $A_j := \left\{ p\in A: \norm{p - s_A} \in \left[ r_{l,j} , r_{l,j+1} \right)  \right\}$ for $j = 0, \dots M$. Further, let $A_0^m$ denote the partial union $\bigcup_{j=0}^{m} A_j$ for $m=0,\dots, M$.
        \end{enumerate}
    \end{definition}
	
	\begin{lemma}
	    With probability $1-\beta$, for all $l\in [L]$ we have the following bound on the number of cluster sections.
	    \begin{align*}
	         \sum_{s \in S_{\OPT}} \size{\{C \in \Ancestors^* (\Heavy_{l}) : C \cap D^\dagger_l(s) \not= \emptyset \}} = O(kL/\beta)
	    \end{align*}
	\end{lemma}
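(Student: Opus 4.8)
The plan is to reduce the statement to a counting argument of the same flavour as \cref{lem:bflsy}, but applied one layer up, at the level of the ancestor cells. The only geometric fact needed is that $D^\dagger_l(s)$ lies inside the $\ell_2$-ball of radius $\sqrt{d}\,t_l$ about $s$: for $p \in D^\dagger_l(s)$ we have $z(p, S_{\OPT}) \le d t_l^2$ by definition of $D^\dagger_l$, and since $s = \argmin_{s' \in S_{\OPT}} z(p, s')$ this gives $z(p, s) \le d t_l^2$. Hence if a cell $C \in \Ancestors^*(\Heavy_l)$ meets $D^\dagger_l(s)$, choosing any $p \in C \cap D^\dagger_l(s)$ yields $z(C, s) \le z(p, s) \le d t_l^2$. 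Writing $j := l - \tfrac{3}{2}\lg d$ for the ancestor level, whose cells have side length $t_j = d^{3/2} t_l$, the threshold $d t_l^2$ is exactly $t_j^2/d^2$; so every such $C$ is a level-$j$ cell within $\ell_2$-distance $t_j/d$ of $s$, i.e. $\{C \in \Ancestors^*(\Heavy_l) : C \cap D^\dagger_l(s) \not= \emptyset\} \subseteq \{C \in \Cells_j : z(C,s) \le t_j^2/d^2\}$. Summing over the $k = \size{S_{\OPT}}$ optimal centers, it then suffices to bound $\sum_{s \in S_{\OPT}} \size{\{C \in \Cells_j : z(C,s) \le t_j^2/d^2\}} = O(kL/\beta)$ for every $l \in [L]$ (note that no use is made of heaviness, nor of the fact that $D^\dagger_l(s) \subseteq D_l$).

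For the probabilistic half I would rerun the expectation computation in the proof of \cref{lem:bflsy}, stopping \emph{before} passing to the union over the centers. The dyadic hierarchy is effectively shifted by the uniform random vector $\gamma \in [-1/2,1/2]^d$; since $1$ is an integer multiple of each cell side $2^{-j}$, this displacement reduces mod $2^{-j}$ to an independent uniform shift in $[0, 2^{-j}]$ per coordinate at every level, so the hypothesis of \cref{lem:bflsy} holds at all levels. For a single center $s$ that proof shows the expected number of level-$j$ cells within $\ell_2$-distance $t_j/d$ of $s$ is at most $(1 + 2/d)^d = O(1)$; by linearity over $S_{\OPT}$, $\Ex\!\left[\sum_{s \in S_{\OPT}} \size{\{C \in \Cells_j : z(C,s) \le t_j^2/d^2\}}\right] = O(k)$. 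Applying Markov's inequality with failure probability $\beta/L$ and union-bounding over the at most $L$ distinct ancestor levels $j = l - \tfrac{3}{2}\lg d$, $l \in [L]$, gives: with probability $1-\beta$, for all these $j$ the sum is $O(kL/\beta)$ (and when $l < \tfrac{3}{2}\lg d$, so that $j < 0$, $\Ancestors^*(\Heavy_l)$ is the single root cell and the sum is trivially $\le k$). I would fold this event into the family of high-probability events already conditioned on in \cref{rem:closeCellBound}, so that no extra union bound is spent at the top level of the analysis; combining it with the containment of the first paragraph gives the claim.

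The one point that needs care, and which I expect to be the only real obstacle, is that one must \emph{not} invoke \cref{lem:bflsy} in its stated form here: a single ancestor cell can meet $D^\dagger_l(s)$ for several distinct centers $s$, so it is counted once for each of them in $\sum_{s\in S_{\OPT}} \size{\{C : C \cap D^\dagger_l(s) \not= \emptyset\}}$, whereas \cref{lem:bflsy} as stated only bounds the cardinality of the \emph{union} over centers. Bounding each center's contribution separately and then union-bounding over the $k$ centers would lose a factor of $k$ (it would give $O(k^2 L/\beta)$); keeping the entire sum over $S_{\OPT}$ and applying linearity of expectation and Markov to it as a whole costs only the extra factor of $L$ coming from the per-level union bound, which is exactly the bound the statement asks for.
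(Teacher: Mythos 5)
Your proof is correct and takes essentially the same approach as the paper's: appeal to the expectation computation inside Lemma~\ref{lem:bflsy} per optimal center, then linearity of expectation, Markov's inequality applied to the whole sum over $S_{\OPT}$, and a union bound over the $L$ levels. You are more explicit than the paper about the geometric containment (that cells in $\Ancestors^*(\Heavy_l)$ meeting $D^\dagger_l(s)$ lie within $\ell_2$-distance $t_j/d$ of $s$ at the ancestor level $j = l - \tfrac{3}{2}\lg d$) and about why one must apply Markov to the sum rather than per-center, but these are elaborations of the same argument.
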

	\begin{proof}
    	From \cref{lem:bflsy}, we know that 
    	$$\Ex \left[\size{\{C \in \Ancestors^* (\Heavy_{l}) : C \cap D^\dagger_l (s) \not= \emptyset \}}\right] = O(1).$$
    	By linearity of expectation and Markov's inequality (in that order), it follows that with probability $1-\beta/L$, 
    	$$ \sum_{s \in S_{\OPT}} \size{\{C \in \Ancestors^* (\Heavy_{l}) : C \cap D^\dagger_l(s) \not= \emptyset \}} = O(kL/\beta).$$
	\end{proof}
	
	To catch cluster sections which have some $O\left(\frac{\beta \OPT}{t_l^2 kL^2 d}\right)$-many points, we use LSH functions applied on a union of heavy cells, where we modify the norm so that the distance between two different cells is always $> 2c \cdot \sqrt{d} \cdot (d^{3/2} t_l)$, i.e.  $2c \cdot \Diam (C)$ units where $C$ is any cell in the ancestor level. The diameter of this entire space is still $O( \Diam (C))$ and since the smallest distance at which we need to allocate cluster centers to serve cluster sections is $\frac{t_l}{d \sqrt{L}}$, the ratio of the distance to the farthest false positive to the ratio of the distance of the closest clustering distance is $O(\poly\log n)$. 
	
	As discussed in the beginning of the section, this allows us to use LSH functions with reasonable parameters and not end up allocating too many candidate centers. The average of heavy buckets corresponding to sufficiently large cluster sections lie within a distance of $c$ times the threshold at which we are competing with $S_{\OPT}$. Then, by the triangle inequality it will follow that since any two cells are at a distance of strictly greater than $2c \Diam(C)$, this average will be closer to the cell in which the cluster section lies than to all other cells. Since a cell is a convex set, we can project to this cell and be assured that the distance to the cluster section the candidate center is meant to cover is not any greater and therefore that we have allocated a candidate center at the desired distance.
	
	\begin{remark}
	    For the rest of this subsection, we analyse the call to \cref{alg:main;alg:CCA} with the ``correct" value of $f$, i.e. the call such that $k\sqrt{n} 2^{f-1} \leq \OPT < k \sqrt{n} 2^f$. Since $k\sqrt{n} 2^f = \Theta (\OPT)$, we will simply refer to $k\sqrt{n} 2^f$ as $\OPT$. We will need to scale all failure probabilities by $F$ to ensure that the guarantees hold simultaenously for all calls to \cref{alg:main;alg:CCA} with probability $1-\beta$.
	\end{remark}
	
	\begin{definition}[Synthetic space for LSH functions]
	    \begin{enumerate}
	        \item Let $\lambda_l$ denote $(14c + 5) r_{l,M} = (14c + 5) t_l \sqrt{d}$.
	        \item Let $\Lambda_l = \R^{\Ancestors^* (\Heavy_{l})} \times \R^d$. We define a mapping $\Lambda_l : [0,1)^d \to \Lambda_l$ as follows;
	        \begin{align*}
	            \Lambda_l(p) = \begin{cases}
	                            (\lambda_l 1_{\Ancestors^* (C_l(p))}, p - o(\Cells_l (p))) &\mbox{ if }p \in D_l \\
	                            0 &\mbox{otherwise}
	                            \end{cases}
	        \end{align*}
	         We note that for $p \in D_l$, $\Lambda_l (p)$ is a two-tuple consisting of scaled indicator vector and a copy of the cell itself translated so that the center of the cell lies at the origin. In words, if $p \in D_l$ then the indicator vector indicates which cell in the ancestor level a point lies in. Since this space as defined lies in $\R^{\Ancestors^*(\Heavy_l)} \times R^d$, it inherits the $\ell_2$ norm in the canonical way which we denote $\norm{\cdot }_\Lambda$. 
	         \item We have the projection maps to the factor spaces $p_1 : \Lambda_l \to \R^{\Ancestors^* (\Heavy_l)}$ and $p_2 : \Lambda_l \to \R^d$. Since $\Lambda_l = \R^{\Ancestors^* (\Heavy_l)} \times \R^d$ is the direct sum of the vector subspaces $\R^{\Ancestors^* (\Heavy_l)}$ and $\R^d$ we have that $\norm{\cdot}^2_{\Lambda_l} = \norm{p_1(\cdot) }^2 + \norm{p_2 (\cdot)}^2$.
	    \end{enumerate}
	\end{definition}
	
	\begin{lemma}
	    The following statements hold.
	    \begin{enumerate}
	        \item For any $p,q \in [0,1)^d$ if it is the case that $\Ancestors^*(\Cells_l(p))\not= \Ancestors^*(\Cells_l (q))$, then $\lVert p - q \rVert_{\Lambda} > \lambda_l$.
	        \item The diameter of the set of points $\Lambda_l (D_l)$ is $O(\lambda_l)$.
        \end{enumerate}
	\end{lemma}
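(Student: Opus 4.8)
The plan is to unfold the definition of the embedding $\Lambda_l$ and use the orthogonal splitting $\lVert \cdot \rVert_\Lambda^2 = \lVert p_1(\cdot)\rVert^2 + \lVert p_2(\cdot)\rVert^2$, treating the first factor $\R^{\Ancestors^*(\Heavy_l)}$ (which records which heavy ancestor cell a point sits in) and the second factor $\R^d$ (which records a point's offset inside its own level-$l$ cell) separately. In both statements the only points that matter are those in $D_l$, since $\Lambda_l$ maps everything else to $0$; I will phrase the argument for $p,q \in D_l$, writing $\lVert p - q\rVert_\Lambda$ for $\lVert \Lambda_l(p) - \Lambda_l(q)\rVert_\Lambda$ as in the statement, and comment on the degenerate case at the end.

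First I would prove statement 1. For $p, q \in D_l$ we have $p_1(\Lambda_l(p)) = \lambda_l 1_{\Ancestors^*(\Cells_l(p))}$ and $p_1(\Lambda_l(q)) = \lambda_l 1_{\Ancestors^*(\Cells_l(q))}$, i.e.\ $\lambda_l$ times two standard basis vectors of $\R^{\Ancestors^*(\Heavy_l)}$. When $\Ancestors^*(\Cells_l(p)) \neq \Ancestors^*(\Cells_l(q))$ these basis vectors are distinct, hence orthonormal, so $\lVert p_1(\Lambda_l(p)) - p_1(\Lambda_l(q))\rVert = \sqrt{2}\,\lambda_l$; dropping the second-factor contribution, which is nonnegative, then gives $\lVert p - q\rVert_\Lambda \geq \sqrt{2}\,\lambda_l > \lambda_l$.

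Next I would prove statement 2, namely $\Diam(\Lambda_l(D_l)) = O(\lambda_l)$. Fix $p, q \in D_l$. The first-factor difference has norm at most $\sqrt{2}\,\lambda_l$ (it is $0$ or $\sqrt2\,\lambda_l$, according to whether the ancestor cells coincide). For the second factor, $p_2(\Lambda_l(p)) = p - o(\Cells_l(p))$ is the displacement of $p$ within its level-$l$ cell, a cube of side $t_l$, so $\lVert p_2(\Lambda_l(p))\rVert \leq \tfrac{\sqrt d}{2}\,t_l$ and likewise for $q$; the triangle inequality then bounds the second-factor difference by $\sqrt d\, t_l$. Hence $\lVert p - q\rVert_\Lambda^2 \leq 2\lambda_l^2 + d\, t_l^2$, and since $\lambda_l = (14c+5)\,t_l\sqrt d$ we have $d\,t_l^2 = \lambda_l^2/(14c+5)^2 \leq \lambda_l^2$, so $\lVert p - q\rVert_\Lambda \leq \sqrt3\,\lambda_l = O(\lambda_l)$.

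This lemma is really just bookkeeping, so I do not expect a genuine obstacle; the one spot to be slightly careful about is the strictness in statement 1. If one of $p,q$ lies outside $D_l$ its image is $0$, and then one only gets $\lVert p - q\rVert_\Lambda \geq \lVert p_1(\Lambda_l(p))\rVert = \lambda_l$ rather than a strict inequality (with equality exactly when the surviving point sits at the center of its cell). This is harmless: such a point never participates in the LSH subroutine, and the separation subsequently used in the projection step is precisely the one between two participating points lying in distinct heavy ancestor cells, which the argument above supplies with room to spare.
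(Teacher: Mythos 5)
Your proof is correct and follows essentially the same approach as the paper: project onto the two orthogonal factors, observe the first-factor difference is exactly $\sqrt{2}\lambda_l$ when the ancestor cells differ (statement 1), and bound both factors for the diameter (statement 2). Two small points worth noting: your bound of $\sqrt{d}\,t_l$ on the second-factor difference is tighter than the paper's $d^{3/2}t_l$ (which appears to be a slip --- the second coordinate is the offset within the level-$l$ cell of side $t_l$, giving diameter $\sqrt{d}\,t_l$, and $d^{3/2}t_l$ is not $O(\lambda_l)$ unless one also absorbs a factor of $d = O(\log n)$); and your closing remark about the degenerate case of points outside $D_l$ flags a genuine imprecision in the lemma's phrasing (the statement as written can fail when both $p,q\notin D_l$), correctly observing that it is immaterial where the lemma is applied.
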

	
	\begin{proof}
	    \begin{enumerate}
	        \item By the properties of $\norm{\cdot}_{\Lambda_l}$, we have that $\norm{p-q}_{\Lambda} \geq \norm{p_1(p-q) }$. Since $\Ancestors^*(\Cells_l(p))\not= \Ancestors^*(\Cells_l (q))$, $\norm{p_1(p - q)} \geq \lambda_l \sqrt{2}$ ($\norm{p_1(p - q)}$ being the difference of two different basis vectors in $\R^{\Ancestors^*(\Cells_l)}$ scaled by $\lambda_l$. The result follows directly.
	        \item The bound follows by appealing to the properties of $\norm{\cdot}_{\Lambda}$.
	        \begin{align*}
	            \norm{p-q}_{\Lambda} &= \sqrt{\norm{p_1(p-q) }^2 + \norm{p_2 (p-1)}^2} \\
    	            &\leq \norm{p_1 (p-q)} + \norm{p_2 (p-q)} \\
    	            &\leq \lambda_l \sqrt{2} + d^{3/2} t_l \\
    	            &= O(\lambda_l).
	        \end{align*}
	    \end{enumerate}
	\end{proof}
	
	\begin{lemma}
	    \label{lem:LSHpars}
	    For $c > \sqrt{2}$, there is a choice of LSH parameters such that
	    \begin{align*}
	        \frac{p^2(1)}{p(c)} &\geq \frac{k \poly\log n}{\epsilon_{\CH} \beta^2}\\
	        p(1) &\geq \tilde{\Omega} \left( (k \poly\log n)^{-1/c'} \right), \\
	        \log N_B &= \tilde{O} ( \poly \log n/(\epsilon_{\CH} \beta) ),
	    \end{align*}
	    where the $\tilde{O}$ and $\tilde{\Omega}$ notation suppress $O(\log\log n)$ terms and $c' = c^2/8 - 1/4$. It will be convenient to write $1/c' = O(1/(2c^2-1))$.
	\end{lemma}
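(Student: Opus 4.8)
The plan is to obtain \Cref{lem:LSHpars} as a direct specialization of \Cref{lem:LSH}: the only real choice is the free parameter $B$, which we must take large enough that $p^2(1)/p(c)$ dominates the ratio that the LSH collision probabilities are required to beat in the cost analysis of \cref{alg:main;alg:CCA}, and then the remaining two bounds are read off. Concretely, I would set
\[
B \;:=\; C\cdot \frac{k\,\poly\log n}{\epsilon_{\CH}\,\beta^2}
\]
for a suitable absolute constant $C$ and a suitable polylogarithmic factor, where the factor is dictated by three sources of slack in the downstream argument: (i) the synthetic LSH domain $\Lambda_l$ has diameter $O(\lambda_l)=O(d^{3/2}\sqrt{L}\,t_l)$ while the smallest scale at which we must allocate centers is $r_{l,1}=t_l/(d\sqrt{L})$, a ratio that is $\poly(d,L)=\poly\log n$; (ii) by \cref{lem:additiveErrorByLayer} the participating data $D_l$ has size $O(d^2\OPT/t_l^2)+O(kL\CH_M/\beta)$, whereas the cluster-section threshold built into $T_l$ in \cref{alg:main;alg:CCA} is $\Omega(\beta\OPT/(t_l^2 kL^2 d))$, so their quotient is $O(k\,\poly\log n/\beta)$ up to the private-error term; and (iii) the terms $4\BH_M/p(1)$ and $O(c_G\sqrt{n\poly\log n/\beta}/\epsilon_{\BSO})$ in $T_l$, together with the error bounds of \cref{cor:CH}, contribute factors of $1/\epsilon_{\CH}$ and $1/\beta$ (the extra $\beta$ power coming from the Markov/union-bound steps). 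With $B$ of this form, the first bound of \Cref{lem:LSH} gives $p^2(1)/p(c)=\Omega(B)\ge k\,\poly\log n/(\epsilon_{\CH}\beta^2)$ immediately.

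Invoking \Cref{lem:LSH} with this $B$ yields $t=O(\log^2 B)$, and I would use this single value of $t$ for every level $l$, scale $m$, repetition $r$ and guess $f$, rescaling the input of the family from \Cref{thm:LSH} by $1/r_{l,m}$ to obtain a $(p(1),p(c),r_{l,m},cr_{l,m})$-sensitive family on $\Lambda_l$ with the same collision probabilities (as noted after \Cref{thm:LSH}, input rescaling produces constructions at arbitrary scales). Since $B$ is polynomial in $k$, $\log n$, $1/\epsilon_{\CH}$ and $1/\beta$, we have $\log B = O(\log k + \log\log n + \log(1/\epsilon_{\CH}) + \log(1/\beta))$, which is absorbed by the $\poly\log n$ and the $\tilde{\Omega}(\cdot)$ of the statement; hence the second bound of \Cref{lem:LSH}, $p(1)\ge\Omega(B^{-1/c'}/\log B)$, becomes $p(1)\ge\tilde{\Omega}((k\,\poly\log n)^{-1/c'})$, and the third bound $\log N_B = O(\log^2 B\,\log\log B + \log\log n)$ is bounded above by $\tilde{O}(\poly\log n/(\epsilon_{\CH}\beta))$. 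The identity $c'=c^2/8-1/4$ and the relation $1/c'=O(1/(2c^2-1))$ are inherited verbatim from \Cref{lem:LSH}.

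I expect the one genuinely non-mechanical point to be pinning down exactly which polynomial $B$ must be, since this requires looking ahead to the cost analysis of \cref{subsec:cost} to see precisely which ratio the collision probabilities have to beat — namely $\size{D_l}$ over the cluster-section threshold, amplified by the geometry of $\Lambda_l$ and by the $1/\epsilon_{\CH}$, $1/\beta$ blow-ups coming from the private histograms and the Markov and union-bound steps. Once $B$ is fixed, nothing else is subtle: the three displayed inequalities are term-by-term consequences of \Cref{lem:LSH}, and the only care required is the bookkeeping to confirm that each $\log B$ overhead is of strictly lower order than the $\poly\log n$ and $\tilde{\Omega}$/$\tilde{O}$ terms into which it is absorbed.
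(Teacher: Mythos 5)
Your proof is correct and takes essentially the same route as the paper's, which simply declares the lemma a direct corollary of \Cref{lem:LSH} after bounding $\log k$ by $\log n$; the paper leaves the choice of $B$ implicit, whereas you spell out $B = \Theta(k\,\poly\log n/(\epsilon_{\CH}\beta^2))$ explicitly and motivate it by looking ahead to \cref{lem:core1}, which is a more complete write-up of the same idea.
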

    
    \begin{proof}
        This result is a direct corollary of \cref{lem:LSH}. We bound all occurrences of $\log k$ from above by $\log n$.
    \end{proof}
    
    We state and prove the guarantees of the bucket histogram as derived from the $\BTG$ guarantee.
    
    \begin{lemma}
	    \label{lem:BH}
	    For every $l\in [L]$, $m\in [M]$, $r \in [R]$, and $f\in [F]$ with probability $1-\beta/(LMRF)$ in every call to \cref{alg:main;alg:CCA},
	    \begin{align*}
	        \BH^{l,m,r,f}_E &= O\left(\frac{1}{\epsilon_{\BH}} \sqrt{n \log n/\beta} \right) \\
	        \BH^{l,m,r,f}_{M} &= O\left(\frac{1}{\epsilon_{\BH}} \sqrt{n \poly\log n/\beta} \right)
	    \end{align*}
	    As these bounds are invariant in $l$, $m$, $r$, and $f$ we will find it convenient to drop these indices without loss. Since $\BH_{M} = \Omega(\BH_E)$, we will simply use a uniform bound of $\BH_M$ for the estimation error of any frequency query.
	\end{lemma}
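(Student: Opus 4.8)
The plan is to obtain \cref{lem:BH} as an immediate corollary of the $\BTG$ accuracy guarantee (\cref{lem:btg}), in exactly the manner that \cref{cor:CH} was derived for the cell histograms $\CH^l$. First I would recall from \cref{alg:main;line:BHdef} that $\BH_{l,m,r,f}$ is the list returned by $\BTG$ applied to the value mapping $H_{l,m,r,f}$ with privacy parameter $\epsilon_{\BH}$; the statement of \cref{lem:BH} already pre-scales the failure probability to $\beta/(LMRF)$, so that a union bound over all $LMR$ such calls inside each invocation of \cref{alg:main;alg:CCA} and over all $F$ invocations recovers overall probability $1-\beta$. Since $L,M = O(\poly\log n)$ and $R,F = \poly(k,\log n) \le \poly(n)$, we have $\log(LMRF) = O(\log n)$, so this scaling replaces $\log(n/\beta)$ by $O(\log(n/\beta))$ inside every logarithm and is therefore harmless; for the present statement it then suffices to apply \cref{lem:btg} to a single call with failure parameter $\beta/(LMRF)$.

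The only ingredient not already present verbatim in \cref{cor:CH} is a bound on the size of the codomain $V$ of $H_{l,m,r,f}$, which enters the $\BTG$ threshold bound $M = O\!\left(\frac1\epsilon\sqrt{n\log|V|/\beta\,\log(1/\beta)}\right)$ through $\log|V|$. Here $V$ is the set of hash buckets of an LSH function drawn from the family of \cref{thm:LSH} together with the null symbol $\perp$ reserved for non-participating points, so $|V| = N_B + 1$ and $\log|V| = O(\log N_B)$. I would then invoke the bound $\log N_B = \tilde O(\poly\log n/(\epsilon_{\CH}\beta))$ from \cref{lem:LSHpars} (equivalently part~4 of \cref{thm:LSH} with the choice $t = O(\log^2 B)$ used there), which shows $\log|V| = \poly\log n$ up to the privacy parameter. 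Substituting this and the scaled failure probability into the two $\BTG$ bounds, and absorbing the $O(\log\log n)$ factors, the $\epsilon_{\CH}$ term, and the $O(\log n)$ union-bound inflation into the $\poly\log n$ term, yields $\BH_E = O\!\left(\frac1{\epsilon_{\BH}}\sqrt{n\log(n/\beta)}\right)$ and $\BH_M = O\!\left(\frac1{\epsilon_{\BH}}\sqrt{n\,\poly\log n/\beta}\right)$, which are precisely the claimed bounds; and since none of $n$, $\epsilon_{\BH}$, $\beta$, or the bound on $\log N_B$ depends on $(l,m,r,f)$, the indices may be dropped.

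The proof is essentially routine and I do not expect a genuine obstacle; the two points that deserve a line of care are (i) that the quantity fed into the $\BTG$ guarantee must be the worst-case number of buckets $N_B$ of the LSH family — which is identical for every function sampled from the family of \cref{thm:LSH} — rather than the number of buckets actually occupied by $D$, and (ii) that since $\log(n/\beta) \le \poly\log n$, the bound claimed for $\BH_M$ is simultaneously an upper bound for $\BH_E$, so it may legitimately be used as a uniform estimation-error bound for any bucket-frequency query, heavy or not, exactly as in the remark following \cref{cor:CH}.
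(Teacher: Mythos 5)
Your proposal is correct and follows essentially the same route as the paper's own proof: both reduce the claim to the $\BTG$ accuracy guarantee (\cref{lem:btg}) applied to the bucket-mapping $H_{l,m,r,f}$, bound $\log|V|$ via the bucket-count bound $\log N_B$ from \cref{lem:LSHpars}, observe that $\log(LMRF/\beta)=O(\log(n/\beta))$ from the asymptotics of $L,M,R,F$, and absorb everything into the $\poly\log n$ term. The only cosmetic differences are that you explicitly note the extra $\perp$ symbol gives $|V|=N_B+1$ (immaterial asymptotically) and you phrase the union-bound accounting slightly more explicitly, but the substance is identical.
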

	
	\begin{proof}
	    Fix any $l\in [L]$, $m \in [M]$, $r\in [R]$, and $f\in[F]$. We see that $\BH_{l,m,r,f}$ is derived from a call to $\BTG$ with the mapping $h_{l,m,r} = p \mapsto H_{\Ancestors (C_l(p),m,r)} (p)$ and failure probability $\beta/(LMR)$. If the size of the co-domain is the number of buckets $N_B$ for this LSH function, then from \cref{lem:btg} we have that 
	    $$\BH^{l,m,r,f}_{M} = O\left(\frac{1}{\epsilon_{\BH}} \sqrt{n \log (N_B \cdot LMRF /\beta) \log (LMRF/\beta)} \right).$$ 
	    From \cref{lem:LSHpars} we have that $\log N_B = \tilde{O} ( \poly \log n/(\epsilon_{\CH} \beta) )$. Note that since $L,M,F = O(\log n)$, $R = O\left( \frac{\log(kL^2/\beta)}{p(1)} \right)$ and $p(1) = \tilde{\Omega} \left( (k \poly\log n)^{-O(1/(2c^2-1))} \right)$, it follows that $\log (LMRF/\beta) = O(\log n/\beta)$. Substituting, we get the stated bound. The expression for $\BH_E^{l,m,r,f}$ follows similarly.
	\end{proof}
	
	We see that $\BH^{l,m,r,f}_E = O(\BH^{l,m,r,f}_M)$, which we use throughout the remainder of the proof.
	
	\begin{remark}
	    We observe that by the union bound, \cref{lem:BH} implies that with probability $1-\beta$, for all levels $l\in [L]$, thresholds $m\in [M]$, repetitions $r \in [R]$, and $\OPT$ guess parameter $f\in[F]$, the frequency query estimation error bound $\BH_M$ holds.
	\end{remark}
    
    \begin{lemma}\label{lem:core1}
            Let $A_0^m$ be a partial union for some cluster section $A \subset D^\dagger_l (s^*) \cap C$ such that $|A_0^m| \geq \max\left( \frac{\beta \OPT}{t_l^2 kL^2 d}, \frac{2 \BH_M}{p(1)}, O\left( \frac{c_G \sqrt{n \poly\log n/\beta}}{\epsilon_{\BSO}}  \right)\right)$. With probability $1 - \frac{\beta}{kL^2 MF}$ there is a point $\Pi_l (\hat{p}_m) \in S_l$ such that for every point $p\in A_0^m$, $\norm{p - \hat{p}_m} = O(c r_{l,m})$.
    \end{lemma}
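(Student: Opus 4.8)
The plan is to fix an arbitrary data point $x_0 \in A_0^m$ to act as the LSH \emph{filter} and to follow the bucket $b = H_{l,m,r,f}(x_0)$ across the $R$ independent repetitions $r \in [R]$. The key geometric observations are that all of $A_0^m$ lies in the single medium cell $C$, so $\Lambda_l^f$ is an isometry on $A_0^m$ (the indicator coordinate is common to all its points), that $A_0^m$ has diameter $O(r_{l,m})$, and that the point set $\Lambda_l^f(D_l)$ has diameter $O(\lambda_l) = O(\sqrt d\, t_l)$. Hence the $(p(1),p(c),r_{l,m},c r_{l,m})$-sensitive hash on $\Lambda_l^f(D_l)$ restricts on $A_0^m$ to a locality-sensitive hash for a cluster of diameter $O(r_{l,m})$, and I would apply \cref{lem:LSHGuarantee} with cluster $A_0^m$, ambient set $\Lambda_l^f(D_l)$, radius $O(r_{l,m})$ and diameter $O(\lambda_l)$: with probability at least $p(1)/4$ over the draw of $H_{l,m,r,f}$, at least $\tfrac{p(1)}{2}\size{A_0^m}$ points of $A_0^m$ collide with $x_0$, and the true (in-$\Lambda_l^f$) average $\hat x_0$ over all points colliding with $x_0$ satisfies $\norm{\Lambda_l^f(x_0) - \hat x_0}_{\Lambda} \le O(c r_{l,m}) + \tfrac{8 p(c)\size{D_l}}{p^2(1)\size{A_0^m}}\cdot O(\lambda_l)$.

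Next I would show the slack term $\tfrac{8 p(c)\size{D_l}}{p^2(1)\size{A_0^m}}\lambda_l$ is itself $O(c r_{l,m})$. From \cref{lem:additiveErrorByLayer}, $\size{D_l} = O(d^2\OPT/t_l^2) + O(kL\CH_M/\beta)$; combining this with the hypothesis $\size{A_0^m} \ge \max\!\big(\tfrac{\beta \OPT}{t_l^2 k L^2 d},\ \tfrac{2\BH_M}{p(1)},\ O\big(\tfrac{c_G\sqrt{n\poly\log n/\beta}}{\epsilon_{\BSO}}\big)\big)$ gives $\size{D_l}/\size{A_0^m} = O(k\poly\log n/\beta)$, and since $\lambda_l / r_{l,m} = O(d^{3/2}\sqrt L) = O(\poly\log n)$ the term equals $\tfrac{p(c)}{p^2(1)}\cdot O(k\poly\log n/\beta)\cdot r_{l,m}$. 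This is $O(r_{l,m})$ precisely because \cref{lem:LSHpars} picks the LSH parameters so that $p^2(1)/p(c) \ge k\poly\log n/(\epsilon_{\CH}\beta^2)$; hence $\norm{\Lambda_l^f(x_0) - \hat x_0}_{\Lambda} = O(c r_{l,m})$.

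Then I would check that $b$ triggers a center allocation and pass from $\hat x_0$ to the recovered average. The true count of $b$ is at least the number of colliding points of $A_0^m$, namely $\ge \tfrac{p(1)}{2}\size{A_0^m} \ge \BH_M$ by the hypothesis $\size{A_0^m}\ge 2\BH_M/p(1)$, so by the $\BTG$ guarantee $b$ appears in $\BH_{l,m,r,f}$ with count estimate within $\BH_M$, and a short computation with the three lower bounds on $\size{A_0^m}$ shows this estimate exceeds the threshold $T_l$; thus \cref{alg:main;alg:CCA} allocates the center $\hat b = \BSO_{l,m,r,f}(b)/\BH_{l,m,r,f}(b)$ (and its projection $\Pi_l(\hat b)$), which is the $\hat p_m$ of the statement. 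Writing $\hat b - \hat x_0 = (e_\sigma - e_n\hat x_0)/(n_b + e_n)$ with $\norm{e_\sigma}\le\BSO_M$, $\abs{e_n}\le\BH_M$, $n_b \ge \tfrac{p(1)}{2}\size{A_0^m}$, and $\norm{\hat x_0}_{\Lambda} = O(\lambda_l)$, the lower bounds $\size{A_0^m}\ge 2\BH_M/p(1)$ and $\size{A_0^m}\ge O(c_G\sqrt{n\poly\log n/\beta}/\epsilon_{\BSO})$ give $\norm{\hat b - \hat x_0}_{\Lambda} = O(c r_{l,m})$. Two triangle inequalities then yield $\norm{\Lambda_l^f(p) - \hat b}_{\Lambda} \le \norm{p - x_0} + \norm{\Lambda_l^f(x_0) - \hat x_0}_{\Lambda} + \norm{\hat x_0 - \hat b}_{\Lambda} = O(c r_{l,m})$ for every $p\in A_0^m$; since $O(cr_{l,m}) \ll \lambda_l$, $\hat b$ is far closer to $\Lambda_l^f(C)$ than to any other cell, so $\Pi_l(\hat b)\in\Lambda_l^f(C)$ and the same bound holds with $\Pi_l(\hat b)$ in place of $\hat b$. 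Finally, each of the $R$ repetitions succeeds independently with probability $\ge p(1)/4$, so the overall failure probability is at most $(1-p(1)/4)^R \le e^{-R p(1)/4} \le \beta/(kL^2MF)$ for $R = O(\log(kL^2/\beta)/p(1))$, since $\log(MF) = O(\log\log n)$ is absorbed.

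The step I expect to be the main obstacle is this last composition of error sources: the LSH collision-probability slack $\tfrac{p(c)\size{D_l}}{p^2(1)\size{A_0^m}}\lambda_l$, the private count error $\BH_M$ (amplified by the synthetic-space diameter $\lambda_l$), and the private sum error $\BSO_M$ must all be driven below $O(cr_{l,m})$ at once, and each of the three terms in the $\max$ defining the hypothesis on $\size{A_0^m}$ is exactly what is needed to tame one of them. Verifying that $\lambda_l/r_{l,m}$ and $\size{D_l}/\size{A_0^m}$ each contribute only $k\poly\log n$ factors — so that the choice $p^2(1)/p(c) = \Omega(k\poly\log n/(\epsilon_{\CH}\beta^2))$ suffices and the exponent of $k$ in the eventual candidate-center count stays near $1$ rather than blowing up — is the delicate part of the bookkeeping.
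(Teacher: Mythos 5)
Your proposal follows the paper's own proof essentially step for step: fix an arbitrary filter point in $A_0^m$, apply \cref{lem:LSHGuarantee} to bound the distance from the filter to the (true) average over its LSH bucket, drive the false-positive slack term $\tfrac{p(c)\size{D_l}}{p^2(1)\size{A_0^m}}\Diam(\Lambda_l)$ below $O(cr_{l,m})$ using the three lower bounds on $\size{A_0^m}$ and the LSH-parameter choice from \cref{lem:LSHpars}, verify the bucket is recovered and the noisy average is accurate via the $\BTG$/$\HSO$ bounds, apply the triangle inequality, project onto the cell using the $\lambda_l$ separation, and boost over $R$ repetitions. The same small constant-factor looseness you gloss over (e.g.\ whether $\hat n_b$ strictly clears $T_l$ given the $2\BH_M/p(1)$ vs.\ $4\BH_M/p(1)$ mismatch, and the $1/\epsilon_{\CH}$ factor in the $\size{D_l}/\size{A_0^m}$ bound) is also present in the paper's argument, so nothing substantive is missing.
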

    
    \begin{proof}
        We observe that if $r$ is the diameter of $A_0^m$ in $\Lambda_l$ then $r \leq r_{l,m+1}$ (as all points lie inside the same ancestor cell, the distance between them does not increase in the space $\Lambda_l$). \Cref{lem:LSHGuarantee} gives us that for any fixed arbitrary point $p_m \in A_0^m$, if the average of all points that collide with $p_m$ under a $(p(1),p(c),2r_{l,m+1},2r_{l,m+1} c)$-sensitive hash function is denoted $\bar{p}_m$ then with probability $p(1)/4$,
	    \begin{align*}
	        \norm{ p_m - \bar{p}_m}_{\Lambda_l} &\leq 2c r_{l,m+1} + \frac{8p(c) |D_l|}{p^2(1) |A_0^m|} \Delta'.
	    \end{align*}
	    Since $\size{D_l} = O(\OPT d^2/t_l^2)$, $\Delta'$ is $\Diam(\Lambda_l) = O(c t_l \sqrt{d})$, and $\size{A_0^m} \geq \frac{\beta \OPT}{t_l^2 kL^2 d}$, we get
	    \begin{align*}
	        \norm{ p_m - \bar{p}_m}_{\Lambda_l} &\leq 2c r_{l,m+1} + \frac{p(c)}{p^2(1)} \cdot \frac{(O(\OPT d^2/t_l^2) + O( \frac{1}{\epsilon_{\CH} \beta} k \sqrt{n} \poly\log n) ) t_l^2 k L^2 d}{\beta \OPT} \cdot c t_l \sqrt{d} \\
	        &\leq 2c r_{l,m+1} + \frac{p(c)}{p^2(1)} \left( \frac{O(\OPT d^2/t_l^2) \cdot t_l^2 k L^2 d}{ \beta \OPT} + \frac{(O( \frac{1}{\epsilon_{\CH} \beta} k \sqrt{n} \poly\log n) ) t_l^2 k L^2 d}{ \beta \OPT} \cdot c t_l \sqrt{d} \right) \\
	        &\leq 2 c r_{l,m+1} + \frac{p(c)}{p^2(1)} \left( \frac{O(c t_l k \poly\log n)}{\beta}   + \frac{O(c t_l k \poly\log n)}{\epsilon_{\CH} \beta^2 }\right)
	    \end{align*}
	    where in the above we use that $\OPT \geq k\sqrt{n}$. Since $r_{l,m+1} \geq \frac{t_l}{ d \sqrt{L}}$ if we choose the LSH parameter such that 
	    $$\frac{p(c)}{p^2(1)} \leq \frac{\epsilon_{\CH} \beta^2}{k \poly\log n}$$
	    then $\norm{p_m- \bar{p}_m}_{\Lambda_l} \leq 3c r_{l,m+1}$. Note that the bound on this ratio does not vary with threshold $r_{l,m}$ or level $l$. This explains the uniform choice of LSH parameters used in \cref{alg:main}. \Cref{lem:LSHpars} bounds from below the probability $p(1)$ and from above the number of buckets $N_B$ for this choice of LSH parameter.
	    
	    For any successful run, since $\norm{p_m- \bar{p}_m}_{\Lambda_l} \leq 3c r_{l,m+1}$,  by the triangle inequality, for any $p\in A_{m}$
	    \begin{align*}
	        \norm{p - \bar{p}_m}_{\Lambda_l} &\leq \norm{p - p_m}_{\Lambda_l} + \norm{ p_m - \bar{p}_m}_{\Lambda_l}\\
	        &\leq r_{l,m+1} + 3c r_{l,m+1} \\
	        &\leq (3c + 1) (2 r_{l,m}).
	    \end{align*}
	    
	    Since the success probability $p(1)/4$ does not depend upon the level or threshold, a uniform number of $R = O(\log (k L^2 F/\beta)/p(1)) = O((\frac{k \log n}{\beta})^{O(1/(2c^2-1))} \log n)$ many independent repetitions of this LSH scheme boost the success probability from $\frac{p(1)}{4}$ to $1-\frac{\beta}{kL^2 M F}$. \Cref{lem:LSHGuarantee} also guarantees that in the successful LSH run at least $\frac{p(1)}{2} \cdot \size{A_0^m}$ many points in $A_0^m$ will collide with $p_m$. If $\frac{p(1)}{2} \cdot \size{A_0^m} \geq \BH_M$, then $h_{l,m,r} (p_m) \in \BH_{l,m,r}$ (\cref{lem:BH}), where $
	    \BH_M = O\left(\frac{1}{\epsilon_{\BH}} \sqrt{n \poly \log n/\beta} \right)$. 
	    For every $H_{l,m,r,f} (p_m) \in \BH_{l,m,r,f}$, \cref{alg:main} computes an estimate for $\bar{p}_m$ which we denote 
	    $$\hat{p}_m := \frac{\BSO_{l,m,r} (p_m)}{\BH_{l,m,r,f} (p_m)}.$$ 
	    By \cref{lem:noisyAvgGuarantee}, if $h_{l,m,r} (p_m) \in \BH_{l,m,r,f}$, then the estimation error on querying $\BSO (h_{l,m,r} (p_m))$ obeys the bound
	    \begin{align*}
	        \norm{\frac{\sum_{p:H_{l,m,r,f}(p) = H_{l,m,r,f}(p_m)} p - o(C)}{|\{p:H_{l,m,r,f}(p) = H_{l,m,r,f} (p_m) \}|} - \frac{\BSO_{l,m,r,f} (p_m)}{\BH_{l,m,r,f} (p_m)}}_{\Lambda_l}  &\leq  \frac{1}{\Omega(|A_0^m|) } \cdot O\left( \frac{c_G \Diam(\Lambda_l)}{\epsilon_{\BSO}} \sqrt{n \log^2 n/\beta }\right) \\
	        &\leq O\left( \frac{c_G \lambda_l \sqrt{n \log^2 n/\beta} }{|A_0^m| \epsilon_{\BSO}}  \right).
	    \end{align*}
	    In the above we used that $d= O(\log n)$ to substitute for it in the estimation error. It follows that if $\size{A_0^m} \geq \frac{c_G c_{\BSO} \Diam(\Lambda_l) \sqrt{n \log^2 n/\beta} }{c r_{l,m} \epsilon_{\BSO}} $ for some universal constant $c_{\BSO}$ derived from the $\HSO$ guarantee then the additional estimation error in $\norm{\cdot}_{\Lambda}$ norm incurred is $c r_{l,m}$. Substituting, we get that it would suffice to have
	    \begin{align*}
	        \size{A_0^m} &\geq \Omega\left( \frac{c_G \lambda_l \sqrt{n \log^2 n/\beta} }{c r_{l,m} \epsilon_{\BSO}}\right) \\
	        \Leftrightarrow \size{A_0^m} &\geq \Omega\left( \frac{c_G \cdot c d^2 t_l \cdot \sqrt{n \log^2 n/\beta} }{c r_{l,m} \epsilon_{\BSO}}\right) \\
	        \Leftarrow \size{A_0^m} &\geq \Omega\left( \frac{2 c_G \sqrt{n d^6 L \log^2 n/\beta} }{\epsilon_{\BSO}}\right)
	    \end{align*}
	    where in the above we lower bound $r_{l,m}$ by $r_{l,1}$. 
	    
	    So in sum, we have that for any cluster union $A_0^m$ such that $\size{A_0^m}\geq \frac{ c_G \sqrt{n d^6 L \log^2 n/\beta} }{\epsilon_{\BSO}}$, for some fixed arbitrary point $p_m \in A_0^{m}$, with probability $1-\beta/(kL^2F)$ there exists a hash function $H_{l,m,r,f}$ for some $r\in [R]$ such that the estimate of the average $\hat{p}_m$ over the bucket that $p_m$ maps to lies within a distance of $cr_{l,m}$ units of $\bar{p}_m$, the true average over the heavy bucket, which lies within a distance of $(6c + 2)r_{l,m}$ of the point $p_m$, and by the triangle inequality $\norm{p_m - \hat{p}_m} \leq (7c+2) r_{l,m}$. 
	    
	    Now since the distance between any two different cells in the space $\Lambda_l$ is strictly greater than $(14c+5) r_{l,m} > 2\norm{p_m - \hat{p}_m}$, it follows from the triangle inequality that $\Pi_l (\hat{p}_m)$ the projection of $\hat{p}_m$ onto $\cup_{C \in \Heavy_l} C$ lies in the cell $C$. Indeed, it was to ensure this guarantee that we chose our value of $\lambda_l$. Since $\Pi_l (\hat{p}_m)$ is a projection onto a convex set, $\norm{\Pi_l (\hat{p}_m) - \hat{p}_m} \leq \norm{p_m - \hat{p}_m}$. Now since the diameter of $A_0^m$ is $r_{l,m+1} = 2 r_{l,m}$, it follows that every point in $A_0^m$ lies within a distance of $O(cr_{l,m})$ units of $\Pi_l (\hat{p}_m)$.
    \end{proof}
    
    \begin{lemma}
    \label{lem:candidateCount1}
        We have the following bound on $S_l$, the number of candidate centers allocated per level in \cref{alg:main;alg:CCA}.
        \begin{align*}
            |S_l| = O\left(\frac{k \poly\log n}{\beta }\right)^{1 + O(1/(2c^2-1))}
        \end{align*}
    \end{lemma}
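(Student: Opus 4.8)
The plan is to bound $|S_l|$ as a product of two quantities: the number of inner iterations of \cref{alg:main;alg:CCA} that feed the level-$l$ set, and the largest number of buckets that can survive the threshold $T_l$ in any single such iteration. For a fixed level $l$ the loop ranges over $m\in[M]$ and $r\in[R]$, and each surviving bucket contributes at most one point to $S_l$, so $|S_l|\le MR\cdot\max_{m,r}|\{\,b\in\BH_{l,m,r}: \hat n_b\ge T_l\,\}|$.

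First I would show that each surviving bucket genuinely carries many points of $D_l$. By \cref{lem:BH} every bucket recovered from $\BTG$ has estimated count within $\BH_M$ of its true count, and by construction $T_l\ge\tfrac{p(1)}{2}\cdot\tfrac{4\BH_M}{p(1)}=2\BH_M$, so a surviving bucket has true count at least $T_l-\BH_M\ge T_l/2$. Since the hash map assigns every point of $D_l$ to exactly one non-$\perp$ bucket and every other point to $\perp$, the true bucket counts sum to $|D_l|$, and hence there are at most $2|D_l|/T_l$ surviving buckets. I then substitute the quantitative bounds, splitting $|D_l|$ via \cref{lem:additiveErrorByLayer} as $O(d^2\OPT/t_l^2)+O(kL\CH_M/\beta)$: the first term I bound against $T_l\ge\tfrac{p(1)}{2}\cdot\tfrac{\beta\OPT}{t_l^2kL^2d}$, giving $O\!\big(d^3kL^2/(p(1)\beta)\big)$; the second I bound against $T_l\ge 2\BH_M$, giving $O(kL\CH_M/(\beta\BH_M))=\tilde O(kL/\beta)$, since $\CH_M$ and $\BH_M$ are both $\tilde\Theta(\epsilon^{-1}\sqrt{n/\beta})$ and so comparable up to polylogarithmic factors. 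Using $d,L=O(\log n)$, $M=O(\log\log n)$ and $R=O(\log(kL^2/\beta)/p(1))$ this yields $|S_l|=\tilde O\!\big(k/(\beta\,p(1)^2)\big)$, and \cref{lem:LSHpars} (as invoked in the proof of \cref{lem:core1}) gives $1/p(1)=\tilde O\!\big((k\poly\log n/\beta)^{O(1/(2c^2-1))}\big)$, so $|S_l|=O\!\big(k\poly\log n/\beta\big)^{1+O(1/(2c^2-1))}$.

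The step I expect to be the main obstacle is the substitution of $|D_l|$: because $D_l$ consists of the points lying in light children of heavy cells, its size carries its own $\sqrt n$, $1/\beta$ and $1/\epsilon$ factors, and routing the whole of it through the optimal-cost lower bound on $T_l$ would leave a spurious $\beta^{-O(1)}$ that overshoots the target exponent. The resolution --- and the reason $T_l$ was defined with the $4\BH_M/p(1)$ term --- is to divide that part of $|D_l|$ by the $2\BH_M$ lower bound instead, which collapses the light-cell contribution to $\tilde O(kL/\beta)$ and lets it be absorbed by the dominant $O(d^3kL^2/(p(1)\beta))$ term. Everything else is routine insertion of the polylogarithmic bounds on $L$, $d$, $M$, $R$ and of the LSH collision probability $p(1)$.
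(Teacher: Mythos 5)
Your proof is correct and takes essentially the same approach as the paper: bound the surviving-bucket count by $2|D_l|/T_l$ using the $\BH_M$ slack in $T_l$, split $|D_l|$ into the $O(d^2\OPT/t_l^2)$ and $O(kL\CH_M/\beta)$ parts and divide each against the matching branch of the $\max$ defining $T_l$, then multiply by $M$ and $R$ and substitute $p(1)$ from \cref{lem:LSHpars}. The only minor looseness is treating $\CH_M$ and $\BH_M$ as comparable up to polylog factors; the paper instead allows the ratio $\epsilon_{\CH}/\epsilon_{\BH}$ to differ by $k^{O(1/(2c^2-1))}\poly\log n$ in either direction and checks this in the final theorem, which is the more robust way to state the hypothesis, but your version is satisfied by the actual parameter choices and reaches the same bound.
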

    
    \begin{proof}
        A candidate center is allocated for every $b\in\BH^{l,m,r}$ such that $\BH^{l,m,r}(b) \geq T_l - \BH_M$ where
        \begin{align*}
            T_l = \frac{p(1)}{2} \cdot \max\left( \frac{\beta \OPT}{t_l^2 kL^2 d}, \frac{4 \BH_M}{p(1)}, O\left( \frac{c_G \sqrt{n \poly\log n/\beta}}{\epsilon_{\BSO}}  \right)\right).
        \end{align*}
        The set of buckets which are identified as having these many points is at most the set of buckets which have $T_l - 2 \BH_M$ many points in them. Therefore we want to bound from above the quantity $\size{D_l}/(T_l - 2\BH_M)$. Since $T_l \geq 4 \BH_M$, we can write
        \begin{align*}
            \frac{\size{D_l}}{T_l - 2\BH_M} &\leq \frac{2 \size{D_l}}{T_l}\\
            &\leq \frac{ O(d^2 \OPT/t_l^2) + O\left( \frac{kL \CH_M}{\beta} \right) }{\max \left\{ \frac{ p(1)}{2} \frac{\beta \OPT}{t_l^2 kL^2 d} , 2\BH_M \right\} } \\
            &\leq \frac{O(d^2 \OPT / t_l^2)}{\frac{ p(1)}{2} \frac{\beta \OPT}{t_l^2 kL^2 d}}  + O\left( \frac{kL \CH_M}{\beta} \cdot \frac{1}{\BH_M} \right) \\
            &\leq O\left(\frac{2}{p(1)} \cdot \frac{k L^2 d^3}{\beta} \right) + O\left( \frac{kL \CH_M}{\beta \BH_M} \right).
        \end{align*}
        We use that $\OPT \geq k\sqrt{n}$, $\CH_M = O\left(\frac{1}{\epsilon_{\CH}}\sqrt{n \poly \log n/\beta}\right)$, $\BH_M = \frac{1}{\epsilon_{\BH}} \sqrt{n \poly\log n}$ and that $$\frac{1}{k^{O(1/(2c^2-1))}\poly\log n} \leq \frac{\epsilon_{\CH}}{\epsilon_{\BH}} \leq k^{O(1/(2c^2-1))} \poly\log n$$
        to get
        \begin{align*}
            \frac{\size{D_l}}{T_l - 2\BH_M} &\leq O\left(\frac{2}{p(1)} \cdot \frac{kL^2 d^3}{\beta}\right) + \frac{kL \CH_M}{\beta \BH_M}  \\
            &= O\left(\frac{k \poly\log n}{\beta }\right)^{1 + O(1/(2c^2-1))} + \frac{k \poly\log n}{\beta} \\
            &= O\left(\frac{k \poly\log n}{\beta }\right)^{1 + O(1/(2c^2-1))}.
        \end{align*}
        Taking the union over all possible values of $(l,m,r)$ and absorbing the addition $\log$ factors in the $\poly\log$ term, we get the stated bound. The fact that the bounds on the ratio of $\epsilon_{\CH}$ to $\epsilon_{\BH}$ is adhered to can be checked in the proof of the main theorem at the end of this section.
    \end{proof}
    
    \begin{lemma}
    \label{lem:candidateCount2}
        In \cref{alg:main}, the following bound holds for the total number of candidate centers allocated.
            \begin{align*}
                \size{S} =  \left(\frac{k \poly\log n}{\beta }\right)^{1 + O(1/(2c^2-1))}
            \end{align*}
    \end{lemma}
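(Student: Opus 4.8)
The plan is to assemble the count $\size{S}$ from the pieces that the algorithm actually produces. Recall that $S = \bigcup_{f\in[F]} S^f$, that each $S^f$ is the output of \cref{alg:main;alg:CCA}, and that \cref{alg:main;alg:CCA} returns $S_{\Heavy} \cup \bigcup_{l=1}^L S_l$. So it suffices to bound $\size{S_{\Heavy}}$, each $\size{S_l}$, and then sum over $l\in[L]$ and $f\in[F]$.

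First I would bound $\size{S_{\Heavy}}$. Since $S_{\Heavy} = \{ o(C) : \exists i,\ C\in\Heavy_i \}$ allocates one center per heavy cell, $\size{S_{\Heavy}} \le \sum_{l\in[L]} \size{\Heavy_l}$; by \cref{lem:cellProps} we have $\size{\Heavy_l} = O(kL/\beta)$, so $\size{S_{\Heavy}} = O(kL^2/\beta) = O(k\poly\log n/\beta)$. Second, \cref{lem:candidateCount1} already gives $\size{S_l} = O\!\left(\frac{k\poly\log n}{\beta}\right)^{1+O(1/(2c^2-1))}$ for each $l\in[L]$. Summing over the $L = O(\log n)$ levels multiplies by a $\poly\log n$ factor, and since the $S_{\Heavy}$ term has the strictly smaller exponent $1$, it is dominated; hence $\size{S^f} \le \size{S_{\Heavy}} + \sum_{l=1}^L \size{S_l} = O\!\left(\frac{k\poly\log n}{\beta}\right)^{1+O(1/(2c^2-1))}$. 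Finally, $\size{S} \le \sum_{f\in[F]} \size{S^f} = F\cdot O\!\left(\frac{k\poly\log n}{\beta}\right)^{1+O(1/(2c^2-1))}$, and since $F = \log_2(\sqrt{n}/k) = O(\log n) = \poly\log n$, this last factor is again absorbed, giving the claimed bound.

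I do not expect a genuine obstacle: this lemma is essentially bookkeeping built on top of \cref{lem:candidateCount1} and \cref{lem:cellProps}. The one point requiring care is that any $\poly\log n$ factor pulled out in front (from $L$, $F$, or $S_{\Heavy}$) must be folded back into the $\poly\log n$ sitting inside the base of $(\cdot)^{1+O(1/(2c^2-1))}$; because that exponent is a constant exceeding $1$, absorbing a $\poly\log n$ multiplicative factor only inflates the hidden polynomial degree in $\poly\log n$, which is harmless. One should also note in passing that the bound $\frac{1}{k^{O(1/(2c^2-1))}\poly\log n} \le \frac{\epsilon_{\CH}}{\epsilon_{\BH}} \le k^{O(1/(2c^2-1))}\poly\log n$ invoked inside \cref{lem:candidateCount1} is consistent with the privacy budget split used in this section, but verifying that is deferred to the proof of \cref{thm:fourRoundGuarantee}.
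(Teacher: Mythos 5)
Your proof is correct and follows essentially the same route as the paper's: decompose $S = \bigcup_{f\in[F]} S^f$ and each $S^f = S_{\Heavy}\cup\bigcup_{l\in[L]}S_l$, apply \cref{lem:candidateCount1} and \cref{lem:cellProps}, and absorb the $L$ and $F$ factors into the $\poly\log n$. The paper's proof even makes the same two observations you highlight — that $S_{\Heavy}$ is dominated and that the $\log$ factors fold into the base — so this is a faithful reconstruction.
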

    
    \begin{proof}
        We observe that $S$ in \cref{alg:main} equals $\cup_{f \in O(\log n)} S_f$ is the union of $F = O(\log n)$ many sets of candidate centers returned by calls to \cref{alg:main;alg:CCA}. It therefore suffices to bound the set of candidate centers 
        $$\size{S_{\Heavy} \cup \bigcup_{l\in[L]} S_l}$$
        returned by \cref{alg:main;alg:CCA}. From \cref{lem:candidateCount1}, by adding the bounds for $S_l$ over $L$ levels, absorbing the factor of $L$ into the $\poly\log n$ term and noting that the  $\size{S_{\Heavy}} = O(kL^2/\beta) \cdot L$ summand is asymptotically dominated by $\size{\cup_{l \in L} S_l}$, we get that
        \begin{align*}
                \size{S^f} =   O\left(\frac{k \poly\log n}{\beta }\right)^{1 + O(1/(2c^2-1))}.
            \end{align*}
        The stated bound now follows simply be absorbing an $O(\log n)$ factor in the $\poly\log$ term.
    \end{proof}
    
    \begin{definition}
	    Let $f^*$ denote the ``correct" call to \cref{alg:main;alg:CCA}, i.e. the unique value of $f \in [F]$ such that $k\sqrt{n} 2^{f-1} \leq \OPT < k\sqrt{n} 2^f$
	\end{definition}
	
	\begin{lemma}
        \label{lem:core2}
        Let $A \subset D^\dagger_l (s) \cap C$ be some cluster section. Then with probability $1-\beta/kL^2$ we have that
        \begin{align*}
            f_{A} (S^{f^*}) = O( f_{A} (S_{\OPT}) + \max\left( \frac{\beta \OPT}{t_l^2 k L^2 d}, \frac{4 \BH_M}{p(1)}, O\left( \frac{c_G \sqrt{n \poly\log n/\beta}}{\epsilon_{\BSO}}  \right)\right) \cdot d t_l^2 + \frac{c^2 t_l^2}{ d^2 L} \cdot \size{A}.
        \end{align*}
    \end{lemma}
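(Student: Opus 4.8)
The plan is to partition the cluster section $A = A_0 \sqcup \dots \sqcup A_M$ into its geometric shells around $s_A$ and charge each shell separately. Two facts about $A$ drive the argument. First, since $A \subseteq D^\dagger_l(s)\cap C$ the optimal center $s_A$ equals $s$, so $z(p,S_{\OPT}) = \norm{p-s_A}^2$ for all $p \in A$, and by definition of $D^\dagger_l$ this is at most $dt_l^2$. Second, since $A \subseteq D_l$ the cell $\Cells_{l-1}(p)$ is heavy, so its center $o(\Cells_{l-1}(p))$ lies in $S_{\Heavy} \subseteq S^{f^*}$; as a level-$(l-1)$ cell has half-diagonal $\sqrt d\, t_l$, we get $z(p,S_{\Heavy}) \le dt_l^2$ for every $p\in A$. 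Write $\tau_l$ for the threshold appearing on the right-hand side of the lemma (the $\max$ of the three terms); note $\tau_l$ dominates the lower bound on $\size{A_0^m}$ required by \cref{lem:core1}.

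Next I would invoke \cref{lem:core1} for the partial union $A_0^m$ for each $m \in \{0,\dots,M\}$ and union bound: with probability $1-\beta/(kL^2)$ (the extra factor $F$ in the failure probability of \cref{lem:core1} absorbs the $O(M)$ applications), for every $m$ with $\size{A_0^m}\ge\tau_l$ there is a center in $S_l \subseteq S^{f^*}$ within $O(c\, r_{l,m})$ of \emph{every} point of $A_0^m$ (for $m=0$ this reads: some allocated center lies within $O(c\, r_{l,1})$ of all of $A_0$). Let $j_0$ be the least index with $\size{A_0^{j_0}}\ge\tau_l$, and $j_0=\infty$ if no such index exists. Since every point of $A$ lies within $\sqrt d\, t_l < r_{l,M}$ of $s_A$, we have $A = A_0^m$ for all large $m$, so $j_0=\infty$ iff $\size{A}<\tau_l$; in that case $f_A(S^{f^*}) \le \sum_{p\in A} z(p,S_{\Heavy}) \le dt_l^2\,\size{A} \le \tau_l\, dt_l^2$ and we are done.

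Otherwise $j_0<\infty$, and I would split $A$ into three blocks. The \emph{light prefix} $A_0^{j_0-1} = \bigsqcup_{j<j_0}A_j$ has $\size{A_0^{j_0-1}}<\tau_l$ by minimality of $j_0$, so bounding each of its points by its heavy-cell center contributes at most $dt_l^2\cdot\size{A_0^{j_0-1}} < \tau_l\, dt_l^2$. The shell $A_0$ matters only when $j_0=0$, in which case monotonicity of $m\mapsto\size{A_0^m}$ gives $\size{A_0^m}\ge\tau_l$ for all $m$, so every $p\in A_0$ is within $O(c\,r_{l,1})$ of an allocated center and $z(p,S^{f^*}) = O(c^2 r_{l,1}^2) = O(c^2 t_l^2/(d^2 L))$; summing over $A_0\subseteq A$ gives the last term $O(c^2 t_l^2/(d^2L))\,\size{A}$. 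Finally, for a shell $A_j$ with $j\ge\max(1,j_0)$ the center from \cref{lem:core1} applied to $A_0^j$ yields $z(p,S^{f^*}) = O(c^2 r_{l,j}^2)$ for $p\in A_j$, while $p\in A_j$ forces $\norm{p-s_A}\ge r_{l,j}$ and hence $z(p,S_{\OPT}) = \norm{p-s_A}^2 \ge r_{l,j}^2$; thus $z(p,S^{f^*}) = O(c^2)\,z(p,S_{\OPT})$, and summing over all such shells contributes $O(c^2)\,f_A(S_{\OPT})$. Adding the three blocks gives the claimed bound.

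The step I expect to be the main obstacle is the shell bookkeeping: one must see that the shells on which \cref{lem:core1} supplies no center (those whose partial union is too light) are \emph{nested inside the single set} $A_0^{j_0-1}$ of size below $\tau_l$, so that the crude per-point bound $dt_l^2$ on them costs only $O(\tau_l\, dt_l^2)$ in total rather than that much per shell; and one must keep straight the relationship between the index $m$ of a partial union and the LSH scale $r_{l,m+1}\asymp r_{l,m}$ that actually covers it (in particular the $m=0$ base case, where the relevant scale is $r_{l,1}$ rather than $r_{l,0}=0$). Everything else is a routine triangle-inequality and geometric-series computation.
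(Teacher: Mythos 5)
Your proof is correct and follows essentially the same approach as the paper's: split $A$ at the least index $j_0$ (the paper uses $m' = j_0-1$) whose partial union is heavy enough for \cref{lem:core1}, charge the light prefix to the heavy-cell centers at $dt_l^2$ per point, and charge each remaining shell $A_j$ ($j\ge 1$) to $O(c^2)\,f_{A_j}(S_{\OPT})$ via the lower bound $\norm{p-s_A}\ge r_{l,j}$. Your explicit treatment of the $j_0=\infty$ edge case and of the $A_0$ shell (which is what produces the $\frac{c^2 t_l^2}{d^2 L}\size{A}$ term, since there the $r_{l,j}$ lower bound is unavailable) is in fact cleaner than the paper's slightly elided version of that step.
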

    
    \begin{proof}
        Let $m'$ be the largest index such that $|A_0^{m'}| < \max\left( \frac{\beta \OPT}{t_l^2 kL^2 d}, \frac{4 \BH_M}{p(1)}, O\left( \frac{c_G \sqrt{n \poly\log n/\beta}}{\epsilon_{\BSO}}  \right)\right)$. We can write
        \begin{align*}
            f_{A} (S^{f^*}) &\leq f_{A_0^{m'}} (S^{f^*}) + \sum_{m = m'+1}^M f_{A_m} (S^{f^*}).
        \end{align*}
        For every $m>m'$, by \cref{lem:core1} we know there is a candidate center $\Pi_l (\hat{p}_m) \in S^{f^*}$ such that for every point $p\in A_0^m$, $\norm{p - \Pi_l (\hat{p}_m)} = O(c r_{l,m})$. Since for all $p \in A_m \subset A_0^m$, $\norm{ p - s} \geq r_{l,m}$, it follows that with probability $1 - \beta/kL^2M$, $f_{A_m} (S) \leq O( f_{A_m} (S_{\OPT}) + c^2 r_{l,1}^2$. By the union bound, we have that this guarantee holds for all thresholds with probability $1-\beta/kL^2$. Substituting this bound, we get
        \begin{align*}
            f_{A} (S^{f^*}) &\leq \max\left( \frac{\beta \OPT}{t_l^2 kL^2 d }, \frac{4 \BH_m}{p(1)}, O\left( \frac{c_G \sqrt{n \poly\log n/\beta}}{\epsilon_{\BSO}}  \right)\right)\cdot d t_l^2 + \sum_{m = m'+1}^M \left[ O(c f_{A_m} (S_{\OPT}))  + c^2 r_{l,1}^2 \right] \\
            &\leq \max\left( \frac{\beta \OPT}{t_l^2 kL^2 d }, \frac{4 \BH_M}{p(1)}, O\left( \frac{c_G \sqrt{n \poly\log n/\beta}}{\epsilon_{\BSO}}  \right)\right)\cdot dt_l^2 + O( f_{A} (S_{\OPT}) + \frac{c^2 t_l^2 }{d^2 L} \cdot \size{A}.
        \end{align*}
    \end{proof}
	
	\begin{lemma}
        \label{lem:core3}
        The following bound holds.
        \begin{align*}
            f_{D_l} (S^{f^*}) &= O( f_{D_l} (S_{\OPT})) + O( \OPT/L) + O( \CH_M/(d^2 \beta) ) \\
            &+ O\left( \frac{c_G}{\epsilon_{\BSO} \beta} (k \poly\log n)^{1 + O(1/(2c^2-1))} \sqrt{n} \right)
        \end{align*}
    \end{lemma}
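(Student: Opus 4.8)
The plan is to split $D_l$ according to how close its points are to the fixed optimum $S_{\OPT}$, to handle the far points with the cell centres already placed in $S_{\Heavy}$, and to handle the near points by decomposing them into the $O(kL/\beta)$ cluster sections and invoking \cref{lem:core2} on each.

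Concretely, write $D_l = D^\dagger_l \sqcup (D_l\setminus D^\dagger_l)$ with $D^\dagger_l=\{p\in D_l: z(p,S_{\OPT})\le d t_l^2\}$. For every $p\in D_l$ we have $\Cells_{l-1}(p)\in\Heavy_{l-1}$, so $o(\Cells_{l-1}(p))\in S_{\Heavy}\subseteq S^{f^*}$, and since $\Cells_{l-1}(p)$ has side length $2t_l$ its centre is within squared distance $d t_l^2$ of $p$; hence for $p\in D_l\setminus D^\dagger_l$ we get $z(p,S^{f^*})\le d t_l^2< z(p,S_{\OPT})$, so these points contribute at most $f_{D_l\setminus D^\dagger_l}(S_{\OPT})\le f_{D_l}(S_{\OPT})$ to $f_{D_l}(S^{f^*})$. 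For the near points, partition $D^\dagger_l=\bigsqcup_{s\in S_{\OPT}}D^\dagger_l(s)$ and further into cluster sections $A=D^\dagger_l(s)\cap C$ over $C\in\Ancestors^*(\Heavy_l)$; by the lemma bounding $\sum_{s\in S_{\OPT}}|\{C\in\Ancestors^*(\Heavy_l): C\cap D^\dagger_l(s)\neq\emptyset\}|$ (an application of \cref{lem:bflsy} as in \cref{rem:closeCellBound}) there are only $O(kL/\beta)$ nonempty such $A$, and $f_{D^\dagger_l}(S^{f^*})=\sum_A f_A(S^{f^*})$. Now apply \cref{lem:core2} to each $A$, having scaled the per-section failure probability so that the union bound over all $O(kL/\beta)$ cluster sections still succeeds with the probability \cref{lem:core3} needs.

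It remains to sum the three terms of \cref{lem:core2} over the cluster sections. The $O(f_A(S_{\OPT}))$ terms add up to $O(f_{D^\dagger_l}(S_{\OPT}))\le O(f_{D_l}(S_{\OPT}))$. The terms $\frac{c^2 t_l^2}{d^2 L}|A|$ add up to $\frac{c^2 t_l^2}{d^2 L}|D^\dagger_l|$, which using $|D_l|=O(d^2\OPT/t_l^2)+O(kL\CH_M/\beta)$ from \cref{lem:additiveErrorByLayer} together with $c=O(1)$ and $t_l\le 1$ is $O(\OPT/L)+O(\CH_M/(d^2\beta))$, the residual $k$ being folded into the final additive error below. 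The common threshold term is contributed once per section, i.e.\ $O(kL/\beta)$ copies of $\max\!\big(\frac{\beta\OPT}{t_l^2 kL^2 d},\frac{4\BH_M}{p(1)},O(\frac{c_G\sqrt{n\poly\log n/\beta}}{\epsilon_{\BSO}})\big)\cdot d t_l^2$, which we bound entry by entry: $\frac{kL}{\beta}\cdot\frac{\beta\OPT}{t_l^2 kL^2 d}\cdot d t_l^2=O(\OPT/L)$; $\frac{kL}{\beta}\cdot\frac{\BH_M}{p(1)}\cdot d t_l^2=O\!\big(\frac{1}{\epsilon_{\BH}\beta}(k\poly\log n)^{1+O(1/(2c^2-1))}\sqrt n\big)$ using $t_l\le 1$, $d=O(\log n)$, the bound $\BH_M=O(\frac{1}{\epsilon_{\BH}}\sqrt{n\poly\log n/\beta})$ of \cref{lem:BH}, and $1/p(1)=\tilde{\Omega}((k\poly\log n)^{O(1/(2c^2-1))})$ from \cref{lem:LSHpars}; and $\frac{kL}{\beta}\cdot\frac{c_G\sqrt{n\poly\log n/\beta}}{\epsilon_{\BSO}}\cdot d t_l^2=O(\frac{c_G}{\epsilon_{\BSO}\beta}k\sqrt n\poly\log n)$. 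Since the privacy parameters $\epsilon_{\CH},\epsilon_{\BH},\epsilon_{\BSO}$ are chosen to differ only by $(k\poly\log n)^{O(1/(2c^2-1))}$ factors (cf.\ \cref{lem:candidateCount1}), these three pieces collapse into a single $O(\frac{c_G}{\epsilon_{\BSO}\beta}(k\poly\log n)^{1+O(1/(2c^2-1))}\sqrt n)$ term, and collecting all the $O(\OPT/L)$ contributions gives exactly the stated bound.

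The main obstacle is making sure that multiplying the per-section error of \cref{lem:core2} by the $O(kL/\beta)$ section count does not square the exponent of $k$. This works because the dominant ``false positive'' threshold term $\frac{\beta\OPT}{t_l^2 kL^2 d}\cdot d t_l^2$ carries an explicit $1/k$, so its product with $O(kL/\beta)$ is $O(\OPT/L)$ and $k$-free, while the genuinely additive ($\sqrt n$-scale) contributions pick up only one extra factor of $k$ beyond the LSH overhead $1/p(1)$. The care that must be exercised is purely bookkeeping: invoking $t_l\le 1$ and $\OPT\ge k\sqrt n$ in exactly the right places so that no lingering $1/t_l^2$ or $\OPT$ reappears to spoil the $\sqrt n$ dependence, and tracking the failure-probability scaling across the $O(kL/\beta)$ cluster sections.
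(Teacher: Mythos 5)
Your proof follows essentially the same route as the paper's: split $D_l$ into $D^\dagger_l$ (handled by decomposing into the $O(kL/\beta)$ cluster sections and summing the per-section bound of \cref{lem:core2}) and $D_l\setminus D^\dagger_l$ (handled by the $S_{\Heavy}$ centres, which dominate $S_{\OPT}$ there), then plug in $\size{D_l^\dagger}\le \size{D_l}= O(d^2\OPT/t_l^2) + O(kL\CH_M/\beta)$ and collapse the three branches of the $\max$ via the choice of privacy parameters. Your explicit note that the bound actually yields $O(k\CH_M/(d^2\beta))$ (rather than $O(\CH_M/(d^2\beta))$ as printed in the lemma statement), with the extra $k$ folded into the $\sqrt{n}$-scale additive error, is correct and matches what the paper itself does in the following lemma.
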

    \newcommand{\A}{\mathcal{A}}
    \begin{proof}
        Let $\A$ be the set of all cluster sections. We know that $\size{\A} = O(kL/\beta)$ and that $D^\dagger_l = \sqcup_{A \in \A} A$. We can write
        \begin{align*}
            f_{D^\dagger_l} (S^{f^*}) &= \sum_{A \in \A} f_{\A} (S^{f^*}) \\
            &= \sum_{A \in \A}  \left[ O( f_{A} (S_{\OPT}) + \frac{c^2 t_l^2 }{d^2 L} \cdot \size{A} + \max\left( \frac{\beta \OPT}{t_l^2 kL^2 d}, \frac{4 \BH_M}{p(1)}, O\left( \frac{c_G \sqrt{n \poly\log n/\beta}}{\epsilon_{\BSO}}  \right)\right)\cdot d t_l^2 \right]\\
            &= O( f_{D^\dagger_l} (S_{\OPT})) + \frac{c^2 t_l^2 }{d^2 L} \cdot \size{D^\dagger_l} + \max\left( O\left( \frac{\OPT}{L} \right), \frac{4 \BH_M}{p(1)} O\left(\frac{k L d}{\beta}\right) , O\left( \frac{c_G k \sqrt{n \poly\log n/\beta}}{\epsilon_{\BSO} \beta}  \right)\right)
        \end{align*}
        where in the above we absorb a factor of $d$ in the $\poly\log n$ expression. To bound the second term, we recall that $\size{D_l} \leq O(d^2 \OPT/t_l^2) + O(kL \CH_M/\beta)$. Substituting, we get
        \begin{align*}
            \frac{c^2 t_l^2 }{d^2 L} \cdot \size{D^\dagger_l}  &\leq \frac{c^2 t_l^2 }{d^2 L} \cdot (O(d^2 \OPT/t_l^2) + O(kL \CH_M/\beta)) \\
            &\leq O( \OPT/L) + O( k \CH_M/(d^2 \beta) )
        \end{align*}
        We now simplify the last term in the upper bound for $f_{D^\dagger_l}$ by noting that for any call to \cref{alg:main}, the parameter $c$ is a constant, that $d, L = \log n$, and that since we can let $\epsilon_{\BSO} = \epsilon_{\BH}$ (as they are called exactly the same number of times). In sum
        \begin{align*}
            &\max\left( O\left( \frac{\OPT}{L} \right), \frac{4 \BH_M}{p(1)} O\left(\frac{k L d}{\beta}\right) , O\left( \frac{c_G k \sqrt{n \poly\log n/\beta}}{\epsilon_{\BSO} \beta}  \right)\right) \\
            &\leq  O\left( \frac{c_G}{\epsilon_{\BSO} \beta} (k \poly\log n)^{1 + O(1/(2c^2-1))} \sqrt{n} \right).
        \end{align*}
        We recall that $D^\dagger_l = \{p \in D_l : z(p, S_{\OPT}) < d t_l^2 \}$. It follows that if $p \in D_l \backslash D^\dagger_l$ then $z(p,S_{\OPT}) > d t_l^2$, in which case $z(p, S_{\Heavy}) < z(p,S_{\OPT})$. In sum, $f_{D_l} (S^{f^*}) \leq f_{D^\dagger_l}(S^{f^*}) + f_{D_l} (S_{\OPT})$, from which the stated bound follows directly.
    \end{proof}
    
    \begin{lemma}
            \label{lem:core4}
            The following bound holds.
            \begin{align*}
                f_D(S) = O(\OPT) + O\left( \left(\frac{c_G}{\epsilon_{\BSO} \beta} + \frac{1}{\epsilon_{\CH}} \right) (k \poly\log n)^{1 + O(1/(2c^2-1))} \sqrt{n} \right).
            \end{align*}
    \end{lemma}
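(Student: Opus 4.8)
The plan is to reduce \cref{lem:core4} to a level‑by‑level summation of the estimate in \cref{lem:core3}. First, by \cref{lem:additiveErrorByLayer} we have $D = \sqcup_{l\in[L]} D_l$, and since the $k$-means cost is additive over data points, $f_D(S) = \sum_{l\in[L]} f_{D_l}(S)$. Second, because $S = \bigcup_{f\in[F]} S^f \supseteq S^{f^*}$ and adding centers can only decrease clustering cost, $f_{D_l}(S) \le f_{D_l}(S^{f^*})$ for every level $l$; in particular we never need to know which guess is correct, it is enough that $S$ contains the centers produced for the correct guess $f^*$. Hence $f_D(S) \le \sum_{l\in[L]} f_{D_l}(S^{f^*})$.

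Next I would substitute the per-level bound of \cref{lem:core3} and collect the resulting four families of terms. The terms $O(f_{D_l}(S_{\OPT}))$ sum to $O\big(\sum_l f_{D_l}(S_{\OPT})\big) = O(f_D(S_{\OPT})) = O(\OPT)$, again using the decomposition $D = \sqcup_l D_l$ and the identity $f_D(S_{\OPT}) = \OPT$. The terms $O(\OPT/L)$ sum over the $L$ levels to $O(\OPT)$. For the terms $O(\CH_M/(d^2\beta))$ (equivalently $O(k\,\CH_M/(d^2\beta))$ as in the proof of \cref{lem:core3}), summing over $L$ levels and substituting $\CH_M = O\!\big(\tfrac{1}{\epsilon_{\CH}}\sqrt{n\,\poly\log n/\beta}\big)$ from \cref{cor:CH}, then absorbing the factors $L$, $d = \Theta(\log n)$ and the remaining powers of $1/\beta$ into the $\poly\log n$ and the $(k\poly\log n)^{1 + O(1/(2c^2-1))}$ expressions, yields a contribution $O\!\big(\tfrac{1}{\epsilon_{\CH}} (k\poly\log n)^{1 + O(1/(2c^2-1))}\sqrt{n}\big)$. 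Finally the terms $O\!\big(\tfrac{c_G}{\epsilon_{\BSO}\beta}(k\poly\log n)^{1 + O(1/(2c^2-1))}\sqrt{n}\big)$ contribute the same quantity up to one more $L$ factor, which is absorbed into $\poly\log n$. Adding the four contributions gives exactly the claimed bound, with the $\epsilon_{\BSO}$ and $\epsilon_{\CH}$ dependence appearing as $\tfrac{c_G}{\epsilon_{\BSO}\beta} + \tfrac{1}{\epsilon_{\CH}}$.

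It remains to handle the failure probability. \cref{lem:core3}, through \cref{lem:core2} and \cref{lem:core1}, holds for a fixed level $l$ with probability $1 - O(\beta/(kL^2))$; since we invoke it only for the single correct guess $f^*$, a union bound over the $L$ levels, together with the events already conditioned on in \cref{rem:closeCellBound}, \cref{lem:BH}, and \cref{lem:costApproximation2}, leaves total failure probability $O(\beta)$, and a constant rescaling of $\beta$ restores the stated $1-\beta$ guarantee.

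I do not anticipate a real obstacle: the lemma is essentially a telescoping sum of the per-level estimate. The only delicate points are bookkeeping ones — confirming that $L = \Theta(\log n)$ and $d = \Theta(\log n)$ so that the $O(\OPT/L)$ losses really do sum to $O(\OPT)$ without accumulating a spurious $\log$ on the multiplicative term, and tracking which $\poly\log n$ and $1/\beta$ factors should be folded into the leading $\poly\log n$ versus the exponent-$(1 + O(1/(2c^2-1)))$ factor so that the final expression matches the stated form.
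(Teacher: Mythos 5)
Your proposal is correct and follows essentially the same route as the paper: decompose $f_D(S^{f^*}) = \sum_{l\in[L]} f_{D_l}(S^{f^*})$, substitute the per-level bound of \cref{lem:core3}, absorb the extra $L$ factor and $\CH_M$ into $\poly\log n$ and $(k\poly\log n)^{1+O(1/(2c^2-1))}$, and then use $S^{f^*} \subset S$. You also correctly flag the small $k$ vs.\ no-$k$ discrepancy between the statement and proof of \cref{lem:core3}.
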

    
    \begin{proof}
        \begin{align*}
            f_D (S^{f^*}) &= \sum_{l \in [L]} f_{D_l} (S^{f^*}) \\
            &= \sum_{l \in [L]} \bigg[ O( f_{D_l} (S_{\OPT})) + O( \OPT/L) + O( k \CH_M/(d^2 \beta) )  \\
            &+ O\left( \frac{c_G}{\epsilon_{\BSO} \beta} (k \poly\log n)^{1 + O(1/(2c^2-1))} \sqrt{n} \right) \bigg] \\
            &= O( \OPT) + O( k \CH_M/(d^2 \beta) ) + O\left( \frac{c_G}{\epsilon_{\BSO} \beta} (k \poly\log n)^{1 + O(1/(2c^2-1))} \sqrt{n} \right)
        \end{align*}
        where in the above we use that $L = \log n$ to absorb a factor of $L$ in the $\poly\log n$ expression. Now since $S^{f^*} \subset S$, we can write
        \begin{align*}
                O( \OPT) + O( k \CH_M/(d^2 \beta) ) + O\left( \frac{c_G}{\epsilon_{\BSO} \beta} (k \poly\log n)^{1 + O(1/(2c^2-1))} \sqrt{n} \right)
        \end{align*}
        We now simplify this expression by opening up the expression for $\CH_M$, and by absorbing the $c^2$ term in the big-Oh notation.
        \begin{align*}
            f_D(S) = O(\OPT) + O\left( \left(\frac{c_G}{\epsilon_{\BSO} \beta} + \frac{1}{\epsilon_{\CH}} \right) (k \poly\log n)^{1 + O(1/(2c^2-1))} \sqrt{n} \right).
        \end{align*}
    \end{proof}

    \subsection{Cost analysis}\label{subsec:cost}
	
	\begin{algorithm}
	    \caption{2-Round Center Recovery}
	    \label{alg:2RoundRecovery}
	    \KwData{Bicriteria $k$-means relaxation $S$ for $k$-means clustering under dimension reducing transformation $M$, the tranformation $M : \R^{d'} \to \R^d$}
	    \SetKwProg{DoParallelFor}{Do in parallel for}{:}{end}
        \SetKwBlock{DoParallel}{Do in parallel:}{end}
	    $s(p) := p \mapsto \argmin_{s \in S} \lVert p - s \rVert_2$\\
	    $\CCH = \BTG (s(\cdot),\beta, \epsilon_{\SH})$ \tcc*{Candidate center histogram}
        $D^* \leftarrow \{ s \in S \mbox{ with multiplicity } \SH(s) \}$\\
        $S^* = \{s_1^*, \dots, s_k^* \} \leftarrow \SKM$ \\
        $s^*(p) := p \mapsto \argmin_{s^* \in S^*} \lVert M(p) - s^* \rVert_2$\\
        \DoParallel{
            Agents reveal $\hat{v}(p)$ for $p\in D'$ where
            \begin{align*}
                v(p)_s &= \begin{cases} p \mbox{ if } s = s^* (p) \\ 0 \mbox{ otherwise }\end{cases}\\
                \hat{v} (p) &= v(p) + N\left(0, \frac{c_{G'}^2 \cdot 2}{\epsilon_{G'}^2} \mathbb{I}_{d'k} \right)
            \end{align*}\\
            $\SH = \BTG (s^*(\cdot),\beta, \epsilon_{\SH})$ \tcc*{Cluster centers histogram}
        }
        $\hat{v} = \sum_{p \in D'} \hat{v}(p)$ \\
        $\hat{s}^* = \sum_{p \in D'} \hat{s}^* (p)$\\
        \For{$j = 1,\dots, k$}
        {
            $\hat{\mu}_j = \frac{\hat{v}_j}{\SH(s^*_j)}$\\
        }
        \KwRet{$S' = \{ \hat{\mu}_1,\dots, \hat{\mu}_k \}$ }
	\end{algorithm}
	
	In this subsection we complete the cost analysis of this algorithm. In the previous section we showed that the candidate centers allocated serve as a good bi-criteria solution for the $k$-means problem with respect to the dimension reduced data set $D$. We will be able to use this in turn to show that proxy data set $D^*$ constructed in \cref{alg:2RoundRecovery} has a similar $k$-means clustering function to that of $D$. This result implies that the $k$ cluster centers derived from non-private clustering of $D^*$ work well as cluster centers for $D$. Finally, we conclude our cost analysis by bounding the cost incurred when clustering the original data set $D'$ with the $k$ centers in $S'$ returned after undoing the dimension reduction.
	
	\begin{definition}[Proxy dataset]
	    \begin{enumerate}
	        \item From \cref{alg:2RoundRecovery} we see that 
        $$D^* = \{ s \in S \mbox{ with multiplicity } \hat{n}_s \mbox{ for all }(s,\hat{n}_s) \in \CCH  \}.$$ 
        We call this the \emph{proxy data set} for $D$.
            \item We let $D(s) = \{ p \in D : \argmin_{s_1 \in S} z(p, s_1) = s \} $.
	    \end{enumerate}
    \end{definition}
	
	\begin{lemma}
	    \label{lem:centerCounts}
	    With probability $1-2\beta$ we have that
	    \begin{enumerate}
	        \item For all $s\in S$ we have $\size{\size{\{p\in D : s(p) = s \}} - \CCH(s)} \leq O(\frac{1}{\epsilon_{\CCH}} \log n/\beta)$.
	        \item For all $s^* \in S^*$ we have that $\size{\size{p \in D' : s^*(M(p)) = s^*}} \leq O(\frac{1}{\epsilon_{\SH}} \log n/\beta)$
	    \end{enumerate}
	\end{lemma}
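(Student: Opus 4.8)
The plan is to derive both bounds directly from the accuracy guarantee of $\BTG$ (\cref{lem:btg}), since $\CCH$ and $\SH$ are both defined in \cref{alg:2RoundRecovery} as outputs of $\BTG$. I would first recall, exactly as in the remark following \cref{lem:btg}, that a single call to $\BTG$ with value map $f : D \to V$ and parameters $(\epsilon,\beta)$ yields, with probability $1-\beta$, a \emph{uniform} additive error bound $M = O\left(\tfrac{1}{\epsilon}\sqrt{n\log(|V|/\beta)\log(1/\beta)}\right)$ on the estimated frequency of \emph{every} element of $V$: values appearing in the returned list are estimated within $E \le M$, and values that do not appear have true frequency below $M$ and are read off as $0$, hence are also within $M$. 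The two invocations I need are with $f = s(\cdot)$ and with $f = s^*(M(\cdot))$, and I would finish by a union bound over these two calls, which is precisely where the $1-2\beta$ in the statement comes from.

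For the first part I would take $V = S$, the candidate-center set, and observe that $\size{\{p\in D : s(p) = s\}}$ is exactly the frequency of $s$ under the map $s(\cdot)=\argmin_{s'\in S}\norm{\cdot - s'}$ used to build $\CCH$. The one quantity that needs checking is $\log|V| = \log|S|$: by \cref{lem:candidateCount2}, $|S| = (k\,\poly\log n/\beta)^{1+O(1/(2c^2-1))}$, so $\log|S| = \tilde{O}(\log(n/\beta))$, the exponent $1+O(1/(2c^2-1))$ being a constant for fixed $c$. Feeding this into the $\BTG$ error bound gives the stated accuracy bound on $\size{\size{\{p\in D : s(p)=s\}} - \CCH(s)}$, up to the logarithmic factors suppressed in the statement.

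For the second part I would set $V = S^*$, the $k$ centers output by the non-private clustering $\SKM$, and note that $\size{\{p\in D' : s^*(M(p)) = s^*\}}$ is the frequency of $s^*$ under the map $s^*(\cdot) = \argmin_{s'\in S^*}\norm{M(\cdot) - s'}$, which is exactly the map used to build $\SH$. Here $\log|V| = \log k = O(\log n)$, so the same reading of \cref{lem:btg} gives the claimed bound on the deviation of $\SH(s^*)$ from the true count $\size{\{p\in D':s^*(M(p))=s^*\}}$.

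I expect no genuine obstacle: the whole argument is a bookkeeping reduction to \cref{lem:btg} plus one union bound. The single point that requires a moment's care is verifying that the domain sizes $|S|$ and $|S^*|$ — which enter the $\BTG$ error only through the factor $\log|V|$ — stay polylogarithmic in $n$ and $k$, and this is precisely what \cref{lem:candidateCount1} and \cref{lem:candidateCount2} supply. One should also keep in mind that $\CCH$ is queried in the first round and $\SH$ in the last round of \cref{alg:2RoundRecovery}, but for the accuracy claim this is immaterial; it matters only for the separate privacy accounting carried out later.
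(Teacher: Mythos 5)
Your proposal is correct and follows essentially the same route as the paper: reduce both parts to the uniform error guarantee of $\BTG$ (using the maximum--frequency threshold as the uniform bound, since any omitted value has true count below it), bound $\log\size{V}$ for the two value maps by $\log\size{S}=O(\log n)$ via \cref{lem:candidateCount2} and $\log\size{S^*}=\log k=O(\log n)$, and union-bound over the two $\BTG$ calls to get $1-2\beta$. The paper's proof is the same argument stated more tersely, so there is nothing substantively different here.
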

	\begin{proof}
	    The stated bounds follow from the $\BTG$ guarantee. We use the values $\CCH_M$ and $\SH_M$ as uniform error bounds. Note that the size of the co-domain for $s(\cdot)$ is $\size{S}$ and $\log \size{S} = O(\log n)$. Similarly the size of the co-domain for $s^* (\cdot)$ is $\size{S^*} = k$, so the second bound follows directly as well.
	\end{proof}

	\begin{lemma}
	    \label{lem:fourRoundProxyVersusActual}
	    The $k$-means clustering functions of $D$ and $D^*$ are similar. Concretely, for any finite set $S_1$, the following bounds hold.
	    \begin{align*}
	        f_{D^*} ( S_1) &\leq 2 f_D (S) + 2 f_D (S_1) +  O\left(\frac{\size{S}}{\epsilon_{\CCH}} \sqrt{n} \log n/\beta\right), \\
	        f_D (S_1) &\leq 2 f_D (S) + 2 f_{D^*} (S_1) + O\left(\frac{\size{S}}{\epsilon_{\CCH}} \sqrt{n} \log n/\beta\right).
	    \end{align*}
	    As a direct corollary,
	    \begin{align*}
	        f_{D^*} (S_{\OPT}) &\leq O(\OPT) + O\left( \left(\frac{c_G}{\epsilon_{\BSO} \beta} + \frac{1}{\epsilon_{\CH}} \right) (k \poly\log n)^{1 + O(1/(2c^2-1))} \sqrt{n} \right) \\
	        &+ O\left(\frac{\size{S}}{\epsilon_{\CCH}} \sqrt{n} \log n/\beta\right).
	    \end{align*}
	\end{lemma}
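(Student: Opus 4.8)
The plan is to compare $f_{D^*}(S_1)$ with $f_D(S_1)$ by transporting each data point $p\in D$ to its nearest candidate center $s(p)\in S$ and controlling the resulting distortion with the weak (approximate) triangle inequality for squared Euclidean distances, $z(a,c)\le 2z(a,b)+2z(b,c)$, together with the bookkeeping of the $\CCH$ count-estimation error.

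First I would write the proxy cost exactly. Setting $n_s:=\size{\{p\in D: s(p)=s\}}$, the construction in \cref{alg:2RoundRecovery} gives $f_{D^*}(S_1)=\sum_{s\in S}\CCH(s)\,z(s,S_1)$, and the $\BTG$ guarantee (\cref{lem:btg}) applied with co-domain $S$, where $\log\size{S}=O(\log n)$, together with \cref{lem:centerCounts} yields $\abs{\CCH(s)-n_s}\le \CCH_M = O(\tfrac{1}{\epsilon_{\CCH}}\sqrt{n}\log n/\beta)$ for every $s\in S$. Since all candidate centers and the points of $S_1$ of interest lie in $B(0,1)$, we have $z(s,S_1)=O(1)$, so replacing $\CCH(s)$ by $n_s$ perturbs the cost by at most $O(\size{S}\,\CCH_M)=O(\tfrac{\size{S}}{\epsilon_{\CCH}}\sqrt{n}\log n/\beta)$ — this is precisely the additive slack in both displayed inequalities. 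After this substitution one has $\sum_{s\in S}n_s\,z(s,S_1)=\sum_{p\in D}z(s(p),S_1)$, since the sets $\{p:s(p)=s\}$ partition $D$.

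Next I would invoke the weak triangle inequality in both directions. For the first bound, fix $p\in D$ and let $s_1\in S_1$ minimize $z(p,\cdot)$; then $z(s(p),S_1)\le z(s(p),s_1)\le 2z(s(p),p)+2z(p,s_1)=2z(p,S)+2z(p,S_1)$, using $z(p,s(p))=z(p,S)$ by definition of $s(p)$. Summing over $p\in D$ gives $f_{D^*}(S_1)\le 2f_D(S)+2f_D(S_1)+O(\tfrac{\size{S}}{\epsilon_{\CCH}}\sqrt{n}\log n/\beta)$. For the reverse bound, fix $p$ and let $s_1\in S_1$ minimize $z(s(p),\cdot)$; then $z(p,S_1)\le z(p,s_1)\le 2z(p,s(p))+2z(s(p),s_1)=2z(p,S)+2z(s(p),S_1)$, and summing yields $f_D(S_1)\le 2f_D(S)+2\sum_{s\in S}n_s z(s,S_1)\le 2f_D(S)+2f_{D^*}(S_1)+O(\tfrac{\size{S}}{\epsilon_{\CCH}}\sqrt{n}\log n/\beta)$, where the last step again swaps $n_s$ for $\CCH(s)$ at the cost of the same additive term.

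Finally, the corollary follows by taking $S_1=S_{\OPT}$ in the first bound, using $f_D(S_{\OPT})=\OPT$, and substituting the bound on $f_D(S)$ from \cref{lem:core4}; the contributions $O(\OPT)$ (from $f_D(S)$ and $f_D(S_{\OPT})$), the $(k\poly\log n)^{1+O(1/(2c^2-1))}\sqrt{n}$ term from \cref{lem:core4}, and the $\CCH$-estimation term combine into exactly the stated expression (and, if desired, the last term can be folded into the middle one using \cref{lem:candidateCount2} for $\size{S}$). There is no genuine technical obstacle here: the only care needed is to keep the factor-$2$ losses from the weak triangle inequality explicit so that they land on $f_D(S)$, which is already controlled by the previous section, rather than being absorbed uncontrollably into $f_D(S_1)$, and to track the $\CCH$ estimation error uniformly over all $\size{S}$ candidate centers.
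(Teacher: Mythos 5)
Your proof is correct and follows essentially the same route as the paper's: write $f_{D^*}(S_1)$ as a weighted sum over candidate centers, swap the privatized counts $\CCH(s)$ for the true counts $n_s$ at an additive cost of $O(\size{S}\,\CCH_M)$, apply the weak triangle inequality $z(a,c)\le 2z(a,b)+2z(b,c)$ through the nearest candidate center $s(p)$ in each direction, and then specialize to $S_1=S_{\OPT}$ together with the bound on $f_D(S)$ from \cref{lem:core4}. The only difference is superficial notation ($\CCH(s)$ vs.\ $\hat{n}_s$, and your slightly more explicit tracking of which $s_1\in S_1$ is the minimizer in each direction), so there is nothing substantive to reconcile.
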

	
	\begin{proof}
	    We can enumerate all points in $D^*$ by counting each candidate center in $s\in S$ a total of $\hat{n}_s$ many times.
	    \begin{align*}
	        f_{D^*} ( S_1) &= \sum_{p^* \in D^*} \min_{s \in S_1} z (p^*,s) \\
	        &= \sum_{s \in S} \hat{n}_s \min_{s'\in S_1} z(s,s') \\
	        &= \sum_{s \in S} \size{D(s)} \min_{s'\in S_1} z(s,s') + \hat{n}_s - \size{D(s)}\\
	        &\leq \sum_{p\in D} z(s(p),\argmin_{s'\in S_1} z(s(p),s') ) + O\left(\frac{\size{S}}{\epsilon_{\CCH}} \sqrt{n} \log n/\beta\right) \\
	        &\leq \sum_{p\in D} z(s(p),\argmin_{s'\in S_1} z(p,s') ) + O\left(\frac{\size{S}}{\epsilon_{\CCH}} \sqrt{n} \log n/\beta\right) \\
	        &\leq \sum_{p\in D} 2z(s(p),p) + 2z(p,\argmin_{s'\in S_1} z(p,s') ) + O\left(\frac{\size{S}}{\epsilon_{\CCH}} \sqrt{n} \log n/\beta\right) \\
	        &\leq 2 f_D (S) + 2 f_D (S_1) + O\left(\frac{\size{S}}{\epsilon_{\CCH}} \sqrt{n} \log n/\beta\right)
	   \end{align*}
	   where we apply the weak triangle inequality for $\ell_2^2$ distance. Proceeding similarly,
	   \begin{align*}
	       f_D (S_1) &= \sum_{p \in D} \min_{s \in S_1} z (p,s) \\
	       &= \sum_{s \in S} \sum_{p \in D(s)} \min_{s \in S_1} z(p,s) \\
	       &\leq \sum_{s \in S} \sum_{p \in D(s)} z(p,\argmin_{s_1 \in S_1} z(s,s_1) ) \\
	       &\leq \sum_{s \in S} \sum_{p \in D(s)} \min_{s \in S_1} 2z(p,s) + 2z(s,\argmin_{s_1 \in S_1} z(s,s_1) ) \\
	       &\leq 2 f_D (S) + \sum_{s \in S} \size{D(s)} 2z(s,\argmin_{s_1 \in S_1} z(s,s_1) ) \\
	       &\leq 2 f_D (S) + \sum_{s \in S} (\hat{n}_s ) 2z(s,\argmin_{s_1 \in S_1} z(s,s_1) ) + (\size{D(s)} - \hat{n}_S)\\
	       &\leq 2 f_D (S) + 2 f_{D^*} (S_1) + O\left(\frac{\size{S}}{\epsilon_{\CCH}} \sqrt{n} \log n/\beta\right)
	   \end{align*}
	   The corollary follows by substituting our upper bound for $f_D (S)$ in its place.
	\end{proof}
	
	\begin{lemma}
	    \label{lem:lowDimGuarantee}
	    If the set $S^*$ is such that
	    \begin{align*}
	        f_{D^*} (S^*) \leq \eta \min_{S_1 : |S_1| = k} f_{D^*} (S_1)
	    \end{align*}
	    for some universal constant $\eta$ (for instance the guarantee of the non-private clustering algorithm) then
	    \begin{align*}
	        f_{D^*} (S^*) &= O(\OPT) + O\left( \left(\frac{c_G}{\epsilon_{\BSO} \beta} + \frac{1}{\epsilon_{\CH}} \right) (k \poly\log n)^{1 + O(1/(2c^2-1))} \sqrt{n} \right) + O\left(\frac{\size{S}}{\epsilon_{\CCH}} \sqrt{n} \log n/\beta\right) \\
	        f_D (S^*) &= O(\OPT) + O\left( \left(\frac{c_G}{\epsilon_{\BSO} \beta} + \frac{1}{\epsilon_{\CH}} \right) (k \poly\log n)^{1 + O(1/(2c^2-1))} \sqrt{n} \right) + O\left(\frac{\size{S}}{\epsilon_{\CCH}} \sqrt{n} \log n/\beta\right).
	    \end{align*}
	\end{lemma}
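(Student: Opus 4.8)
The plan is to leverage the hypothesis that $S^*$ is an $\eta$-approximate minimizer for the proxy data set $D^*$, together with the fact that the optimal solution $S_{\OPT}$ is itself a feasible competitor in the minimization $\min_{|S_1| = k} f_{D^*}(S_1)$. Since $\size{S_{\OPT}} = k$, this immediately yields
\[
f_{D^*}(S^*) \;\leq\; \eta\, f_{D^*}(S_{\OPT}).
\]
I would then invoke the corollary of \cref{lem:fourRoundProxyVersusActual}, which bounds $f_{D^*}(S_{\OPT})$ by $O(\OPT)$ plus the two additive error terms appearing in the statement. Because $\eta$ is a universal constant, multiplying through by $\eta$ leaves all of the big-$O$ expressions unchanged, and the first displayed bound follows directly.

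For the bound on $f_D(S^*)$, I would apply the second inequality of \cref{lem:fourRoundProxyVersusActual} with $S_1 = S^*$, namely
\[
f_D(S^*) \;\leq\; 2 f_D(S) + 2 f_{D^*}(S^*) + O\!\left(\frac{\size{S}}{\epsilon_{\CCH}}\sqrt{n}\log n/\beta\right).
\]
Substituting the bound on $f_{D^*}(S^*)$ just obtained, together with the bound $f_D(S) = O(\OPT) + O\!\big((c_G/\epsilon_{\BSO}\beta + 1/\epsilon_{\CH})(k\poly\log n)^{1+O(1/(2c^2-1))}\sqrt{n}\big)$ from \cref{lem:core4}, and collecting like terms (the factor of $2$ being absorbed in the big-$O$), gives exactly the claimed expression. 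For the probability bookkeeping, the events underlying \cref{lem:core4}, \cref{lem:fourRoundProxyVersusActual}, and \cref{lem:centerCounts} hold simultaneously with probability $1 - O(\beta)$ by a union bound, which suffices after rescaling $\beta$.

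The computation itself is essentially glue: the only point requiring a moment's care is the observation that $S_{\OPT}$ is admissible in the $k$-center minimization defining the non-private algorithm's guarantee, which is what licenses replacing $\min_{|S_1|=k} f_{D^*}(S_1)$ by $f_{D^*}(S_{\OPT})$; and noting that the extra additive term $O(\size{S}\,\sqrt{n}\log n/(\epsilon_{\CCH}\beta))$ is itself of order $(k\poly\log n)^{1+O(1/(2c^2-1))}\sqrt{n}$ by the candidate-center count of \cref{lem:candidateCount2}, so that it does not worsen the final exponent. I do not anticipate a genuine obstacle here — this lemma simply assembles \cref{lem:core4} and \cref{lem:fourRoundProxyVersusActual} into the form needed for the main theorem.
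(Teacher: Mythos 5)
Your proof is correct and follows the same route as the paper's: for the first bound, note that $S_{\OPT}$ is feasible in $\min_{|S_1|=k} f_{D^*}(S_1)$, apply the corollary of \cref{lem:fourRoundProxyVersusActual}, and absorb the constant $\eta$; for the second bound, apply the second inequality of \cref{lem:fourRoundProxyVersusActual} with $S_1 = S^*$ and substitute the first bound together with \cref{lem:core4} for $f_D(S)$. The paper leaves the appeal to \cref{lem:core4} implicit, but your version is the same argument made explicit.
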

	
	\begin{proof}
	    The first bound follows from the \cref{lem:fourRoundProxyVersusActual} by noting that $f_{D^*} (S_{\OPT})$ is an upper bound for $\min_{S' : |S'| = k} f_{D^*} (S')$, and by absorbing the universal constant $\eta$ in the big-Oh notation. The second bound follows from the first bound and \cref{lem:fourRoundProxyVersusActual}.
	\end{proof}
    
    We have shown that the $k$-means solution found in the dimension reduced space for the proxy dataset works well for the dimension reduced dataset. Now we use the cluster sets hence derived to privately estimate cluster centers in the original space.

    Given a clustering of $D'$ in the original space by identifying points with the clusters derived from $S^*$ in the dimension reduced space, we know that the $k$-means cost of the clustering is of the same order as the $k$-means cost in the dimension reduced space, as proved in \cref{lem:costApproximation2}. We recover the cluster centers in the original space via noisy averaging. In \cref{alg:2RoundRecovery}, each point holds a $k$-tuple of $d'$-dimensional vector $v(p)$ which we can naturally identify as a $kd'$ dimensional vector. If $s_i^*$ is closest to $p$ in the low-dimensional space (breaking ties arbitrarily), then the $i$th tuple value is $p$ and all other tuples are the zero vector. To preserve privacy, agents release this vector via the Gaussian mechanism.
    
    \begin{lemma}
        \label{lem:costGuarantee}
        \begin{align*}
            f_{D'} (S') &= O(\OPT) + O\left( \left(\frac{c_G}{\epsilon_{\BSO} \beta} + \frac{1}{\epsilon_{\CH}} \right) (k \poly\log n)^{1 + O(1/(2c^2-1))} \sqrt{n} \right) + O\left(\frac{\size{S}}{\epsilon_{\CCH}} \sqrt{n} \log n/\beta\right)\\
            &+ O\left( \left( \frac{c_{G'}}{\epsilon_{G'}} + \frac{1}{\epsilon_{\SH}} \right) k \sqrt{d' n }\log n/\beta \right).
        \end{align*}
    \end{lemma}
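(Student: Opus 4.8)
The plan is to bound $f_{D'}(S')$ in two stages: first, relate it to the cost of clustering $D'$ with the \emph{true} centroids of the clusters induced by $S^*$, and second, pay for replacing those centroids by their privately estimated versions $\hat\mu_j = \hat v_j/\SH(s_j^*)$ from \cref{alg:2RoundRecovery}. For the first stage, I would apply \cref{lem:costApproximation2} to the clustering $(D'(s_1^*),\dots,D'(s_k^*))$ of $D'$ that $S^*$ induces (via $p\mapsto\argmin_{s^*\in S^*}\norm{Q(p)-s^*}$), setting $D'_i = D'(s_i^*)$ and $\mu_i = \frac{\sum_{p\in D'_i}p}{\size{D'_i}}$. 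Since the centroid of $Q(D'_i)$ minimizes its sum of squared distances, the right-hand side of \cref{lem:costApproximation2} is at most $\sum_i\sum_{p\in D'_i} z(Q(p),s_i^*) = f_D(S^*)$, while its left-hand side dominates $f_{D'}(\{\mu_1,\dots,\mu_k\})$. Hence $f_{D'}(\{\mu_1,\dots,\mu_k\}) \le (1+O(\alpha))f_D(S^*)$, and by \cref{lem:lowDimGuarantee} (which in turn rests on \cref{lem:core4}) this is $O(\OPT)$ plus the first three additive-error terms of the statement.

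For the second stage, note that since $S' = \{\hat\mu_1,\dots,\hat\mu_k\}$ we have $f_{D'}(S') \le \sum_j f_{D'(s_j^*)}(\{\hat\mu_j\})$, and by the parallel-axis identity (\cref{lem:approxClustering}) each summand equals $f_{D'(s_j^*)}(\{\mu_j\}) + \size{D'(s_j^*)}\,\norm{\mu_j-\hat\mu_j}^2$. So it suffices to bound $\size{D'(s_j^*)}\norm{\mu_j-\hat\mu_j}^2$ per cluster. I would write $\hat v_j = \sum_{p\in D'(s_j^*)}p + Z_j$, where $Z_j$ is the sum over the $n$ agents of the $d'$-dimensional Gaussian block they release; a union bound over the $k$ coordinate blocks and Gaussian tail bounds give $\norm{Z_j} = O\!\left(\tfrac{c_{G'}}{\epsilon_{G'}}\sqrt{d'n}\,\log(nk/\beta)\right)$ with probability $1-\beta$, and \cref{lem:centerCounts} gives $\size{\SH(s_j^*) - \size{D'(s_j^*)}} \le \SH_M = O\!\left(\tfrac{1}{\epsilon_{\SH}}\sqrt{n}\log(n/\beta)\right)$. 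Using the decomposition
\[
\hat\mu_j - \mu_j = \frac{Z_j}{\SH(s_j^*)} + \frac{\sum_{p\in D'(s_j^*)}p}{\size{D'(s_j^*)}}\cdot\frac{\size{D'(s_j^*)}-\SH(s_j^*)}{\SH(s_j^*)}
\]
together with $\norm{\mu_j}=O(1)$, for every cluster with $\size{D'(s_j^*)}\ge 2\max(\norm{Z_j},\SH_M)$ one gets $\size{D'(s_j^*)}\norm{\mu_j-\hat\mu_j}^2 = O(\norm{Z_j}+\SH_M)$; for the remaining (small) clusters, $f_{D'(s_j^*)}(\{\hat\mu_j\}) \le \size{D'(s_j^*)}\cdot\Diam(B(0,1))^2 = O(\max(\norm{Z_j},\SH_M))$ directly, since the data domain has diameter $O(1)$. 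Summing over the $k$ clusters contributes $O\!\left(\left(\tfrac{c_{G'}}{\epsilon_{G'}}+\tfrac{1}{\epsilon_{\SH}}\right)k\sqrt{d'n}\,\log n/\beta\right)$, the last term of the statement; adding this to the first-stage bound and using $f_{D'(s_j^*)}(\{\mu_j\}) = f_{D'(s_j^*)}(\{\mu_j\})$ across clusters yields the claim.

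The main obstacle I anticipate is bookkeeping rather than any single hard inequality: one must verify that the privacy budgets $\epsilon_{\CH},\epsilon_{\BSO},\epsilon_{\CCH},\epsilon_{\SH},\epsilon_{G'}$ enter exactly the advertised terms, that the failure probabilities scaled across the many parallel $\BTG$/$\HSO$ calls and the Gaussian union bound still compose to $1-O(\beta)$, and that the small-cluster case is dispatched cleanly — there the bounded-diameter argument is essential, since for such clusters neither $\norm{Z_j}$ nor $\SH_M$ is small relative to $\size{D'(s_j^*)}$, so the crude $\size{D'(s_j^*)}\cdot O(1)$ bound on their total cost is what keeps the additive error at $O\!\big(k\sqrt{d'n}\log n/\beta\big)$. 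A secondary point to get right is the direction in which \cref{lem:costApproximation2} is invoked, so that the centroid-minimization step legitimately upper-bounds the relevant sum by $f_D(S^*)$.
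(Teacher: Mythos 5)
Your proposal is correct and mirrors the paper's proof: both proceed by first bounding the within-cluster squared-error cost of the $S^*$-induced partition of $D'$ via \cref{lem:costApproximation2} and \cref{lem:lowDimGuarantee}, then paying for the replacement of true centroids by $\hat\mu_j = \hat v_j/\SH(s_j^*)$ via the identity of \cref{lem:approxClustering}, a Gaussian tail bound on the $\|Z_j\|$, the $\SH$ guarantee from \cref{lem:centerCounts}, and the bounded-diameter argument for small clusters. The only difference is cosmetic: you spell out the centroid-minimization step (that the right-hand side of \cref{lem:costApproximation2} is $\le f_D(S^*)$, and the left-hand side is exactly $\sum_j f_{D'(s_j^*)}(\{\mu_j\})$) which the paper leaves implicit, and your Gaussian tail gives $\log(nk/\beta)$ rather than $\log(k/\beta)$, which is absorbed by the same $\poly\log n$ in the final bound.
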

    
    \begin{proof}
        For $s \in S^*$ let $D'(s) = \{ p \in D' : M(p) \in D(s) \}$, where we recall that $M$ was the composition of the dimension reduction, scaling, projection and translation maps, and $D(s) = \{ p \in D : \argmin_{s_1 \in S^*} z(p, s_1) = s \} $. Let $\mu_j = \sum_{p \in D(s_j)} p/\size{D(s_j)}$. From \cref{lem:costApproximation2} we have that
        \begin{align*}
            f_{D'} ( \{\mu_1, \dots, \mu_k \}) = O(f_D (S^*)).
        \end{align*}
        In \cref{lem:lowDimGuarantee} we have derived the bound
        \begin{align*}
            f_D(S^*) \leq O(\OPT) + O\left( \left(\frac{c_G}{\epsilon_{\BSO} \beta} + \frac{1}{\epsilon_{\CH}} \right) (k \poly\log n)^{1 + O(1/(2c^2-1))} \sqrt{n} \right) + O\left(\frac{\size{S}}{\epsilon_{\CCH}} \sqrt{n} \log n/\beta\right).
        \end{align*}
        In \cref{alg:2RoundRecovery} we construct estimates $\hat{\mu_j} = \hat{v}_{j}/\SH(s^*_j)$ for the $\mu_j$. We now bound the addition error incurred during this approximation step.

        We see that $\hat{v}_j = \sum_{p \in D'(s^*_j)} p + N\left(0, \frac{2 c_G^2}{\epsilon_G^2} \mathbb{I}_{d'} \right)$. If we denote the random noise added by the agent with data $p$ by $\eta_p$, then we have
        \begin{align*}
            P\left( \bigg\|\sum_{p \in D'(s^*_j)} \eta_p \bigg\| \geq t  \right) \leq \exp \left( \frac{- \epsilon_G^2 t^2}{16 d' n c_G^2} \right).
        \end{align*}
        So there is a choice of  
        $$t = O\left(\frac{c_{G'} \sqrt{d' n \log k/\beta}}{\epsilon_{G'}}\right)$$
        such that $\|\hat{v}_j - \sum_{p \in D'(s^*_j)} p \| \leq t$ with probability $1-\beta/k$. From \cref{lem:centerCounts}, we have that
        \begin{align*}
            \lvert \SH(s^*_j) - D(s^*_j)\rvert \leq O\left( \frac{1}{\epsilon_{\SH}} \sqrt{n } \log n/\beta \right).
        \end{align*}
        It follows that by the union bound that all these bounds hold simultaneously with probability $1-2\beta$. For all clusters $D'(s^*_j)$ which have more than $2\SH_M$ data points we have that $\SH(s^*_j) = \Theta (\size{D'(s^*_j)})$, and for all smaller clusters since the diameter of the data domain is $1$ unit, $f_{D'(s^*_j)} \leq \size{D'(s^*_j)} = O\left(\frac{\size{S}}{\epsilon_{\CCH}} \sqrt{n} \log n/\beta\right)$ unconditionally. Assuming that the former case holds, we get that the error bounds for $\hat{v}_s$ and $\SH(s^*_j)$ give us
        \begin{align*}
            \norm{\hat{\mu}_j - \mu_j} &= \norm{ \frac{\hat{v}_{j}}{\SH(s^*_j)} - \frac{\sum_{p \in D'(s^*_j)} p}{\size{D(s^*_j)}} } \\
            &= \norm{ \frac{\hat{v}_{j}}{\SH(s^*_j)} - \frac{\sum_{p \in D'(s^*_j)} p}{\SH(s^*_j)}} + \norm{\frac{\sum_{p \in D'(s^*_j)} p}{\SH(s^*_j)} - \frac{\sum_{p \in D'(s^*_j)} p}{\size{D(s^*_j)}}} \\
            &\leq O\left(\frac{c_{G'} \sqrt{d' n \log k/\beta}}{\epsilon_{G'} \size{D'(s^*_j)}} \right) + O\left( \frac{1}{\epsilon_{\SH}\size{D'(s^*_j)}} \sqrt{n } \log n/\beta \right) \norm{\mu_j}
        \end{align*}
        We can bound $\norm{\mu_j}$ from above by $O(1)$ since the domain is of unit diameter. We can then state a simplified bound of
        \begin{align*}
            \norm{\hat{\mu}_j - \mu_j} &= O\left( \left( \frac{c_{G'}}{\epsilon_{G'}} + \frac{1}{\epsilon_{\SH}} \right) \frac{\sqrt{d' n }\log n/\beta}{\size{D' (s^*_j)}} \right).
        \end{align*}
        
        From \cref{lem:approxClustering}, we can bound the cost of cluster $D'(s^*_j)$ via $S' = \{\hat{\mu}_j : j = 1, \dots, k \}$ by the following relation
        \begin{align*}
            f_{D'(s^*_j)} (S') &\leq  f_{D'(s_j)} (\{\mu_1, \dots, \mu_k \}) + \size{D'(s_j)} \norm{\mu_j - \hat{\mu}_j}^2 \\
            &\leq O(f_{D(s^*_j)} (S^*)) + \size{D'(s_j)}O\left( \left( \frac{c_{G'}}{\epsilon_{G'}} + \frac{1}{\epsilon_{\SH}} \right)^2 \frac{d' n \log^2 n/\beta}{\size{D' (s^*_j)}^2} \right)
        \end{align*}
        For each cluster $D'(s^*_j)$, we see that if $D'(s^*_j) \geq \left( \frac{c_{G'}}{\epsilon_{G'}} + \frac{1}{\epsilon_{\SH}} \right) \sqrt{d' n }\log n/\beta $, then 
        \begin{align*}
            f_{D'(s^*_j)} (S') &\leq  O(f_{D(s^*_j)} (S^*)) + O\left( \left( \frac{c_{G'}}{\epsilon_{G'}} + \frac{1}{\epsilon_{\SH}} \right) \frac{\sqrt{d' n }\log n/\beta}{\size{D' (s^*_j)}} \right)
        \end{align*}
        On the other hand, if $D'(s) < \left( \frac{c_{G'}}{\epsilon_{G'}} + \frac{1}{\epsilon_{\SH}} \right) \sqrt{d' n }\log n/\beta$, then we have the same bound unconditionally since the diameter of the data domain is $O(1)$. Summing up over cluster over all size ranges, we get
        \begin{align*}
            f_{D'} (S') &= O(f_{D} (S^*)) + O\left(\frac{\size{S}}{\epsilon_{\CCH}} k\sqrt{n} \log n/\beta\right) + O\left( \left( \frac{c_{G'}}{\epsilon_{G'}} + \frac{1}{\epsilon_{\SH}} \right) k \sqrt{d' n }\log n/\beta \right)  \\
            &= O(\OPT) + O\left( \left(\frac{c_G}{\epsilon_{\BSO} \beta} + \frac{1}{\epsilon_{\CH}} \right) (k \poly\log n)^{1 + O(1/(2c^2-1))} \sqrt{n} \right) + O\left(\frac{\size{S}}{\epsilon_{\CCH}} \sqrt{n} \log n/\beta\right)\\
            &+ O\left( \left( \frac{c_{G'}}{\epsilon_{G'}} + \frac{1}{\epsilon_{\SH}} \right) k \sqrt{d' n }\log n/\beta \right).
        \end{align*}
    \end{proof}
    
    We can now derive the main result of this section.
    
    \fourRoundGuarantee*
    
    \begin{proof}
        To prove this theorem, we will account for all privacy loss and then scale the privacy parameters used in each data access subroutine to ensure a net $(\epsilon,\delta)$ privacy loss guarantee. We will then substitute these parameters into \cref{lem:costGuarantee} to derive the bound on the cost incurred with this choice of parameters.
        
        We see that data access occurs in 4 rounds through the following mechanisms:
        \begin{enumerate}
            \item $L$ calls in parallel to $\BTG$ to construct $\CH^l$ for $l\in [L]$ with privacy parameter $\epsilon_{\CH}$.
            \item $FLMR$ calls in parallel to $\BTG$ and $\HSO$ to construct $\BH_{l,m,r,f}$ and $\BSO_{l,m,r,f}$ for $l\in [L], m\in [M], r \in [R]$ and $f \in [F]$. The two types of calls have respective privacy parameters $\epsilon_{\BH}$ and $(\epsilon_{\BSO},\delta_{\BSO})$ (note that $\delta_{\BSO}$ occurs in our cost guarantee inside the Gaussian mechanism parameter $c_G$). Recall that during the course of our analysis we required that $\epsilon_{\BH} = \epsilon_{\BSO}$ with the observation that they were called an equal number of times.
            \item One call to $\BTG$ to construct $\CCH$ with privacy parameter $\epsilon_{\CCH}$
            \item Gaussian mechanism and one call to $\BTG$ to construct $\SH$ in parallel when computing the noisy averages over cluster sets derived from low-dimensional clustering. The respective privacy parameters are $(\epsilon_{G'},\delta_{G'})$ and $\epsilon_{\SH}$ (note that $\delta_{G'}$ occurs in our cost guarantee inside the Gaussian mechanism parameter $c_{G'}^2$).
        \end{enumerate}
        We allocate private parameters of $(\epsilon/4,0)$, $(\epsilon/4,\delta/2)$, $(\epsilon/4,0)$ and $(\epsilon/4,\delta/2)$ to each of these four steps, and sub-divide the privacy parameters within. Since 
        \begin{align*}
            FLMR &= O(\log n) \cdot O(\log n) \cdot O(\log\log n) \cdot O \left( {k \poly\log n} \right)^{O(1/(2c^2-1))} \\
            &= k^{{O(1/(2c^2-1))}} \poly\log n
        \end{align*} we can write
        \begin{align*}
            \epsilon_{\CH} &= \frac{\epsilon}{4 \log n} &\\
            \epsilon_{\BH} &= \epsilon_{\BSO} = \frac{\epsilon}{8 k^{{O(1/(2c^2-1))}} \poly\log n} & \\
            \delta_{\BSO} &= \frac{\delta}{2 k^{O(1/(2c^2-1))} \poly\log n} & \\
            \Rightarrow c_G &< {O(1/(2c^2-1))} \sqrt{\ln (n/\delta)} &\\
            \epsilon_{\CCH} &= \frac{\epsilon}{4} &\\
            \epsilon_{G'} &= \epsilon_{\SH} = \frac{\epsilon}{8}& \\
            \delta_{G'} &= \frac{\delta}{2}. & \\
            \Rightarrow c_{G'} &= O(\sqrt{\ln (1/\delta)}).
        \end{align*}
        
        Substituting these terms along with the bound 
        $$\size{S} \leq O\left(\frac{k \poly\log n}{\beta }\right)^{1 + O(1/(2c^2-1))}$$
        in the cost guarantee of \cref{lem:costGuarantee}, we get
        \begin{align*}
            f_{D'} (S') &= O(\OPT) + O\left( \left(\frac{\sqrt{\ln (n/\delta)}}{\epsilon \beta} + \frac{\log n}{\epsilon} \right) (k \poly\log n)^{1 + O(1/(2c^2-1))} \sqrt{n} \right)\\
            &+ O\left(\frac{1}{\epsilon} \sqrt{n}\left(\frac{k \poly\log n}{\beta }\right)^{1 + O(1/(2c^2-1))}\right) + O\left( \left( \frac{\sqrt{\ln (1/\delta)}}{\epsilon} \right) k \sqrt{d' n }\log n/\beta \right) \\
            &\leq O(\OPT) + O \left(\frac{1}{\epsilon} \sqrt{d' n \ln(n/\delta)} \right) \left(\frac{k \poly\log n}{\beta }\right)^{1 + O(1/(2c^2-1))}.
        \end{align*}
        
    \end{proof}

\appendix

\section{Concentration bounds}

We recall some basic concentrations bounds that we draw upon for our proofs.

\begin{lemma}[Hoeffding's inequality]
    \label{lem:Hoeffding}
    Given $n$ i.i.d. Bernoulli random variables $X_i$ that take values in $\{0,1\}$ with mean $p$,
    \begin{align*}
        P\left(\size{\sum_{i \in [n]} X_i - np} > t\right) \leq 2\exp(-2t^2/n).
    \end{align*}
\end{lemma}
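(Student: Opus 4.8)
The plan is to apply the Cramér--Chernoff exponential moment method. First I would fix $\lambda > 0$ and bound the upper tail via Markov's inequality applied to the exponential: $P\big(\sum_{i} (X_i - p) > t\big) \le e^{-\lambda t}\, \mathbb{E}\big[e^{\lambda \sum_i (X_i - p)}\big]$. Since the $X_i$ are independent, the exponential moment factors as $\prod_i \mathbb{E}[e^{\lambda(X_i - p)}]$, so the problem reduces to controlling the moment generating function of a single centered Bernoulli variable.

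Next I would establish the one-variable bound (Hoeffding's lemma): for $Y$ with $\mathbb{E}[Y] = 0$ and $Y \in [a,b]$ almost surely, $\mathbb{E}[e^{\lambda Y}] \le \exp\big(\lambda^2 (b-a)^2/8\big)$. The argument uses convexity of $y \mapsto e^{\lambda y}$ to write $e^{\lambda y} \le \tfrac{b-y}{b-a} e^{\lambda a} + \tfrac{y-a}{b-a} e^{\lambda b}$ pointwise on $[a,b]$, take expectations using $\mathbb{E}[Y] = 0$, and then show that the resulting function $\psi(\lambda) := \log\!\big(\tfrac{b}{b-a} e^{\lambda a} - \tfrac{a}{b-a} e^{\lambda b}\big)$ satisfies $\psi(0) = \psi'(0) = 0$ and $\psi''(\lambda) \le (b-a)^2/4$ for all $\lambda$ (recognizing $\psi''(\lambda)$ as the variance of a $[a,b]$-valued random variable, hence at most $(b-a)^2/4$), so that Taylor's theorem gives $\psi(\lambda) \le \lambda^2(b-a)^2/8$. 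Applying this with $Y = X_i - p \in [-p,\, 1-p]$, where $b-a = 1$, yields $\mathbb{E}[e^{\lambda(X_i - p)}] \le e^{\lambda^2/8}$.

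Combining the two steps gives $P\big(\sum_i (X_i - p) > t\big) \le e^{-\lambda t + n\lambda^2/8}$ for every $\lambda > 0$; optimizing the exponent by taking $\lambda = 4t/n$ produces the bound $e^{-2t^2/n}$. Finally I would run the identical argument with each $X_i$ replaced by $-X_i$ (equivalently, bound $P\big(\sum_i(X_i - p) < -t\big)$), obtaining the same bound on the lower tail, and then union bound over the two tails to pick up the factor of $2$ in the statement, which matches the claimed $2\exp(-2t^2/n)$.

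I expect the main obstacle to be the one-variable moment generating function bound, specifically verifying the second-derivative estimate $\psi''(\lambda) \le (b-a)^2/4$; the Chernoff step, the independence splitting, and the final optimization over $\lambda$ are all routine. An alternative that sidesteps the general Hoeffding's lemma is to work directly with the Bernoulli case: bound $\mathbb{E}[e^{\lambda(X_i - p)}] = e^{-\lambda p}(1 - p + p e^{\lambda})$ and show that $\phi(\lambda) := -\lambda p + \log(1 - p + p e^{\lambda})$ satisfies $\phi(0) = \phi'(0) = 0$ and $\phi''(\lambda) = \tfrac{p(1-p)e^{\lambda}}{(1 - p + p e^{\lambda})^2} \le 1/4$ by the AM--GM inequality, so $\phi(\lambda) \le \lambda^2/8$; this is arguably the cleanest route given that only the Bernoulli case is needed here.
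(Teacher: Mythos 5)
Your proof is correct. Note, however, that the paper does not actually prove \cref{lem:Hoeffding}; it merely recalls it as a standard concentration bound in the appendix, with no argument supplied. Your proposal is a sound derivation via the Cram\'er--Chernoff method together with Hoeffding's lemma for the moment generating function of a bounded centered random variable, and the exponent optimization ($\lambda = 4t/n$ gives $-\lambda t + n\lambda^2/8 = -2t^2/n$) and the factor of $2$ from union-bounding the two tails both check out. The alternative you sketch, bounding $\phi''(\lambda) = p(1-p)e^{\lambda}/(1-p+pe^{\lambda})^2 \le 1/4$ directly by AM--GM, is indeed the cleanest route for the purely Bernoulli case stated here and avoids invoking the general Hoeffding's lemma.
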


\begin{lemma}[Chernoff bound for Gaussian random variables]
    \label{lem:NormalChernoff}
    Given $n$ i.i.d. Gaussian random variables $\eta_i \sim N(0,\sigma^2)$,
    \begin{align*}
        P\left( \size{\sum_{i \in [n]} \eta_i} > t  \right) \leq 2\exp \left( \frac{-t^2}{2n\sigma^2} \right).
    \end{align*}
\end{lemma}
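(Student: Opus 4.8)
The plan is to reduce this multi-variable statement to a one-dimensional Gaussian tail bound, exploiting the fact that the Gaussian family is closed under convolution, and then to establish that one-dimensional bound by the standard exponential-moment (Chernoff) argument.

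First I would observe that since the $\eta_i$ are independent with $\eta_i \sim N(0,\sigma^2)$, the sum $S := \sum_{i \in [n]} \eta_i$ is itself Gaussian with mean $0$ and variance $n\sigma^2$. Concretely, the moment generating function of a sum of independent random variables factors, and $\Ex[e^{\lambda \eta_i}] = e^{\lambda^2 \sigma^2/2}$ (a Gaussian integral obtained by completing the square), so $\Ex[e^{\lambda S}] = e^{n\lambda^2 \sigma^2/2}$, which is exactly the moment generating function of $N(0,n\sigma^2)$; since a Gaussian law is determined by its moment generating function, this identifies the distribution of $S$. It then suffices to prove, for $X \sim N(0, s^2)$ with $s^2 = n\sigma^2$, that $P(\size{X} > t) \le 2\exp(-t^2/(2s^2))$.

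Next I would carry out the Chernoff bound. Fix $t > 0$ and an arbitrary $\lambda > 0$. By Markov's inequality applied to $e^{\lambda X}$,
\[
P(X > t) = P\left(e^{\lambda X} > e^{\lambda t}\right) \le e^{-\lambda t}\,\Ex[e^{\lambda X}] = \exp\left(\tfrac{1}{2}\lambda^2 s^2 - \lambda t\right).
\]
The exponent is a quadratic in $\lambda$ minimized at $\lambda = t/s^2$, where it equals $-t^2/(2s^2)$, giving $P(X > t) \le \exp(-t^2/(2s^2))$. Since $-X$ has the same distribution as $X$, the identical bound holds for $P(X < -t)$, and a union bound over the two tails yields $P(\size{X} > t) \le 2\exp(-t^2/(2s^2))$. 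Substituting $s^2 = n\sigma^2$ and $X = S$ gives the stated inequality.

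There is no substantive obstacle here; the only steps requiring a little care are the explicit evaluation of the Gaussian moment generating function $\Ex[e^{\lambda \eta_i}] = e^{\lambda^2\sigma^2/2}$ and verifying that optimizing over $\lambda$ delivers the constant $1/2$ in the exponent rather than a weaker one. Alternatively one could simply invoke the fact that a centered variance-$\sigma^2$ Gaussian is $\sigma$-sub-Gaussian together with the generic sub-Gaussian tail inequality, but writing out the Chernoff computation directly keeps the argument self-contained.
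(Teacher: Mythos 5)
Your proof is correct. The paper states \cref{lem:NormalChernoff} without proof, treating it as a recalled standard fact, so there is no argument in the paper to compare against; your Chernoff/moment-generating-function derivation (identify $S = \sum_i \eta_i \sim N(0, n\sigma^2)$, apply Markov's inequality to $e^{\lambda S}$, optimize $\lambda = t/(n\sigma^2)$, and take a union bound over both tails) is the standard route and every step checks out.
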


We will also need the following more involved concentration bound to bound the estimation error of the $\HSO$ developed later as a tool which allows us to reduce the round complexity of our protocols. We follow the formulation in \S 1.6.2 of \cite{DBLP:journals/ftml/Tropp15}, who attributes it to \cite{oliveira2009concentration} and \cite{DBLP:journals/focm/Tropp12}.

\begin{lemma}[Matrix Bernstein's inequality]\label{lem:matrixBernstein}
    Let $S_1, \dots , S_n$ be independence centered random matrices with common dimension $d_1\times d_2$ and assume that each one is uniformly bounded, i.e.
    \begin{align*}
        \Ex [S_k] = 0, \\
        \norm{S_k} \leq L \; \forall k \in [n].
    \end{align*}
    Let $Z = \sum_{k=1}^n S_k$ and $v(Z)$ denote the matrix variance statistic of $Z$ i,e,
    \begin{align*}
        v(Z) &= \max \{ \norm{\Ex (Z Z^*)}, \norm{ \Ex{Z^* Z}} \} \\
        &= \max \left\{ \norm{  \sum_{k=1}^n\Ex {S_k S_k^*} }, \norm{ \sum_{k=1}^n\Ex {S_k^* S_k}} \right\}.
    \end{align*}
    Then
    \begin{align*}
        P(\norm{Z} \geq t) \leq (d_1 + d_2) \cdot \exp \left( \frac{-t^2/2}{v(Z) + Lt/3} \right) \; \forall t\geq 0.
    \end{align*}
    Further,
    \begin{align*}
        \Ex[\norm{Z}] \leq \sqrt{2 v(Z) \log(d_1 + d_2)} + \frac{1}{3} L \log (d_1 + d_2).
    \end{align*}
\end{lemma}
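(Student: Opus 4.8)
The plan is to prove this via the standard matrix Laplace transform (matrix Chernoff) method, exactly as in Tropp's monograph. First I would reduce to the Hermitian case by passing to the Hermitian dilation $\mathcal{D}(M) = \left(\begin{smallmatrix} 0 & M \\ M^* & 0\end{smallmatrix}\right)$: each $\mathcal{D}(S_k)$ is self-adjoint of size $(d_1+d_2)\times(d_1+d_2)$, with $\lVert \mathcal{D}(S_k)\rVert = \lVert S_k\rVert \le L$, $\Ex \mathcal{D}(S_k) = 0$, and $\mathcal{D}(Z) = \sum_k \mathcal{D}(S_k)$. Since $\lambda_{\max}(\mathcal{D}(Z)) = \lVert Z\rVert$ and $\mathcal{D}(S_k)^2 = \mathrm{diag}(S_k S_k^*,\, S_k^* S_k)$, one has $\Ex \mathcal{D}(Z)^2 = \mathrm{diag}(\sum_k \Ex S_k S_k^*,\, \sum_k \Ex S_k^* S_k)$, so $\lVert \Ex \mathcal{D}(Z)^2\rVert = v(Z)$. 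Hence it suffices to prove the one-sided tail bound for a sum $Z = \sum_k S_k$ of independent, centered, uniformly bounded \emph{Hermitian} matrices, with variance proxy $v(Z) = \lVert \Ex Z^2\rVert$.

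Second, I would invoke the matrix Laplace transform bound $P(\lambda_{\max}(Z)\ge t) \le \inf_{\theta>0} e^{-\theta t}\,\Ex\operatorname{tr}\exp(\theta Z)$, followed by Tropp's master inequality (a consequence of Lieb's concavity theorem): $\Ex\operatorname{tr}\exp(\theta Z) \le \operatorname{tr}\exp\!\big(\sum_k \log \Ex \exp(\theta S_k)\big)$. The crucial per-summand estimate is that for a centered Hermitian $S$ with $\lVert S\rVert \le L$ and $0 < \theta < 3/L$, $\Ex\exp(\theta S) \preceq \exp\!\big(g(\theta)\,\Ex S^2\big)$ with $g(\theta) = \tfrac{\theta^2/2}{1-\theta L/3}$; this is obtained from the scalar inequality $e^x \le 1 + x + \tfrac{x^2/2}{1-|x|/3}$ on $|x|<3$ applied spectrally (transfer rule), then the bound $I + A \preceq e^A$, and finally operator monotonicity of $\log$. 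Summing, $\sum_k \log\Ex\exp(\theta S_k) \preceq g(\theta)\sum_k \Ex S_k^2 = g(\theta)\,\Ex Z^2$, and since $\operatorname{tr}\exp$ is monotone and $\Ex Z^2 \succeq 0$ we get $\Ex\operatorname{tr}\exp(\theta Z) \le (d_1+d_2)\exp\!\big(g(\theta)\,v(Z)\big)$.

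Third, I would optimize. Combining the displays, $P(\lambda_{\max}(Z)\ge t) \le (d_1+d_2)\exp\!\big(-\theta t + g(\theta)v(Z)\big)$, and the choice $\theta = t/(v(Z)+Lt/3) \in (0,3/L)$ yields the stated bound $(d_1+d_2)\exp\!\big(\tfrac{-t^2/2}{v(Z)+Lt/3}\big)$. For the expectation estimate I would integrate the tail: $\Ex\lVert Z\rVert = \int_0^\infty P(\lVert Z\rVert \ge t)\,dt$, split at a threshold $t_0$, use the sub-Gaussian regime bound $(d_1+d_2)e^{-t^2/(4v(Z))}$ for $t \le t_0$ (valid when $Lt/3 \le v(Z)$) and the sub-exponential bound $(d_1+d_2)e^{-3t/(4L)}$ for $t > t_0$, capping each by $1$, and optimizing $t_0$; this produces $\sqrt{2v(Z)\log(d_1+d_2)} + \tfrac13 L\log(d_1+d_2)$.

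The main obstacle is the per-summand matrix moment generating function bound together with its summation across the $S_k$: noncommutativity means one cannot factor $\Ex\exp(\theta\sum_k S_k)$ as $\prod_k \Ex\exp(\theta S_k)$, and the non-elementary ingredient making the argument go through is Lieb's concavity theorem (equivalently Tropp's master trace inequality). Everything else is scalar calculus transferred through the spectral mapping theorem and operator monotonicity. Since this lemma is quoted verbatim from the literature, I would present the argument at roughly this level of detail and cite \cite{DBLP:journals/ftml/Tropp15, oliveira2009concentration, DBLP:journals/focm/Tropp12} for the complete technical execution.
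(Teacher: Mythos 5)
The paper does not prove this lemma at all---it is imported verbatim from Tropp's monograph (with attribution to Oliveira and Tropp)---and your sketch is exactly the standard argument from that source: Hermitian dilation to reduce to the self-adjoint case, the matrix Laplace transform combined with Lieb's concavity (Tropp's master trace inequality), the per-summand MGF bound with $g(\theta)=\tfrac{\theta^2/2}{1-\theta L/3}$, and the choice $\theta = t/(v(Z)+Lt/3)$, all of which check out. One minor caveat: integrating the tail bound as you propose yields the expectation estimate only up to constants; the precise constants $\sqrt{2v(Z)\log(d_1+d_2)}+\tfrac{1}{3}L\log(d_1+d_2)$ are obtained by bounding $\Ex[\lambda_{\max}]$ directly through the trace-MGF, though this is immaterial here since the paper only ever invokes the tail inequality.
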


\section{Bitstogram and the Heavy Sums Oracle}\label{subsec:BTGandHSO}

The contents of this subsection are used in the cost analysis for both clustering algorithms. In the sequel we make extensive use of locally private frequency estimation. For private frequency estimation a lower bound of $\Omega_{\epsilon}(\sqrt{n})$ is known \citep{DBLP:conf/esa/ChanSS12}. A state of the art construction for this problem is the $\BTG$ algorithm \cite{DBLP:journals/jmlr/BassilyNST20}, which is an $\epsilon$-LDP algorithm for the heavy-hitters problem that achieves low error.

\lembtg*

We introduce an extension of the $\BTG$ algorithm called $\HSO$ that allows us to query the sums of some vector valued function over the set of elements that map to a queried heavy-hitter value. For a given value-mapping function $f:\X \to \mathcal{V}$ and a vector-valued function $g:\X \to \R^d$ the sum estimation oracle privately returns for every heavy hitter $v \in \mathcal{V}$ the sum of all agents that map to $x$, i.e. $\sum_{p : f(p) = x} p$. We recall that $\BTG$ is a modular algorithm with two subroutines; a frequency oracle that privately estimates the frequency of any value in the data universe, and a succinct histogram construction that constructs the heavy hitters in a bit-wise manner by making relatively few calls to the frequency oracle. The construction of $\HSO$ essentially mimics the frequency oracle construction called $\mathsf{Hashtogram}$ from \cite{DBLP:journals/jmlr/BassilyNST20} and can be run in parallel with $\BTG$, allowing us to reduce the round complexity of our protocols.

\begin{algorithm}
    \caption{HeavySumsOracle}
    \label{alg:HS}
    Public randomness: Uniformly random matrix $Z\in \{\pm 1\}^{\size{\V} \times n}$\\
    Setting: Agent $j \in [n]$ holds $x_j \in \mathcal{X}$, public  functions $f:\X \to \V$, $g:\X \to [0,b]^d$, $g$ has known bounded sensitivity $\Delta_{g,2}$.\\
    For $j\in [n]$ let $y_j \leftarrow Z[f(x_j),j]\cdot g(x_j) + \eta_j $ for $\eta_j \sim N\left(0,\frac{4c^2 \Delta_{g,2}^2}{\epsilon^2}\right)$ where $c^2$ is according to \cref{lem:gauss} \\
    On input $v \in \mathcal{V}$ return $S(v) = \sum_{j \in [n]} y_j \cdot Z[v,j]$  and wait for next query
\end{algorithm}

\lemHSO*

\begin{proof}
    Let the data of the $j$th agent be denoted $x_j$, and let $y_j$ denote the value sent by the $j$th agent, i.e. $Z[f(x_j),j]\cdot (g(x_j) + \eta_j)$ where $\eta_j \sim N\left(0,\frac{c_G^2 \Delta^2}{\epsilon^2} \right)$.

    \begin{align*}
        S(v) &= \sum_{j \in [n]} y_j \cdot Z[v,j] \\
        &= \sum_{j \in [n]} (Z[f(x_j),j]\cdot g(x_j) + \eta_j) \cdot Z[v,j] \\
        \Rightarrow \Ex[S(v)] &= \sum_{j \in [n]} \Ex[ Z[f(x_j),j] \cdot Z[v,j] \cdot  (g(x_j))] +  \Ex[\eta_j \cdot Z[v,j]]. \\
        &= \sum_{j:f(x_j) = v} \Ex[g(x_j)] + \sum_{j:f(x_j) \not= v} \Ex[Z[f(v_j),j]] \Ex[Z[f(x_j),j] \cdot g(x_j)]  + 0\\
        &= \sum_{j:f(x_j) = v} g(x_j) + 0.
    \end{align*}
    This gives us that in expectation, $S(v) = \sum_{y \in D_{f(x)}} g(y)$. Now we derive high probability bounds on the estimation error. Let $S$ denote the quantity of interest, i.e. $ \sum_{j:f(x_j) = v} g(x_j)$. We have
    \begin{align*}
        \lVert S(v) - S \rVert &= \bigg\lVert \sum_{j \in [n]} (Z[f(x_j),j]\cdot g(x_j) + \eta_j) \cdot Z[v,j] - S \bigg\rVert \nonumber \\
        &\leq \bigg\lVert \sum_{j : f(x_j) \not= v} b_j g(x_j) + \sum_{j \in [n]} b_j \eta_j \bigg\rVert \nonumber \\
        &\leq \bigg\lVert \sum_{j : f(x_j) \not= v} b_j g(x_j) \bigg\rVert + \bigg\lVert  \sum_{j \in [n]} b_j \eta_j \label{eqn:normSum} \bigg\rVert,
    \end{align*}
    where we let $b_j$ denote uniformly random $\{\pm 1\}$ bits. Note that the cancellations of the summands $g(x_j)$ in $S(v)$ (where $j$ was such that $f(x_j) = v$) with $S$ were deterministic, but the Gaussian noise introduced to retain privacy remains for all agents. To bound the first summand, we observe that $b_j g(x_j)$ are at most $n$ independent vectors with $\ell_2$ norm at most $\Delta$ such that $\Ex [b_j g(x_j)] = \Ex[b_j] \Ex [g(x_j)] = 0$ and matrix variance $\max (\Ex [ \norm{\langle b_j g(x_j). b_j g(x_j)\rangle }, \Ex [ \norm{ b_j g(x_j)\otimes b_j g(x_j) }) = \norm{g(x_j)}^2 \leq \Delta^2$. Then, identifying $b_j g(x_j)$ with $S_j$ in \cref{lem:matrixBernstein} and bounding the size of the set $\{j : f(x_j) \not= v\}$ by $n$, we have that
    \begin{align*}
        P\bigg( \norm{\sum_{j=1}^n S_j} \geq t \bigg) \leq (d'+1) \exp\bigg( \frac{-t^2/2}{\Delta^2 n + \Delta t/3} \bigg) \; \forall t\geq 0.
    \end{align*}
    For an error probability of at most $\beta$, we see that it would suffice to set $t$ such that
    \begin{align*}
        (d'+1) \exp\left( \frac{-t^2/2}{\Delta^2 n + \Delta t/3} \right) &\leq \beta \\
        \Leftarrow \frac{\Delta^2 n + \Delta t/3}{t^2/2} &\leq \frac{1}{\log (d'+1)/\beta} \\
        \Leftrightarrow \frac{2\Delta^2 n}{ t^2} + \frac{2\Delta}{3t} &\leq \frac{1}{\log (d'+1)/\beta} 
    \end{align*}
    We see that it suffices to let $t > 2 \sqrt{2} \Delta \sqrt{n \log (d'+1)/\beta}$. Next, we would like to bound the second summand $\left\lVert  \sum_{j \in [n]} b_j \eta_j \right\rVert$. We have that
    \begin{align*}
        \bigg\lVert  \sum_{j \in [n]} b_j \eta_j \bigg\rVert^2 &= \bigg\langle \sum_{j \in [n]} b_j \eta_j, \sum_{j \in [n]} b_j \eta_j \bigg\rangle \\
        &= \sum_{j \in [n]} \norm{\eta_j}^2 + \sum_{j,k \in [n]} b_j b_k \left\langle \eta_j, \eta_k \right\rangle \\
        &\leq \sum_{j \in [n]} \sum_{d \in [d']} \eta_{j,d}^2 + \sum_{j,k \in [n]} b_j b_k \norm{\eta_j} \norm{\eta_k} \\
        &\leq \sum_{j \in [n]} \sum_{d \in [d']} \eta_{j,d}^2 + \sum_{j,k \in [n]} (\norm{\eta_j}^2 + \norm{\eta_k}^2) \\
        &\leq 2 \sum_{j \in [n]} \sum_{d \in [d']} \eta_{j,d}^2
    \end{align*}
    Since the upper bound is a sum of $d' n$ i.i.d. normal random variables with variance $\sigma^2 = \frac{4 c_G^2 \Delta_{g,2}^2}{\epsilon^2}$. We can now apply \cref{lem:NormalChernoff} which gives us
	\begin{align*}
	    P\left(\bigg\|\sum_{j \in [n]} b_j \eta_j\bigg\| > 2 t_2\right) \leq 2 \exp \left(\frac{ - t_2^2}{2d' n \sigma^2} \right),
	\end{align*}
	where $\sigma^2 = \frac{4 c_G^2 \Delta_{g,2}^2}{\epsilon^2}$. We again set the error probability to be $\beta/2$ to get
	\begin{align*}
	    2 \exp \left( \frac{- t_2^2}{8\sigma^2 d' n} \right) &\leq \frac{\beta}{2} \\
	    \Leftrightarrow  t_2 &\geq \sigma \sqrt{8d' n\log\frac{4}{\beta}}.
	\end{align*}
	Substituting for $\sigma$ we get the stated error bound. To see why this routine is $(\epsilon,\delta)$-differentially private, we see that the sensitivity of the response $Z[f(x_j),j]\cdot g(x_j)$ is $2\Delta_{g,2}$. The privacy guarantee is hence a direct consequence of \cref{lem:gauss}.

\end{proof}

The objects returned by $\BTG$ and $\HSO$ are often used in conjunction to estimate the average vector value for collections of data points that accumulate under some value-mapping. The consequent error bound in all these applications is formalized in the following lemma.

\newcommand{\HG}{\mathsf{HG}}
\newcommand{\SO}{\mathsf{SO}}
\begin{lemma}\label{lem:noisyAvgGuarantee}

	Given a function $f : \mathcal{X} \to \mathcal{V}$, and $g:\X \to B(0,\Delta/2) \subset \R^{d}$, if a succinct histogram $\HG : \mathcal{X} \to \mathbb{R}$ is returned by $\BTG (f,\beta,\epsilon)$ and a sum oracle $\SO : \mathcal{V} \to \mathbb{R}^d$ is returned by $\HSO(f,g,\beta,\epsilon)$, then with total probability $1-2\beta$, for every heavy hitter if $v\in \HG$ $S_v$ denotes the sum $\sum_{x \in X : f(x) = v} g(x)$ and $n_v$ denotes its frequency $|\{x \in X : f(x) = v\}|$, the following bound holds
	\begin{align*}
	    \left\lVert \frac{\SO(v)}{\HG(v)} - \frac{S_v}{n_v} \right\rVert \leq \frac{1}{n_v - \HG_E } \cdot \left( \SO_E + \HG_E \norm{\frac{S_v}{n_v}} \right).
	\end{align*}
	In the above, as per our convention, $\HG_E$ refers to the error term in the estimation of $\HG$, and $\SO_E$ refers to the error term in the estimation error of $\SO$. Note that for every heavy hitter $n_v$ we can assume without loss that $n_v \geq 2\HG_E$, that $\norm{\frac{S_v}{n_v}} = O(\Delta)$, and that $\HG_E = O((c_G \Delta/\epsilon) \sqrt{dn \log n/\beta})$, from which it follows that
	\begin{align*}
	    \left\lVert \frac{\SO(v)}{\HG(v)} - \frac{S_v}{n_v} \right\rVert \leq O\left( \frac{c_G \Delta \sqrt{d n \log n/\beta }}{\epsilon n_v} \right).
	\end{align*}
\end{lemma}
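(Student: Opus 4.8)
The plan is to prove this as a routine ratio-estimation bound: the error in $\SO(v)/\HG(v)$ as an estimate of $S_v/n_v$ splits into a numerator-error term controlled by the $\HSO$ guarantee and a denominator-error term controlled by the $\BTG$ guarantee. First I would condition on the intersection of two good events — the event (probability $\ge 1-\beta$, by \cref{lem:btg}) that $\HG = \BTG(f,\beta,\epsilon)$ satisfies $\size{\HG(v)-n_v}\le \HG_E$ uniformly over queried values, and the event (probability $\ge 1-\beta$, by \cref{lem:HSO}) that $\SO = \HSO(f,g,\beta,\epsilon)$ satisfies $\norm{\SO(v)-S_v}\le \SO_E$ for every $v$ occurring in $f(D')$, which includes every heavy hitter of $\HG$ with positive true frequency. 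A union bound gives both simultaneously with probability $\ge 1-2\beta$.

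Next, fixing such a heavy hitter $v$, I would add and subtract $S_v/\HG(v)$ to write
\begin{align*}
\frac{\SO(v)}{\HG(v)} - \frac{S_v}{n_v} = \frac{\SO(v)-S_v}{\HG(v)} + \frac{S_v}{n_v}\cdot\frac{n_v-\HG(v)}{\HG(v)},
\end{align*}
take norms, and apply the triangle inequality. Using $\norm{\SO(v)-S_v}\le\SO_E$, $\size{n_v-\HG(v)}\le\HG_E$, and the induced lower bound $\HG(v)\ge n_v-\HG_E$ (which is meaningful precisely when $n_v>\HG_E$, the regime we restrict to for heavy hitters) yields
\begin{align*}
\norm{\frac{\SO(v)}{\HG(v)}-\frac{S_v}{n_v}} \le \frac{1}{n_v-\HG_E}\left(\SO_E + \HG_E\norm{\frac{S_v}{n_v}}\right),
\end{align*}
which is the claimed bound.

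Finally, for the simplified estimate I would substitute the three stated simplifications in turn: $n_v\ge 2\HG_E$ makes $n_v-\HG_E\ge n_v/2$, so the prefactor is at most $2/n_v$; since $g$ maps into $B(0,\Delta/2)$ the cluster average $S_v/n_v$ has norm $O(\Delta)$; and a crude common upper bound of $O\!\big((c_G\Delta/\epsilon)\sqrt{dn\log n/\beta}\big)$ absorbs both the $\BTG$ error term $E=O(\tfrac1\epsilon\sqrt{n\log(n/\beta)})$ and the $\HSO$ error term from \cref{lem:HSO}. Collapsing gives $\norm{\SO(v)/\HG(v)-S_v/n_v}=O\!\big(c_G\Delta\sqrt{dn\log n/\beta}/(\epsilon n_v)\big)$. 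There is no genuinely hard step here; the only thing to be careful about is matching the quantifier in the $\HSO$ guarantee (every value appearing in $f(D')$) against the set of heavy hitters of $\HG$, and keeping the $1-2\beta$ failure-probability bookkeeping consistent with how $\beta$ is passed to the two subroutines.
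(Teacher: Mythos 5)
Your decomposition by adding and subtracting $S_v/\HG(v)$, followed by the triangle inequality and the lower bound $\HG(v)\ge n_v-\HG_E$, is exactly the argument in the paper's proof, and your handling of the union bound over the two subroutine guarantees and of the final simplification matches the (mostly implicit) bookkeeping in the paper. No gaps; this is the same proof.
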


\begin{proof}
The proof is a direct consequence of the triangle inequality and some algebra.
\begin{align*}
    \left \lVert \frac{\SO(v)}{\HG(v)} - \frac{S_v}{n_v} \right\rVert &= \left\lVert \frac{\SO(v)}{\HG(v)} - \frac{S_v}{\HG(v)} + \frac{S_v}{\HG(v)} - \frac{S_v}{n_v} \right\rVert \\
    &\leq \frac{\lVert \SO(v) - S_v \rVert}{\HG(v)} + \frac{|n_v - \HG(v)|}{\HG(v)} \left\lVert \frac{S_v}{n_v} \right\rVert \\
    &\leq \frac{1}{n_v - \HG_E } \cdot \left( \SO_E + \HG_E \norm{\frac{S_v}{n_v}} \right).
\end{align*}
\end{proof}

\section{Locality Sensitive Hashing}
	
	The contents of this subsection are used only for the construction and analysis of the multi-round $k$-means algorithm with low additive error. We start by recalling the definition of an LSH family.
	
	\defLSH*
   	
   	In this work we use an LSH-family construction construction from \cite{AI06}.
   	
   	\thmLSH*
	
	Note that by scaling the input to the LSH function this gives us constructions for $(p,q,r,cr)$-sensitive LSH families for arbitrary values of $r>0$. Due to the occurrence of terms like $t^{O(t)}$ in the collision probabilities and the number of buckets, the performance of an LSH family is very sensitive to the choice of $t$. In the following lemma we show how to choose a value of $t$ for a desired ratio of $p^2(1)$ to $p(c)$.
	
	\lemLSH*
	
	\begin{proof}
	    We have that
	    \begin{align*}
	        \frac{p^2(1)}{p(c)} &\geq \frac{A^2}{8t} \frac{(1+c^2/4\sqrt{t})^{t/2}}{(1+(1/4\sqrt{t}) + 1/2t)^t} \\
	        &\geq \frac{A^2}{8t} \frac{\exp \left( c^2 / 4\sqrt{t} - c^4/32 t \right)^{t/2}}{\exp(1/4\sqrt{t} + 1/2t)^{t}} \\
	        &\geq \frac{A^2}{8t} \exp ( (c^2/8 - 1/4 )\sqrt{t} - c^4/64 + 1/2) \\
	        &= \Omega \left(  \exp ( c' \sqrt{t})/t \right).
	    \end{align*}
	    where $c' = (c^2/8 - 1/4 )$. It follows that for $t = (\log B +\log\log B)^2/(c')^2$, \begin{align*}
	        \frac{p^2 (1) }{p(c)} &\geq (c')^2 \frac{B + \log B}{(\log B +\log\log B)^2} \\
	        &= \Omega(B) \\
	        p(1) &\geq \frac{A}{2\sqrt{t}} {\exp(- \sqrt{t}/8 - 1/4)}\\
	        &= \Omega( B^{-1/c'}/\log B  )\\
	        \log N_B &= O(t \log t + \log\log n) \\
	        &= O( \log^2 B \log\log B + \log \log n).
	    \end{align*}
	\end{proof}

	In the construction of the multi-round $k$-means algorithm with low additive error, we will need to estimate the average of all points that map to a given heavy bucket. Due to the pair-wise nature of the LSH guarantee, the analysis of this requires us to use an arbitrary point from the bucket as a filter to ensure that sufficiently many points close to it and not too many points far from it map to that bucket. This result and its proof follow the lines of a similar result by \cite{nissim2018clustering}, but are modified to allow for the possibility of false positives and have been phrased differently.
	
	\lemLSHGuarantee*
    \begin{proof}
        Let $x_0$ be an arbitrary fixed point in $C$. Let $N \subset C$ be the set of points that lie near $x_0$ and collide with it under the LSH function, i.e. $N = \{ y\in D : H(y) = H(x_0), d(y,x_0) \leq r\}$. Since $\forall y\in C$, $d(x_0,y)<r$, $\Ex [|N|] \geq p(1) |C|$. We note that $|N|$ is supported on $\{0,\dots, |C|\}$ and let $p:= P(|N| \geq \frac{p(1)}{2} |C|)$. Then we can write
        \begin{align*}
            \Ex [|N|] &= \sum_{i=0}^{|C|} P(|N| = i) \cdot i \\
            &= \sum_{i=0}^{\left\lceil\frac{p(1)}{2} |C| \right\rceil-1} P(|N| = i) \cdot i + \sum_{i= \left\lceil\frac{p(1)}{2} |C| \right\rceil}^{|C|} P(|N| = i) \cdot i \\
            &\leq (1-p) \cdot \frac{p(1) |C|}{2} + p \cdot |C| \\
            \Rightarrow p(1) |C| &\leq \frac{p(1) |C|}{2} + p |C| (1 - \frac{p(1)}{2}) \\
            \Rightarrow p &\geq \frac{p(1)}{2 - p(1)}.
        \end{align*}
        Let $F \subset C$ be the set of all points which lie far from $x_0$ and collide with it under $H$, i.e. $\{ y\in D : H(y) = H(x_0), d(y,x_0) \geq cr\}$. It again follows from the LSH guarantee that $\Ex [|F|] \leq p(c) |D|\Leftrightarrow \Ex[|D \backslash F| ] \geq (1 - p(c)) |D|$. If $q:= P(|D\backslash F| \geq (1 - \frac{ 4p(c)}{p(1)}) |D|)$, then we can write
        \begin{align*}
            \Ex [|D \backslash F|] &= \sum_{i=0}^{|D|} P(|D \backslash F| = i) ] \cdot i \\
            &\leq (1-q)\cdot \left(1 - \frac{4 p(c)}{p(1)}\right) |D| + q \cdot |D| \\
            \Rightarrow (1 - p(c)) |D| &\leq (1-q)\left(1 - \frac{ 4p(c)}{p(1)}\right) |D| + q |D|\\
            \Rightarrow (1 - p(c)) |D| &\leq \left(1 - \frac{ 4p(c)}{p(1)}\right) |D| +  q \frac{4p(c)}{p(1)} \size{D} \\
            \Rightarrow \left(\frac{4}{p(1)} - 1\right) p(c)  &\leq q \frac{4p(c)}{p(1)} \\
            \Rightarrow 1 - \frac{p(1)}{4} &\leq q.
        \end{align*}
        With probability at least $1- p(1)/4$, $|F| < 4 p(c) |D| / p(1)$, and with probability at least $p(1)/2$, $\size{N} \geq p(1)\size{C}/2$. Applying the union bound on the negation of these events it follows that with probability at least $p(1)/4$ both these events hold. Conditioning on their intersection, we want to bound the distance between the average $\hat{x}_0$ of all points that collide with $x_0$ and $x_0$ itself. Let $N' = \{y\in D: H(y) = H(x_0), d(y,x_0) < cr \}$, with which definition we have $|N'| \geq |N| \geq p(1) |C|/2$. Further, the set of all points that collide with $x_0$ are partitioned by $N'$ and $F$. It follows that
        \begin{align*}
                \lVert x_0 - \hat{x}_0 \rvert &= \left\lVert x_0 - \frac{ \sum_{y \in N'} y + \sum_{y \in F} y}{|N'| + |F|} \right\rvert \\
                &\leq \frac{\sum_{y \in N'} \| x_0 - y \| + \sum_{y \in F}  \| x_0 - y \|}{|N'| + |F|} \\
                &\leq \frac{cr |N'| + \Delta |F|}{|N'| + |F|} \\
                &\leq cr + \frac{|F|}{|N'|} \Delta \\
                &\leq cr + \frac{8 p(c) |D|}{p(1)^2 |C|} \Delta.
         \end{align*}
    \end{proof}

\section{Miscellaneous tools}

In the course of our analysis, we will make extensive use of two weak triangle inequalities which hold for the distance function $d$.

\begin{lemma}[Weak triangle inequalities]
    \label{lem:triangle}
    \begin{enumerate}
        \item Given points $p,q$ and $r \in \R^d$ such that $z(q,r) \leq c \alpha^2 z(p,q)$, where $c$ is some constant and $\alpha<1$,
        \begin{align*}
            z(p,r) &\leq (1 + O(\alpha)) z(p,q).
        \end{align*}
        \item Given arbitrary points $p, q$ and $r\in \R^d$,
        \begin{align*}
            z(p,r) &\leq 2z(p,q) + 2z(q,r).
        \end{align*}
    \end{enumerate}
    
\end{lemma}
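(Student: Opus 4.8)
Both inequalities are elementary consequences of the ordinary triangle inequality for the $\ell_2$ norm together with the AM--GM-type bound $(a+b)^2 \le 2a^2 + 2b^2$, so the plan is simply to make the substitutions explicit. Throughout write $a = \norm{p-q}$ and $b = \norm{q-r}$, so that $z(p,q) = a^2$, $z(q,r) = b^2$, and by the triangle inequality $\norm{p-r} \le a+b$, hence $z(p,r) \le (a+b)^2$.

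For part~2, I would start from $z(p,r) \le (a+b)^2 = a^2 + 2ab + b^2$ and apply $2ab \le a^2 + b^2$ (equivalently $(a-b)^2 \ge 0$) to conclude $z(p,r) \le 2a^2 + 2b^2 = 2z(p,q) + 2z(q,r)$. This handles all three points without any hypothesis.

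For part~1, the hypothesis $z(q,r) \le c\alpha^2 z(p,q)$ reads $b^2 \le c\alpha^2 a^2$, i.e.\ $b \le \sqrt{c}\,\alpha\, a$. Substituting into $z(p,r) \le (a+b)^2 = a^2 + 2ab + b^2$ gives
\begin{align*}
    z(p,r) &\le a^2 + 2\sqrt{c}\,\alpha\, a^2 + c\alpha^2 a^2 = \left(1 + 2\sqrt{c}\,\alpha + c\alpha^2\right) a^2 = (1 + O(\alpha))\, z(p,q),
\end{align*}
where the last step uses that $c$ is a fixed constant and $\alpha < 1$, so both $2\sqrt{c}\,\alpha$ and $c\alpha^2$ are $O(\alpha)$. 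There is no real obstacle here; the only thing to be a little careful about is that the implicit constant in the $O(\alpha)$ depends on $c$, which is harmless since $c$ is treated as an absolute constant wherever this lemma is invoked (typically with $c = 1$ in the A.M.--G.M.\ arguments of \cref{lem:1round_ProxyVersusActual} and \cref{lem:actualDimRedCost}).
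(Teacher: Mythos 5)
Your proof is correct and follows exactly the same route as the paper's: take the ordinary triangle inequality on $\sqrt{z}$, square, and then bound the cross term $2\sqrt{z(p,q)z(q,r)}$ either via the hypothesis (part 1) or via AM--GM (part 2). Rewriting in terms of $a = \norm{p-q}$, $b = \norm{q-r}$ is just a notational repackaging of the identical computation.
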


\begin{proof}
    \begin{enumerate}
        \item The bound follows from an application of the triangle inequality for the $\ell_2$ norm.
        \begin{align*}
            \sqrt{z(p,r)} &\leq \sqrt{z(p,q)} + \sqrt{z(q,r)} \\
            \Rightarrow z(p,r) &\leq z(p,q) + 2 \sqrt{z(p,q) z(q,r)} + z(q,r) \\
            &\leq z(p,q) + 2 \alpha \sqrt{c} z(p,q) + c \alpha^2 z(p,q) \\
            &\leq (1 + O(\alpha)) z(p,q).
        \end{align*}
        \item The bound follows from an application of the triangle inequality for the $\ell_2$ norm and the A.M.-G.M. inequality.
        \begin{align*}
            \sqrt{z(p,r)} &\leq \sqrt{z(p,q)} + \sqrt{z(q,r)} \\
            \Rightarrow z(p,r) &\leq z(p,q) + 2 \sqrt{z(p,q) z(q,r)} + z(q,r) \\
            &\leq 2z(p,q) + 2 z(q,r).
        \end{align*}
    \end{enumerate}
    
\end{proof}

\begin{lemma}
\label{lem:greedycover}
    Let there be a set $U$ and a family of subsets $\mathcal{S} \subset 2^{U}$ such that some subfamily $\mathcal{Z} \subset \mathcal{S}$ covers $U$, that is \[\bigcup_{z \in \mathcal{Z}} z = U.\] If we pick a collection of sets $\mathcal{C} = \{c_1,\dots,c_Y\} \subset \mathcal{S}$ where $Y = \lceil2\size{\mathcal{Z}}\log(1/\alpha)\rceil$ such that for each $i \in [Y]$
    \begin{align*}
        U_i &= U \setminus \bigcup_{j=1}^{i-1}c_j\\
        \size{c_i \cap U_i} &\ge \frac{\max_{c \in \mathcal{S}}\size{c \cap U_i}}{2}
    \end{align*}
    then $\size{\bigcup_i \in [Y] c_i} \ge (1-\alpha)\size{U}$.
\end{lemma}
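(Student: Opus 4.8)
This is the classical greedy maximum-coverage argument, adapted to the ``half of optimal'' weakening of the greedy step. The plan is to track the number of still-uncovered elements $u_i := \size{U_i}$ and show it shrinks geometrically by a factor $\bigl(1 - \tfrac{1}{2\size{\mathcal{Z}}}\bigr)$ at each of the $Y$ rounds, then choose $Y$ large enough to drive $u_{Y+1}$ below $\alpha\size{U}$.

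First I would establish the single-round contraction. Fix a round $i$; if $U_i = \emptyset$ we are already done, so assume $U_i \neq \emptyset$. Since $\bigcup_{z \in \mathcal{Z}} z = U \supseteq U_i$, the sets $\{z \cap U_i : z \in \mathcal{Z}\}$ cover $U_i$, so by averaging (pigeonhole) there is some $z \in \mathcal{Z}$ with $\size{z \cap U_i} \ge \size{U_i}/\size{\mathcal{Z}}$. As $\mathcal{Z} \subseteq \mathcal{S}$, this gives $\max_{c \in \mathcal{S}} \size{c \cap U_i} \ge \size{U_i}/\size{\mathcal{Z}}$, and hence the greedy hypothesis yields $\size{c_i \cap U_i} \ge \size{U_i}/(2\size{\mathcal{Z}})$. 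Since $U_{i+1} = U_i \setminus c_i$, we get $u_{i+1} = u_i - \size{c_i \cap U_i} \le u_i\bigl(1 - \tfrac{1}{2\size{\mathcal{Z}}}\bigr)$.

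Next I would iterate: $u_{Y+1} \le u_1 \bigl(1 - \tfrac{1}{2\size{\mathcal{Z}}}\bigr)^{Y} \le \size{U}\,e^{-Y/(2\size{\mathcal{Z}})}$, using $1 - x \le e^{-x}$ and $u_1 = \size{U}$. With $Y = \lceil 2\size{\mathcal{Z}}\log(1/\alpha)\rceil \ge 2\size{\mathcal{Z}}\log(1/\alpha)$ we obtain $e^{-Y/(2\size{\mathcal{Z}})} \le e^{-\log(1/\alpha)} = \alpha$, so $u_{Y+1} \le \alpha\size{U}$. Therefore $\size{\bigcup_{i\in[Y]} c_i} = \size{U} - u_{Y+1} \ge (1-\alpha)\size{U}$, as claimed. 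I do not expect any real obstacle here; the only points requiring a line of care are the degenerate case $U_i = \emptyset$ (handled by noting the bound is then trivially achieved earlier) and making the averaging step explicit, namely that the restriction of a cover to a subset is still a cover of that subset.
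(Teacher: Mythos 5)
Your proof is correct and complete. The paper itself does not include a proof of this lemma but instead refers the reader to Lemma 2.7 of \cite{chaturvedi2020differentially}; the argument you give is the classical geometric-decay analysis of greedy maximum coverage, with the factor-of-two loss in the greedy step simply halving the per-round contraction constant, and it is essentially the same argument as in that reference. All three steps are sound: the averaging step correctly uses that the restriction of a cover of $U$ to $U_i$ is a cover of $U_i$, so $\sum_{z\in\mathcal{Z}}\size{z\cap U_i}\ge\size{U_i}$ and hence some $z$ achieves $\size{z\cap U_i}\ge\size{U_i}/\size{\mathcal{Z}}$; the geometric decay and the bound $1-x\le e^{-x}$ then give $u_{Y+1}\le\alpha\size{U}$ for $Y=\lceil 2\size{\mathcal{Z}}\log(1/\alpha)\rceil$; and since $\mathcal{S}\subset 2^U$, the picks are subsets of $U$, so $\size{\bigcup_{i\in[Y]}c_i}=\size{U}-u_{Y+1}\ge(1-\alpha)\size{U}$. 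Nothing further is needed.
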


We refer the reader to Lemma 2.7, \cite{chaturvedi2020differentially} for a proof of \cref{lem:greedycover}.

\begin{lemma}
    \label{lem:approxClustering}
    Given a $k$-means clustering $D'_1, \dots, D'_k$ of a data set $D'$ where 
    $$\mu_j = \frac{\sum_{p \in D'_j} p}{\size{D'_k}},$$ 
    if $\hat{\mu}_j$ is an estimate for $\mu_j$ then the $k$-means clustering cost with respect to $\{\hat{\mu}_j : j = 1,\dots, k\}$ for any cluster $D'_j$ can be bounded by
    \begin{align*}
        f_{D'_j} (\{\hat{\mu}_j : j = 1,\dots, k\}) &\leq f_{D'_j} (\mu_j) + \size{D'_j} \norm{\mu_j - \hat{\mu}_j}^2.
    \end{align*}
\end{lemma}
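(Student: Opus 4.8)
The plan is to reduce the statement to the standard ``parallel axis'' (bias--variance) identity for the centroid of a point set, and then simply upper bound a minimum by one of its terms. First I would recall that for any finite set $S\subset\R^{d'}$ with centroid $\mu = \frac{1}{\size S}\sum_{p\in S}p$ and any point $c$,
\begin{align*}
    \sum_{p\in S}\norm{p-c}^2 = \sum_{p\in S}\norm{p-\mu}^2 + \size S\,\norm{\mu-c}^2,
\end{align*}
which follows by expanding $\norm{p-c}^2 = \norm{p-\mu}^2 + 2\langle p-\mu,\mu-c\rangle + \norm{\mu-c}^2$ and noting that $\sum_{p\in S}(p-\mu)=0$ kills the cross term. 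I would apply this with $S = D'_j$, $\mu = \mu_j$, and $c = \hat\mu_j$.

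Next, since $f_{D'_j}(\{\hat\mu_1,\dots,\hat\mu_k\}) = \sum_{p\in D'_j}\min_{l}\norm{p-\hat\mu_l}^2 \le \sum_{p\in D'_j}\norm{p-\hat\mu_j}^2$ (bounding the minimum over $l\in[k]$ by the single term $l=j$), I would chain this with the identity above to get
\begin{align*}
    f_{D'_j}(\{\hat\mu_1,\dots,\hat\mu_k\}) \le \sum_{p\in D'_j}\norm{p-\mu_j}^2 + \size{D'_j}\,\norm{\mu_j-\hat\mu_j}^2 = f_{D'_j}(\{\mu_j\}) + \size{D'_j}\,\norm{\mu_j-\hat\mu_j}^2,
\end{align*}
which is exactly the claimed bound (here $f_{D'_j}(\{\mu_j\}) = \sum_{p\in D'_j} z(p,\mu_j)$ since $\mu_j$ is the unique candidate center).

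There is essentially no hard step here: the only ``content'' is the centroid identity, which is elementary, and the observation that dropping all centers except $\hat\mu_j$ can only increase the clustering cost of $D'_j$. If anything, the one point to be careful about is keeping the notational distinction between $\mu_j$ as a single point versus as the singleton candidate-center set $\{\mu_j\}$, and confirming that the lemma is stated (and used) per-cluster, so no union over $j$ or interaction between clusters is needed.
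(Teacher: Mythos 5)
Your proposal is correct and is essentially the paper's own proof: the paper likewise bounds the per-cluster cost by dropping to the single center $\hat\mu_j$ and then expands $\norm{p'-\hat\mu_j}^2$ around $\mu_j$, using that the cross term vanishes because $\sum_{p'\in D'_j}(p'-\mu_j)=0$. You merely name this step the ``parallel axis'' identity rather than writing out the expansion inline, but the argument is identical.
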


\begin{proof}
    First we observe that $f_{D'_j} (\{\hat{\mu}_j : j = 1,\dots, k\}) \leq f_{D'_j} ( \hat{\mu}_j)$ by definition. To get the stated bound, we perform a couple of algebraic manipulations.
    \begin{align*}
        f_{D'_j} ( \hat{\mu}_j) &= \sum_{p' \in D'_j} z(p',\hat{\mu}_j) \\
        &=  \sum_{p' \in D'_j} \norm{p' - \hat{\mu}_j}^2 \\
        &=  \sum_{p' \in D'_j} \norm{p' - {\mu}_j + \mu_j - \hat{\mu}_j }^2 \\
        &=  \sum_{p' \in D'_j} \langle p' - {\mu}_j + \mu_j - \hat{\mu}_j, p' - {\mu}_j + \mu_j - \hat{\mu}_j \rangle \\
        &=  \sum_{p' \in D'_j} \norm{p' - \mu_j}^2 + 2\langle p' - \mu_j, \mu_j - \hat{\mu}_j \rangle + \norm{\mu_j - \hat{\mu}_j}^2 \\
        &= f_{D'} (\{ \mu_j : j \in [k] \}) +  \size{D'_j} \norm{\mu_j - \hat{\mu}_j}^2.
    \end{align*}
\end{proof}

\bibliographystyle{plainnat}
\bibliography{ms.bib}
	
\end{document}